\documentclass{CSML}
\pdfoutput=1

\usepackage{lastpage}

\lmcsheading{}{1--\pageref{LastPage}}{}{}%
{Nov.~24, 2010}{Feb.~06, 2018}{}

\keywords{modal fixpoint logic, 
  $\mu$-calculus, 
  model theory, 
  characterization results, 
  full additivity, 
  Scott continuity,
  modal automata.}

\usepackage{hyperref}
\hypersetup{hidelinks}
\usepackage{amssymb,latexsym,amsmath,bbm,stmaryrd}


\newcommand{\nada}{\varnothing}  
\newcommand{\sse}{\subseteq}
\newcommand{\ouriff}{\text{ iff }}


\newcommand{\Ran}{\mathsf{Ran}}


\newcommand{\funP}{\mathsf{P}}

\newcommand{\isdef}{\mathrel{:=}}
\newcommand{\rst}[1]{{\upharpoonright}{#1}}
\newcommand{\sz}[1]{|#1|}
\newcommand{\wt}[1]{\mathsf{w}(#1)}
\newcommand{\last}{\mathsf{last}}

\newcommand{\ul}[1]{\underline{#1}}



\newcommand{\Prop}{\ensuremath{\mathtt{X}}}        
\newcommand{\Propvar}{\ensuremath{\mathtt{PROP}}}        

\newcommand{\MSO}{\ensuremath{\mathtt{MSO}}} 
\newcommand{\ML}{\ensuremath{\mathtt{ML}}} 
\newcommand{\muML}{\ensuremath{\mu\ML}}    

\newcommand{\Latt}{\mathtt{Latt}}          
\newcommand{\LitC}{\mathtt{CL}}          

\newcommand{\MLone}{\ensuremath{\mathtt{1ML}}} 
\newcommand{\DMLone}{\ensuremath{\mathtt{1DML}}} 

\newcommand{\FV}[1]{\mathit{FV}(#1)}
\newcommand{\BV}[1]{\mathit{BV}(#1)}
\newcommand{\Sfor}{\mathit{Sfor}}

\newcommand{\sforeq}{\trianglelefteqslant}
\newcommand{\Act}{\mathit{Act}}

\newcommand{\bv}{\bigvee}
\newcommand{\bw}{\bigwedge}

\newcommand{\bwsmall}{\textstyle{\bw}}

\newcommand{\ybullet}{\land}

\newcommand{\dia}{\Diamond}
\newcommand{\nb}{\nabla}
\newcommand{\hs}{\heartsuit}

\renewcommand{\phi}{\varphi} 
\newcommand{\isbnf}{\mathrel{::=}}
\newcommand{\divbnf}{\mid}



\newcommand{\mathstr}[1]{\mathbb{#1}}

\newcommand{\bbA}{\mathstr{A}}
\newcommand{\bbB}{\mathstr{B}}

\newcommand{\bbD}{\mathstr{D}}
\newcommand{\bbG}{\mathbb{G}}
\newcommand{\bbP}{\mathstr{P}}
\newcommand{\bbS}{\mathstr{S}}
\newcommand{\bbT}{\mathstr{T}}


\newcommand{\bis}{\mathrel{\mathchoice%
{\raisebox{.3ex}{$\,
  \underline{\makebox[.7em]{$\leftrightarrow$}}\,$}}%
{\raisebox{.3ex}{$\,
  \underline{\makebox[.7em]{$\leftrightarrow$}}\,$}}%
{\raisebox{.2ex}{$\,
  \underline{\makebox[.5em]{\scriptsize$\leftrightarrow$}}\,$}}%
{\raisebox{.2ex}{$\,
  \underline{\makebox[.5em]{\scriptsize$\leftrightarrow$}}\,$}}}}



\newcommand{\sat}{\Vdash}
\newcommand{\satone}{\sat^{1}}
\newcommand{\mng}[1]{[\![ #1 ]\!]}

\newcommand{\mop}[1]{\langle #1 \rangle}



%
\newcommand{\Aut}{\mathit{Aut}}
\newcommand{\IAut}{\mathit{IAut}}

\newcommand{\init}[1]{\langle#1\rangle}
\newcommand{\ai}{a_{I}}
\newcommand{\act}{\lhd}
\newcommand{\biact}{\bowtie}
\newcommand{\blw}{\sqsubset}
\newcommand{\blweq}{\sqsubseteq}


\newcommand{\tr}{\mathtt{tr}}



\newcommand{\AG}{\mathcal{A}}
\newcommand{\EG}{\mathcal{E}}

\newcommand{\eloi}{\exists}
\newcommand{\abel}{\forall}

\newcommand{\Win}{\mathsf{Win}}
\newcommand{\PM}[1]{\mathrm{PM}_{#1}}



\newcommand{\Si}{\Sigma}

\newcommand{\Th}{\Theta}

\newcommand{\Om}{\Omega}
\newcommand{\al}{\alpha}
\newcommand{\be}{\beta}
\newcommand{\de}{\delta}
\newcommand{\ga}{\gamma}

\newcommand{\ka}{\kappa}

\newcommand{\si}{\sigma}
\newcommand{\om}{\omega}

\newcommand{\Frag}{F}
\newcommand{\pX}{\mathcal{X}}



\newtheorem{claim2}{\sc Claim}
\newenvironment{claimyv}{\begin{claim2}\rm}{\end{claim2}\rm}
\newenvironment{claimfirstyv}{\setcounter{claim2}{0}
               \begin{claim2}\rm}{\end{claim2}\rm}

\newenvironment{pfclaim}{\begin{trivlist}\item[]{\sc Proof of
Claim}}{\hfill {\mbox{$\blacktriangleleft$}}\end{trivlist}}


\begin{document}

\title[Some model theory for the modal $\mu$-calculus]{Some model theory for the modal $\mu$-calculus:
\\syntactic characterisations of semantic properties}
\titlecomment{The research of both authors has been made possible by  
   VICI grant 639.073.501 of
   the Netherlands Organization for Scientific Research (NWO)}

\author[G.~Fontaine]{Ga\"elle Fontaine}	
\address{Institute for Logic, Language and Computation, Universiteit van Amsterdam}	
\email{gaelle2l@gmail.com}  
\email{y.venema@uva.nl}  

\author[Y.~Venema]{Yde Venema}	
\address{\vskip-6pt}	



\begin{abstract}
This paper contributes to the theory of the modal $\mu$-calculus by proving
some model-theoretic results.
More in particular, we discuss a number of semantic properties pertaining
to formulas of the modal $\mu$-calculus.
For each of these properties we provide a corresponding syntactic fragment,
in the sense that a $\mu$-formula $\xi$ has the given property iff it is
equivalent to a formula $\xi'$ in the corresponding fragment.
Since this formula $\xi'$ will always be effectively obtainable from $\xi$, as a
corollary, for each of the properties under discussion, we prove that it is
decidable in elementary time whether a given $\mu$-calculus formula
has the property or not.

The properties that we study all concern the way in which the meaning of a 
formula $\xi$ in a model depends on the meaning of a single, fixed proposition
letter $p$.
For example, consider a formula $\xi$ which is monotone in $p$; such a formula 
a formula $\xi$ is called \emph{continuous} (respectively, \emph{fully 
additive}), if in addition it satisfies the property that, if $\xi$ is true at
a state $s$ then there is a finite set (respectively, a singleton set) $U$ such
that $\xi$ remains true at $s$ if we restrict the interpretation of $p$ to the 
set $U$.
Each of the properties that we consider is, in a similar way, associated with 
one of the following special kinds of subset of a tree model: singletons, finite 
sets, finitely branching subtrees, noetherian subtrees (i.e., without infinite
paths), and branches.

Our proofs for these characterization results will be automata-theoretic in
nature; we will see that the effectively defined maps on formulas are in fact
induced by rather simple transformations on modal automata.
Thus our results can also be seen as a contribution to the model theory of 
modal automata.
\end{abstract}

\maketitle


\section{Introduction}
\label{sec:intro}

This paper is inspired by the model-theoretic tradition in logic of linking 
semantic properties of formulas to syntactic restrictions on their shape.
Such correspondences abound in the model theory of classical (propositional 
or first-order) logic~\cite{chan:mode73}.
Well-known preservation results are the {\L}os-Tarski theorem stating that 
the models of a formula $\phi$ are closed under taking submodels iff $\phi$
is equivalent to a universal formula, or Lyndon's theorem stating that a
formula $\phi$ is monotone with respect to the interpretation of a relation
symbol $R$ iff $\phi$ is equivalent to a formula in which all occurrences 
of $R$ are positive.
In the last example, the semantic property is monotonicity, and the
syntactic restriction is positivity.

Our aim here is to establish such correspondences in the setting of the 
(modal) $\mu$-calculus $\muML$, the extension of modal logic with least
and greatest fixpoint operators~\cite{brad:moda06}.
Since its introduction by Kozen in the 1980s~\cite{koze:resu83}, the modal
$\mu$-calculus has found increasing recognition as an important and natural
formalism for specifying properties of processes.
The main reason for this is that $\muML$, just like basic modal logic,
strikes a very favorable balance between expressiveness and computational 
feasibility.
In particular, it was proved by Janin \& Walukiewicz~\cite{jani:expr96}
that $\muML$ is expressively complete for those monadic second-order 
properties that are bisimulation invariant, so that most, if not all,
interesting properties of processes can be specified in the language.
On the other hand, despite this large expressive power, the computational
complexity of the satisfiability problem for $\muML$ can be solved in 
exponential time~\cite{emer:comp88}, which is basically the same as for 
any extension of modal logic with fixpoint connectives.
Other attractive features of the modal $\mu$-calculus include a semantics 
that can both be presented in a compositional, algebraic format and in 
intuitive, game-theoretic terms;
a natural axiomatization, formulated by Kozen~\cite{koze:resu83} and proven
to be complete for the semantics, partially by Kozen himself and fully
by Walukiewicz~\cite{walu:comp00};
a tight link with automata theory, established by Janin \&
Walukiewicz~\cite{jani:auto95} and Wilke~\cite{wilk:alte01} via the introduction 
of modal automata, of non-deterministic respectively alternating type, as 
automata-theoretic equivalents to $\muML$-formulas;
and a certain internal expressive balance, witnessed by the property of
uniform interpolation, proved by D'Agostino \& 
Hollenberg~\cite{dago:logi00}.

With all these positive results, the modal $\mu$-calculus has become the
canonical modal process logic, and it seems worth while to develop its 
model theory in full detail.
Some results are known: in particular, preservation results, similar to the
{\L}os-Tarski and Lyndon theorems, have been shown for
the $\mu$-calculus by  D'Agostino \& Hollenberg~\cite{dago:logi00}.
However, in the intended semantics of $\muML$, where models represent
computational processes, and accessibility relations, bisimulations, and 
trees play an important role, there are some specific properties of interest
that have not been studied in classical model theory.
Important examples that we will study here include the properties of 
\emph{full} and \emph{complete additivity} and of \emph{continuity} with respect 
to some fixed propositional variable $p$.
\medskip

\noindent\textbf{Full and complete additivity}
To define the properties of \emph{full} and \emph{complete additivity}, recall 
that in each Kripke model $\bbS = (S,R,V)$ we may formalize the dependence of 
the meaning of a $\mu$-formula $\xi$ in $\bbS$ on a fixed proposition letter 
$p$ as a map
\[
\xi_{p}^{\bbS}:  \funP S \to  \funP S,
\]
defined by $\xi_{p}^{\bbS}(X) \isdef \{ s \in S \mid \bbS[p \mapsto X] \sat \xi 
\}$, where $\bbS[p \mapsto X] = (S,R,V[p \mapsto X]$ is the model obtained from 
$\bbS$ by modifying $V$ so that $V(p) = X$.
Then a formula $\xi$ is fully additive in $p$ if for each model $\bbS$, 
the operation $\xi_{p}^{\bbS}$ distributes over arbitrary unions:
\[
\xi_{p}^{\bbS}\Big(\bigcup \mathcal{X} \Big) = 
\bigcup \Big\{ \xi_{p}^{\bbS}(X) \mid X \in \mathcal{X} \Big\},
\]
for any collection $\mathcal{X}$ of subsets of $S$, whereas we say that $\xi$ 
is completely additive if $\xi_{p}^{\bbS}$ distributes over arbitrary 
\emph{non-empty} unions.
Clearly, the difference between the two notions is that for full additivity 
we require the map $\xi_{p}^{\bbS}$ to be \emph{normal}, that is, 
$\xi_{p}^{\bbS}(\bigcup\nada) = \bigcup \nada$, or equivalently, 
$\xi_{p}^{\bbS}(\nada) = \nada$.

It is not difficult to see that full additivity is equivalent to requiring that 
for all $X \sse S$,
\[
\xi_{p}^{\bbS}\left(X\right) = 
\bigcup \big\{ \xi_{p}^{\bbS}(\{x\}) \mid x\in X \big\},
\]
or, putting it yet differently, to the following constraint, for every pointed
Kripke model $(\bbS,s)$:
\[
\bbS,s \sat \xi \ouriff
\bbS[p\mapsto \{x\}],s \sat \xi, \mbox{ for some } x \in V(p).
\]
This is a very natural property in the context of modal logic, since for 
any set $S$, there is a 1--1 correspondence between the fully additive
functions on $\funP S$ and the binary (accessibility) relations on $S$, the
relation associated with the map $f$ given as $R_{f} := \{ (s,s') \mid s 
\in f(\{s'\}) \}$.
Related to this, a more specific reason for studying full additivity is given
by its pivotal role in the characterization of the fragments of 
first- and monadic second-order logic of formulas that are \emph{safe for 
bisimulations} (for a detailed discussion of this notion and its relation to 
full additivity we refer to section~\ref{sec:ca}).
Syntactic characterizations of the formulas that are fully additive in 
a given proposition letter $p$, were obtained by van Benthem~\cite{bent:expl96}
in the setting of basic (i.e., fixpoint-free) modal logic and by 
Hollenberg~\cite{holl:logi98} in the setting of the modal $\mu$-calculus.
As an alternative to Hollenberg's result, we shall give a different syntactic
fragment characterizing full additivity.
More precisely, we will prove that a $\muML$-formula is fully additive
in $p$ if it is equivalent to a formula in the fragment $\muML^{A}_{\{p\}}$,
where we define, for a set $P$ of propositional variables, the fragment 
$\muML^{A}_{P}$ of the modal $\mu$-calculus by the following grammar:
\[
\phi \isbnf p 
\divbnf \phi\lor\phi \divbnf \phi\land\psi 
\divbnf \dia\phi
\divbnf \mu x. \phi'.
\]
Here $p$ belongs to $P$, $\psi$ is a $P$-free formula (i.e., without free 
occurrences of proposition letters in $P$), and the formula $\phi'$ belongs to
the fragment $\muML^{A}_{P \cup \{ x \}}$.
For the property of complete additivity we obtain a very similar 
characterization.
\medskip

\noindent\textbf{Continuity}
Another important property featuring in this study is that of 
\emph{continuity}.
We shall call a formula $\xi$ \emph{continuous} in a proposition letter $p$
if for all $\bbS$, and all $X \sse S$,
\[
\xi_{p}^{\bbS}\big(X\big) = 
\bigcup \Big\{ \xi_{p}^{\bbS}(F) \mid F \sse_{\om} X \Big\}.
\]
That is, where the meaning of $\xi_{p}(X)$ in the case of $\xi$ being fully
additive in $p$ depends on the singleton subsets of $X$, here the meaning of
$\xi_{p}(X)$ depends on the \emph{finite} subsets of $X$.

What explains both the name and the importance of this property, is its
equivalence to \emph{Scott continuity}: for any model $\bbS$, the map 
$\xi_{p}^{\bbS}$ is continuous (in our sense) iff it is continuous with
respect to the Scott topology on the powerset algebra. 
Scott continuity is of key importance in many areas of theoretical computer
sciences where ordered structures play a role, such as domain theory (see, 
e.g.,~\cite{abra:doma94}). 
Another motivation concerns the relation between continuity and another 
property of computational interest, \emph{constructivity}. 
A monotone formula $\xi$ is constructive in a proposition letter $p$ if for 
each model $\bbS$, the least fixpoint of the map $\xi_{p}^{\bbS}$ is reached
in at most $\om$ approximation steps. 
Locally, this means that a state satisfies a least fixpoint formula if it 
satisfies one of its finite approximants. 
While the exact relation between the two properties is not clearly
understood, it is well-known that continuity strictly implies constructivity,
and we believe that in a sense continuity can be considered as the most 
natural property to approximate constructivity syntactically.
A full discussion of the notion of contnuity and its relation with
constructivity can be found in section~\ref{sec:cont}.

As one of the main results of this paper we will show that a $\mu$-formula
$\xi$ is continuous in $p$ iff it is equivalent to a formula $\phi$ in the
syntactic fragment $\muML^{C}_{\{p\}}$, where we define, for any set $P$ of 
propositional variables, the set $\muML^{C}_{P}$ by a mutual induction based 
on the following grammar:
\[
\phi \isbnf 
   p \divbnf \psi 
   \divbnf \phi\lor\phi \divbnf \phi\land\phi 
   \divbnf \dia \phi 
   \divbnf \mu x. \phi',
\]
where $\psi$ is a $P$-free $\mu$-formula, and $\phi'$ belongs to the fragment
$\muML^{A}_{P \cup \{ x \}}$.
A first presentation of this result was given by the first author 
in~\cite{font:cont08}; here we will give an alternative and more insightful 
proof of this result.
\medskip

\noindent\textbf{Aim}
The general purpose of this paper is to prove, in a uniform framework, 
syntactic characterization results corresponding to a number of semantic 
properties, including full and complete addivity, and continuity.
What these properties have in common 
is that they all concern the dependence of the truth of the formula at stake, 
on a single proposition letter ---
it will be convenient to fix this letter from now on as `$p$'.
More precisely, given a Kripke model $\bbS= (S,R,V)$ (which in some
cases we require to be a tree), let $\pX_{\bbS}$ uniformly denote a certain
class of subsets of $S$, such as singletons, finite sets, paths, finitely 
branching subtrees, etc.
Let $\xi$ be a formula of the modal $\mu$-calculus.
Assuming that $\xi$ is monotone in $p$, we say that it has the 
\emph{$\pX$-property with respect to $p$} if for every pointed model $(\bbS,s)$ 
we have
\begin{equation}
\label{eq:pX}
\bbS,s \sat \xi \ouriff
\bbS[p\rst{X}],s \sat \xi
\text{ for some } X \in \pX_{\bbS},
\end{equation}
where $\bbS[p\rst{X}] \isdef \bbS[p \mapsto V(p) \cap X]$ is the model obtained
from $\bbS$ by \emph{restricting} $p$ to the set $X$.
In the examples that we shall consider, $\pX_{\bbS}$ consists of the following 
sets\footnote{%
   There are a few subtleties here.
   For instance the property might only make sense when we investigate
   \eqref{eq:pX} on certain (tree) models, we might want the set $\pX$ to 
   consist of subsets of $V(p)$ or not, etc.
}:
  
(a) singletons inside $V(p)$, leading to the property of full additivity;

(b) singletons, leading to the property of \emph{complete additivity}\footnote{%
   In the formulation \eqref{eq:pX}, the two versions of additivity differ in
   that for full additivity we require the singletons in $\pX$ to be subsets
   of $V(p)$, for complete additivity we do not.
   };
   
(c) finite sets, leading to the property of continuity;

(d) finitely branching subtrees, leading to the \emph{finite width} property;

(e) noetherian subtrees (i.e., without infinite paths), 
   leading to the \emph{finite depth property};

(f) branches, leading to the \emph{single branch property}.

\noindent
Clearly, there are some interesting relations between some of these
properties. 
For instance, both full and complete additivity imply continuity.
Also, continuity can be seen as the combination of a `horizontal' and a 
`vertical' component: the finite width property and the finite depth 
property, respectively.
The latter equivalence will be put to good use in the paper.

The above abstract presentation allows us to summarize our results in a
concise and uniform manner.
Basically, for each instance (a--f) of $\pX$ we present a \emph{syntactic
characterization} of the $\pX$-property, in the form of a syntactically 
defined fragment $\muML^{X}_{p}\sse\muML$ such that a $\mu$-formula has the
$\pX$-property with respect to $p$ iff it is equivalent to a formula in
$\muML^{X}_{p}$.
Since monotonicity forms part of the definition of each of the properties
(a--f), to facilitate the other proofs we first prove a slightly stronger
version of D'Agostino \& Hollenberg's Lyndon theorem; in the remainder of
this introduction it will be convenient to let $\pX$ also cover the property 
of monotonicity.
\medskip

\noindent\textbf{Main results \& proof method}
Our proofs, though different in each case, follow a uniform method, which goes
back to the proofs of Janin \& Walukiewicz~\cite{jani:expr96} and D'Agostino \&
Hollenberg~\cite{dago:logi00}.
For each property $\pX$, we will exhibit an explicit translation which, given
a $\muML$-formula $\xi$, computes a formula $\xi^{X} \in \muML^{X}_{p}$ such
that 
\begin{equation}
\label{eq:1}
\xi \text{ has the $\pX$-property}
\ouriff 
\xi \text{ is equivalent to } \xi^{X}.
\end{equation}
Since in each case the translation $(\cdot)^{X}$ is effectively computable, 
and the equivalence problem for two given $\muML$-formulas is decidable, as 
a nice corollary of \eqref{eq:1} we obtain various new \emph{decidability} 
results.
With $\pX$-\textsc{prop} being the problem whether a given $\muML$-formula has 
the $\pX$-property, for each $\pX$ as discussed we will show that 
\begin{equation}
\label{eq:2}
\text{the problem $\pX$-\textsc{prop} is decidable.}
\end{equation}
Considering that our main interest here is model-theoretic, we have not 
undertaken an in-depth study of the computational \emph{complexity} of the
$\pX$-\textsc{prop} problems.
What we can say is that each of our translations $(\cdot)^{X}$ will be based 
on the composition of simple transformations, each of which constructs an output 
structure of size at most exponential in the size of the input. 
From this it follows that all of the $\pX$-\textsc{prop} problems that we study 
can be solved in \emph{elementary} time.

While in almost all cases pure logic-based proofs for our results are possible
(and have been given in the dissertation of the first author~\cite{font:moda10}),
the proofs we provide in this paper are automata-theoretic in nature.
Automata for the modal $\mu$-calculus were introduced by Janin \& 
Walukiewicz~\cite{jani:auto95} under the name of \emph{$\mu$-automata}; where
these devices are non-deterministic in nature, Wilke~\cite{wilk:alte01} came
up with an alternating variant.
We will use both alternating and non-deterministic devices here, under the 
names of, respectively, \emph{modal} and \emph{disjunctive modal} automata.
The particular shape of our automata is logic-based: the transition map of our
structures maps states of the automata to so-called \emph{one-step formulas},
and many of our proofs are based on syntactic manipulations on these very
simple modal formulas.
In some sense then, our paper is also a contribution to the model theory of 
modal automata.

Finally, we have formulated our results in the setting of the
\emph{mono-modal} $\mu$-calculus; that is, the language that we consider
has one diamond only, and correspondingly the Kripke models have only one
accessibility relation.
This restriction is solely for the purpose of simplifying the presentation
of our results and proofs.
We want to stress, however, that all of the results in this paper can be
generalized to the setting of the polymodal $\mu$-calculus, with no conceptual
and little technical complication.
\medskip

\noindent\textbf{Overview}
The paper is organized as follows. 
In order to fix our terminology and notation, we give a review of the syntax
and semantics of the modal $\mu$-calculus in the following section, and in
section~\ref{sec:ma} we introduce the modal automata that we will be working 
with.
In section~\ref{sec:fs} we make some first model-theoretic steps, proving the 
characterization of monotonicity, and introducing an automata-theoretic
construction that will be used in the other parts of the paper.
In the subsequent three sections we discuss 
the finite width property (section~\ref{sec:fw}),
the single branch property (section~\ref{sec:sb}) and 
the finite depth property (section~\ref{sec:fd}).
After that we arrive the most important parts of the paper, viz.,
section~\ref{sec:cont} on continuity and 
section~\ref{sec:ca} on full and complete additivity.
In the final section of the paper we draw some conclusions, discuss some 
related results, and list some open problems for future research.

\section{Preliminaries}
\label{sec:prel}

In this paper we assume familiarity with the syntax and semantics of the modal 
$\mu$-calculus, as presented in for
instance~\cite{koze:resu83,arno:rudi01,grae:auto02,brad:moda06,vene:lect12}, 
and with the basic notions concerning infinite games~\cite{grae:auto02}.
Here we fix some notation and terminology.

\begin{conv}
Throughout the text we fix an infinite set $\Propvar$ of propositional variables,
of which we often single out a finite subset $\Prop$.
\end{conv}

\subsection{Parity games}

\begin{defi}
\label{d:game}
A {\em parity game} is a tuple $\bbG = (G_{\eloi},G_{\abel},E,\Om)$ where 
$G_{\eloi}$ and $G_{\abel}$ are disjoint sets, and, with $G \isdef G_{\eloi} 
\cup G_{\abel}$ denoting the \emph{board} of the game, the binary relation 
$E \subseteq G^2$ encodes the moves that are \emph{admissible} to the respective 
players, and the \emph{priority function} $\Om: G \to \om$, which is 
required to be of finite range, determines the \emph{winning condition}
of the game.
Elements of $G_{\eloi}$ and $G_{\abel}$ are called \emph{positions} for the 
players $\eloi$ and $\abel$, respectively; given a position $p$ for player 
$\Pi \in \{ \eloi, \abel\}$, the set $E[p]$ denotes the set of \emph{moves}
that are \emph{legitimate} or \emph{admissible to} $\Pi$ at $p$.
In case $E[p] = \nada$ we say that player $\Pi$ \emph{gets stuck} at $p$.

An \emph{initialized board game} is a pair consisting of a board game $\bbG$
and a \emph{initial} position $p$, usually denoted as $\bbG@p$.
\end{defi}

\begin{defi}
\label{d:match}
A {\em match} of a graph game $\bbG = (G_{\eloi},G_{\abel},E,\Om)$ is a (finite 
or infinite) path through the graph $(G,E)$.
Such a match $\Si$ is called \emph{partial} if it is finite and $E[\last\Si]
\neq\nada$, and \emph{full} otherwise.
We let $\PM{\Pi}$ denote the collection of partial matches $\Si$ ending in a 
position $\last(\Si) \in G_{\Pi}$, and define $\PM{\Pi}@p$ as the set of 
partial matches in $\PM{\Pi}$ starting at position $p$.

The \emph{winner} of a full match $\Si$ is determined as follows.
If $\Si$ is finite, it means that one of the two players got stuck at the 
position $\last(\Si)$, and so this player looses $\Si$, while the opponent
wins.
If $\Si = (p_{n})_{n\in\om}$ is infinite, we declare its winner to be $\eloi$ 
if the maximum value occurring infinitely often in the stream 
$(\Om p_{n})_{n\in\om}$ is even.
\end{defi}

\begin{defi}
A \emph{strategy} for a player $\Pi \in \{ \eloi,\abel \}$ is a map $f:
\PM{\Pi} \to G$.
A strategy is \emph{positional} if it only depends on the last position of a 
partial match, i.e., if $f(\Si) = f(\Si')$  whenever $\last(\Si) = 
\last(\Si')$; such a strategy can and will be presented as a map $f: 
G_{\Pi} \to G$.

A match $\Si = (p_{i})_{i<\kappa}$ is \emph{guided} by a $\Pi$-strategy 
$f$ if $f(p_{0}p_{1}\ldots p_{n-1}) = p_{n}$ for all $n<\kappa$ 
such that $p_{0}\ldots p_{n-1}\in \PM{\Pi}$.
A position is \emph{reachable} by a strategy $f$ is there is an $f$-guided
match $\Si$ of which $p$ is the last position.
A $\Pi$-strategy $f$ is \emph{legitimate} in $\bbG@p$ if the moves that it
prescribes to $f$-guided partial matches in $\PM{\Pi}@p$ are always
admissible to $\Pi$, and \emph{winning for $\Pi$} in $\bbG@p$ if in addition
all $f$-guided full matches starting at $p$ are won by $\Pi$.

A position $p$ is a \emph{winning position} for player $\Pi \in \{ \eloi, \abel 
\}$ if $\Pi$ has a winning strategy in the game $\bbG@p$; the set of these
positions is denoted as $\Win_{\Pi}$.
The game $\bbG = (G_{\eloi},G_{\abel},E,\Om)$ is \emph{determined} if every
position is winning for either $\eloi$ or $\abel$.
\end{defi}

When defining a strategy $f$ for one of the players in a board game, we can 
and in practice will confine ourselves to defining $f$ for partial matches 
that are themselves guided by $f$.

The following fact, independently due to Emerson \& Jutla~\cite{emer:tree91}
and Mostowski~\cite{most:game91}, will be quite useful to us.

\begin{fact}[Positional Determinacy]
\label{f:pdpg}
Let $\bbG = (G_{\eloi},G_{\abel},E,\Om)$ be a parity game.
Then $\bbG$ is determined, and both players have positional winning strategies.
\end{fact}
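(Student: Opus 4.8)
The plan is to prove determinacy and the existence of positional winning strategies simultaneously, by induction on the number $n \isdef |\mathrm{Ran}(\Om)|$ of distinct priorities occurring in $\bbG$. The point of the definition of a parity game is that this range is finite, and it is $n$ — rather than the cardinality of the board $G$, which may be infinite — that will drive the induction. At the start of each stage I would reduce to a normal form by symmetry: writing $d$ for the maximal priority, I may assume $d$ is even, so that $\eloi$ is the player favoured by seeing $d$ infinitely often, the case of odd $d$ being obtained by interchanging $\eloi$ and $\abel$ and replacing $\Om$ by $\Om + 1$, which flips all parities.

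The central tool is the attractor. For a player $\Pi$ and a target $A \sse G$ I would define the $\Pi$-attractor $\mathrm{Attr}_\Pi(A)$ as the least fixpoint of the monotone operator on $\funP G$ that sends a set $Y$ to the union of $A$, the set of $\Pi$-positions having at least one move into $Y$, and the set of non-stuck $\bar\Pi$-positions all of whose moves lie in $Y$. Computing this least fixpoint by ordinal iteration yields, alongside the set itself, a positional \emph{attractor strategy} by which $\Pi$ forces every consistent match starting in $\mathrm{Attr}_\Pi(A)$ to reach $A$ in finitely many steps; dually, the complement $G \setminus \mathrm{Attr}_\Pi(A)$ is a \emph{trap} for $\Pi$, meaning that $\bar\Pi$ can keep the match inside it. Both facts are routine fixpoint inductions that I would dispatch quickly.

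In the base case $n \le 1$ every infinite match has the same maximal priority, so its winner is fixed by the parity of that single value, and the only way the other player can win is by forcing a visit to a position where the favoured player gets stuck. Hence one player's winning region is exactly the attractor to the set of opponent-stuck positions, witnessed by the attractor strategy, while on the complementary trap the favoured player wins by any positional strategy that stays inside the trap, a move which always exists by the trap property. For the inductive step with $d$ even I would set $U \isdef \Om^{-1}(d)$, form $A \isdef \mathrm{Attr}_\eloi(U)$, and pass to the subgame on $G \setminus A$, which omits priority $d$ and hence has strictly fewer priorities, so the induction hypothesis applies and gives positional winning regions there. Then a case analysis: if $\abel$ wins nothing in the subgame, $\eloi$ wins all of $\bbG$ by gluing the attractor strategy on $A$ to the subgame strategy, since any infinite consistent match either settles inside $G \setminus A$, and is won by induction, or returns to $U$ infinitely often, and is won because $d$ is even and maximal; otherwise one peels off $\abel$'s attractor to its subgame-winning region and repeats.

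The main obstacle is precisely this repetition. For a finite board the peeling terminates because each round strictly shrinks $G$ — this is the classical Zielonka recursion — but here $G$ may be infinite and that termination argument is unavailable. The fix I would use is to recast the same-maximal-priority reduction not as a recursion on board size but as a single transfinite, ordinal-indexed decreasing iteration, whose limit is a greatest fixpoint identifying $\abel$'s winning region within $\bbG$; the outer induction on the number of priorities still does the real work, and it is the finiteness of $\mathrm{Ran}(\Om)$, not of $G$, that makes the construction legitimate. The final thing to verify, and the step where one must be careful for \emph{positional} determinacy rather than mere determinacy, is that the positional strategies produced at the various ordinal stages amalgamate into a \emph{single} positional strategy that is uniformly winning from the whole of a player's region: I would confirm this by checking that along any match consistent with the glued strategy the priorities seen infinitely often have the correct parity, using that each such match is either eventually trapped in a sub-region covered by the induction hypothesis or visits the dominating even priority infinitely often.
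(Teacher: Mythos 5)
The paper does not prove this statement at all: it is imported as a known Fact, credited to Emerson \& Jutla and to Mostowski, so there is no ``paper proof'' to compare against. Judged on its own terms, your sketch follows the standard attractor-based argument --- induction on $|\mathrm{Ran}(\Om)|$, reduction by symmetry to the case where the maximal priority $d$ is even, attractors and traps as the basic tools, and the peel-off recursion on the subgame obtained by removing the $\eloi$-attractor of $\Om^{-1}(d)$ --- and this is indeed the right skeleton; moreover you correctly identify the two points where the infinite-board case genuinely differs from the finite one, namely the termination of the peeling and the amalgamation of stage-wise positional strategies into one uniform positional strategy. The one place where your plan is still loose is the amalgamation step: saying that each glued match is ``eventually trapped in a sub-region covered by the induction hypothesis or visits $d$ infinitely often'' presupposes, but does not yet establish, that the ordinal stage at which the current position was adjoined cannot increase along a match consistent with the glued strategy. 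The standard way to close this is to assign to each position the least ordinal stage at which it enters the winning region, let the combined strategy at a position play the strategy of that least stage, and verify that each stage's region is a trap for the opponent relative to all later stages, so the stage rank is non-increasing along the play and eventually constant; from that point on the match is consistent with a single stage's winning strategy and the parity condition follows. With that lemma made explicit your argument goes through; without it the final verification as written is circular. I would also note that you silently use determinacy of the sub-subgames to conclude that $\eloi$ wins everything not won by $\abel$ --- that is exactly what the induction hypothesis supplies, so it is fine, but it is worth saying so when you write the argument out.
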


In the sequel we will often refer to a `positional winning strategy' for one
of the players in a parity game.
With this we mean any positional strategy which is winning for that player when
starting at any of his/her winning positions.

\subsection{Structures}

\begin{defi}
Given a set $S$, an \emph{$A$-marking} on $S$ is a map $m: S \to \funP A$; 
an \emph{$A$-valuation} on $S$ is a map $V: A \to \funP S$.
Any valuation $V : A \to 
\funP S$ gives rise to its \emph{transpose marking} $V^\dagger : S \to \funP A$
defined by $V^\dagger(s) := \{a \in A \mid s \in V(a) \}$, and dually each 
marking gives rise to a valuation in the same manner. 
\end{defi}

Since markings and valuations are interchangeable notions, we will often
switch from one perspective to the other, based on what is more convenient in 
context.

\begin{defi}
A \emph{Kripke structure} over a set $\Prop$ of proposition letters is a triple 
$\bbS = (S,R,V)$ such that $S$ is a set of objects called \emph{points}, $R 
\sse S \times S$ is a binary relation called
the \emph{accessibility} relation, and $V$ is an $\Prop$-valuation on $S$.
A \emph{pointed} Kripke structure is a pair $(\bbS,s)$ where $s$ is a point of 
$\bbS$.

Given a Kripke structure $\bbS = (S,R,V)$, a propositional variable $x$ and a 
subset $U$ of $S$, we define $V[x\mapsto U]$ as the $\Prop \cup \{x\}$-valuation 
given by
\[
V[x \mapsto U](p) \isdef \left\{
\begin{array}{ll}
   V(p) & \text{ if } p \neq x 
\\ U    & \text{otherwise},
\end{array}\right.
\]
and we let $\bbS[x\mapsto U]$ denote the structure $(S,R,V[x\mapsto U])$.
The structure $\bbS[x\mapsto V(x) \cap U]$ is usually denoted as 
$\bbS[x\rst{U}]$.
\end{defi}

It will often be convenient to take a coalgebraic perspective on Kripke
structures.

\begin{defi}
Given a Kripke structure $\bbS = (S,R,V)$ over the set $\Prop$, we 
define its \emph{(coalgebraic) unfolding map} $\si_{\bbS}: S \to \funP\Prop 
\times \funP S$ given by $\si_{\bbS}(s) = (\si_{V}(s),\si_{R}(s))$, where
$\si_{V}(s) \isdef V^\dagger(s)$ and $\si_{R}(s) \isdef R[s]$ are the sets of,
respectively, the proposition letters true at $s$ and the successors of $s$.
We will write $\si$ rather than $\si_{\bbS}$ in case no confusion is likely.
\end{defi}

\begin{defi}
A \emph{path} through a Kripke structure $\bbS = (S,R,V)$ is a sequence 
$(s_i)_{i < \ka}$ such that $(s_i,s_{i+1}) \in R$ for all $i$ with $i+1 < \ka$; 
here $\ka \leq\om$ is the \emph{length} of the path.
We let $\sqsubseteq$ denote the prefix (initial segment) relation between paths,
and use $\sqsubset$ for the strict (irreflexive) version of $\sqsubseteq$.
\end{defi}

\begin{defi}
\label{d:bis}
Given two models $\bbS = (S,R,V)$ and $\bbS' = (S',R',V')$, a relation $Z\sse
S \times S'$ is a \emph{bisimulation} if it satisfies, for all $(s,s')\in Z$,
the conditions 

(prop) $s \in V(p)$ iff $s' \in V'(p)$, for all $q \in \Prop$;

(forth) for all $t \in \si_{R}(s)$ there is a $t' \in \si_{R'}(s')$ with $Ztt'$;
and

(back) for all $t' \in \si_{R'}(s')$ there is a $t \in \si_{R}(s)$ with $Ztt'$.

We say that $s$ and $s'$ are \emph{bisimilar}, notation $\bbS,s \bis \bbS',s'$
if there is some bisimulation $Z$ with $Zss'$.
A function $f: S \to S'$ is a \emph{bounded morphism} from $\bbS$ to $\bbS'$,
notation $f: \bbS \to \bbS'$, if its graph $\{ (s,f(s)) \mid s \in S \}$
is a bisimulation.
\end{defi}

\begin{defi}
The reflexive/transitive closure and the transitive closure of $R$ are 
denoted as $R^{*}$ and $R^{+}$, respectively; elements of the sets
$R^{*}[s]$ and $R^{+}[s]$ are called \emph{descendants} and \emph{proper
descendants} of $s$, respectively.

A pointed structure $(\bbS,s)$ is a \emph{tree} (with \emph{root} $s$) if $S =
R^{*}[s]$ and every state $t\neq s$ has a unique predecessor.
A \emph{branch} of a tree $(\bbS,s)$ is a maximal path through $\bbS$, starting
at the root.
In a tree model $\bbS$, a set $U \subseteq S$ is \emph{downward closed} if
for all $s \in U$, the predecessors of $s$ belongs to $U$.
A \emph{sibling} of a node $t$ in a tree is a node $t'\neq t$ with the same 
predecessor as $t$.

Given a number $\ka\leq\om$, a tree is \emph{$\ka$-expanded} if every node 
(apart from the root) has $\ka-1$ many bisimilar siblings (where $\om-1 =
\om$).
Given a pointed structure $(\bbS,s)$, its \emph{$\ka$-expansion} is the structure
$\bbS^{\ka}_{s} := (S',R',V')$, where $S'$ is the set of all finite sequences
$s_{0}k_{1}s_{1}\ldots k_{n}s_{n}$ ($n\geq0$) such that $s_{0}=s$ and
$k_{i}<\ka$, $s_{i}\in S$ and $s_{i-1}Rs_{i}$ for all $i>0$;
where $R' := \{ (s_{0}k_{1}s_{1}\ldots k_{n}s_{n}, s_{0}k_{1}s_{1}\ldots
k_{n}s_{n}kt) \mid k<\ka \text{ and } s_{n}Rt \}$;
and where $V'(p) := \{ s_{0}k_{1}s_{1}\ldots k_{n}s_{n} \mid s_{n} \in V(p) \}$.
The 1-expansion of a pointed model $(\bbS,s)$ is also called its
\emph{unravelling}.
\end{defi}

\begin{fact}
\label{f:bisimomega}
Fix an ordinal $\ka\leq\om$.
Given a pointed model $(\bbS,s)$, the structure $(\bbS^{\ka}_{s},s)$ is a 
$\ka$-expanded tree which is bisimilar to $(\bbS,s)$ via the canonical bounded
morphism mapping a path $s_{0}k_{1}s_{1}\ldots k_{n}s_{n}$ to its last element 
$s_{n}$.
\end{fact}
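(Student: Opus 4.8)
The plan is to verify the three assertions of the statement in turn: that $(\bbS^{\ka}_{s},s)$ is a tree with root $s$, that the canonical map $f$ sending a sequence $s_{0}k_{1}s_{1}\ldots k_{n}s_{n}$ to $s_{n}$ is a bounded morphism from $\bbS^{\ka}_{s}$ to $\bbS$, and that the resulting tree is $\ka$-expanded. Throughout I assume $\ka\geq 1$, as is implicit in the construction (the $1$-expansion being the unravelling).

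For the \emph{tree} property, the root is the length-$0$ sequence consisting of $s$ alone. A straightforward induction on the length $n$ shows that every element $w = s_{0}k_{1}s_{1}\ldots k_{n}s_{n}$ of $S'$ is reachable from $s$ by $R'$: for $n\geq 1$ the sequence $v = s_{0}k_{1}s_{1}\ldots k_{n-1}s_{n-1}$ again lies in $S'$ and satisfies $v \mathrel{R'} w$ (since $s_{n-1} R s_{n}$ and $k_{n}<\ka$), so the claim follows from the induction hypothesis. For uniqueness of predecessors, note that by the definition of $R'$ any $v$ with $v \mathrel{R'} w$ must arise from $w$ by deleting its final two entries $k_{n}s_{n}$; since a finite sequence determines all of its components, this $v$ is uniquely determined. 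Hence $S' = (R')^{*}[s]$ and every node other than the root has a unique predecessor, as required.

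Next I would show that the graph $Z = \Gr(f) = \{(w, \last(w)) \mid w\in S'\}$ is a bisimulation, which is exactly the assertion that $f$ is a bounded morphism. Condition (prop) is immediate from the definition of $V'$, since $w \in V'(p)$ iff $\last(w) = s_{n}\in V(p)$. For (forth), any $R'$-successor of $w$ has the form $w' = wkt$ with $k<\ka$ and $s_{n} R t$; then $\last(w') = t \in \si_{R}(s_{n})$ and $(w',t)\in Z$. For (back), given any $t\in\si_{R}(s_{n})$ I pick some $k<\ka$ (possible as $\ka\geq 1$) and set $w' = wkt$, so that $w \mathrel{R'} w'$ and $(w',t)\in Z$. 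Thus $Z$ is a bisimulation; since $f(s) = s$, this yields $\bbS^{\ka}_{s},s \bis \bbS,s$.

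Finally, for \emph{$\ka$-expandedness}, the key observation---and the only point requiring a little care---is that any two nodes of $\bbS^{\ka}_{s}$ with the same last element are bisimilar. This follows by composing bisimilarities through $\bbS$: if $\last(w_{1}) = \last(w_{2})$, then $w_{i} \bis f(w_{i})$ by the previous paragraph, and $f(w_{1}) = f(w_{2})$, so $w_{1}\bis w_{2}$. Now fix a non-root node $w' = vk_{n}s_{n}$ with predecessor $v$. Its siblings of the form $vks_{n}$ for $k<\ka$ with $k\neq k_{n}$ all share the last element $s_{n}$ with $w'$, hence are bisimilar to it; and there are exactly $\ka-1$ such values of $k$ (with $\om-1=\om$ when $\ka=\om$). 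This exhibits the required $\ka-1$ bisimilar siblings and completes the proof. The only genuinely non-routine ingredient is this sibling argument, and it reduces entirely to the bounded-morphism property already established.
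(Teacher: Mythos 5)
Your proof is correct. Note that the paper states this as a \emph{Fact} and offers no proof at all (its ``Facts'' are standard results quoted without argument), so there is nothing to compare against; your verification --- checking the tree conditions directly from the definition of $R'$, showing the graph of the last-element map is a bisimulation, and deriving $\ka$-expandedness by composing the bounded morphism with its converse to see that nodes with the same last element are bisimilar --- is exactly the routine argument the authors are implicitly relying on, including the correct handling of the only slightly delicate point (the sibling count).
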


\subsection{Syntax}

\begin{defi}
\label{d:syn}
The language $\muML$ of the modal $\mu$-calculus is given by the following 
grammar:
\begin{equation}
\label{eq:mu-syn1}
\phi \isbnf p 
   \divbnf \neg\phi \divbnf \phi\lor\phi 
   \divbnf \dia\phi 
   \divbnf \mu x.\phi,
\end{equation}
where $p$ and $x$ are propositional variables, and the formation of the formula 
$\mu x.\phi$ is subject to the constraint that the variable $x$ is
\emph{positive} in $\phi$, i.e., all occurrences of $x$ in $\phi$ are in the
scope of an even number of negations.
Elements of $\muML$ will be called \emph{modal fixpoint formulas}, 
\emph{$\mu$-formulas}, or simply \emph{formulas}.
\end{defi}

We will often make the assumption that our formulas are in negation normal form.

\begin{defi}
\label{d:syntnnf}
A formula of the modal $\mu$-calculus is in \emph{negation normal form} if it 
belongs to the language given by the following grammar:
\[
\phi \isbnf p \divbnf \neg p \divbnf \phi\lor\phi \divbnf \phi\land\phi \divbnf
   \dia\phi \divbnf \Box\phi \divbnf \mu x.\phi \divbnf \nu x.\phi,
\]
where $p$ and $x$ are propositional variables, and the formation of the formulas
$\mu x.\phi$ and $\nu x.\phi$ is subject to the constraint that the variable $x$
is \emph{positive} in $\phi$.
We use the symbol $\eta$ to range over $\mu$ and $\nu$.
\end{defi}

\begin{conv}
In order to increase readability by reducing the number of brackets, we adopt 
some standard scope conventions.
We let the unary (propositional and modal) connectives, $\neg,\dia$ and $\Box$,
bind stronger than the binary propositional connectives $\land$, $\lor$ and 
$\to$, and use associativity to the left for the connectives $\land$ and $\lor$.
Furthermore, we use `dot notation' to indicate that the fixpoint operators 
preceding the dot have maximal scope. 
For instance, $\mu x. \neg p \lor \dia x \lor \nu y. q \land \Box y$ stands 
for $\mu x. \Big(
\big((\neg p) \lor (\dia x)\big) \lor \nu y. (q \land \Box y)
\Big)$
\end{conv}

We gather some definitions pertaining to formulas.

\begin{defi}
We let $\Sfor(\xi)$ denote the collection of subformulas of a formula $\xi$, 
defined as usual, and we write $\phi\sforeq\xi$ if $\phi$ is a subformula
of $\xi$.
The \emph{size} of $\xi$ is defined as its number of subformulas, 
$\sz{\phi} \isdef \sz{\Sfor(\xi)}$.
We write $\mathit{BV}(\xi)$ and $\mathit{FV}(\xi)$ for, respectively, the set 
of \emph{bound} and \emph{free variables} of a formula $\xi$.
We let $\muML(\Prop)$ denote the set of $\mu$-formulas of which all free 
variables belong to the set $\Prop$.

A $\mu$-formula $\xi$ is \emph{well-named} if $\mathit{BV}(\xi)\cap
\mathit{FV}(\xi) =\nada$, and with every bound variable $x$ of $\xi$ we may
associate a unique subformula of the form $\eta x.\delta$ (with $\eta \in
\{\mu,\nu\}$).
This unique subformula will be denoted as $\eta_{x}.\delta_{x}$, and we call 
$x$ a $\mu$-variable if $\eta_{x}=\mu$, and a $\nu$-variable if $\eta_{x}=
\nu$.
\end{defi}

As a convention, the free variables of a formula $\phi$ are denoted by the 
symbols $p,q,r, \ldots$, and referred to as \emph{proposition letters}, while
we use the symbols $x,y,z,\ldots$ for the bound variables of a formula.

\begin{defi}
\label{d:subst}
Let $\phi$ and $\{ \psi_{z} \mid z \in Z \}$ be modal fixpoint formulas, where
$Z \cap \BV{\phi} = \nada$.
Then we let 
\[
\phi[\psi_{z}/z \mid z \in Z]
\]
denote the formula obtained from $\phi$ by simultaneously substituting each
formula $\psi_{z}$ for $z$ in $\phi$ (with the usual understanding that no 
free variable in any of the $\psi_{z}$ will get bound by doing so).
In case $Z$ is a singleton $z$, we will simply write $\phi[\psi_{z}/z]$, or 
$\phi[\psi]$ if $z$ is clear from context.
\end{defi}

\begin{defi}
Let $\xi$ be a well-named $\mu$--formula.
The \emph{dependency order} $<_{\xi}$ on the bound variables of $\xi$ is defined
as the least strict partial order such that $x <_{\xi} y$ if $\delta_{x}$ is a 
subformula of $\delta_{y}$.

We define a map $\Act: \Sfor(\xi) \to \funP(\FV{\xi})$ assigning 
to each subformula $\phi \sforeq \xi$ the least set $\Act(\phi)$ such that 
$\Act(p) = \{ p \}$ if $p \in \FV{\xi}$, 
$\Act(\phi \ast \psi) = \Act(\phi) \cup \Act(\psi)$ if $\ast \in \{\land,\lor\}$,
$\Act(\hs\phi) = \Act(\phi)$ if $\hs \in \{\dia,\Box\}$,
$\Act(\eta x.\phi) = \Act(\phi)$ if $\eta \in \{ \mu,\nu\}$,
and $\Act(x) = \Act(\de_{x})$ if $x \in \BV{\xi}$.
If $p \in \Act(\phi)$ we say that $p$ is \emph{active} in $\phi$ (relative to 
$\xi$).
\end{defi}

\subsection{Semantics}

\begin{defi}
By induction on the complexity of modal fixpoint formulas, we define a meaning
function $\mng{\cdot}$, which assigns to a formula $\phi \in \muML(\Prop)$ 
its \emph{meaning} $\mng{\phi}^{\bbS} \sse S$ in any Kripke structure $\bbS = 
(S,R,V)$ over $\Prop$.
The clauses of this definition are standard:
\begin{eqnarray*}
   \mng{p}^{\bbS} &\isdef& V(p)
\\ \mng{\neg\phi}^{\bbS} &\isdef& S \setminus \mng{\phi}^{\bbS}
\\ \mng{\phi\lor\psi}^{\bbS} &\isdef& \mng{\phi}^{\bbS} \cup \mng{\psi}^{\bbS}
\\ \mng{\dia\phi}^{\bbS} &\isdef& 
     \{ s \in S \mid R[s] \cap \mng{\phi}^{\bbS} \neq \nada \}
\\ \mng{\mu x.\phi}^{\bbS} &\isdef& 
     \bigcap \{ U \in \funP S \mid \mng{\phi}^{\bbS[x\mapsto U]}\sse U \}.
\end{eqnarray*}
If a point $s \in S$ belongs to the set $\mng{\phi}^{\bbS}$, we write 
$\bbS,s \sat \phi$, and say that $\phi$ is \emph{true at} $s$ or \emph{holds} 
at $s$, or that $s$ \emph{satisfies} $\phi$.
Two formulas $\phi$ and $\psi$ are \emph{equivalent}, notation: $\phi \equiv 
\psi$, if $\mng{\phi}^{\bbS} = \mng{\psi}^{\bbS}$ for any structure $\bbS$.
\end{defi}

Throughout this paper we will rely on the \emph{bisimulation invariance} of 
the modal $\mu$-calculus.

\begin{fact}
\label{f:bisinv}
Let $(\bbS,s)$ and $(\bbS',s')$ be pointed Kripke structures such that 
$(\bbS,s) \bis (\bbS',s')$.
Then $\bbS,s \sat \phi$ iff $\bbS',s' \sat \phi$, for all $\phi \in \muML$.
\end{fact}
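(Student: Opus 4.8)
The plan is to prove the statement by structural induction on $\phi$, but in order to make the induction go through at the fixpoint case I would first strengthen it to an assertion about arbitrary bisimulations relative to an arbitrary set of proposition letters. For a set $Q$ of proposition letters, call a relation $Z \sse S \times S'$ a $Q$-bisimulation between $\bbS$ and $\bbS'$ if it satisfies condition (prop) for all $q \in Q$, together with (forth) and (back). I would then prove: for every $\phi \in \muML$ and every $Q$-bisimulation $Z$ with $\FV{\phi} \sse Q$, if $Zss'$ then $\bbS,s \sat \phi$ iff $\bbS',s' \sat \phi$. Fact~\ref{f:bisinv} is the special case $Q = \Prop$, since $\bbS$ is a structure over $\Prop$ and so $\FV{\phi} \sse \Prop$.

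The atomic, Boolean and modal cases are routine. For $\phi = p$ the equivalence is exactly clause (prop) (note $p \in \FV{\phi} \sse Q$); the case $\phi = \neg\psi$ is immediate from the inductive hypothesis, and $\phi = \psi_0 \lor \psi_1$ splits the disjunction, applying the hypothesis to each disjunct. For $\phi = \dia\psi$, if $\bbS,s \sat \dia\psi$ then there is $t \in \si_R(s)$ with $\bbS,t \sat \psi$; condition (forth) supplies $t' \in \si_{R'}(s')$ with $Ztt'$, and since $\FV{\psi}\sse Q$ the inductive hypothesis gives $\bbS',t' \sat \psi$, whence $\bbS',s' \sat \dia\psi$. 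The converse direction is symmetric, using (back).

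The fixpoint case $\phi = \mu x.\psi$ is where the real work lies, and the reason for carrying $Q$ along. Here the free variables of the body $\psi$ lie in $Q \cup \{x\}$, but $Z$ need not be a $(Q \cup \{x\})$-bisimulation, since $x$ may be interpreted incompatibly on the two sides once we start to approximate the fixpoint. The key observation is that a pair of sets $U \sse S$ and $U' \sse S'$ preserves the bisimulation precisely when it is $Z$-coherent, meaning that $Ztt'$ implies $t \in U$ iff $t' \in U'$: in that case $Z$ is a $(Q \cup \{x\})$-bisimulation between $\bbS[x \mapsto U]$ and $\bbS'[x \mapsto U']$. I would then pass to the ordinal approximants of the least fixpoint, setting $A_0 \isdef \nada$, $A_{\alpha+1} \isdef \mng{\psi}^{\bbS[x \mapsto A_{\alpha}]}$, and $A_{\lambda} \isdef \bigcup_{\beta<\lambda} A_{\beta}$ at limits, with $A'_{\alpha} \sse S'$ defined analogously on $\bbS'$, and prove by transfinite induction on $\alpha$ that each pair $(A_\alpha, A'_\alpha)$ is $Z$-coherent. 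The zero and limit stages are immediate, since the empty sets are coherent and a union of coherent pairs is again coherent; at a successor stage the transfinite hypothesis makes $(A_\alpha, A'_\alpha)$ a $Z$-coherent pair, so that $Z$ is a $(Q \cup \{x\})$-bisimulation between $\bbS[x \mapsto A_\alpha]$ and $\bbS'[x \mapsto A'_\alpha]$, and the outer structural hypothesis applied to $\psi$ delivers coherence of $(A_{\alpha+1}, A'_{\alpha+1})$.

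Finally, since by the Knaster--Tarski theorem the least fixpoint on each model is the union of its ordinal approximants, $Z$-coherence at every level transfers to the fixpoints themselves: if $s \in \mng{\mu x.\psi}^{\bbS}$ then $s \in A_\alpha$ for some $\alpha$, whence $s' \in A'_\alpha \sse \mng{\mu x.\psi}^{\bbS'}$, and symmetrically. The main obstacle is therefore not any single computation but the bookkeeping in this one case: isolating the notion of $Z$-coherence and recognising that the transfinite induction on the approximation ordinal sits cleanly inside the outer structural induction on $\phi$, with the structural hypothesis for the body $\psi$ feeding the successor step.
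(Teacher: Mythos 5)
Your proof is correct. Note, however, that the paper does not prove this statement at all: it is listed as a Fact, i.e.\ a classical result about the $\mu$-calculus that is imported as a black box, so there is no in-paper argument to compare against. Your argument is the standard denotational one, and the bookkeeping is handled properly: the strengthening to $Q$-bisimulations, the observation that a $Z$-coherent pair $(U,U')$ turns $Z$ into a $(Q\cup\{x\})$-bisimulation between $\bbS[x\mapsto U]$ and $\bbS'[x\mapsto U']$, and the transfinite induction on the approximants $A_\alpha$, $A'_\alpha$ nested inside the structural induction (with the structural hypothesis for the body $\psi$ driving the successor step) are exactly the right ingredients; the zero and limit stages and the passage to the least fixpoint via Knaster--Tarski are all sound. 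Worth mentioning as an alternative more in the spirit of this particular paper: one can instead invoke the adequacy of the game semantics (Fact~\ref{f:0a}) and show that a winning strategy for $\eloi$ in $\EG(\xi,\bbS)@(\xi,s)$ can be transferred along the bisimulation $Z$ to a winning strategy in $\EG(\xi,\bbS')@(\xi,s')$, maintaining the invariant that corresponding basic positions $(\phi,t)$ and $(\phi,t')$ satisfy $Ztt'$; that route avoids ordinal approximants entirely and matches the strategy-transfer arguments used throughout Sections~\ref{sec:fw}--\ref{sec:ca}. Either proof is acceptable; yours is complete as written.
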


We will usually take a \emph{game-theoretic} perspective on the semantics of the
modal $\mu$-calculus.

\begin{defi}
Let $\bbS=(S,R,V)$ be a Kripke model and let $\xi$ be a formula in $\muML$.
We define the \emph{evaluation game} $\EG(\xi,\bbS)$ as the parity game 
$(G,E,\Om)$ of which 
the board consists of the set $\Sfor(\xi) \times S$, and the game graph (i.e.,
the partitioning of $\Sfor(\xi) \times S$ into positions for the two players, 
together with the set $E(z)$ of admissible moves at each position $z$), is 
given in Table~\ref{tb:1}.
Note that we do not assign a player to positions that admit a single move 
only.

\begin{table}[t]
\begin{center}
\begin{tabular}{|ll|c|l|}
\hline
\multicolumn{2}{|l|}{Position $z$} & Player  
& Admissible moves $E(z)$\\
\hline
     $(p,s)$        & with $p\in \FV{\xi}$ and $s \in V(p)$         
   & $\abel$ & $\nada$ 
\\   $(p,s)$        & with $p\in \FV{\xi}$ and $s \notin V(p)$      
   & $\eloi$ & $\nada$ 
\\   $(\neg p,s)$   & with $p\in \FV{\xi}$ and $s \in V(p)$    
   & $\eloi$ & $\nada$ 
\\   $(\neg p,s)$  & with $p\in \FV{\xi}$ and $s \notin V(p)$ 
   & $\abel$ & $\nada$ 
\\   $(x,s)$       & with $x\in \BV{\xi}$ 
   & - & $\{ (\delta_x,s) \}$ 
\\ \multicolumn{2}{|l|}{$(\phi \vee \psi,s)$}   & $\eloi$   
   & $\{ (\phi,s), (\psi,s) \}$ 
\\ \multicolumn{2}{|l|}{$(\phi \wedge \psi,s)$} & $\abel$ 
   & $\{ (\phi,s), (\psi,s) \}$ 
\\ \multicolumn{2}{|l|}{$(\eta x . \phi,s)$}    & - 
   & $\{ (\phi,s) \}$ 
\\ \multicolumn{2}{|l|}{$(\dia \phi,s)$}        & $\eloi$ 
   & $\{ (\phi,t) \mid sRt \}$ 
\\ \multicolumn{2}{|l|}{$(\Box \phi,s) $}       & $\abel$ 
   & $\{ (\phi,t) \mid sRt \}$ 
\\ \hline
\end{tabular}
\end{center}
\caption{The evaluation game $\EG(\xi,\bbS)$}
\label{tb:1}
\end{table}

To define the priority map $\Om$ of $\EG(\xi,\bbS)$, consider an infinite match
$\Si = (\phi_{n},s_{n})_{n\in\om}$, and let $\mathit{Inf}(\Si)$ denote the set
of (bound) variables that get unfolded infinitely often during the match.
This set contains a highest variable $x$ (with respect to the dependency order
$<_{\xi}$), and the winner of $\Si$ is $\eloi$ if $\eta_{x} = \mu$, and $\abel$ 
if $\eta_{x}=\nu$.
It is not difficult to define a priority map $\Om: (\Sfor \times S) \to \om$
that is compatible with this condition, but we do not need the details of the
precise definition.
\end{defi}

The following fact states the \emph{adequacy} of the game semantics.

\begin{fact}
\label{f:0a}
Let $\xi$ be a well-named formula of the modal $\mu$-calculus, and let 
$(\bbS,s)$ be some pointed Kripke structure.
Then
$\bbS,s \sat \xi \ouriff (\xi,s) \in \Win_{\eloi}(\EG(\xi,\bbS))$.
\end{fact}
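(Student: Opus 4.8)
The plan is to establish the two directions of the biconditional by exhibiting, for the appropriate player, a positional winning strategy from the relevant positions; since the two winning regions $\Win_{\eloi}$ and $\Win_{\abel}$ are disjoint (and, by Fact~\ref{f:pdpg}, exhaust the board), the two halves combine into the claimed equivalence. A naive induction on the complexity of $\xi$ breaks down at the fixpoint operators, and especially at nested fixpoints of alternating type, so instead I would base the argument on the \emph{ordinal approximants} of the fixpoints together with a \emph{signature} bookkeeping device. The first ingredient is the standard approximation lemma: since $x$ is positive in $\delta_x$, the map $U \mapsto \mng{\delta_x}^{\bbS[x \mapsto U]}$ is monotone on $\funP S$, so by Knaster--Tarski its least fixpoint is $\mng{\mu x.\delta_x}^{\bbS} = \bigcup_{\al} \mu^{\al}_x$, where $\mu^{0}_x = \nada$, $\mu^{\al+1}_x = \mng{\delta_x}^{\bbS[x \mapsto \mu^{\al}_x]}$, and unions are taken at limit stages (dually for the $\nu$-variables).

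For the implication from $\bbS,s \sat \xi$ to a win for $\eloi$, I would assign to every ``true'' position $(\phi,t)$, i.e.\ every $(\phi,t)$ with $\bbS,t \sat \phi$, a \emph{signature} $\mathit{sig}(\phi,t)$: a tuple of ordinals indexed by the $\mu$-variables of $\xi$ ordered by the dependency order $<_{\xi}$, where the entry for a $\mu$-variable $x$ records the least approximation stage $\al$ for which the relevant point already lies in $\mu^{\al}_x$. Signatures are compared lexicographically along $<_{\xi}$; as the index set is finite and each component is an ordinal, this order is well-founded. I would then define a positional strategy $f$ for $\eloi$ on true positions: at a disjunction pick a true disjunct, at a diamond $(\dia\phi,t)$ pick a successor $u$ with $\bbS,u \sat \phi$, in each case choosing a move that does not increase the signature. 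The approximation lemma, together with the monotonicity coming from positivity of the bound variables, guarantees that such a truth-preserving, signature-nonincreasing move always exists, and moreover that unfolding the currently highest active $\mu$-variable strictly decreases its component. The invariant that positions stay true is preserved also at $\abel$'s moves and at the forced single-move positions, by the semantics of $\land$, $\Box$ and of the fixpoint unfoldings.

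It then remains to check that $f$ is winning from $(\xi,s)$. Every finite $f$-guided full match ends at a position where $\abel$ is stuck --- a true literal assigned to $\abel$, or a $\Box$-position without successors --- so $\eloi$ wins it. For an infinite $f$-guided match $\Si$, let $x$ be the highest variable in $\mathit{Inf}(\Si)$; if $x$ were a $\mu$-variable, then from some point on no variable above $x$ in $<_{\xi}$ is unfolded, so the higher signature components stabilise, while the $x$-component strictly decreases at each of the infinitely many unfoldings of $x$ and never increases in between, yielding an infinite descending sequence of ordinals --- impossible. Hence $x$ is a $\nu$-variable and $\eloi$ wins. The converse implication I would obtain by the dual argument: a positional signature strategy for $\abel$ built from the $\nu$-approximants witnesses that whenever $\bbS,s \not\sat \xi$ the position $(\xi,s)$ lies in $\Win_{\abel}$, whence $(\xi,s) \notin \Win_{\eloi}$. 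I expect the main obstacle to be the correct formulation and verification of the signature: one must check that a single tuple, ordered lexicographically along $<_{\xi}$, simultaneously tracks all the nested least fixpoints, that resetting the components of variables below $x$ when $x$ is unfolded does no harm (because the $x$-component has strictly dropped), and that the choices at $\lor$ and $\dia$ can always be made both truth-preserving and signature-nonincreasing --- this last point being exactly where the approximation lemma is used.
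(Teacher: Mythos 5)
The paper does not prove this statement: it is imported as a known \emph{Fact} (the adequacy of the game semantics), with the surrounding text deferring to the standard references such as~\cite{arno:rudi01,grae:auto02,vene:lect12}. Your proposal is precisely the standard proof found there --- ordinal approximants via Knaster--Tarski, signatures indexed by the $\mu$-variables and compared lexicographically along $<_{\xi}$, a positional truth-preserving and signature-nonincreasing strategy for $\eloi$, well-foundedness to rule out a $\mu$-variable being the highest variable unfolded infinitely often, and the dual argument with $\nu$-approximants for $\abel$ --- so there is nothing to contrast with the paper's (nonexistent) argument. The sketch is correct at the level of detail given. The one point you flag yourself is indeed where all the work lies: for nested fixpoints the approximant $\mu^{\al}_{x}$ depends on the values assigned to the variables above $x$, so the signature of a true position $(\phi,t)$ must be defined as the lexicographically \emph{least} tuple $\zeta$ such that $t$ satisfies $\phi$ when each $\mu$-variable $x$ is interpreted as its $\zeta_{x}$-th approximant (and each $\nu$-variable as its full greatest fixpoint); with that definition the existence of truth-preserving, nonincreasing moves at $\lor$ and $\dia$, the preservation under $\abel$'s moves, and the strict decrease at unfoldings all go through as you describe.
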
      

Finally, for our decidability results we will use the following result, due
to Emerson \& Jutla~\cite{emer:comp88}.

\begin{fact}
\label{f:dec}
The question whether a given $\muML$-formula $\xi$ is satisfiable can be 
decided in time exponential in the size of $\xi$.
As a corollary, the question whether two $\muML$-formula $\xi$ and $\xi'$ are 
equivalent is decidable in time exponential in the sum of the sizes of $\xi$
and $\xi'$.
\end{fact}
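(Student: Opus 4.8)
The plan is to follow the standard automata-theoretic route: reduce satisfiability of a $\muML$-formula to nonemptiness of a parity tree automaton, and reduce that in turn to solving a parity game, where Fact~\ref{f:pdpg} does the decisive work. As a first step I would pass to tree models. Since $\muML$ is bisimulation-invariant (Fact~\ref{f:bisinv}) and every pointed model is bisimilar to its tree unravelling (Fact~\ref{f:bisimomega}), $\xi$ is satisfiable iff it holds at the root of some tree model; and a routine inspection of the diamond-moves in the evaluation game (Table~\ref{tb:1}) lets one bound the branching degree of such a tree by $b \isdef \sz{\xi}$. Hence it suffices to decide satisfiability of $\xi$ over $b$-branching trees, viewed as labelled trees over the finite alphabet $\funP\Prop$, with $\Prop$ collecting the proposition letters of $\xi$.

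Second, I would construct from $\xi$ an alternating parity tree automaton $\bbA_{\xi}$ over this alphabet accepting exactly the tree models that satisfy $\xi$ at their root. This is in essence a repackaging of the adequacy of the game semantics (Fact~\ref{f:0a}): the states of $\bbA_{\xi}$ are (essentially) the members of $\Sfor(\xi)$, its transition conditions encode the admissible moves of Table~\ref{tb:1}, and its priorities are induced by the dependency order $<_{\xi}$. Both the number of states and the index (number of distinct priorities) of $\bbA_{\xi}$ are polynomial in $\sz{\xi}$, and $\xi$ is satisfiable iff the language of $\bbA_{\xi}$ is nonempty.

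Third, I would decide this nonemptiness. Using the Simulation Theorem I would convert $\bbA_{\xi}$ into an equivalent \emph{nondeterministic} parity tree automaton $\bbB_{\xi}$; removing alternation costs a single exponential blow-up in the number of states while keeping the index polynomial in $\sz{\xi}$. Nonemptiness of $\bbB_{\xi}$ reduces to the existence of an accepting run, which is exactly a winning condition for $\eloi$ in an associated parity game (the \emph{emptiness game}), whose positions are built from the states of $\bbB_{\xi}$ --- hence exponentially many in $\sz{\xi}$ --- and whose number of priorities is polynomial in $\sz{\xi}$. By positional determinacy (Fact~\ref{f:pdpg}) the winner of this game can be computed by any standard parity-game algorithm in time polynomial in the number of positions and exponential only in the number of priorities; as the latter is polynomial in $\sz{\xi}$, the whole satisfiability test runs in time exponential in $\sz{\xi}$. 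I expect this third step to be the main obstacle: because the index sits in the exponent of the solving cost, the real content is to eliminate alternation at the price of only one exponential in the state set while keeping the index bounded by a polynomial in $\sz{\xi}$, which is precisely the delicate analysis of Emerson \& Jutla.

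For the corollary I would note that $\xi \equiv \xi'$ holds iff the $\muML$-formula $\chi \isdef (\xi \land \neg\xi') \lor (\xi' \land \neg\xi)$ is unsatisfiable. Since $\sz{\chi}$ is linear in $\sz{\xi} + \sz{\xi'}$ and unsatisfiability is merely the complement of satisfiability, the equivalence test inherits the same exponential-time bound, now measured in $\sz{\xi} + \sz{\xi'}$.
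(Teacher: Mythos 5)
The paper does not prove this statement at all: it is quoted as a known Fact and attributed to Emerson \& Jutla~\cite{emer:comp88}, so there is no in-paper argument to compare against. Your outline is a faithful reconstruction of the standard proof from that source (tree-model property via unravelling, formula-to-alternating-parity-tree-automaton translation, alternation removal with a single exponential blow-up in states while the index stays polynomial, and an emptiness game solved via positional determinacy), and the reduction of equivalence to unsatisfiability of the symmetric difference is the standard one. The only caveat worth flagging is that your game-solving step yields a bound of the shape $n^{O(d)}$ with $n$ exponential and $d$ polynomial in $\sz{\xi}$, i.e.\ $2^{O(\sz{\xi}^{2})}$ rather than $2^{O(\sz{\xi})}$; this is still \textsc{exptime} and amply sufficient for every use the paper makes of the Fact (elementary-time decidability), but if one reads ``exponential in the size'' strictly as $2^{O(\sz{\xi})}$, a sharper analysis of the automaton sizes would be needed.
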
      

Since this paper is about various syntactic fragments of the modal 
$\mu$-calculus, we gather here some specific formulas that will be discussed as 
(non-)instances of these fragments throughout the paper.

\begin{exa}
\label{ex:fmas}
Below we introduce a number of formulas, $\phi_{0}, \ldots, \phi_{6}$, and for
each listed formula we explain in words its meaning in an arbitrary tree model
$\bbS$ at the root $r$.

$\phi_{0} \isdef p$ simply states that $p$ holds at $r$;

$\phi_{1} \isdef \nu y. (q \land \dia y)$ expresses the existence of an
   infinite $q$-path at $r$;

$\phi_{2} \isdef p \land \nu y. (q \land \dia y)$ is the conjunction of
   $\phi_{0}$ and $\phi_{1}$;

$\phi_{3} \isdef \mu x. p \lor \dia\dia x$ expresses the existence of 
   a $p$-state at an even distance of $r$;

$\phi_{4} \isdef \mu x. p \lor \Box\Box x$ states that any path starting
   at $r$ has a $p$-state at even distance from $r$;

$\phi_{5} \isdef \nu y.\mu x. ((p \land \dia y) \lor \dia x)$ expresses
   the existence of a path with infinitely many $p$-nodes;

$\phi_{6} \isdef \mu x. (p \lor (\dia (q \land x) \land \dia (\neg q \land x)))$
      expresses the existence of a finite binary subtree rooted at $r$,
      of which all leaves satisfy $p$, and all inner nodes have both 
      a $q$- and a $\neg q$-successor.
\end{exa}

\section{Modal automata}
\label{sec:ma}

In this section we give a detailed introduction to the automata that feature in
our proofs.
Basically, we will be working with the guarded specimens of Wilke's alternating
tree automata~\cite{wilk:alte01}, and with the non-deterministic versions of 
these which correspond to the $\mu$-automata of Janin \& 
Walukiewicz~\cite{jani:expr96}.
As mentioned in the introduction, many of our proofs will be based on 
manipulating the one-step formulas featuring as the co-domain of the transition
map of a modal automaton. 

\subsection{One-step logic}

The transition map of our modal automata will be based on a so-called \emph{modal 
one-step language}, consisting of modal formulas of rank 1, built up from 
\emph{proposition letters} (which may have negative occurrences but must appear
unguarded) and \emph{variables} (which must appear both guarded and positively). 

\begin{defi}
\label{d:MLone}
Given a set $P$, we define the set $\Latt(P)$ of \emph{lattice terms} over $P$ 
through the following grammar:
\[
\pi \isbnf p 
   \divbnf \bot \divbnf \top 
   \divbnf \pi \land \pi \divbnf \pi \lor \pi,
\]
where $p \in P$.
Given two sets $\Prop$ and $A$, we define the set $\MLone(\Prop,A)$ of 
\emph{modal one-step formulas} over $A$ with respect to $\Prop$ inductively by
\[
\al
\isbnf p \divbnf \neg p
   \divbnf \dia \pi \divbnf \Box \pi
   \divbnf \bot \divbnf \top 
   \divbnf \al \land \al \divbnf \al \lor \al,
\]
with $p \in \Prop$ and $\pi \in \Latt(A)$.
\end{defi}

These formulas are naturally interpreted in so-called one-step models.

\begin{defi}
Fix sets $\Prop$ and $A$.
A \emph{one-step frame}  is a pair $(\mathtt{Y},S)$  where $S$ is any set, and
$\mathtt{Y} \subseteq \Prop$. 
A \emph{one-step model} is a triple $(\mathtt{Y},S,m)$ such that 
$(\mathtt{Y},S)$ is a one-step frame and $m: S \to \funP A$ is an 
\emph{$A$-marking} on $S$.
\end{defi}

Observe that with this definition, the coalgebraic representation of a Kripke 
structure $(S,R,V)$ can now be seen as a function $\si_{\bbS}$ mapping any state 
$s \in S$ to a one-step frame of which the carrier is a subset of $S$.
This plays a fundamental role in the acceptance game for modal automata, and 
explains the following \emph{one-step satisfaction relation} $\satone$ between 
one-step models and one-step formulas.

\begin{defi}
Fix a one-step model $(\mathtt{Y},S,m)$.
First we define the value $ \mng{\pi} \sse S$ of a lattice formula $\pi$ over 
$A$ by induction, setting $\mng{a} \isdef \{s \in S \mid a \in m(s)\}$ for $a 
\in A$, and treating the boolean connectives in the obvious manner.

The \emph{one-step satisfaction relation} $\satone$ is inductively defined
as follows. 
For the literals and modal operators we set:

- $(\mathtt{Y},S,m) \satone p$ \text{ iff } $p \in \mathtt{Y}$,

- $(\mathtt{Y},S,m) \satone \neg p $ \text{ iff } $p \notin \mathtt{Y}$,

- $(\mathtt{Y},S,m) \satone \Box \pi$ \text{ iff } $\mng{\pi} = S$,

- $(\mathtt{Y},S,m) \satone \Diamond \pi$ \text{ iff } $\mng{\pi} \neq \nada$,

\noindent
while we have the standard clauses for the boolean connectives.
\end{defi}

\subsection{Modal automata}

\begin{defi}
\label{d:modaut}
A \emph{modal $\Prop$-automaton} $\bbA$  is a triple $(A,\Th,\Om)$ where 
$A$ is a non-empty finite set of \emph{states}, $\Om : A \to \omega$ is the 
\emph{priority map}, while the \emph{transition map} 
$$\Th : A \to \MLone(\Prop,A)$$
maps states to one-step formulas.
An \emph{initialized} modal automaton is pair $(\bbA,a)$, usually denoted as 
$\bbA\init{a}$, consisting of a modal automaton $\bbA$ together with a 
designated \emph{initial} state $a$.
The classes of (initialized) modal automata over the set $\Prop$ are 
denoted as $\IAut(\Prop)$ and $\Aut(\Prop)$, respectively.
\end{defi}

The operational semantics of modal automata is defined in terms of 
\emph{acceptance games}.

\begin{defi}
\label{d:AG}
Let $\bbA = (A,\Th,\Om)$ and $\bbS = (S,R,V)$ be a modal $\Prop$-automaton and
a Kripke structure, respectively. 
The \emph{acceptance game} $\AG(\bbA,\bbS)$ for $\bbA$ with respect to $\bbS$ is
defined as the parity game given by the following table:

\begin{center}
\begin{tabular}{|l|c|l|l|}
\hline
    Position             & Player  
    &  Admissible moves  & Priority
\\ \hline
   $(a,s)\in A\times S$  & $\eloi$  
   & $\{m : \si_{R}(s) \to \funP A \mid \si(s),m \satone \Th(a) $\}
   & $\Om(a)$
\\ $m$                   & $\abel$ 
   & $\{(b,t)\mid b \in m(t) \}$
   & 0
\\
\hline
\end{tabular}
\end{center}

We say that $\bbA\init{a}$ \emph{accepts} the pointed structure $(\bbS,s)$ if
$(a,s)$ is a winning position in the acceptance game $\AG(\bbA,\bbS)$, and 
we write $\bbS,s \sat \bbA\init{a}$ to denote this.
\end{defi}

\begin{conv}
We will usually identify a match $\Si = (a_{0},s_{0})m_{0}(a_1,s_1)m_1(a_2,s_2)
m_2 \ldots$ of the acceptance game $\AG(\bbA,\bbS)$ with the sequence 
$(a_{0},s_{0})(a_1,s_1)(a_2,s_2)\ldots$ of its \emph{basic positions}.
\end{conv}

Some basic concepts concerning modal automata are introduced in the following 
definition.

\begin{defi}
Fix a modal $\Prop$-automaton $\bbA = (A,\Th,\Om)$.

The \emph{size} $\sz{\bbA}$ of $\bbA$ is defined as the cardinality $\sz{A}$ 
of its carrier set, while its \emph{weight} $\wt{\bbA} \isdef \max \{ \sz{\Th(a)} 
\mid a \in A \}$ is given as the maximum size of the one-step formulas in the
range of $\Th$.

Given a state $a$ of $\bbA$, we write $\eta_{a} = \mu$ if $\Om(a)$ is odd, and
$\eta_{a} = \nu$ if $\Om(a)$ is even; we call $\eta_{a}$ the 
\emph{(fixpoint) type} of $a$ and say that $a$ is an $\eta_{a}$-state.

The \emph{occurrence graph} of $\bbA$ is the directed graph $(G,E_{\bbA})$, 
where $aE_{\bbA}b$ if $a$ occurs in $\Th(b)$.
We let $\act_{\bbA}$ denote the transitive closure of $E_{\bbA}$ and say that
$a$ is \emph{active} in $b$ if $a \act_{\bbA}b$.
We write $a \biact_{\bbA} b$ if $a \act_{\bbA} b$ and $b \act_{\bbA} a$.
A \emph{cluster} of $\bbA$ is a cell of the equivalence relation generated by 
$\biact_{\bbA}$ (i.e., the smallest equivalence relation on $A$ containing 
$\biact_{\bbA}$); a cluster $C$ is \emph{degenerate} if it is of the form
$C = \{ a \}$ with $a \not\biact_{\bbA} a$. 
The unique cluster to which a state $a \in A$ belongs is denoted as $C_{a}$.
We write $a \blw_{\bbA} b$ if $\Om(a) < \Om(b)$, and $a \blweq_{\bbA} b$ if 
$\Om(a) \leq \Om(b)$. 
\end{defi}

\subsection{Disjunctive modal automata}

Many of our proofs involve the non-deterministic version of modal automata
that we call \emph{disjunctive}.
The transition map of these automata makes use of the so-called \emph{cover 
modality} $\nb$, which was independently introduced in coalgebraic 
logic by Moss~\cite{moss:coal99} and in automata theory by Janin \& 
Walukiewicz~\cite{jani:auto95} (where the authors used a different notation).
It is a slightly non-standard connective that takes a finite set of formulas as 
its argument.

\begin{defi}
\label{d:nb}
Given a finite set $\Phi$, we let $\nb\Phi$ abbreviate the formula
\[
\nb\Phi \isdef \bw\dia\Phi \land \Box\bv\Phi,
\]
where $\dia\Phi$ denotes the set $\{ \dia\phi \mid \phi \in \Phi\}$.
\end{defi}

\begin{rem}
\label{r:nbeq}
In words, $\nb\Phi$ holds at $s$ iff $\Phi$ and $\si_{R}(s)$ `cover' one 
another, in the sense that every successor of $s$ satisfies some formula in 
$\Phi$ and every formula in $\Phi$ holds in some successor of $s$.
From this observation it is easy to derive that, conversely, the standard modal 
operators can be expressed in terms of the cover modality:
\[\begin{array}{lll}
   \dia\phi &\equiv& \nb\{ \phi,\top\}
\\ \Box\phi &\equiv& \nb\{\phi\} \lor \nb\nada,
\end{array}\]
where we note that $\nb\nada$ holds at a point $s$ iff $s$ is a `blind' world,
that is, $R[s] = \nada$.
\end{rem}

\begin{defi}
\label{d:DMLone}
Let $\Prop$ be a given set of proposition letters and $A$ any finite set. 
We define a \emph{literal} over $\Prop$ to be a formula of the form $p$ or 
$\neg p$ with $p \in \Prop$, and define the language $\LitC(\Prop)$ of 
conjunctions of literals to be generated by $\pi$ in the grammar:
\[
\pi \isbnf p \divbnf \neg p \divbnf \top \divbnf \pi \wedge \pi
\]
where $p \in \Prop$. 
We now define the set $\DMLone(\Prop,A)$ of \emph{disjunctive} formulas in 
$\MLone(\Prop,A)$ as follows:
\[
\alpha \isbnf \pi \ybullet \nb B \divbnf \bot \divbnf \alpha \vee \alpha,
\]
where $\pi \in \LitC(\Prop)$ and $B \sse A$.
\end{defi}

\begin{defi}
\label{d:daut}
A modal $\Prop$-automaton $\bbA = (A,\Th,\Om)$ is \emph{disjunctive} if $\Th(a) 
\in \DMLone(\Prop,A)$ for all $a \in A$.
\end{defi}

The non-deterministic nature of disjunctive automata is exemplified by the
following fact, which can easily be verified.

\begin{fact}
\label{f:daut}
Let $\bbA$ be a disjunctive automaton and $\bbS$ an $\om$-unravelled tree.
Then without loss of generality we may assume $\eloi$'s positional winning 
strategy to be such that each marking $m$ picked by the strategy at a position
$(a,s)$ satisfies $\sz{m(t)} = 1$, for all $t \in \si_{R}(s)$.
\end{fact}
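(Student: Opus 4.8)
The plan is to start from an arbitrary positional winning strategy for $\eloi$ and to \emph{thin out} the markings it prescribes into singletons, exploiting the redundancy of successors that is built into an $\om$-unravelled tree. Concretely, by positional determinacy (Fact~\ref{f:pdpg}) I would first fix a positional winning strategy $f$ for $\eloi$ in $\AG(\bbA,\bbS)$, where $\bbS=(S,R,V)$ is $\om$-unravelled. At a winning basic position $(a,s)$, the strategy $f$ picks a marking $m\colon\si_{R}(s)\to\funP A$ with $\si(s),m\satone\Th(a)$. Since $\Th(a)\in\DMLone(\Prop,A)$ is a disjunction of formulas $\pi\ybullet\nb B$ with $\pi\in\LitC(\Prop)$ and $B\sse A$, the one-step model $\si(s),m$ satisfies one such disjunct, which I would fix. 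The truth of $\pi$ depends only on $\si_{V}(s)$, hence is insensitive to the marking, while $m\satone\nb B$ unpacks (as in Remark~\ref{r:nbeq}) into the two cover conditions: every $b\in B$ lies in $m(t)$ for some successor $t$ (so that $(b,t)\in\Win_{\eloi}$), and every successor $t$ satisfies $m(t)\cap B\neq\nada$. Shrinking each $m(t)$ to $m(t)\cap B$ only deletes moves for $\abel$ and preserves both conditions, so I may assume $\nada\neq m(t)\sse B$ for all $t$ (the case $B=\nada$ forces $s$ to be blind, where the claim is vacuous).

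The $\om$-unravelling is what makes the thinning possible: each successor of $s$ is a copy $d^{(k)}$ ($k<\om$) of some `direction' $d$, and for fixed $d$ the pointed submodels rooted at the distinct copies $d^{(0)},d^{(1)},\ldots$ are pairwise isomorphic, being $\om$-unravellings of one and the same original subtree; in particular the truth value of $(b,d^{(k)})\in\Win_{\eloi}$ does not depend on $k$. I would then define a singleton marking $m'$ as follows. For each $b\in B$ fix a witness $w_{b}$ with $b\in m(w_{b})$, and let $d_{b}$ be its direction. As $B$ is finite and each direction has $\om$ copies, I can place the finitely many $b$ with $d_{b}=d$ on pairwise distinct copies of $d$, and assign to every remaining copy of $d$ a fixed singleton $\{b^{*}_{d}\}$ with $b^{*}_{d}\in m(d^{(0)})\sse B$. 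Then each $m'(t)$ is a singleton contained in $B$, every $b\in B$ still occurs (on a copy of $d_{b}$), and every successor still carries an element of $B$; hence $\si(s),m'\satone\nb B$, and since $\pi$ is unaffected, $\si(s),m'\satone\Th(a)$. Thus $m'$ is a legal $\eloi$-move, and by the copy-independence above every position $(b,t)$ with $m'(t)=\{b\}$ again lies in $\Win_{\eloi}$.

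The main obstacle is to see that the positional strategy $f'$ picking these singleton markings is not merely trapped inside $\Win_{\eloi}$ but is genuinely \emph{winning}, i.e.\ respects the parity condition; in a parity game, staying in the winning region does not by itself guarantee a win. My plan is to define $f'$ by recursion on the tree so that, below each successor node, $f'$ is the image of $f$ under the isomorphism identifying that node's subtree with the subtree that justified its label one step above. Because these maps are isomorphisms of pointed Kripke structures, they preserve the entire acceptance game together with the priority map $\Om$; consequently every $f'$-guided match is, segment by segment, the isomorphic image of an $f$-guided match and therefore visits exactly the same sequence of automaton states. As the winner of an infinite match depends only on this sequence (via the highest priority occurring infinitely often), every $f'$-guided full match is won by $\eloi$ whenever the matching $f$-guided one is; since $f$ is winning, so is $f'$, which is the desired singleton-marking positional winning strategy. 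The delicate part is precisely checking that this recursive transport is globally well defined — the relevant subtrees are pairwise disjoint, so no node receives conflicting instructions — and that the match correspondence is faithful on automaton states; everything else is routine bookkeeping.
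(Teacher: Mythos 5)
Your proposal is correct. The paper states this fact without proof (it is dismissed as something that ``can easily be verified''), so there is no official argument to compare against; yours is the standard verification, and it correctly isolates the one genuinely non-trivial point. The preliminary thinning of $m(t)$ to $m(t)\cap B$, the blind-world case $B=\nada$, and the redistribution of the finitely many states of $B$ over the $\om$ many copies of each direction are all handled properly, and you are right that merely keeping every position inside $\Win_{\eloi}$ would not suffice in a parity game: the decisive step is that the singleton strategy is obtained by transporting $f$ along the subtree isomorphisms, so that every match guided by the new strategy carries the same stream of automaton states as some $f$-guided match and hence satisfies the same parity condition. Two minor remarks, neither of which is a gap: the transported strategy is well defined as a \emph{positional} strategy precisely because $\bbS$ is a tree and, by the singleton property itself, each node is reached with at most one automaton state, so the accumulated isomorphism at a node is independent of the match that produced it; and your appeal to literal isomorphisms between sibling subtrees presumes that ``$\om$-unravelled tree'' means the canonical $\om$-expansion $\bbS^{\om}_{s}$ of Fact~\ref{f:bisimomega} (which is how the fact is used throughout the paper) --- for a general $\om$-expanded tree, where siblings are only guaranteed to be bisimilar, one would instead invoke bisimulation invariance and transport the strategy along a bisimulation, at the cost of some extra bookkeeping.
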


In many branches of automata theory a crucial role is played by a 
\emph{simulation theorem}, stating that automata of certain given type can be 
transformed into, or `simulated' by, an equivalent automaton of which the 
transition structure is of a conceptually simpler kind.
In the case of modal automata, such a theorem can be proved along the lines of
Janin \& Walukiewicz' characterization of the $\mu$-calculus in terms of the 
non-deterministic $\mu$-automata~\cite{jani:auto95}.
For a detailed proof of the result below we refer to Venema~\cite{vene:lect12}.

\begin{fact}[Simulation Theorem]
\label{f:sim}
There is an effective procedure transforming an initialized modal automaton
$\bbA\init{a}$ into an equivalent initialized disjunctive modal automaton 
$\bbD\init{d}$.
The size of $\bbD$ is exponential in the size of $\bbA$, and its weight is 
at most exponential in the product of the size and the weight of $\bbA$.
\end{fact}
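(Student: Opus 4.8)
The plan is to nondeterminise $\bbA$ by a powerset construction on its state set, preceded by a normalisation of its one-step transition formulas. The guiding intuition is that a disjunctive one-step formula $\bigvee_i(\pi_i \ybullet \nb B_i)$ records, in each disjunct, a single commitment — the literals $\pi_i$ that must hold here together with a set $B_i \sse A$ of states to be ``covered'' by the successors — so that in the acceptance game a disjunctive automaton lets $\eloi$ commit, at each basic position, to exactly one such disjunct; this is precisely the non-determinism we seek. Since acceptance by a modal automaton is bisimulation invariant and every pointed model is bisimilar to its $\om$-unravelling (Fact~\ref{f:bisimomega}), it suffices to establish the equivalence of $\bbA$ and the automaton $\bbD$ we build on $\om$-unravelled tree models; this harmless restriction will let us invoke Fact~\ref{f:daut} when transferring strategies.

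The first step is a \emph{one-step normalisation lemma}: every $\al \in \MLone(\Prop,A)$ is equivalent, over all one-step models, to some $\sh{\al} \in \DMLone(\Prop,\funP A)$, a disjunction $\bigvee_i\big(\pi_i \ybullet \nb \mathcal{B}_i\big)$ with $\pi_i \in \LitC(\Prop)$ and $\mathcal{B}_i \sse \funP A$, where a macro-state $Q \in \funP A$ is to be read as the demand that precisely the states in $Q$ be realised at the successor in question. I would prove this by unfolding the semantics of $\satone$: the boolean structure of $\al$ is driven into disjunctive normal form, and each resulting conjunction of $\dia\pi$- and $\Box\pi$-requirements is analysed according to which subsets $Q \sse A$ occur among the markings $m(t)$ of the successors, a finite case distinction that distributes over $\dia$ and $\Box$ to produce the required $\nb$-shape. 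This lemma is the combinatorial heart of the whole construction.

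The second step is the macro-state construction, and it harbours the main obstacle. A first attempt takes $\funP A$ as the state set of $\bbD$, defines the transition out of a macro-state $Q$ by conjoining $\{\Th(a) \mid a \in Q\}$ and normalising the result via the lemma — so each disjunct $\pi \ybullet \nb\mathcal{B}$ points at a set $\mathcal{B}$ of successor macro-states — and takes $\{a\}$ as the initial state. This faithfully records the \emph{one-step} alternation of $\bbA$, with $\eloi$ bundling together all states she is currently obliged to defend, but it does \emph{not} by itself respect the acceptance condition, and repairing this is the crux. An infinite $\bbD$-match traverses a sequence of macro-states through which many $\bbA$-\emph{traces} run in parallel, and $\eloi$ wins the $\bbA$-game precisely when \emph{every} trace $\abel$ might follow obeys $\bbA$'s parity condition; turning this condition, which is universally quantified over traces, into a single parity condition on $\bbD$-states demands extra bookkeeping. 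I would therefore refine the states of $\bbD$ to carry, alongside the macro-state, a binary relation on $A$ recording which traces survived the latest transition together with the largest priority witnessed along each, and define $\Om_{\bbD}$ from this record in the style of a latest-appearance record, so that the highest priority seen infinitely often along a $\bbD$-run encodes the fixpoint type of the worst surviving $\bbA$-trace. Getting this bookkeeping to match $\bbA$'s parity condition exactly is the hard part of the proof.

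Finally I would verify $\bbS,s \sat \bbA\init{a}$ iff $\bbS,s \sat \bbD\init{\{a\}}$ on $\om$-unravelled trees by transferring winning strategies. For the forward direction, fix a positional winning strategy for $\eloi$ in $\AG(\bbA,\bbS)$ (Fact~\ref{f:pdpg}); at each node the set of $\bbA$-states $\eloi$ is defending forms the macro-state, her chosen markings select the matching disjunct of the normalised transition, and the trace-record is updated accordingly, so that each infinite $\bbD$-match so generated projects onto $\bbA$-traces all won by $\eloi$ and is therefore winning in $\bbD$. The converse is symmetric: a positional $\eloi$-strategy in $\AG(\bbD,\bbS)$, which by Fact~\ref{f:daut} may be taken to pick singleton markings, unpacks into an $\bbA$-strategy by following, at each $\abel$-choice of a thread, the corresponding $\bbA$-trace recorded inside the macro-states. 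The complexity bounds then follow by inspection: $\sz{\bbD}$ is the number of macro-states-with-record, which is exponential in $\sz{\bbA}$, while $\wt{\bbD}$ is the size of a normalised conjunction of up to $\sz{A}$ transition formulas of weight $\wt{\bbA}$, hence at most exponential in $\sz{\bbA}\cdot\wt{\bbA}$, as claimed.
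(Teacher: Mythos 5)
The paper does not actually prove Fact~\ref{f:sim}: it is imported as a known result, with the proof deferred to Janin \& Walukiewicz~\cite{jani:auto95} and to Venema's lecture notes~\cite{vene:lect12}. Your outline --- a one-step normalisation lemma turning every $\al \in \MLone(\Prop,A)$ into an equivalent disjunctive formula over $\funP A$, then a powerset construction with trace bookkeeping, verified by strategy transfer over $\om$-unravelled trees --- is exactly the route those sources take, and your first and third steps are sound as sketched.

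The gap is in the step you yourself flag as the crux, and it is a genuine one. A state consisting of a macro-state together with a single binary relation on $A$ whose edges carry the largest priority witnessed on the last transition, with $\Om_{\bbD}$ read off from that record ``LAR-style'', cannot capture the intended acceptance condition. What must be expressed is that \emph{every} infinite trace through the resulting sequence of priority-labelled relations satisfies the parity condition of $\bbA$; but the set of priorities occurring infinitely often among the \emph{records} neither determines nor is determined by the priorities occurring infinitely often along the individual \emph{traces}. Traces merge, split and die, so an edge labelled with some priority $k$ can occur in infinitely many consecutive records while no single trace crosses it infinitely often, and dually a record sequence can look harmless while threading a bad trace through it. This is precisely the phenomenon that forces a genuine determinisation step. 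The standard repair is to note that the set of ``good'' infinite sequences of labelled relations is an $\om$-regular stream language (its complement is recognised by a nondeterministic parity stream automaton of size $\sz{A}$ that guesses a bad trace), to construct a \emph{deterministic} parity stream automaton $\mathbb{W}$ for it by a Safra-style complementation/determinisation --- of size exponential in $\sz{A}$, so the overall bounds you state survive --- and to let $\bbD$ be the product of the relation-tracking automaton with $\mathbb{W}$, inheriting the priorities of $\mathbb{W}$. Without this ingredient, the claim in your final paragraph that an infinite $\bbD$-match is won by $\eloi$ iff all projected $\bbA$-traces are won by her fails for the automaton as you have defined it.
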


\subsection{Formulas and automata}

The automata-theoretic approach towards the modal $\mu$-calculus hinges on
the existence of truth-preserving translations between formulas and automata,
which testify that automata and formulas have the same expressive power.
In the direction from formulas to automata we only need the result as such.

\begin{fact}
\label{f:fma-aut}
(i) There is an effective procedure providing for any formula $\xi \in 
\muML(\Prop)$ an equivalent initialized modal automaton 
$\bbA_{\xi}\init{a_{\xi}}$.
The size of $\bbA_{\xi}$ is linear and its weight is at most exponential in 
the size of $\xi$.

\noindent
(ii) There is an effective procedure providing for any formula $\xi \in 
\muML(\Prop)$ an equivalent initialized disjunctive modal automaton 
$\bbD_{\xi}\init{d_{\xi}}$.
The size of $\bbD_{\xi}$ is exponential and its weight is at most 
doubly exponential in the size of $\xi$.
\end{fact}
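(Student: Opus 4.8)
The plan is to establish (i) by a direct, syntax-driven construction of $\bbA_{\xi}$, and then to obtain (ii) essentially for free by feeding the result of (i) into the Simulation Theorem (Fact~\ref{f:sim}). For the derivation of (ii) from (i), suppose $\bbA_{\xi}$ has $\sz{\bbA_{\xi}}$ linear and $\wt{\bbA_{\xi}}$ at most exponential in $\sz{\xi}$. Applying Fact~\ref{f:sim} produces an equivalent disjunctive $\bbD_{\xi}\init{d_{\xi}}$ whose size is exponential in $\sz{\bbA_{\xi}}$, hence exponential in $\sz{\xi}$, and whose weight is at most exponential in $\sz{\bbA_{\xi}}\cdot\wt{\bbA_{\xi}}$; since that product is itself exponential in $\sz{\xi}$, the weight of $\bbD_{\xi}$ is doubly exponential in $\sz{\xi}$. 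Thus all the real work sits in part (i).

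For (i) I would first bring $\xi$ into negation normal form and make it well-named, both at linear cost, and then take $A \isdef \Sfor(\xi)$ as state set with $a_{\xi}\isdef\xi$, which makes $\sz{\bbA_{\xi}}$ linear in $\sz{\xi}$ at once. The priority map $\Om$ is lifted from the evaluation game: using the dependency order $<_{\xi}$ and the fixpoint types I assign priorities that respect $<_{\xi}$ and follow the automaton convention ($\mu$-states odd, $\nu$-states even), with all non-fixpoint states receiving priority $0$, so that the parity condition of $\AG(\bbA_{\xi},\bbS)$ mirrors the winning condition of $\EG(\xi,\bbS)$. The transition map is the heart of the matter. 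I define a \emph{one-step unfolding} $\tr\colon \Sfor(\xi)\to\MLone(\Prop,A)$ that pushes through boolean and fixpoint structure until it reaches a literal or a modality: $\tr(p)=p$, $\tr(\neg p)=\neg p$, $\tr(\phi\lor\psi)=\tr(\phi)\lor\tr(\psi)$, $\tr(\phi\land\psi)=\tr(\phi)\land\tr(\psi)$, $\tr(\dia\phi)=\dia\phi$ and $\tr(\Box\phi)=\Box\phi$ (where the immediate argument $\phi$ now figures as a state, i.e.\ a one-element lattice term over $A$), and $\tr(\eta x.\delta)=\tr(\delta)$, $\tr(x)=\tr(\delta_{x})$; finally $\Th(a)\isdef\tr(a)$. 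This recursion can only fail to terminate at an \emph{unguarded} variable, i.e.\ one reachable from its binder without crossing a modality; I resolve such a recurrence by its fixpoint type, replacing it by $\bot$ for a $\mu$-variable and $\top$ for a $\nu$-variable (equivalently, I pass to an equivalent guarded formula beforehand). Tracking the variables already opened along a branch of the expansion bounds its depth by $\sz{\xi}$ and keeps every $\Th(a)$ of size at most exponential in $\sz{\xi}$, yielding the claimed weight.

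The main obstacle is the correctness statement $\bbS,s\sat\xi \ouriff \bbS,s\sat\bbA_{\xi}\init{a_{\xi}}$, which I would prove by comparing $\EG(\xi,\bbS)$ and $\AG(\bbA_{\xi},\bbS)$ round by round. The design of $\tr$ is meant to guarantee exactly that one round of $\AG$ at a basic position $(a,s)$---$\eloi$ choosing a marking $m$ of $\si_{R}(s)$ with $\si(s),m\satone\Th(a)$ and $\abel$ then challenging some $(b,t)$ with $b\in m(t)$---simulates precisely the stretch of the evaluation game that runs from $(a,s)$ through all intervening boolean and fixpoint-unfolding moves up to and including the next modal move; the marking $m$ records the formulas $\eloi$ commits to defend at each successor. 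I would turn this into a pair of strategy translations and then check that infinite matches are won by the same player, which reduces to verifying that the highest variable unfolded infinitely often in an $\EG$-match corresponds to the maximal-priority state occurring infinitely often in the matched $\AG$-match---exactly what the choice of $\Om$ arranges. The genuinely delicate point, which I expect to cost the most care, is the treatment of unguarded fixpoints: one must check that the $\bot/\top$-replacement (equivalently, guarding) preserves the meaning of $\xi$, since it is only after this step that $\tr$ is well defined and the two games line up.
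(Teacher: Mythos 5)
The paper itself does not prove this Fact: part (i) is credited to Wilke~\cite{wilk:alte01}, and the bounds in part (ii) are obtained, exactly as you propose, by composing (i) with the Simulation Theorem (Fact~\ref{f:sim}); your arithmetic for that composition is correct. The gap lies in your construction for (i), specifically in the priority map. Because the paper's modal automata are \emph{guarded}, the states that actually occur in the one-step formulas $\Th(a)=\tr(a)$ --- and hence the states visited at the basic positions of $\AG(\bbA_{\xi},\bbS)$ --- are the arguments of modalities, and these are in general neither fixpoint formulas nor fixpoint variables. Under your assignment (``all non-fixpoint states receive priority $0$'') the parity condition therefore cannot see the variables that $\tr$ unfolds between two consecutive modal steps. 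Concretely, for $\xi=\mu x.\dia(x\lor p)$ your automaton has $\Th(\xi)=\dia\langle x\lor p\rangle$ and $\Th(x\lor p)=\dia\langle x\lor p\rangle\lor p$ with $\Om(x\lor p)=0$; on an infinite $p$-free path $\eloi$ wins the acceptance game by looping through the state $x\lor p$ forever (the maximal priority occurring infinitely often is $0$, which is even), even though $\xi$ is false there. So the automaton you describe is not equivalent to $\xi$.

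The underlying difficulty is that the priority of a state must record the most significant variable unfolded \emph{en route} to it, and a single modal argument can be reached through expansions that unfold different variables (consider $\nu y.\mu x.\dia(x\lor y)$, where the state $x\lor y$ is re-entered both after unfolding $x$ and after unfolding $y$); hence no priority function on $\Sfor(\xi)$ alone can work in general. The standard remedies are either to build first an unguarded automaton (states may occur outside modalities, so that the variables themselves are visited and carry the priorities) and then apply a genuine guarded transformation, or to enrich the state set with the relevant trace information --- and this is precisely the delicate point, and the source of the exponential weight bound, that the paper flags by citing Bruse et al.~\cite{brus:guar15}. The remainder of your sketch (the one-step unfolding $\tr$, the $\bot/\top$ treatment of unguarded variables, the round-by-round strategy translation, and the reduction of (ii) to (i)) is the standard and correct skeleton.
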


The translations mentioned in Fact~\ref{f:fma-aut} originate with Janin \& 
Walukiewicz~\cite{jani:expr96} for part (ii)  and with Wilke~\cite{wilk:alte01} 
for part (i).
Concerning the size matters, note that we get an exponential (and not 
polynomial) weight bound in Fact~\ref{f:fma-aut}(1) because our modal automata 
are guarded (that is, all occurrences of states/variables in one-step formulas 
must occur in the scope of a modality; see Bruse et alii for 
details~\cite{brus:guar15}).
The size bounds in Fact~\ref{f:fma-aut}(ii) are obtained simply by combining 
part (i) with Fact~\ref{f:sim}.

In the opposite direction we will need an actual map transforming an 
initialized modal automaton into an equivalent $\mu$-calculus formula.
For our definition of such a map, which is a variation of the one found 
in~\cite{grae:auto02}, we need some preparations.
For a proper inductive formulation of this definition it is convenient to 
extend the class of modal automata, allowing states of the automaton to appear
in the scope of a modality in a one-step formula.

\begin{defi} 
\label{d:gma}
A \emph{generalized modal automaton} over $\Prop$ is a triple $\bbA = (A,\Th,
\Om)$ such that $A$ is a finite set of states, $\Om: A \to \om$ is a priority 
map, and $\Th: A \to \MLone(\Prop, A \cup \Prop)$ maps states of $A$ to 
\emph{generalized one-step formulas}.
\end{defi}

To mark the difference with standard modal automata, the formula $\dia a \land 
\Box(p \lor b)$ is a generalized one-step formula but not a proper one-step 
formula.
Whenever possible, we will apply concepts that have been defined for modal 
automata to these generalized structures without explicit notification.
For the operational semantics of generalized modal automata we may extend the 
notion of a one-step model in the obvious way.
Readers who are interested in the details may 
consult~\cite{enqv:comp16a}.

\begin{defi} 
\label{d:lma}
A (generalized) modal automaton $\bbA = (A,\Th,\Om)$ is called  \emph{linear}
if the relation $\blw_{\bbA}$ is a linear order (i.e., the priority map $\Om$
is injective), and satisfies $\Om(a) > \Om(b)$ in case $b$ is active in $a$ but 
not vice versa.
A \emph{linearization} of $\bbA$ is a linear automaton $\bbA' = (A,\Th,\Om')$ 
such that (1) for all $a \in A$, $\Om'(a)$ has the same parity as $\Om(a)$, and 
(2) for all $a,b \in A$ that belong to the same cluster we have $\Om'(a) < 
\Om'(b)$ iff $\Om(a) < \Om(b)$. 
\end{defi}

The following proposition is easy to verify.

\begin{prop}
\label{p:ma-lin}
Let $\bbA$ be a modal automaton. 

(1) $\bbA$ has some linearization.

(2) If $\bbA'$ is a linearization of $\bbA$, then $\bbA\init{a} \equiv 
\bbA'\init{a}$.
\end{prop}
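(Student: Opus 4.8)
The plan is to prove (1) by distributing priorities cluster-by-cluster, guided by the dependency structure of $\bbA$, and to prove (2) from the single observation that an infinite match of the acceptance game can only revisit one cluster infinitely often.

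For (1), I would first record that the active relation, restricted to \emph{distinct} clusters, points only one way: if $C\neq D$ and some $b\in D$ is active in some $a\in C$ (that is, $b\act_{\bbA}a$), then no state of $C$ is active in a state of $D$. Indeed, all states of a cluster are mutually active, so from such a double dependency one would derive $a\biact_{\bbA}b$ and hence $C=D$. Consequently the relation $\lessdot$ on clusters, defined by $D\lessdot C$ iff some state of $D$ is active in some state of $C$, is acyclic, and its transitive closure is a strict partial order; fix a linear extension $C_{1},\dots,C_{n}$ of it, so that $C_{i}\lessdot C_{j}$ forces $i<j$. Now assign to each $C_{j}$ a block $B_{j}$ of natural numbers with $\max B_{i}<\min B_{j}$ for $i<j$, each block chosen large enough (possible since $\mathbb{N}$ is infinite). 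Listing the states of $C_{j}$ in non-decreasing order of $\Om$, I would assign them strictly increasing values taken from $B_{j}$, giving each state a value of the \emph{same parity} as its original priority. Let $\Om'$ be the resulting map and $\bbA':=(A,\Th,\Om')$.

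Then I would check the three requirements. By construction $\Om'$ is injective, so $\blw_{\bbA'}$ is a linear order; each $\Om'(a)$ has the same parity as $\Om(a)$; and inside a single cluster $\Om'$ never reverses a strict $\Om$-inequality, which is the content of clause~(2) that is actually usable once $\Om'$ is required to be injective. Finally, if $b$ is active in $a$ but not conversely, then $a$ and $b$ lie in distinct clusters $C\neq D$ with $D\lessdot C$; hence $D$ precedes $C$ in the linear extension, and $\Om'(a)\in B_{C}$, $\Om'(b)\in B_{D}$ yield $\Om'(a)>\Om'(b)$, as required for linearity.

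For (2), the key point is that the games $\AG(\bbA,\bbS)$ and $\AG(\bbA',\bbS)$ have literally the same board and the same admissible moves --- these depend only on $A$, $\Th$ and $\bbS$ --- and differ solely in the priority labels, hence only in the winner assigned to \emph{infinite} matches. I would therefore show that every infinite match has the same winner in both games. Along a match, consecutive basic positions $(a_{i},s_{i})$, $(a_{i+1},s_{i+1})$ satisfy $a_{i+1}\,E_{\bbA}\,a_{i}$, since $a_{i+1}$ is selected from a marking $m$ with $\si(s_{i}),m\satone\Th(a_{i})$ and thus occurs in $\Th(a_{i})$. Hence any two states occurring infinitely often are mutually $\act_{\bbA}$-reachable, so the set $I$ of infinitely-often-visited states is contained in a single cluster $C$. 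The winner is decided by the parity of $\max\Om[I]$, respectively $\max\Om'[I]$; but within $C$ the order is preserved, so the $\Om'$-greatest state of $I$ is also $\Om$-greatest, and by clause~(1) it carries the same parity under both maps. Thus the winners agree, the two games have the same winning positions, and in particular $\bbA\init{a}\equiv\bbA'\init{a}$.

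The routine bookkeeping (choosing blocks, ordering states inside a block) is straightforward; the two steps I expect to need the most care are the acyclicity of $\lessdot$ in part~(1) and the cluster lemma in part~(2) --- that an infinite match stabilises inside one cluster --- since everything in the winner-preservation argument rests on it. I would also flag the minor reading issue that, when two states of one cluster share a priority, clause~(2) cannot be met as a literal biconditional by an injective $\Om'$; the property we construct and use is the weaker, always-achievable one that strict $\Om$-order inside a cluster is never inverted, which is exactly what part~(2) consumes.
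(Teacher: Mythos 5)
The paper itself offers no proof of this proposition (it is dismissed as ``easy to verify''), so there is no official argument for yours to diverge from; what you propose --- ordering the clusters by the acyclic inter-cluster dependency relation, distributing priorities block-by-block while preserving parity and the within-cluster order, and then observing that the two acceptance games differ only in their priority labelling and that an infinite match eventually stabilises inside a single cluster --- is exactly the standard argument, and part~(1) is correct as you give it. Your reading note on clause~(2) of Definition~\ref{d:lma} is also well taken: if two distinct states of one cluster share a priority, no injective $\Om'$ can satisfy the stated biconditional literally, and the property one actually constructs and uses is that strict $\Om$-inequalities within a cluster are never inverted.

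There is, however, one step in part~(2) that fails as written: you claim that consecutive basic positions of a match satisfy $a_{i+1}\,E_{\bbA}\,a_{i}$ because $a_{i+1}$ ``is selected from a marking $m$ with $\si(s_{i}),m\satone\Th(a_{i})$ and thus occurs in $\Th(a_{i})$''. The rules of $\AG(\bbA,\bbS)$ impose no such constraint on $m$: a marking may assign to a successor a state that does not occur in $\Th(a_{i})$ at all (take $\Th(a_{i})=\top$, or any satisfied one-step formula, and let $m$ mark every successor with all of $A$), and $\abel$ may then select such a state, so the infinitely visited states of an \emph{arbitrary} infinite match need not lie in a single cluster, and the two games need not assign that match the same winner. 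The repair is standard: one-step formulas are positive in the variables from $A$, so replacing $m$ by $m'(t)\isdef m(t)\cap\{b\in A\mid b \text{ occurs in } \Th(a_{i})\}$ preserves $\si(s_{i}),m'\satone\Th(a_{i})$ while only shrinking $\abel$'s options; hence $\eloi$ may be assumed without loss of generality to play such tight markings, and for matches guided by a tight strategy your cluster lemma --- and with it the winner-preservation argument and the equality of winning positions --- goes through. With that one observation added, your proof is complete.
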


\begin{defi} 
\label{d:ma-tr}
We define a map 
\[
\tr_{\bbA}: A \to \muML(\Prop)
\]
for any linear generalized $\Prop$-automaton $\bbA = (A,\Th, \Om)$.
These maps are defined by induction on the size of the automaton $\bbA$.

In case $\sz{\bbA} = 1$, we set
\[
\tr_{\bbA}(a) \isdef \eta_{a} a. \Th(a),
\]
where $a$ is the unique state of $\bbA$.

In case $\sz{\bbA} > 1$, by linearity there is a unique state $m$ reaching 
the maximal priority of $\bbA$, that is, with $\Om(m) = 
\max(\Ran(\Om))$.
Let $\bbA^{-} = (A^{-},\Th^{-},\Om^{-})$ be the $\Prop\cup \{ m \}$-automaton
given by $A^{-} \isdef A \setminus \{ m \}$, while $\Th^{-}$ and $\Om^{-}$ are 
defined as the restrictions of, respectively, $\Th$ and $\Om$ to $A^{-}$.
Since $\sz{\bbA^{-}} < \sz{\bbA}$, inductively\footnote{%
   Observe that since $m$ is a proposition letter and not a variable in 
   $\bbA^{-}$, the latter structure need not be a modal automaton, even if 
   $\bbA$ is. 
   It is for this reason that we introduced the notion of a \emph{generalized} 
   modal automaton.
   }
we may assume a map 
$\tr_{\bbA^{-}}: A \to \muML(X \cup \{ m \})$.

Now we first define
\[
\tr_{\bbA}(m) \isdef \eta_{m} m. \Th(m) [\tr_{\bbA^{-}}(a)/a \mid a \in A^{-}],
\]
and then set
\[
\tr_{\bbA}(a) \isdef \tr_{\bbA^{-}}(a)[\tr_{\bbA}(m)/m]
\]
for the states $a \neq m$.
\end{defi}

The following proposition states that this map is truth-preserving, and of
exponential size.
The proof of the first statement is rather standard; details can be found 
in~\cite{grae:auto02,vene:lect12}; the size bound can be established via a
straightforward induction on the size of the automaton.

\begin{fact}
\label{f:ma-lin-tr}
Let $\bbA\init{a}$ be an initialized linear modal automaton.
Then $\tr_{\bbA}(a)$ is equivalent to $\bbA\init{a}$ and its size is at most
exponential in the sum of the size and the weight of $\bbA$.
If $\bbA$ is positive in $p\in\Prop$ then so is $\tr_{\bbA}(a)$.
\end{fact}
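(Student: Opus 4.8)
The plan is to prove all three assertions by induction on the size $\sz{\bbA}$ of the linear automaton, following exactly the recursive structure of Definition~\ref{d:ma-tr}. The base case $\sz{\bbA}=1$ is immediate: here $\tr_{\bbA}(a) = \eta_{a}a.\Th(a)$, and one checks directly that the acceptance game $\AG(\bbA,\bbS)$ and the evaluation game $\EG(\eta_{a}a.\Th(a),\bbS)$ are essentially the same parity game, so equivalence follows from Fact~\ref{f:0a}; the size is $\sz{\Th(a)}+1$, which is within the claimed bound; and positivity in $p$ is preserved since $\Th(a)$ is positive in $p$ and no negation is introduced. For the inductive step, I would single out the unique maximal-priority state $m$ (using linearity) and pass to $\bbA^{-}$, which treats $m$ as a \emph{proposition letter}; by the induction hypothesis applied to the generalized automaton $\bbA^{-}$ over $\Prop\cup\{m\}$, I obtain that $\tr_{\bbA^{-}}(a)$ is equivalent to $\bbA^{-}\init{a}$ for each $a\in A^{-}$, with the size and positivity properties in hand.

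The heart of the argument is the equivalence claim in the inductive step, and I expect this to be the main obstacle. The issue is semantic: I must show that substituting $\tr_{\bbA^{-}}(a)$ for the automaton-state $a$, and then closing off $m$ with the fixpoint $\eta_{m}m.\Th(m)[\cdots]$, correctly captures the behaviour of $\bbA$ in which $m$ is an ordinary state subject to the parity condition. The natural route is a game-theoretic comparison: I would relate winning strategies in $\AG(\bbA,\bbS)$ to winning strategies in the evaluation game for $\tr_{\bbA}(m)$ (and thence for $\tr_{\bbA}(a)$). The key fixpoint fact needed is that, because $m$ carries the \emph{maximal} priority in the linear order $\blw_{\bbA}$, unfolding the $\eta_{m}$-fixpoint exactly simulates returning to state $m$ in the acceptance game, with the parity of $\Om(m)$ matching the $\mu/\nu$ type $\eta_{m}$. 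This is the standard Bekič-style / fixpoint-elimination correctness argument underlying automaton-to-formula translations, and I would either give the strategy transfer directly or cite the detailed treatments in~\cite{grae:auto02,vene:lect12}, as the statement itself permits.

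For the size bound, I would argue by a straightforward induction. Writing the substitution steps in Definition~\ref{d:ma-tr}, $\tr_{\bbA}(m)$ is built from $\Th(m)$ (of size at most $\wt{\bbA}$) by substituting the formulas $\tr_{\bbA^{-}}(a)$, and $\tr_{\bbA}(a)$ is obtained from $\tr_{\bbA^{-}}(a)$ by one further substitution of $\tr_{\bbA}(m)$ for $m$. Each substitution multiplies sizes rather than merely adding them, and with $\sz{\bbA^{-}}=\sz{\bbA}-1$ states each contributing a factor bounded in terms of $\wt{\bbA}$, the induction yields a bound exponential in $\sz{\bbA}+\wt{\bbA}$; I would track the exponent carefully but not belabour the routine arithmetic. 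Finally, positivity in $p$ propagates through the construction because substitution of $p$-positive formulas into $p$-positive contexts is again $p$-positive, and the fixpoint operator $\eta_{m}$ introduces no negation on $p$ (here one uses that $m$ is a bound variable distinct from $p$, so the positivity constraint on the fixpoint concerns $m$, not $p$). Assembling the base case, the inductive equivalence, the size estimate, and the positivity preservation completes the proof.
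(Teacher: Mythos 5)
Your proposal is correct and matches the approach the paper itself takes: the paper states this Fact without a detailed proof, remarking that truth preservation is the standard fixpoint-elimination argument (deferring details to \cite{grae:auto02,vene:lect12}) and that the size bound follows by a straightforward induction on $\sz{\bbA}$, which is exactly the induction you outline. The only minor imprecision is your remark that substitution ``multiplies'' sizes --- with the paper's subformula-count measure it is in fact subadditive, $\sz{\phi[\psi/x]}\leq\sz{\phi}+\sz{\psi}$, though the exponential blow-up still arises from iterating the simultaneous substitutions over all $\sz{\bbA}$ stages, so your conclusion stands.
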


Based on Proposition~\ref{p:ma-lin} and Fact~\ref{f:ma-lin-tr} we may define
a truth-preserving translation $\tr_{\bbA}: A \to \muML$ for every modal 
automaton $\bbA = (A,\Th,\Om)$ by picking a linearization $\bbA^{l}$ of $\bbA$
and defining $\tr_{\bbA}(a) \isdef \tr_{\bbA^{l}}(a)$ for all $a \in A$.
Clearly the exact shape of $\tr_{\bbA}(a)$ will depend on the choice of the 
linearization.

\subsection{Bipartite automata}

As mentioned in our introduction, the results in this paper will be based on 
constructions that transform a given automaton into one of a particular shape.
In this final subsection on modal automata we look at the target automata of 
these constructions in some more detail.

\begin{defi}
\label{d:bipaut} 
A modal automaton $\bbA$ is called a \emph{bipartite} automaton if its carrier
$A$ can be partitioned into an \emph{initial} part $A_{0}$ and a \emph{final} 
part $A_{1}$, in such a way that we have ${\act_{\bbA}} \cap 
(A_{0}\times A_{1}) = \nada$.
It will sometimes be convenient to represent a bipartite automaton as a
quadruple $\bbA = (A_{0},A_{1},\Th,\Om)$.

Given such an automaton and a state $a$ of $\bbA$, we say that $\bbA\init{a}$ 
is an \emph{initialized} bipartite automaton if $a$ belongs to the initial 
part of $\bbA$; and given a class $C$ of bipartite automata, we let 
$\mathit{IC}$ denote the class of corresponding initialized bipartite automata.
\end{defi}

The intuition underlying this definition is that, in any acceptance game related
to a bipartite automaton $\bbA = (A_{0},A_{1},\Th,\Om)$ there will be two kinds
of matches: either $\bbA$ stays in its initial part throughout the match, or at
some stage it moves to its final part, where it remains throughout the remainder 
of the match.
The bipartite automata that we will meet in this paper will be such that its 
initial and final part behave differently with respect to the designated 
propositional variable $p$.

Recall that each of our results concerns a certain syntactic fragment $\Frag(P)$
of the modal $\mu$-calculus, consisting of formulas where we have imposed 
restrictions on the occurrence of the proposition letters belonging to a 
certain set $P \sse \Propvar$.
(Note that while these restrictions generally concern the free variables of the
formula, not the bound ones, there is no relation between the fragment 
$\Frag(P)$ and the set $\muML(\Prop)$ of formulas of which the free 
variables belong to $\Prop$).
In each case considered, part of our proof will consist in showing that given 
a bipartite automaton $\bbA$ of a certain kind, one can find a a translation 
$\tr: A \to \muML$, such that for all states $a$ in the initial part of $\bbA$, 
the formula $\tr(a)$ belongs to the fragment $\Frag(P)$.
Since these proofs are all quite similar, we have extracted their pattern
in the form of the following definition and proposition, so that in each 
specific case we may confine our attention to a verification that the fragment 
and the automata at stake meet the required conditions.

\begin{defi}
Suppose that we have defined, for each finite set $P$ of proposition letters, 
a fragment $\Frag(P) \sse \muML$ of $\mu$-calculus formulas.
In the sequel we shall consider the following properties that are applicable to 
such a family $\Frag = \{ \Frag(P) \mid P \sse_{\om} \Propvar\}$:
\\
(EP) \emph{extension property}: $\phi \in \Frag(P \cup \{ q \})$ whenever 
   $\phi \in \Frag(P)$ and $q \not\in \FV{\phi}$;
\\   
(SP1) \emph{first substitution property}:
   $\phi[\psi/x] \in \Frag(P)$ whenever $\phi \in \Frag(P \cup \{x\})$ and 
   $\psi \in \Frag(P)$;
\\
(SP2) \emph{second substitution property}:
   $\phi[\psi/x] \in \Frag(P)$ whenever $\phi \in \Frag(P)$, $x \not\in P$
   and $\FV{\psi} \cap P = \nada$;
\\
(CA$_{\eta}$) \emph{$\eta$-fixpoint closure property}:
   $\eta x.\phi \in \Frag(P)$ whenever $\phi \in \Frag(P \cup \{x\})$.
\end{defi}

\begin{prop}
\label{p:tr}
Let $\Frag = \{ \Frag(Q) \mid Q \sse \Propvar \}$ be a family of fragments of 
$\muML$, and let $P,\Prop$ be two finite sets of proposition letters.
Let $\bbA = (A,B,\Th,\Om)$ be a bipartite $\Prop$-automaton, and let $\bbB$ 
denote the automaton $(B,\Th_{\rst{B}},\Om_{\rst{B}})$.
Assume that

(i) $F$ has the properties (EP), (SP1), (SP2) and, for all $a \in A$, 
   (CA$_{\eta_{a}}$);

(ii) $\Th(a) \in \Frag(P \cup A)$ for all $a \in A$;

(iii) there is a translation $\tr: B \to \muML(\Prop)$ such that
   $\bbB\init{b} \equiv \tr(b) \in \Frag(P)$, for all $b \in B$;

\noindent
Then there is a translation $\tr: A \to \muML(\Prop)$ such that 
$\tr_{\bbA}(a) \equiv \bbA\init{a}$ and $\tr_{\bbA}(a) \in \Frag(P)$ for all
$a \in A$.
\end{prop}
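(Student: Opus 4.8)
The plan is to mimic the inductive translation $\tr_{\bbA}$ from Definition~\ref{d:ma-tr}, but to carry out the induction only over the \emph{initial} part $A$ of the bipartite automaton, treating the final part $B$ as a ``black box'' already handled by hypothesis~(iii). Concretely, I would first pass to a linearization of $\bbA$ that respects the bipartite structure: by Proposition~\ref{p:ma-lin} we may assume $\Om$ is injective, and since $\act_{\bbA} \cap (A \times B) = \nada$ means no state of $A$ is active in any state of $B$, I can choose the linearization so that every priority assigned to a state of $A$ exceeds every priority assigned to a state of $B$. This guarantees that when we peel off states in order of decreasing priority (as in Definition~\ref{d:ma-tr}), we exhaust all of $A$ \emph{before} touching $B$, so that the black-box hypothesis~(iii) is exactly what is available when the recursion reaches the final part.

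Next I would run the induction of Definition~\ref{d:ma-tr} restricted to $A$. At each step we have a maximal-priority state $m \in A$, and we form $\bbA^{-}$ by deleting $m$ (turning $m$ into a proposition letter). The definition gives
\[
\tr_{\bbA}(m) \isdef \eta_{m} m. \Th(m)[\tr_{\bbA^{-}}(a)/a \mid a \in A^{-}],
\]
and $\tr_{\bbA}(a) \isdef \tr_{\bbA^{-}}(a)[\tr_{\bbA}(m)/m]$ for the remaining states. The truth-preservation $\tr_{\bbA}(a) \equiv \bbA\init{a}$ is then inherited verbatim from Fact~\ref{f:ma-lin-tr}, since our construction is a special case of that translation with a particular linearization. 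The real content is the \emph{membership} claim $\tr_{\bbA}(a) \in \Frag(P)$ for $a \in A$, and this is where the fragment-closure hypotheses~(i) do the work. The base case of the inner recursion is when the only remaining initial state has been deleted, at which point the generalized automaton over the letters $P \cup B$ has all its states in $B$; here I invoke hypothesis~(iii) to supply formulas $\tr(b) \equiv \bbB\init{b} \in \Frag(P)$.

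For the inductive membership step I would verify, reading the closure properties against the three substitutions in Definition~\ref{d:ma-tr}, that $\Frag(P)$ is preserved. Starting from $\Th(m) \in \Frag(P \cup A)$ by hypothesis~(ii), I substitute the inductively-available translations $\tr_{\bbA^{-}}(a) \in \Frag(P \cup \{m\})$ for the variables $a \in A^{-}$; repeated use of (SP1) absorbs these substitutions while keeping $m$ as a free letter, landing in $\Frag(P \cup \{m\})$, after which (CA$_{\eta_m}$) yields $\eta_m m.\,(\cdots) \in \Frag(P)$, i.e.\ $\tr_{\bbA}(m) \in \Frag(P)$. Then $\tr_{\bbA}(a) = \tr_{\bbA^{-}}(a)[\tr_{\bbA}(m)/m]$ is obtained by substituting this $P$-formula for the fresh letter $m$, which is handled by (SP2) (with $x = m \notin P$ and $\FV{\tr_{\bbA}(m)} \cap P$ suitably controlled). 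The extension property (EP) is the plumbing that lets me move between $\Frag(P)$ and $\Frag(P \cup \{q\})$ when a deleted state does or does not actually occur, keeping the bookkeeping of free variables consistent.

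The main obstacle I anticipate is \emph{bookkeeping of which letters are free where}, rather than any deep idea: I must track carefully that the ``new'' proposition letter $m$ created at each deletion step is genuinely fresh (it is not in $P$ and not among the letters of $\bbB$), that the translations $\tr_{\bbA^{-}}(a)$ really land in $\Frag(P \cup \{m\})$ and not some larger set, and that the side condition $\FV{\psi} \cap P = \nada$ demanded by (SP2) is met when we substitute $\tr_{\bbA}(m)$ back in. A subtle point is that $\tr_{\bbA}(m)$ may contain free letters from $P$ (it is in $\Frag(P)$, after all), so to apply (SP2) I must instead argue via (SP1)/(EP), or else ensure the substituted letter $m$ was introduced purely as a placeholder disjoint from $P$; reconciling this with the exact statements of (SP1) and (SP2) is the delicate step that the closure properties were precisely designed to make go through. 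Once these freshness invariants are pinned down, the induction closes and the proposition follows.
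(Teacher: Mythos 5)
Your proposal follows essentially the same route as the paper's proof: linearize $\bbA$ so that every priority in the initial part $A$ exceeds every priority in the final part $B$, peel off the $A$-states one at a time via Definition~\ref{d:ma-tr}, inherit truth-preservation from Fact~\ref{f:ma-lin-tr}, track fragment membership with the closure properties, and invoke hypothesis (iii) when the recursion bottoms out in $B$. The one place where your bookkeeping, read literally, does not go through is the assignment of (SP1) versus (SP2) to the various substitutions --- you have them essentially swapped. When you form $\eta_{m} m.\Th(m)[\tr_{\bbA^{-}}(a)/a \mid a \in A^{-}]$, the set $A^{-}$ contains the states of $B$, and for $b \in B$ you cannot invoke (SP1): that property requires the substituted variable to lie in the index set of the fragment, whereas hypothesis (ii) only places $\Th(m)$ in $\Frag(P \cup A)$, which does not contain $b$ (and (EP) cannot add $b$, since $b$ occurs free in $\Th(m)$). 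This is precisely what (SP2) is for: the paper first performs the $A$-substitutions via (SP1), applies (CA$_{\eta_{m}}$), and only then absorbs the $B$-substitutions via (SP2), using that $b$ is not in the index set. Conversely, for the back-substitution $\tr_{\bbA^{-}}(a)[\tr_{\bbA}(m)/m]$ the correct tool is (SP1), not (SP2): you have $\tr_{\bbA^{-}}(a) \in \Frag(P' \cup \{m\})$ and $\tr_{\bbA}(m) \in \Frag(P')$, which is exactly the shape (SP1) handles, and your worry about $\FV{\tr_{\bbA}(m)}$ meeting $P$ then evaporates. You half-notice this in your final paragraph; once the two roles are swapped into place, your induction is the paper's proof.
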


\begin{proof}
By Proposition~\ref{p:ma-lin} we may assume without loss of generality that 
$\bbA$ is linear itself and that $\Om(a) > \Om(b)$ for all $a \in A$ and $b 
\in B$.
We will also assume that the translation $\tr$ of (iii) is identical to 
the map $\tr_{\bbB}$ obtained by applying Definition~\ref{d:ma-tr} to the 
automaton $\bbB$.
(The more general case can be proved by a straightforward modification of the 
proof given below.)

Enumerate $A = \{ a_{1},\ldots,a_{n}\}$ in such a way that $\Om(a_{i}) < 
\Om(a_{j})$ iff $i < j$.
For $0 \leq k \leq n$ we define $A_{k} \isdef \{ a_{i} \mid 0 < i \leq k \}$, 
and we let $\bbA_{k}$ be the linear generalized automaton $(B \cup A_{k},
\Th_{\rst{B \cup A_{k}}},\Om_{\rst{B \cup A_{k}}})$.
It is straightforward to verify that $\bbA_{0} = \bbB$, $\bbA_{n} = \bbA$, and 
that $\bbA_{i} = (\bbA_{i+1})^{-}$ for all $i < n$.
We abbreviate $\tr_{k} \isdef \tr_{\bbA^{k}}$.

Our task is then to show that $\tr_{n}(a) \in \Frag(P)$ for all $a \in A$, and
our approach will be to prove, by induction on $k$, that
\begin{equation}
\label{eq:tr1}
\tr_{k}(c) \in \Frag(P \cup \{ a_{i} \mid k < i \leq n\}), 
\text{ for all } c \in A \cup B \text{ and } k \leq n.
\end{equation}

In the base step of the induction, where $k=0$, we are dealing with the 
automaton $\bbA_{0} = \bbB$, corresponding to the final part of $\bbA$, and we 
only have to worry about states $b \in \bbB$.
It is easy to see that
\[
\tr_{0}(b) = \tr_{\bbB}(b) \text{ for all } b \in B,
\]
and so by assumption (iii) and the assumption (i) that $F$ has the extension 
property, we obtain that $\tr_{0}(b) \in \Frag(P) \sse \Frag(P \cup A)$.
This suffices to prove \eqref{eq:tr1} in the base case.

In the inductive case, where $k > 0$, we consider the automaton $\bbA_{k}$, with
maximum-priority state $a_{k} \in A$.
Observe that by definition we have
\begin{equation}
\label{eq:tr2}
\tr_{k}(a_{k}) \isdef \eta_{a_{k}}a_{k}. 
   \Th(a_{k})[\tr_{k-1}(a_{i})/a_{i}\mid i < k, \tr_{k-1}(b)/b\mid b \in B]
\end{equation}
and, for $i<k$,
\begin{equation}
\label{eq:tr3}
\tr_{k}(a_{i}) \isdef \tr_{k-1}(a_{i})[\tr_{k}(a_{k})/a_{k}].
\end{equation}
Observe that we may conclude from \eqref{eq:tr2} that 
\begin{equation}
\label{eq:tr2a}
\tr_{k}(a_{k}) \isdef \eta_{a_{k}}a_{k}. 
   \Th(a_{k})[\tr_{k-1}(a_{i})/a_{i}\mid i < k][\tr_{k-1}(b)/b\mid b \in B],
\end{equation}
since no $b \in B$ has a free occurrence in any formula $\tr_{k-1}(a_{i})$.

We now turn to the proof of \eqref{eq:tr1}.
Starting with the states in $B$, it is easy to see that
\[
\tr_{k}(b) = \tr_{\bbB}(b) \text{ for all } b \in B 
\text{ and } k \leq n,
\]
and so we find $\tr_{k}(b) \in \Frag(P \cup \{ a_{i} \mid k < i \leq 
n\} )$
by assumption (iii) and the assumption (i) that $F$ has the extension property.

We now consider the formula $\tr_{k}(a_{k})$.
By the inductive hypothesis we have 
$\tr_{k-1}(a_{i}) \in \Frag(P \cup \{a_{i} \mid k \leq i \leq n\})$,
while by assumption (iii) we have $\Th(a_{k}) \in 
\Frag(P \cup A) = 
\Frag((P \cup \{ a_{i} \mid k \leq i \leq n \}) \cup \{ a_{i} \mid i < k \})$.
But then by the first substitution property of $F$ we find that 
\[
\Th(a_{k})[\tr_{k-1}(a_{i})/a_{i}\mid i < k]
\in \Frag(P \cup \{ a_{i} \mid k \leq i \leq n\} ),
\]
and by the fact that $\Frag$ is closed under the application of 
$\eta_{a_{k}}$-fixpoint operators, we obtain
\begin{equation}
\label{eq:tr4}
\eta_{a_{k}}a_{k}. \Th(a_{k})[\tr_{k-1}(a_{i})/a_{i}\mid i < k]
\in \Frag(P \cup \{ a_{i} \mid k < i \leq n\} ).
\end{equation}
Now observe that for all $b \in B$ we have $b \not\in P$ and 
$\tr_{k-1}(b) \in \Frag(P \cup \{ a_{i} \mid k < i \leq n\} )$, so that by 
successive applications of the second substitution property we obtain from
\eqref{eq:tr4} and \eqref{eq:tr2a} that
\begin{equation}
\label{eq:tr5}
\tr_{k}(a_{k}) \in \Frag(P \cup \{ a_{i} \mid k < i \leq n\} ), 
\end{equation}
as required.

Finally we consider the formula $\tr_{k}(a_{j})$ for some fixed but arbitrary
$j < k$.
From the inductive observation that $\tr_{k-1}(a_{j})$
belongs to $\Frag(P \cup \{a_{i} \mid k \leq i \leq n \})
= \Frag((P \cup \{a_{i} \mid k < i \leq n \}) \cup \{ a_{k}\} )$, we may then 
conclude by \eqref{eq:tr5}, \eqref{eq:tr3} and the first substitution property 
that 
\[
\tr_{k}(a_{j}) \in \Frag(P \cup \{ a_{i} \mid k < i \leq n\}),
\]
and so we are done.
\end{proof}

\section{First steps}
\label{sec:fs}

In this section we provide both an automata-theoretic construction and a first
preservation result that that will be of use throughout the paper.
We start with the first, introducing a simple operation on automata that will 
feature throughout the remainder of the paper.

\begin{conv}
As mentioned in the introduction, in this paper we shall focus on the 
contribution of one specific proposition letter in the semantics of formulas 
and automata.
It will be convenient to fix this letter from now on, and reserve the name 
`$p$' for it.
\end{conv}

\begin{defi}
\label{d:Abot}
Given a modal automaton $\bbA = (A,\Th,\Om)$, we define $\bbA^{\bot} = (A^{\bot},
\Th^{\bot},\Om^{\bot})$ to be the automaton where $A^{\bot} \isdef 
\{ a^{\bot} \mid a \in A \}$, $\Th^{\bot}$ is given by putting
\[
\Th^{\bot}(a) \isdef \Th(a)[\bot/p],
\]
and $\Om^{\bot}$ is simply defined by $\Om^{\bot}(a^{\bot}) \isdef \Om (a)$.
\end{defi}

\begin{prop}
\label{p:fs2}
Let $\bbA$ be a modal automaton which is positive in $p$.
Then for any state $a$ in $\bbA$ and any pointed Kripke model $(\bbS,s)$ we have
\[
\bbS,s \sat \bbA^{\bot}\init{a} \text{ iff } 
\bbS[p \mapsto \nada],s \sat \bbA\init{a}.
\]
\end{prop}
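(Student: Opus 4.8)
The plan is to reduce the statement to a purely \emph{one-step} observation and then lift it through the acceptance games. Concretely, I would compare the two parity games $\AG(\bbA^{\bot},\bbS)$ and $\AG(\bbA,\bbS[p\mapsto\nada])$ and argue that they are isomorphic via the bijection $a^{\bot}\mapsto a$ on states combined with the identity on $S$ (together with the induced identification of markings $m\colon\si_{R}(s)\to\funP(A^{\bot})$ with markings into $\funP A$). Once this isomorphism is in place, corresponding basic positions are simultaneously winning for $\eloi$, which by the definition of acceptance is exactly the claim. Three of the four ingredients of such an isomorphism are immediate: passing from $\bbS$ to $\bbS[p\mapsto\nada]$ leaves the accessibility relation $R$ untouched, so both games share the board $A\times S$ (up to renaming) and the same successor sets $\si_{R}(s)$; $\abel$'s moves $\{(b,t)\mid b\in m(t)\}$ at a marking $m$ transfer verbatim under the renaming; and the priorities agree since $\Om^{\bot}(a^{\bot})=\Om(a)$. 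The only nontrivial point is that $\eloi$ has exactly the same admissible moves at $(a^{\bot},s)$ and at $(a,s)$.

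This is where I would invoke the key lemma, established by induction on the one-step formula $\al\in\MLone(\Prop,A)$ that is positive in $p$: for every set $S$, every marking $m\colon S\to\funP A$ and every $\mathtt{Y}\sse\Prop$,
\[
(\mathtt{Y},S,m)\satone\al[\bot/p]
\quad\text{iff}\quad
(\mathtt{Y}\setminus\{p\},S,m)\satone\al.
\]
The induction is straightforward given the shape of the one-step language. Since in $\MLone(\Prop,A)$ a proposition letter can occur only as a literal and never inside a modality (the modal operators $\dia,\Box$ take arguments from $\Latt(A)$, which mention states but no proposition letters), the modal clauses $\dia\pi$ and $\Box\pi$ are untouched by $[\bot/p]$, and their one-step satisfaction depends only on $m$ and not on $\mathtt{Y}$; hence they pass trivially. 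The boolean cases follow from the induction hypothesis, using that $[\bot/p]$ commutes with $\land,\lor$ and that positivity is inherited by subformulas. The heart of the matter is the literal case $\al=p$: here $\al[\bot/p]=\bot$ is false in every one-step model, while $(\mathtt{Y}\setminus\{p\},S,m)\satone p$ fails because $p\notin\mathtt{Y}\setminus\{p\}$; a letter $q\neq p$ is affected neither by the substitution nor by deleting $p$ from $\mathtt{Y}$. The positivity hypothesis is exactly what guarantees that $p$ never occurs negatively, so that $[\bot/p]$ produces a genuine one-step formula and only the positive literal case arises.

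Finally I would combine the two. At a basic position over the state $s$, the relevant one-step frame for $\bbA^{\bot}$ over $\bbS$ is $\si_{\bbS}(s)=(\si_{V}(s),\si_{R}(s))$ with $\si_{V}(s)=V^{\dagger}(s)$, whereas for $\bbA$ over $\bbS[p\mapsto\nada]$ it is $\si_{\bbS[p\mapsto\nada]}(s)=(V^{\dagger}(s)\setminus\{p\},\si_{R}(s))$, since turning $V(p)$ into $\nada$ deletes precisely $p$ from each transpose marking and leaves $R[s]$ unchanged. Applying the lemma with $\mathtt{Y}=V^{\dagger}(s)$, carrier $\si_{R}(s)$ and $\al=\Th(a)$ (recall $\Th^{\bot}(a)=\Th(a)[\bot/p]$) shows that a marking $m$ satisfies $\si_{\bbS}(s),m\satone\Th^{\bot}(a)$ iff it satisfies $\si_{\bbS[p\mapsto\nada]}(s),m\satone\Th(a)$; these are precisely $\eloi$'s admissible moves in the two games. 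This completes the isomorphism and hence the proof. I expect the main obstacle to be a matter of care rather than depth: getting the one-step lemma and the renaming bookkeeping exactly right, in particular matching the $\mathtt{Y}$-component of the one-step models on the two sides.
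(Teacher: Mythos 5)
Your proposal is correct and follows essentially the same route as the paper: both reduce the claim to the observation that the two acceptance games coincide (up to the renaming $a^{\bot}\mapsto a$), with the only nontrivial point being the one-step lemma that $(\mathtt{Y},S,m)\satone\al[\bot/p]$ iff $(\mathtt{Y}\setminus\{p\},S,m)\satone\al$ for $\al$ positive in $p$, proved by induction on $\al$. Your write-up merely spells out the induction and the bookkeeping in more detail than the paper does.
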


\begin{proof}
We will show that the two respective acceptance games, $\AG^{\bot} \isdef
\AG(\bbA^{\bot},\bbS)$ and $\AG_{\nada} \isdef \AG(\bbA,\bbS[p\mapsto\nada])$ 
are in fact, \emph{identical}.
The key observation here is that for any position $(a,t) \in A \times S$, the 
set of moves available to $\eloi$ in both games is the same.
To see this, it suffices to prove that, for any point $t \in S$ and any one-step
formula $\al \in \MLone$ which is positive in $p$, we have
\begin{equation*}
\si_{V}(t),\si_{R}(t),m \satone \al[\bot/p] \text{ iff } 
\si_{V}(t) \setminus \{p\}, \si_{R}(t), m) \satone \al,
\end{equation*}
and the proof of this statement proceeds by a straightforward induction on $\al$.
\end{proof}

We now turn to our first characterization result, which concerns the notion of
\emph{monotonicity}.
D'Agostino and Hollenberg~\cite{dago:logi00} already proved a Lyndon theorem 
for the modal $\mu$-calculus, stating that monotonicity with respect to a 
propositional variable $p$ is captured by the formulas that are syntactically 
\emph{positive} in $p$.
Here we strengthen their result, providing both an explicit translation and
a decidability result.
Our proof proceeds via some auxiliary lemmas on automata that we shall need 
further on.

\begin{defi}
\label{d:mon}
A formula $\xi \in \muML(\Prop)$ is \emph{monotone} in $p \in \Prop$ if for
all pointed Kripke structures $(\bbS,s)$ and for all pairs of sets $U,U'\sse S$
such that $U \sse U'$, it holds that $\bbS[p \mapsto U],s \sat \xi$ implies
$\bbS[p \mapsto U'],s \sat \xi$.

Syntactically, the set $\muML^{M}_{p}$ of formulas that are \emph{positive}
in $p$, is defined by the following grammar:
\[
\phi \isbnf p 
   \divbnf q \divbnf  \neg q
   \divbnf \phi\lor\phi \divbnf \phi\land\phi 
   \divbnf \dia\phi \divbnf \Box\phi 
   \divbnf \mu x.\phi \divbnf \nu x.\phi,
\]
where $q \neq p$ and $x$ are propositional variables, and the formation of the
formulas $\mu x.\phi$ and $\nu x.\phi$ is subject to the constraint that the
formula $\phi$ is \emph{positive} in $x$.
\end{defi}

Definition~\ref{d:mon} gives an explicit grammar for generating the formulas
that are positive in $p$.
It is easy to see that this definition coincides with the one we gave earlier,
viz., that a formula is positive in $p$ iff all occurrences of $p$ are in the 
scope of an even number of negations.

The theorem below can be seen as a strong version of D'Agostino \& Hollenberg's
characterization result.

\begin{thm}
\label{t:mon}
There is an effective translation which maps a given $\muML$-formula $\xi$ to
a formula $\xi^{M} \in \muML^{M}_{p}$ such that 
\begin{equation}
\label{eq:t-m}
\xi \mbox{ is monotone in } p \ouriff 
\xi \equiv \xi^{M},
\end{equation}
and it is decidable in elementary time whether a given formula $\xi$ is monotone 
in $p$.
\end{thm}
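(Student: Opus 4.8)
The plan is to route everything through disjunctive modal automata and the subset choice implicit in the \emph{monotone closure} of $\xi$. First I would translate $\xi$ into an equivalent disjunctive modal automaton $\bbD\init{d}$ (Fact~\ref{f:fma-aut}(ii)), and normalise its transition map by discarding every disjunct $\pi \ybullet \nb B$ whose literal part $\pi \in \LitC(\Prop)$ is inconsistent; since such disjuncts are unsatisfiable, this does not change the language of $\bbD$. I then define $\bbD^{M}$ by replacing, in every transition formula, the literal $\neg p$ by $\top$, that is, $\Th^{M}(a) \isdef \Th(a)[\top/\neg p]$. Because each surviving $\pi$ is consistent it mentions at most one of $p,\neg p$, so $\pi[\top/\neg p]$ is again a consistent conjunction of literals and $\bbD^{M}$ is a disjunctive automaton that is \emph{positive in $p$}. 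Passing through a linearisation (Proposition~\ref{p:ma-lin}) and applying the translation of Fact~\ref{f:ma-lin-tr}, which preserves positivity in $p$, yields the candidate $\xi^{M} \isdef \tr_{\bbD^{M}}(d)$, a formula in $\muML^{M}_{p}$.

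The semantic heart of the argument is the claim that, on $\om$-unravelled trees, $\bbD^{M}$ computes the monotone closure of $\xi$: for every such tree $\bbT$ with root $r$ we have $\bbT,r \sat \bbD^{M}\init{d}$ iff $\bbT[p\mapsto U],r \sat \bbD\init{d}$ for some $U \sse V(p)$. The interesting inclusion uses Fact~\ref{f:daut}: on an $\om$-unravelled tree $\eloi$ may be assumed to play singleton markings, so her winning run assigns to each node $t$ a unique state and hence a unique chosen disjunct $\pi_{t}\ybullet\nb B_{t}$. Setting $U \isdef \{ t \mid p \text{ occurs in } \pi_{t}\}$ gives a \emph{consistent} witness: a positive literal $p$ in $\pi_{t}$ forces $t \in V(p)$ (so $U \sse V(p)$) and puts $t\in U$, while a $\neg p$ in $\pi_{t}$ (which in $\bbD^{M}$ was freed to $\top$) is matched by $t \notin U$, consistency being guaranteed precisely because $\pi_{t}$ mentions at most one of $p,\neg p$. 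The same run is then winning for $\eloi$ in $\AG(\bbD,\bbT[p\mapsto U])$; the converse inclusion is immediate, since enlarging $U$ back to $V(p)$ only frees the $\neg p$-literals. The main obstacle is exactly this consistency of $U$, and it is what forces the detour through disjunctive automata and singleton markings: for arbitrary modal automata a single node could be visited under several states imposing conflicting demands on $p$.

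With this claim in hand the characterisation~\eqref{eq:t-m} follows by reducing to $\om$-unravelled trees via bisimulation invariance (Facts~\ref{f:bisinv} and~\ref{f:bisimomega}), both $\xi$ and $\xi^{M}$ being $\mu$-formulas. If $\xi \equiv \xi^{M}$ then $\xi$ inherits monotonicity in $p$ from $\xi^{M}$, which, being positive in $p$, is monotone by a routine induction on its generating grammar. For the converse, fix an $\om$-unravelled tree $\bbT$: if $\bbT,r\sat\xi$ then taking $U = V(p)$ gives $\bbT,r\sat\xi^{M}$; and if $\bbT,r\sat\xi^{M}$ then $\bbT[p\mapsto U],r\sat\xi$ for some $U\sse V(p)$, whence $\bbT,r\sat\xi$ by the assumed monotonicity of $\xi$ applied to the inclusion $U \sse V(p)$. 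Thus $\xi\equiv\xi^{M}$.

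Finally, the translation $\xi \mapsto \xi^{M}$ is a composition of effective steps --- automaton construction (Fact~\ref{f:fma-aut}), a syntactic substitution, and the translation $\tr$ (Fact~\ref{f:ma-lin-tr}) --- each producing an output of at most exponential size in its input, so $\xi^{M}$ is computable and of elementary size in $\xi$. Since monotonicity of $\xi$ is equivalent to $\xi\equiv\xi^{M}$, and the equivalence problem for $\mu$-formulas is decidable in time exponential in the sizes involved (Fact~\ref{f:dec}), it is decidable in elementary time whether $\xi$ is monotone in $p$.
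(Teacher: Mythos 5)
Your proposal is correct and follows essentially the same route as the paper: pass to a disjunctive automaton, replace $\neg p$ by $\top$ in the one-step formulas, and use singleton markings on $\om$-unravelled trees (Fact~\ref{f:daut}) to extract a witness set $U \sse V(p)$ on which the original automaton already succeeds, then close the loop with monotonicity and bisimulation invariance. Your only deviations are cosmetic but welcome: you package the key lemma as an unconditional ``monotone closure'' statement about $\bbD^{M}$ rather than the paper's conditional equivalence (Proposition~\ref{p:mon-aut3}), and your preliminary step of discarding disjuncts with inconsistent literal parts makes explicit a consistency point that the paper's proof passes over silently.
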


It is routine to prove that all formulas in $\muML^{M}_{p}$ are monotone in $p$,
so we focus on the hard part of Theorem~\ref{t:mon}, for which we shall involve 
automata.

\begin{defi}
\label{d:mon-aut1}
A modal automaton $\bbA = (A,\Th,\Om)$ is \emph{positive in $p$} if for all $a
\in A$ the one-step formula $\Th(a)$ is positive in $p$.
\end{defi}

It is easy to see that for any linear modal automaton that is positive in $p$,
the translation map given in Definition~\ref{d:ma-tr} produces formulas that
are also positive in $p$. 
From this the following is immediate.

\begin{prop}
\label{p:mon0}
Let $\bbA = (A,\Th,\Om)$ be a modal automaton which is positive in $p$.
Then there is a truth-preserving translation $\tr_{\bbA}: A \to\muML^{M}_{p}$.
\end{prop}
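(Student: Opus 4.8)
The plan is to reduce everything to the linear case, where Fact~\ref{f:ma-lin-tr} already does the work. Recall that, by the discussion following Fact~\ref{f:ma-lin-tr}, the translation $\tr_{\bbA}$ is by definition obtained from a chosen linearization $\bbA^{l}$ of $\bbA$, setting $\tr_{\bbA}(a) \isdef \tr_{\bbA^{l}}(a)$ for all $a \in A$. So it suffices to show (a) that this $\bbA^{l}$ is again positive in $p$, and (b) that Fact~\ref{f:ma-lin-tr} then delivers formulas in $\muML^{M}_{p}$ that are equivalent to $\bbA\init{a}$.

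First I would invoke Proposition~\ref{p:ma-lin}(1) to fix a linearization $\bbA^{l} = (A,\Th,\Om')$ of $\bbA$. The key observation is that, by Definition~\ref{d:lma}, a linearization leaves the carrier $A$ and the transition map $\Th$ untouched, altering only the priority assignment. Since positivity of a modal automaton in $p$ is, by Definition~\ref{d:mon-aut1}, a condition purely on the one-step formulas $\Th(a)$ and makes no reference to $\Om$, the automaton $\bbA^{l}$ inherits positivity in $p$ from $\bbA$ immediately.

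Next I would apply Fact~\ref{f:ma-lin-tr} to the linear, $p$-positive automaton $\bbA^{l}$. This yields, for each state $a$, a formula $\tr_{\bbA^{l}}(a)$ that is equivalent to $\bbA^{l}\init{a}$ and, by the last clause of that fact, positive in $p$; positivity in $p$ is precisely membership in $\muML^{M}_{p}$ by Definition~\ref{d:mon}, using the remark that the grammar characterisation of $\muML^{M}_{p}$ coincides with the even-number-of-negations criterion. Hence $\tr_{\bbA}(a) = \tr_{\bbA^{l}}(a) \in \muML^{M}_{p}$ for every $a \in A$. For truth-preservation, Fact~\ref{f:ma-lin-tr} gives $\tr_{\bbA^{l}}(a) \equiv \bbA^{l}\init{a}$, while Proposition~\ref{p:ma-lin}(2) gives $\bbA^{l}\init{a} \equiv \bbA\init{a}$; chaining these shows $\tr_{\bbA}(a) \equiv \bbA\init{a}$, as required.

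There is essentially no obstacle here: the real content has been absorbed into Fact~\ref{f:ma-lin-tr}, and the only thing to check is the harmless compatibility observation that linearization commutes with positivity in $p$, because it touches only priorities and not the transition map. Consequently the proposition is, as the paper states, immediate from the preceding remark.
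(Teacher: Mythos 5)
Your proposal is correct and follows essentially the same route as the paper, which derives the proposition as immediate from the observation that the translation of Definition~\ref{d:ma-tr} applied to a linear $p$-positive automaton yields $p$-positive formulas (the last clause of Fact~\ref{f:ma-lin-tr}), combined with Proposition~\ref{p:ma-lin}. Your explicit check that linearization preserves positivity in $p$ (since it only alters priorities, not the transition map) is exactly the "harmless compatibility observation" the paper leaves implicit.
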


We now turn to the key lemma underlying the proof of Theorem~\ref{t:mon}, for
which we need the following definition.

\begin{defi}
\label{d:mon-aut2}
Let $(\cdot)^{M}: \LitC(\Prop) \to \LitC(\Prop)$ be the translation which 
replaces every occurrence of the literal $\neg p$ with $\top$,
and let $(\cdot)^{M}: \DMLone(\Prop,A) \to \MLone(\Prop,A)$ be the one-step 
translation given by the following inductive definition:
\[\begin{array}{lll}
(\pi\ybullet\nb B)^{M} & \isdef & \pi^{M}\ybullet\nb B
\\ \bot^{M} & \isdef & \bot
\\ (\al \lor \be)^{M} & \isdef & \al^{M} \lor \be^{M}.
\end{array}\]

Let $\bbA = (A, \Th, \Om)$ be a disjunctive modal automaton.
We define the automaton $\bbA^{M}$ as the structure $(A,\Th^{M},\Om)$
where the map $\Th^{M}$ is given by putting
\[
\Th^{M}(a) \isdef \Th(a)^{M}
\]
for every state $a \in A$.
\end{defi}

The following proposition is fairly obvious; we list it for future 
reference.

\begin{prop}
\label{p:mon1}
Let $\bbA$ be a disjunctive automaton.
Then its transformation $\bbA^{M}$ is disjunctive as well, and positive in $p$.
\end{prop}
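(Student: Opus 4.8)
The plan is to reduce both assertions to a single statement about one-step formulas and then prove that statement by a routine structural induction. Since $\bbA^{M} = (A, \Th^{M}, \Om)$ shares its carrier and priority map with $\bbA$, and differs only in that $\Th^{M}(a) = \Th(a)^{M}$, the automaton $\bbA^{M}$ is disjunctive precisely when each $\Th(a)^{M}$ lies in $\DMLone(\Prop,A)$, and it is positive in $p$ precisely when each $\Th(a)^{M}$ is positive in $p$. As each $\Th(a)$ is a disjunctive one-step formula, it therefore suffices to show that for every $\al \in \DMLone(\Prop,A)$ the translation $\al^{M}$ is again in $\DMLone(\Prop,A)$ and is positive in $p$. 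Both claims follow by induction on the structure of $\al$ as given by the grammar $\al \isbnf \pi \ybullet \nb B \divbnf \bot \divbnf \al \lor \al$.

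For disjunctiveness, the only interesting case is $\al = \pi \ybullet \nb B$, where $(\pi \ybullet \nb B)^{M} = \pi^{M} \ybullet \nb B$. Here I would first record the auxiliary fact that $\pi^{M} \in \LitC(\Prop)$ whenever $\pi \in \LitC(\Prop)$; this is an easy induction on $\pi$, the point being that the rewrite clause $\neg p \mapsto \top$ replaces a literal by $\top$, which is itself in $\LitC(\Prop)$, and conjunctions are preserved. Consequently $\pi^{M} \ybullet \nb B$ is once more of the form (conjunction of literals) $\ybullet\, \nb B$ with $B \sse A$, hence a disjunct of $\DMLone(\Prop,A)$. The cases $\bot^{M} = \bot$ and $(\al \lor \be)^{M} = \al^{M} \lor \be^{M}$ are immediate, the latter by the induction hypothesis applied to $\al$ and $\be$.

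For positivity in $p$, the key observation is that a one-step formula is positive in $p$ exactly when the literal $\neg p$ does not occur in it, since proposition letters can only appear in a one-step formula as the literals $p$ or $\neg p$ (the arguments of the modalities are lattice terms over $A$, and so mention only variables from $A$). In a disjunctive formula any occurrence of $\neg p$ must sit inside the $\LitC(\Prop)$-conjunct $\pi$ of some disjunct $\pi \ybullet \nb B$, because the cover-modality part $\nb B$ involves only the $A$-variables in $B$. Since $(\cdot)^{M}$ on $\LitC(\Prop)$ replaces every occurrence of $\neg p$ by $\top$ and leaves the $\nb B$ parts untouched, the formula $\al^{M}$ contains no occurrence of $\neg p$ at all, and is therefore positive in $p$.

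There is no genuine obstacle here: the whole argument is a transparent structural induction. The only point requiring a moment's care is the reformulation used in the positivity step, namely that for one-step formulas positivity in $p$ amounts to the syntactic absence of the literal $\neg p$, together with the remark that the cover modality contributes no $p$-literals; once these are noted, both parts of the proposition fall out immediately.
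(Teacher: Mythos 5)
Your proof is correct: the paper itself offers no argument for this proposition, dismissing it as ``fairly obvious'', and your structural induction on the grammar of $\DMLone(\Prop,A)$ --- together with the observation that positivity in $p$ for one-step formulas reduces to the absence of the literal $\neg p$, since modalities only guard lattice terms over $A$ --- is exactly the routine verification the authors have in mind. Nothing is missing.
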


\begin{prop}
\label{p:mon-aut3}
Let $\bbA\init{\ai}$ be an initialized disjunctive modal automaton.
If $\bbA\init{\ai}$ is monotone in $p$, then $\bbA\init{\ai} \equiv 
\bbA^{M}\init{\ai}$.
\end{prop}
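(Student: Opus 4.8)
The plan is to prove the equivalence as two separate inclusions between the classes of pointed models accepted by $\bbA\init{\ai}$ and $\bbA^{M}\init{\ai}$, noting that only one of them uses the monotonicity hypothesis. The inclusion ``$\bbA\init{\ai}$ accepts $(\bbS,s)$ implies $\bbA^{M}\init{\ai}$ accepts $(\bbS,s)$'' holds unconditionally: the translation $(\cdot)^{M}$ only replaces the literal $\neg p$ by $\top$ inside the conjunctions $\pi$, and since such conjunctions occur positively in a disjunctive one-step formula, this can only weaken the transition condition, so that $\si(t),m\satone\Th(a)$ implies $\si(t),m\satone\Th^{M}(a)$ by a trivial induction on the one-step formula. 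As $\bbA$ and $\bbA^{M}$ share the same priority map $\Om$, any winning strategy for $\eloi$ in $\AG(\bbA,\bbS)$ is legitimate and winning in $\AG(\bbA^{M},\bbS)$.

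For the reverse inclusion I would use monotonicity. By bisimulation invariance (Fact~\ref{f:bisinv}) and Fact~\ref{f:bisimomega} it suffices to treat the case where the model is an $\om$-unravelled tree $\bbT$ with root $r$; write $V$ for its valuation. Assume $\eloi$ wins $\AG(\bbA^{M},\bbT)$ from $(\ai,r)$. Since $\bbA^{M}$ is disjunctive (Proposition~\ref{p:mon1}), Fact~\ref{f:daut} supplies a positional winning strategy $\chi$ in which every chosen marking is singleton-valued. The crucial structural observation is that on a tree such a strategy visits each reachable node $t$, in all $\chi$-guided matches, with one and the same automaton state $a_{t}$ --- this follows by induction on the depth of $t$, since each node has a unique predecessor and a singleton marking assigns to it a unique state --- and hence $\chi$ selects at $t$ a single disjunct $\pi_{t}\ybullet\nb B_{t}$ of $\Th^{M}(a_{t})$. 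This uniqueness is exactly what prevents one node from being confronted with conflicting polarities of $p$, and it is the heart of the argument.

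Given this, I would put $U\isdef\{t\mid p\text{ occurs positively in }\pi_{t}\}$ and verify two claims. First, $U\sse V(p)$: a positive occurrence of $p$ survives $(\cdot)^{M}$ untouched, so the satisfaction of $\pi_{t}^{M}$ at $t$ already forces $t\in V(p)$. Second, reinterpreting each selected disjunct as $\pi_{t}\ybullet\nb B_{t}$ in the \emph{original} $\Th(a_{t})$, the very same strategy $\chi$ is winning for $\eloi$ in $\AG(\bbA,\bbT[p\mapsto U])$: the cover parts $\nb B_{t}$ and the priorities are identical in $\bbA$ and $\bbA^{M}$, so only the propositional parts need checking, and under $V[p\mapsto U]$ each $\pi_{t}$ holds at $t$ --- its non-$p$ literals held already via $\pi_{t}^{M}$, a positive $p$-literal is covered since then $t\in U$, and a $\neg p$-literal is covered since then $t\notin U$. (Here I assume, as a harmless normalisation of disjunctive automata, that each $\pi$ is consistent, so that a single $\pi_{t}$ never demands both $p$ and $\neg p$.) Thus $\bbT[p\mapsto U],r\sat\bbA\init{\ai}$, and since $U\sse V(p)$, monotonicity of $\bbA\init{\ai}$ in $p$ gives $\bbT,r\sat\bbA\init{\ai}$, which returns to $(\bbS,s)$ by bisimulation invariance of automaton acceptance. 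The step I expect to be the main obstacle is precisely securing this one-state-per-node property, which is exactly what the passage to an $\om$-unravelled tree together with Fact~\ref{f:daut} is designed to deliver.
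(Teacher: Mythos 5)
Your proposal is correct and follows essentially the same route as the paper: pass to an $\om$-unravelled tree, use Fact~\ref{f:daut} to obtain a positional winning strategy that assigns one automaton state per node, and then adjust the valuation of $p$ so that the very same strategy wins the acceptance game for the untranslated automaton, concluding by monotonicity; your choice of $U$ (nodes whose selected disjunct positively requires $p$) is a harmless variant of the paper's $U$ (nodes where the original one-step formula already holds under the given marking), both yielding a $p$-extension contained in $V(p)$. Your explicit consistency normalisation of the conjunctions of literals is genuinely needed --- and is in fact implicitly assumed in the paper's own argument as well --- so flagging it is a point in your favour rather than a gap.
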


\begin{proof}
Given the nature of the translation $(\cdot)^{M}$ at the level of conjunctions 
of literals, it is easy to see that $\bbA\init{\ai}$ implies $\bbA^{M}\init{\ai}$,
and so we focus on the opposite implication.
It suffices to take an arbitrary $\om$-unravelled tree model $(\bbS,r)$ and show
that 
\begin{equation}
\label{eq:mon1}
\bbS,r \sat \bbA^{M}\init{\ai} \text{ only if } \bbS,r \sat \bbA\init{\ai}.
\end{equation}
Assume that $\bbS,r \sat \bbA^{M}\init{\ai}$.
Our aim is to find a subset $U \sse S$ such that 
\begin{equation}
\label{eq:mon2}
\bbS[p\rst{U}],r \sat \bbA\init{\ai},
\end{equation}
from which it will immediately follow by monotonicity that $\bbS,r \sat 
\bbA\init{\ai}$.

By Fact~\ref{f:daut} $\eloi$ has a positional winning strategy $f$ in 
$\AG(\bbA^{M},\bbS)$ such that each marking $m$ picked by the strategy at a 
position $(a,s)$ satisfies $\sz{m(t)} = 1$, for all $t \in \si_{R}(s)$.
From this it easily follows that for all $t \in S$ there is exactly one state
$a_{t} \in A$ such that the position $(a_{t},t)$ may occur in an $f$-guided
match of $\AG(\bbA^{M},\bbS)$ starting at $(\ai,r)$.
It easily follows that the pair $(a_{t},t)$ is a winning position for $\eloi$ in 
$\AG(\bbA^{M},\bbS)$; in particular, with $m_{t}: \si_{R}(t) \to \funP A$ being
the marking picked by $f$ at this position, by the legitimacy of this move we
have that
\[
\si_{V}(t),\si_{R}(t),m_{t} \satone \Th^{M}(a_{t}).
\]
In order to determine whether this $t$ should be a member of $U$ or not, we
check whether the one-step formula $\Th(a_{t})$ holds at the one-step modal 
$(\si_{V}(t),\si_{R}(t),m_{t})$.
If so, we are happy with $\si_{V}(t)$ as it is and put $t \in U$, but if not,
we will want to make $p$ false at $t$, claiming that 
\begin{equation}
\label{eq:mon11}
\si_{V}(t)\setminus \{ p \},\si_{R}(t),m_{t} \satone \Th(a_{t}).
\end{equation}
To see why this is the case, observe that the only reason why we can have 
$\si_{V}(t),\si_{R}(t),m_{t} \satone \Th^{M}(a_{t})$ but 
$\si_{V}(t),\si_{R}(t),m_{t} \not\satone \Th(a_{t})$ is 
that $\Th(a_{k})$ has a disjunct $\pi\ybullet \nb B$ such that $\neg p$ is a 
conjunct of $\pi$ and $p \in \si_{V}(t)$.
But then it is immediate that $\si_{V}(t)\setminus \{ p \},\si_{R}(t),m_{t} 
\satone \pi\ybullet\nb B$, which implies~\eqref{eq:mon11}.

These observations reveal that if we define
\[
U \isdef \{ t \in S \mid (a_{t},t) \in \Win_{\eloi}(\AG(\bbA^{M},\bbS))
 \text{ and } \si_{V}(t),\si_{R}(t),m_{t} \satone \Th(a_{t}) \},
\]
we obtain for the valuation $V[p \rst{U}]$ that
\[
\si_{V[p \rst{U}]}(t),\si_{R}(t),m_{t} \satone \Th(a_{t}),
\]
whenever $(a_{t},t)$ is a winning position for $\eloi$ in $\AG(\bbA,\bbS)$.
From this it is easy to derive that $f$ itself is a winning strategy for
$\eloi$ in the acceptance game $\AG(\bbA,\bbS[p \rst{U}])@(\ai,r)$, as required
to prove \eqref{eq:mon2}.
\end{proof}

We now have all material needed to prove Theorem~\ref{t:mon}.

\begin{proof}[{\rm\bf Proof of Theorem~\ref{t:mon}}]
Let $\xi$ be an arbitrary modal $\mu$-formula; and let $\bbD_{\xi}\init{d_{\xi}}$
be an initialized disjunctive automaton that is equivalent to $\xi$.
Such a structure exists by Fact~\ref{f:fma-aut}, and clearly $\xi$ is monotone
in $p$ iff $\bbD_{\xi}\init{d_{\xi}}$ is so.
It then follows by Proposition~\ref{p:mon-aut3} that $\bbD_{\xi}\init{d_{\xi}}$
(and hence $\xi$) is monotone in $p$ iff it is equivalent to the 
initialized disjunctive automaton $\bbD_{\xi}^{M}\init{d_{\xi}}$.
Now define
\[
\xi^{M} \isdef \tr_{\bbD_{\xi}^{M}}(d_{\xi}).
\]
It is easy to verify that $\xi^{M} \in \muML^{M}_{p}$, while we have 
$\bbD_{\xi}^{M}\init{d_{\xi}} \equiv \xi^{M}$ by Fact~\ref{f:fma-aut}.
Putting these observations together we obtain that, indeed, $\xi$ is monotone
in $p$ iff $\xi \equiv \xi^{M}$, where $\xi^{M}$ is effectively obtained from 
$\xi$. 
This establishes the first part of the theorem.

For the statement concerning decidability, it suffices to observe that all 
constructions that are involved in the definition of the map $(\cdot)^{M}:
\muML(\Prop) \to \muML^{M}_{p}$ have uniformly elementary size bounds, and 
that the problem, whether two $\mu$-calculus formulas are equivalent or not,
can be decided in exponential time.
From this the decidability claim is immediate.
\end{proof}

\section{Finite width property}
\label{sec:fw}

The first new property that we consider is that of the \emph{finite width 
property}.

\begin{defi}
\label{d:w-prop}
A formula $\xi \in \muML(\Prop)$ has the \emph{finite width property} for 
$p \in \Prop$ if $\xi$ is monotone in $p$, and, for every tree model $(\bbS,s)$,
\[
\bbS,s \sat \xi \ouriff
\bbS[p\rst{U}],s \sat \xi, \mbox{ for some finitely branching subtree }
  U \sse S,
\]
where a subset $U\sse S$ is a \emph{finitely branching subtree} if $U$ is 
downward closed and the set $R(u) \cap U$ is finite for every $u \in U$.
\end{defi}

We will associate the following syntactic fragment of the modal $\mu$-calculus
with this property.

\begin{defi}
\label{d:w-frag}
Given a set $P \sse \Prop$, we define the fragment $\muML^{W}_{P}$ by the 
following grammar:
\[
\phi \isbnf p \divbnf \psi 
   \divbnf \phi\lor\phi \divbnf \phi\land\phi 
   \divbnf \dia \phi 
   \divbnf \mu x. \phi' \divbnf \nu x.\phi',
\]
where $p \in P$, $\psi\in \muML(\Prop\setminus P)$ is a $P$-free formula
and $\phi' \in \muML^{W}_{P \cup \{ x \}}$.
In case $P$ is a singleton, say, $P = \{ p \}$, we will write 
$\muML^{W}_{p}$ rather than $\muML^{W}_{\{p\}}$.
\end{defi}

In words, the fragment $\muML^{W}_{p}$ consists of those formulas that are
positive in $p$ and do not admit any occurrence of a $p$-active subformula 
in the scope of a box modality.
All formulas from Example~\ref{ex:fmas} belong to $\muML^{W}_{p}$, except 
$\phi_{4}$.

The following theorem states that modulo equivalence, $\muML^{W}_{p}$ is the
syntactic fragment of the modal $\mu$-calculus that captures the finite width
property, and that it is decidable whether a given $\mu$-formula has this 
property.

\begin{thm}
\label{t:fw}
There is an effective translation which maps a given $\muML$-formula $\xi$ to
a formula $\xi^{W} \in \muML^{W}_{p}$ such that 
\begin{equation}
\label{eq:t-w}
\xi \mbox{ has the finite width property for } p \ouriff 
\xi \equiv \xi^{W},
\end{equation}
and it is decidable in elementary time whether a given formula $\xi$ has
the finite width property for $p$.
\end{thm}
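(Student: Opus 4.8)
The plan is to follow the automata-theoretic template already used for Theorem~\ref{t:mon}. First I would fix, for the given $\xi$, an equivalent initialized disjunctive automaton $\bbD_{\xi}\init{d_{\xi}}$ (Fact~\ref{f:fma-aut}), and pass to its monotonicity transform $\bbB \isdef \bbD_{\xi}^{M}$, which by Propositions~\ref{p:mon1} and~\ref{p:mon-aut3} is disjunctive, positive in $p$, and equivalent to $\xi$ exactly when $\xi$ is monotone in $p$. The goal is then to build from $\bbB$ a bipartite automaton $\bbB^{W}$ whose initial part translates into $\muML^{W}_{p}$, set $\xi^{W} \isdef \tr_{\bbB^{W}}(d_{\xi})$, and establish the main equivalence $\bbB\init{d_{\xi}} \equiv \bbB^{W}\init{d_{\xi}}$ precisely when $\xi$ has the finite width property. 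Since every formula of $\muML^{W}_{p}$ has the finite width property (the easy half of the theorem, proved by a routine induction on the fragment), $\xi\equiv\xi^{W}$ forces $\xi$ to have the property; conversely, if $\xi$ has the property it is in particular monotone, so $\bbB\equiv\xi$ and the main equivalence gives $\xi\equiv\bbB^{W}\init{d_{\xi}}\equiv\xi^{W}$. This yields~\eqref{eq:t-w}.

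\textbf{The width transform.}
On a disjunctive automaton $\bbB=(A,\Th,\Om)$ that is positive in $p$, the transform $(\cdot)^{W}$ has carrier the disjoint union $A \uplus A^{\bot}$, where $A^{\bot}$ with $\Th^{\bot}(a)=\Th(a)[\bot/p]$ is the $p$-free copy of Definition~\ref{d:Abot}; this copy is the final part and the original states $A$ the initial part. For an initial state $a$ with $\Th(a)=\bv_{i}(\pi_{i}\ybullet\nb B_{i})$ I redirect the universal (box) component of each cover modality to the $p$-free copy while keeping the existential component on the active states:
\[
\Th^{W}(a) \isdef \bv_{i}\Big(\pi_{i}\ybullet\big(\textstyle\bw_{b\in B_{i}}\dia b \;\land\; \Box\bv_{b\in B_{i}}b^{\bot}\big)\Big).
\]
Because $\Th^{\bot}$ mentions only states of $A^{\bot}$, no initial state is active in a final one, so $\bbB^{W}$ is genuinely bipartite; and since in $\Th^{W}(a)$ every state under a $\Box$ lies in the $p$-free part, each $\Th^{W}(a)$ belongs to $\muML^{W}_{\{p\}\cup A}$. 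It is routine to check that $\{\muML^{W}_{P}\}$ has the properties (EP), (SP1), (SP2) and (CA$_{\eta}$) for $\eta\in\{\mu,\nu\}$, while translating the final part yields $p$-free formulas, which sit inside $\muML^{W}_{p}$ through the $\psi$-clause of its grammar. Proposition~\ref{p:tr} then delivers a translation with $\tr_{\bbB^{W}}(a)\equiv\bbB^{W}\init{a}\in\muML^{W}_{p}$ for every initial $a$, producing $\xi^{W}$.

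\textbf{The key lemma and the easy direction.}
The heart of the matter is the equivalence, for every pointed model $(\bbS,s)$,
\[
\bbS,s\sat\bbB^{W}\init{a}
\quad\Longleftrightarrow\quad
\bbS[p\rst{U}],s\sat\bbB\init{a}\text{ for some finitely branching subtree }U\sse S.
\]
From left to right, given a winning strategy for $\eloi$ in $\AG(\bbB^{W},\bbS)$ I would let $U$ collect the points carrying a reachable initial-state position. As activity can reach a new initial state only through the diamond conjuncts (the box conjunct feeds only the $p$-free copy), and bipartiteness forbids any return from $A^{\bot}$ to $A$, the set $U$ is downward closed and finitely branching; restricting the strategy to its initial part witnesses $\bbS[p\rst{U}],s\sat\bbB\init{a}$, the descents into $A^{\bot}$ being governed by Proposition~\ref{p:fs2} on the $p$-free region below $U$. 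With monotonicity this already gives $\bbB^{W}\init{a}\Rightarrow\bbB\init{a}$ in general, and hence the main equivalence reduces to the converse of the displayed biconditional.

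\textbf{The main obstacle and decidability.}
The step I expect to be hardest is the right-to-left direction: turning an \emph{arbitrary} finitely branching subtree $U$ with $\bbS[p\rst{U}],s\sat\bbB$ into an accepting run of $\bbB^{W}$. The obstruction is a mismatch between the two notions of subtree — the box conjunct of $\bbB^{W}$ forces every non-witnessing successor into the $p$-free copy, whereas a winning strategy for $\bbB$ on $\bbS[p\rst{U}]$ may keep $p$ active below successors it reaches \emph{universally}, i.e. inside $U$ but off the diamond-witness skeleton. To overcome this I would pass, via Facts~\ref{f:bisinv} and~\ref{f:bisimomega}, to an $\om$-unravelled (indeed $\om$-expanded) tree and take $\eloi$'s single-valued positional strategy for $\bbB$ (Fact~\ref{f:daut}); the $\om$ bisimilar copies of each successor then let me \emph{relocate} the $p$-active behaviour onto the finitely many diamond-witness branches, marking one sibling copy per diamond as active (inside a new, witness-shaped subtree) and all remaining copies as $p$-free, with Proposition~\ref{p:fs2} certifying that these are faithfully simulated by the $A^{\bot}$-part. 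Checking that this relocation preserves acceptance, and in particular respects the global parity condition, is the technical crux. Decidability in elementary time is then immediate: each of $\xi\mapsto\bbD_{\xi}\mapsto\bbB\mapsto\bbB^{W}\mapsto\xi^{W}$ incurs an at most exponential size blow-up, and equivalence of $\mu$-calculus formulas is decidable by Fact~\ref{f:dec}.
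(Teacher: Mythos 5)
Your overall architecture is exactly the paper's: pass to a disjunctive automaton, apply the monotonicity transform, build a bipartite width transform whose final part is $\bbA^{\bot}$, invoke Proposition~\ref{p:tr} for membership in $\muML^{W}_{p}$, and reduce everything to a key equivalence lemma proved on $\om$-expanded trees. The one place you deviate is the definition of the one-step translation, and that deviation opens a gap in the hard direction. The paper's translation (Definition~\ref{d:w-aut2}) sends $\pi\ybullet\nb B$ to the \emph{disjunction} $\bv\{\pi\land\bw\dia B_{1}\land\nb B_{2}^{\bot}\mid B_{1}\cup B_{2}=B\}$ over all splittings of $B$, whereas yours keeps the single disjunct with $B_{1}=B_{2}=B$ (and weakens $\nb B^{\bot}$ to $\Box\bv B^{\bot}$). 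Your formula is strictly stronger as a one-step formula: it forces $\eloi$ to exhibit, for \emph{every} $b\in B$, a successor marked with the initial-part copy of $b$.

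This is precisely what the splitting is there to avoid. In the left-to-right direction of the key lemma, $B_{1}$ records the cover-states that the winning $\bbB$-marking $m$ realises at successors \emph{inside} the finitely branching subtree $U$, and $B_{2}$ those realised \emph{outside} $U$; a $p$-active state $b$ all of whose $m$-witnesses lie outside $U$ lands in $B_{2}\setminus B_{1}$, and the paper simply chooses the disjunct that does not demand $\dia b$ in the initial part. Your recipe --- ``mark one sibling copy per diamond as active \ldots\ with Proposition~\ref{p:fs2} certifying that these are faithfully simulated by the $A^{\bot}$-part'' --- does not cover this case: Proposition~\ref{p:fs2} only certifies acceptance by $\bbA^{\bot}\init{b^{\bot}}$ at such points, while your $\bw_{b\in B}\dia b$ conjunct forces you to place the \emph{initial-part} state $b$ there and then win the initial-part game from it. What you would need is an additional lemma stating that the initial part of $\bbB^{W}$ accepts wherever $\bbB^{\bot}$ does (provable by mirroring a $p$-free run: choose the same disjuncts, note $\pi[\bot/p]$ implies $\pi$ by positivity, extract diamond witnesses from $\bw\dia B^{\bot}$, and check that the unchanged priorities make the mirrored infinite branches accepting). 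With that lemma your transform is in fact equivalent to the paper's and the proof goes through; without it, the crucial step of Claim-\ref{cl:w1}-type reasoning fails for exactly the states in $B_{2}\setminus B_{1}$. Either supply that lemma explicitly, or adopt the paper's disjunction over splittings, which removes the obligation altogether. The remaining ingredients (extraction of a finitely branching $U$ from a $\bbB^{W}$-run after pruning markings to finitely many initial witnesses per node, the use of bisimilar siblings in the $\om$-expansion, and the complexity accounting) are sound and match the paper.
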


First we prove the easy part of the theorem, stating that formulas in the 
fragment $\muML^{W}_{P}$ indeed have the required semantic property.

\begin{prop} 
\label{p:w1}
Every formula $\xi \in \muML^{W}_{p}$ has the finite width property with respect 
to $p$.
\end{prop}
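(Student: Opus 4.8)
The plan is to establish the two ingredients of the finite width property separately. Monotonicity in $p$ is the routine part: every formula in $\muML^{W}_{p}$ is positive in $p$, since the letter $p$ occurs only as the displayed atom and the $P$-free leaves $\psi$ contain no free occurrence of $p$ at all, and positivity implies monotonicity exactly as for the fragment $\muML^{M}_{p}$. Given monotonicity, the right-to-left direction of the required biconditional is immediate: if $\bbS[p\rst{U}],s \sat \xi$ for some (finitely branching) subtree $U$, then $V(p) \cap U \sse V(p)$ yields $\bbS,s \sat \xi$. So the real work is the left-to-right direction: assuming $\bbS,s \sat \xi$ on a tree model, I must exhibit a finitely branching subtree $U \sse S$ with $\bbS[p\rst{U}],s \sat \xi$.

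For this I would argue game-theoretically. By Fact~\ref{f:0a} and positional determinacy (Fact~\ref{f:pdpg}), fix a positional winning strategy $f$ for $\eloi$ in the evaluation game $\EG(\xi,\bbS)$ from position $(\xi,s)$. The guiding idea is that $p$ is only ever consumed at atom positions $(p,t)$, and that in a $\muML^{W}_{p}$-formula such positions can only be reached through diamonds. I would therefore define $U$ to be the set of states $t$ that occur as the second component of some $f$-guided basic position $(\phi,t)$ whose match-prefix from $(\xi,s)$ uses \emph{only} diamond modal moves and for which $\phi$ is still $p$-active, i.e.\ $p \in \Act(\phi)$. Such a $U$ is automatically downward closed, as every prefix of such a prefix again uses only diamonds and ends in a $p$-active formula. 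It is finitely branching because $f$ is positional: from a fixed $t$, each diamond subformula $\dia\phi$ contributes at most one successor of $t$ (the one selected by $f$ at $(\dia\phi,t)$), and $\xi$ has only finitely many subformulas, so $\sz{R(t) \cap U} \leq \sz{\Sfor(\xi)}$.

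It then remains to check that $f$ stays winning in the restricted game $\EG(\xi,\bbS[p\rst{U}])$. Here I would use that the two games have literally the same board and the same admissible moves, differing only in the \emph{winner} assigned to the dead-end positions $(p,t)$, where in the restricted game $(p,t)$ is won by $\eloi$ iff $t \in V(p) \cap U$. Hence the $f$-guided matches of the two games coincide, and I only have to verify that every finite $f$-guided match ending in such a position $(p,t)$ has $t \in U$; infinite matches are unaffected, their winner being fixed by the variables unfolded infinitely often, which restricting $p$ does not touch. The verification that $t \in U$ is exactly where the defining property of $\muML^{W}_{p}$ enters: no $p$-active subformula occurs in the scope of a box, so the prefix reaching the $p$-active atom $(p,t)$ contains no box move and descends by diamonds alone, placing $t$ in $U$ by construction. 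By Fact~\ref{f:0a} this gives $\bbS[p\rst{U}],s \sat \xi$, as required.

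The step I expect to be the main obstacle is making this last structural claim precise in the presence of fixpoint unfolding: one must rule out the scenario where a match first descends under a box and then, by unfolding some fixpoint variable, escapes back into a $p$-active region, thereby reaching a $p$-atom along a path that did pass through a box. The fragment is designed to prevent this, because the $P$-free leaves $\psi$ sitting under boxes must avoid not only $p$ but also every fixpoint variable whose binder encloses them (the index set grows from $P$ to $P \cup \{x\}$ when passing a $\mu x$/$\nu x$). Thus each such leaf is a self-contained $p$-free module from which play cannot re-enter the active skeleton, so $p$-active positions are genuinely reachable by diamonds only. Pinning down this invariant, most cleanly by induction on the grammar of $\muML^{W}_{P}$ while tracking which bound variables may occur freely below a box, is the technical heart of the argument; once it is in place, the game-theoretic bookkeeping above goes through routinely.
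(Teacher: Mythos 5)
Your proposal is correct and follows essentially the same route as the paper's proof: a positional winning strategy $f$ for $\eloi$, the set $U$ of $f$-reachable states paired with $p$-active formulas, downward closure and finite branching via the monotonicity of $\Act$ along matches together with positionality of $f$ and finiteness of $\Sfor(\xi)$, and the observation that $f$ remains winning in $\EG(\xi,\bbS[p\rst{U}])$ because the two games differ only at dead-end positions $(p,t)$. The structural invariant you single out as the technical heart --- that play entering a $P$-free leaf $\psi$ can never re-enter the $p$-active skeleton, since $\psi$ must also avoid all enclosing fixpoint variables --- is exactly the observation the paper uses to establish downward closure and the diamond-only access to $p$-active positions.
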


\begin{proof}
Let $\xi$ be a formula in $\muML_W(p)$, then $\xi$ is obviously positive,
and hence, monotone in $p$.
Fix a tree model $\bbS$ with root $r \in S$.
We have to prove 
\begin{equation} 
\label{eq:widtheasy}
\bbS,r \sat \xi \ouriff
\bbS[p\rst{U}],r \sat \xi, \mbox{ for some finitely branching subtree }
  U \sse S.
\end{equation}

The direction from right to left follows from the fact that $\xi$ is monotone
in $p$.
For the opposite direction, suppose that $\bbS,r \sat \xi$.
We need to find a finitely branching subtree $U$ of $S$ that is downward closed
and such that $\bbS [p \rst{U}],r \sat \xi$.
Let $f$ be a positional winning strategy of $\eloi$ in the game
$\EG_0:=\EG(\xi,\bbS)@(\xi,r)$.
We define $U \subseteq S$ such that
\[
u \in U \ouriff \text{ there is a } \phi \text{ such that } (\phi,u) 
\text{ is $f$-reachable in $\EG_0$} \text{ and } p \text{ is active in $\phi$}.
\]

It is easy to see that the set $U$ is downward closed.
Indeed, if a position $(\phi,t)$ is reached during an $\EG_0$-match $\Si$ and 
$p$ is not active in $\phi$, then all positions occurring after $(\phi,t)$ will 
be of the form $(\psi,u)$, where $p$ is not active in $\psi$. 

Hence it suffices to show that $U$ is finitely branching.
Fix $u \in U$ and let us show that $\si_{R}(u) \cap U$ is finite.
Let $t \in U$ be a successor of $u$. Since $u$ is the only predecessor of $t$,
by definition of $U$, there must be an $f$-guided match during which a move
occurs from $(\triangle \phi_t,u)$ to $(\phi_t,t)$, where $\triangle \in 
\{ \Box, \dia \}$ and $\phi_t$ is a $p$-active subformula of $\xi$.
Because of the syntactic constraints on $\muML^{W}_{p}$, this can only happen 
if $\triangle = \dia$.
But then $(\dia \phi_t,u)$ is a position which belongs to $\eloi$ and
so $t$ is her choice as dictated by $f$.
From this, it follows that for all $t$ and $t'$ in $R(u) \cap U$, we have
$\phi_t \neq \phi_{t'}$ if $t \neq t'$.
Putting this together with the fact that $\Sfor(\xi)$ is finite, we obtain that
$R(u) \cap U$ is finite. This finishes the proof that $U$ is downward closed
and finitely branching.

It remains to show that  $\bbS [p \rst{U}],r \sat \xi$. Let $\EG$ be the
game $\EG(\xi,\bbS[p \rst{U}])@(\xi,r)$.
We show that $f$ itself is a winning strategy for $\eloi$ in the game $\EG$.
The winning conditions for $\EG_0$ and $\EG$ are the same.
Moreover, the rules of the two games are the same, except when we reach a
position of the form $(p,t)$. So to prove that $f$ is a winning strategy for
$\eloi$ in $\EG$, it suffices to show that if an $f$-guided $\EG$-match
$\Si$ arrives at a position $(p,t)$, then $\bbS  [p \rst{U}],t \sat p$,
that is, $t \in V(p) \cap U$.
Suppose that we are in this situation.
Since $\Si$ is also an $f$-guided $\EG_0$-match and since $f$ is a winning
strategy for $\eloi$ in $\EG_0$, $t$ belongs to $V(p)$.
It remains to show that $t$ belongs to $U$. 
That is, we have to find a $p$-active formula $\phi$ such that $(\phi,t)$ is 
$f$-reachable in $\EG_0$.
Clearly, the formula $p$ itself satisfies these two conditions.
This proves that $f$ is a winning strategy for $\eloi$ in the game $\EG$ and
hence shows that $\bbS [p \rst{U}],r \sat \xi$.
\end{proof}

For the hard part of the theorem, we involve automata.
The particular modal automata that we associate with the finite width property 
are given below; recall that we introduced \emph{bipartite} automata in
Definition~\ref{d:bipaut}.

\begin{defi}
\label{d:w-aut1} 
A bipartite modal automaton $\bbA = (A,B,\Th,\Om)$ belongs to the class 
$\Aut^{W}_{p}$ of \emph{finite-width automata} if the one-step language 
associated with $B$ is the language $\MLone(\Prop\setminus\{p\},B)$, and
the one-step language associated with $A$ is given by the following grammar:
\begin{equation} 
\label{eq:1st-W}
\al \isbnf
   p \divbnf \dia\pi_{0} \divbnf \be 
   \divbnf \al \land \al \divbnf \top \divbnf \al \lor \al \divbnf \bot
\end{equation}
where $\pi_{0} \in \Latt(A)$ and $\be \in \MLone(\Prop\setminus\{p\},B)$.
\end{defi}

In words, an initialized modal automaton $\bbA\init{\ai}$, with $\bbA = (A,\Th,
\Om)$, belongs to the class $\Aut^{W}_{p}$ if $A$ can be partitioned as $A =
A_{0} \uplus A_{1}$ such that 
(0) $\ai$ belongs to $A_{0}$,
(1) $p$ occurs only positively in $\Th(a)$, for $a \in A_{0}$, 
(2) $p$ does not occur in any $\Th(a)$, $a \in A_{1}$,
(3) if $a,b \in A_{0}$ then $a$ may only occur in $\Th(b)$ in the scope of a 
   diamond (not of a box) modality,
and (4) if $a \in A_{0}$ and $b \in A_{1}$ then $a$ may not occur in $\Th(b)$.

\begin{prop}
\label{p:w-aut1}
Let $\bbA = (A, B, \Th, \Om)$ be a bipartite modal automaton in 
$\Aut^{W}_{p}$.
Then there is a translation $\tr_{\bbA}: A \to \muML$ such that
$\tr_{\bbA}(a) \in \muML^{W}_{p}$ for every state $a \in A$.
\end{prop}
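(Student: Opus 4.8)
The plan is to derive $\tr_{\bbA}$ as a direct instance of Proposition~\ref{p:tr}, taking for $\Frag$ the family $\{\muML^{W}_{Q} \mid Q \sse \Propvar\}$ (so that $\Frag(Q) = \muML^{W}_{Q}$) and for $P$ the singleton $\{p\}$. The whole argument then reduces to verifying the three hypotheses (i)--(iii) of that proposition for this fragment; once they hold, Proposition~\ref{p:tr} supplies a translation $\tr_{\bbA}\colon A \to \muML(\Prop)$ with $\tr_{\bbA}(a) \equiv \bbA\init{a}$ and $\tr_{\bbA}(a) \in \muML^{W}_{p}$ for every $a \in A$, which is in fact slightly more than the statement asks for.

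For hypothesis (i), the fixpoint-closure properties (CA$_{\mu}$) and (CA$_{\nu}$) can be read off directly from Definition~\ref{d:w-frag}, whose grammar explicitly admits both $\mu x.\phi'$ and $\nu x.\phi'$ with $\phi' \in \muML^{W}_{P \cup \{x\}}$. The remaining three properties I would prove by a routine induction on that grammar, the clauses interlocking as expected. For the extension property (EP) the only interesting case is a $P$-free subformula $\psi$, which remains $(P \cup \{q\})$-free exactly because $q \notin \FV{\phi}$; for the substitution properties (SP1) and (SP2) the fixpoint case appeals to (EP) together with (CA$_{\eta}$) in order to push the substitution under $\eta z$, after a harmless renaming of the bound variable $z$.

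The heart of the matter is hypothesis (ii): reading each transition formula $\Th(a)$ with $a \in A$ as a $\mu$-formula over $\Prop \cup A$ in which the states of $A$ are treated as proposition letters, I would verify $\Th(a) \in \muML^{W}_{P \cup A}$ by induction on the grammar \eqref{eq:1st-W}. The atom $p$ lies in $P$, hence in $P \cup A$; the boolean cases and the constants $\top,\bot$ are handled by the matching clauses of Definition~\ref{d:w-frag}; a subformula $\be \in \MLone(\Prop \setminus \{p\}, B)$ mentions neither $p$ nor any state of $A$ (bipartiteness gives $A \cap B = \nada$), hence is $(P \cup A)$-free and enters as a $\psi$; and a subformula $\dia\pi_{0}$ with $\pi_{0} \in \Latt(A)$ is admissible because $\pi_{0}$ is a positive boolean combination of the special letters $A$, so $\pi_{0} \in \muML^{W}_{P \cup A}$ and therefore $\dia\pi_{0} \in \muML^{W}_{P \cup A}$. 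The decisive structural observation is that \eqref{eq:1st-W} permits the $A$-states to occur only inside a diamond and never inside a box; this is precisely what $\muML^{W}$ demands, namely that no $(P \cup A)$-active subformula sit in the scope of a $\Box$.

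Hypothesis (iii) is the easiest. Since the final part $\bbB$ uses the one-step language $\MLone(\Prop \setminus \{p\}, B)$, it does not mention $p$ at all, so the translation $\tr_{\bbB}$ furnished by Definition~\ref{d:ma-tr} (after linearising $\bbB$) yields $p$-free formulas, each of which belongs to $\muML^{W}_{p}$ trivially, as a $P$-free $\psi$. With (i)--(iii) established, Proposition~\ref{p:tr} completes the proof. I expect the main obstacle to lie entirely in hypothesis (ii): one must align the one-step grammar \eqref{eq:1st-W} with the fragment clause by clause and recognise that the diamond-only occurrence of the $A$-states is exactly the syntactic feature forcing membership in $\muML^{W}$. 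The closure properties of (i), while needing some care in the fixpoint cases, are otherwise mechanical, and (iii) is immediate from the $p$-freeness of the final part.
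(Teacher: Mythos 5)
Your proposal is correct and follows essentially the same route as the paper: the paper's proof likewise instantiates Proposition~\ref{p:tr} with the family $\{\muML^{W}_{Q}\}$ and $P=\{p\}$, verifies (EP), (SP1), (SP2), (CA$_{\mu}$), (CA$_{\nu}$) by routine induction (spelling out only the fixpoint case of (SP1)), checks (ii) by induction on the grammar \eqref{eq:1st-W}, and gets (iii) from the $p$-freeness of the final part. Your write-up is in fact somewhat more explicit on hypothesis (ii) than the paper's, which leaves that step to the reader.
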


\begin{proof}
It suffices to check that the fragment $\muML^{W} \isdef \{ \muML^{W}_{P}\mid 
P \sse_{\om} \Propvar \}$ and the automaton $\bbA$ satisfy the conditions
(i) -- (iii) of Proposition~\ref{p:tr}, with $P = \{ p \}$.

Starting with (i), we need to show that $\muML^{W}$ satisfies the properties
(EP), (SP1), (SP2) and (AC$_{\eta}$) for $\eta \in \{ \mu, \nu \}$.
All these results can be established by routine proofs; we consider the property
(SP1) as an example (the other properties are easier to show).
By a straightforward formula induction on $\phi$ one may show that if $\phi \in 
\muML^{W}_{P \cup \{ x \}}$, then for all $\psi \in \muML^{W}_{P}$ the formula
$\phi[\psi/x]$ belongs to the fragment $\muML^{W}_{P}$.

Confining ourselves to the inductive case where $\phi$ is of the form $\mu y. 
\phi'$, we reason as follows.
Without loss of generality we may assume that $y \not\in \FV{\psi}$, so that 
$\psi \in \muML^{W}_{P\cup \{y\}}$ by the extension property.
Also, if $\mu y. \phi' \in \muML^{W}_{P\cup \{ x \}}$, then by the formulation
rules of $\muML^{W}_{P}$ it must be the case that $\phi' \in \muML^{W}_{P \cup \{x,y\}}$.
It then follows by the inductive hypothesis that $\phi'[\psi/x] \in 
\muML^{W}_{P \cup \{ y \}}$, so that $\phi[\psi/x] \in \muML^{W}_{P}$, again
by definition of the fragment.

To check that (ii) the formula $\Th(a)$ belongs to $\muML^{W}_{\{ p \}\cup A }$ 
for all $a \in A$, one may proceed via a straightforward induction on the 
complexity of the one-step formulas generated by the grammar~\eqref{eq:1st-W}.

Finally, it is easy to verify that there is a translation $\tr: B \to 
\muML(\Prop\setminus \{ p \})$; from this it is immediate that 
(iii)  $\tr(b) \in \muML^{W}_{\{ p \}}$, for all $b \in B$.
\end{proof}

Note that it follows from Proposition~\ref{p:w-aut1} and Proposition~\ref{p:w1} 
that initialized automata in $\IAut^{W}_{p}$ have the finite width property.

Preparing for the main technical result of this section, we define the following
transformation of automata.
Recall from Definition~\ref{d:DMLone} that $\DMLone(\Prop,A)$ denotes the set 
of \emph{disjunctive} one-step formulas over $\Prop$ and $A$, and that 
$\bbA^{\bot}$ is the automaton given in Definition~\ref{d:Abot}.

\begin{defi}
\label{d:w-aut2}
Let $(\cdot)^{W}: \DMLone(\Prop,A) \to \MLone(\Prop,A\uplus A^{\bot})$ be the 
one-step translation given by the following inductive definition:
\[\begin{array}{lll}
(\pi\ybullet\nb B)^{W} & \isdef & {\displaystyle \bigvee}
  \Big\{ \pi \land  \bwsmall \dia B_{1} \land \nb B_{2}^{\bot} 
  \Bigm| B_{1} \cup B_{2} = B
  \Big\}
\\ \bot^{W} & \isdef & \bot
\\ (\al \lor \be)^{W} & \isdef & \al^{W} \lor \be^{W},
\end{array}\]
where $B_{2}^{\bot}$ denotes the set $B_{2}^{\bot} \isdef \{ b^{\bot} \mid b \in 
B_{2} \}$.

Let $\bbA = (A, \Th, \Om)$ be a disjunctive modal automaton which is positive in
$p$.
We define $\bbA^{W}$ as the bipartite automaton $(A^{W},\Th^{W},\Om^{W})$ where 
$A^{W} \isdef A \uplus A^{\bot}$, and the maps $\Th^{W}$ and $\Om^{W}$ are 
given by putting
\[\begin{array}{lll}
   \Th^{W}(a)        & \isdef & \Th(a)^{W} 
\\ \Th^{W}(a^{\bot}) & \isdef & \Th^{\bot}(a)
\end{array}
\hspace*{10mm}\text{ and } \hspace*{10mm}
\begin{array}{lll}
   \Om^{W}(a)        & \isdef & \Om(a)
\\ \Om^{W}(a^{\bot}) & \isdef & \Om(a),
\end{array}\]
for an arbitrary state $a \in A$.
\end{defi}

In words, $\bbA^{W}$ is a bipartite automaton that we obtain from $\bbA$ 
by putting a copy of $\bbA$ `in front of' of a copy of $\bbA^{\bot}$, changing 
the transition map $\Th$ of the initial part $A$ of $\bbA^{W}$ via the one-step
translation $(\cdot)^{W}$.
The final part of the structure $\bbA^{W}$ is isomorphic to the automaton 
$\bbA^{\bot}$, so that we have
\begin{equation}
\label{eq:AWbot}
\bbA^{W}\init{a^{\bot}} \equiv \bbA^{\bot}\init{a^{\bot}}
\end{equation}
for every state $a \in A$.
Finally, observe that, while we define the transformation $(\cdot)^{W}$ for 
disjunctive automata only, the resulting structures are generally \emph{not}
disjunctive.

The following proposition is easy to verify, we leave the details for the
reader.

\begin{prop}
\label{p:w-aut2}
Let $\bbA\init{a}$ be an initialized disjunctive modal automaton which is 
positive in $p$.
Then its transformation $\bbA^{W}\init{a}$ belongs to the class $\IAut^{W}_{p}$.
\end{prop}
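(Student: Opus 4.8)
The plan is to verify directly that $\bbA^{W}$ meets the definition of a finite-width automaton (Definition~\ref{d:w-aut1}), taking the partition of $A^{W} = A \uplus A^{\bot}$ into initial part $A$ and final part $A^{\bot}$. Three things need checking: that $\bbA^{W}$ is bipartite for this partition, that the one-step language of the final part is $\MLone(\Prop\setminus\{p\},A^{\bot})$, and that the one-step language of the initial part is generated by the grammar~\eqref{eq:1st-W}. That the designated state $a$ lies in the initial part $A$ is immediate.

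First I would settle the bipartiteness and the final part together. By Definition~\ref{d:Abot}, $\Th^{\bot}(a)$ arises from $\Th(a)$ by substituting $\bot$ for $p$ and relabelling each occurring state $b$ as $b^{\bot}$; hence $\Th^{W}(a^{\bot}) = \Th^{\bot}(a)$ mentions only states from $A^{\bot}$. Consequently no state of $A$ is active in any state of $A^{\bot}$, which is exactly the condition ${\act_{\bbA^{W}}}\cap(A\times A^{\bot}) = \nada$ required for bipartiteness. The same observation shows $\Th^{W}(a^{\bot})$ is a one-step formula over $A^{\bot}$; moreover, since $\bbA$ is positive in $p$ the literal $\neg p$ does not occur in $\Th(a)$, so the substitution $[\bot/p]$ deletes every occurrence of $p$ and leaves only proposition letters from $\Prop\setminus\{p\}$. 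Thus $\Th^{W}(a^{\bot}) \in \MLone(\Prop\setminus\{p\},A^{\bot})$, as demanded.

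The bulk of the work is the initial part, where I would argue by induction on the structure of $\Th(a)\in\DMLone(\Prop,A)$ that $\Th(a)^{W}$ is derivable in the grammar~\eqref{eq:1st-W} with $B = A^{\bot}$. The cases $\bot^{W}=\bot$ and $(\al\lor\be)^{W}=\al^{W}\lor\be^{W}$ are immediate. For a disjunct $\pi\ybullet\nb B'$, whose image is $\bigvee\{\pi\land\bwsmall\dia B_{1}\land\nb B_{2}^{\bot}\mid B_{1}\cup B_{2}=B'\}$, I would treat the three conjuncts of a typical disjunct separately. Because $\bbA$ is positive in $p$, the literal block $\pi$ has no $\neg p$; it factors as a conjunction of the literal $p$ (if present), generated by the clause $p$, together with literals over $\Prop\setminus\{p\}$, which lie in $\MLone(\Prop\setminus\{p\},B)$ and so are admissible as a $\be$-formula. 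The block $\bwsmall\dia B_{1}$ is a conjunction of formulas $\dia b$ with $b\in A$, each an instance of $\dia\pi_{0}$ with $\pi_{0}=b\in\Latt(A)$ (the empty conjunction yielding $\top$). Finally $\nb B_{2}^{\bot}=\bwsmall\dia B_{2}^{\bot}\land\Box\bvsmall B_{2}^{\bot}$ involves only the states $B_{2}^{\bot}\sse A^{\bot}=B$, under $\dia$ and $\Box$ applied to lattice terms over $B$; it is therefore a member of $\MLone(\Prop\setminus\{p\},B)$ and again qualifies as a $\be$-formula. Assembling these with $\land$ and closing under $\lor$ yields a formula of the required shape.

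The point that really needs attention — and the only place where the design of $(\cdot)^{W}$ is essential — is that every modality the translation applies to an \emph{initial-part} state $b\in A$ is a diamond, never a box: the translation splits each cover formula so that only the diamond obligations $\bwsmall\dia B_{1}$ retain $A$-states, while the single box $\Box\bvsmall B_{2}^{\bot}$ is shifted onto the $\bot$-copies $A^{\bot}=B$, where boxes are permitted by the $\be$-clause. This is precisely what keeps $A$-states out of the scope of any box and thereby places $\Th(a)^{W}$ inside the grammar~\eqref{eq:1st-W}; once this is observed, the remaining verifications are routine.
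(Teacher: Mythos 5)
Your verification is correct and is exactly the routine check the paper has in mind when it states this proposition as ``easy to verify'' and leaves the details to the reader: you confirm bipartiteness, membership of the final part's one-step formulas in $\MLone(\Prop\setminus\{p\},A^{\bot})$ (using positivity in $p$ so that $[\bot/p]$ removes all occurrences of $p$), and derivability of each translated disjunct in the grammar~\eqref{eq:1st-W}. Your closing observation --- that the translation deliberately routes all $A$-states through diamonds while the single box lands on the $A^{\bot}$-copies --- correctly identifies the one non-trivial point of the construction.
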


We are now ready for the main technical lemma of this section.

\begin{prop}
\label{p:w-aut3}
Let $\bbA\init{a}$ be an initialized disjunctive modal automaton which is 
positive in $p$.
If $\bbA\init{\ai}$ has the finite width property, then $\bbA\init{\ai} \equiv 
\bbA^{W}\init{\ai}$.
\end{prop}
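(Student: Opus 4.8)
The plan is to establish the two halves of the equivalence separately, in each case by simulating one acceptance game inside the other. Since acceptance of a (disjunctive or bipartite) modal automaton is bisimulation invariant (Fact~\ref{f:bisinv}), and the finite width property is formulated on trees, I would first reduce to the case where $\bbS$ is an $\om$-unravelled tree with root $r$; on such models Fact~\ref{f:daut} lets me take $\eloi$'s winning strategy for the disjunctive automaton $\bbA$ to pick singleton markings. I expect only the implication from $\bbA\init{\ai}$ to $\bbA^{W}\init{\ai}$ to use the finite width hypothesis; the reverse implication should hold for every disjunctive $\bbA$ that is positive in $p$.

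For the soundness direction $\bbA^{W}\init{\ai}\Rightarrow\bbA\init{\ai}$, I would fix an $\eloi$-winning strategy in $\AG(\bbA^{W},\bbS)$ and push it down to $\AG(\bbA,\bbS)$ along the \emph{undecoration} map $\rho\colon A\uplus A^{\bot}\to A$ sending both $a$ and $a^{\bot}$ to $a$. At an initial position the chosen disjunct has the shape $\pi\ybullet\bw\dia B_{1}\land\nb B_{2}^{\bot}$ with $B_{1}\cup B_{2}=B$; I would recombine the $B_{1}$-witnesses and the $B_{2}^{\bot}$-witnesses into a single marking witnessing $\pi\ybullet\nb B$ for the original transition $\Th(\rho(a))$, the diamonds of $\nb B$ coming from $\bw\dia B_{1}$ together with $\bw\dia B_{2}^{\bot}$, and its box $\Box\bv B$ coming from $\Box\bv B_{2}^{\bot}$. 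At a final position I would use that $\Th^{\bot}(a)=\Th(a)[\bot/p]$ one-step entails $\Th(a)$ whenever $\Th(a)$ is positive in $p$ (the one-step analogue of Proposition~\ref{p:fs2}), so the same marking works for $\bbA$. Because $\Om^{W}$ agrees with $\Om$ on both copies, the resulting $\bbA$-match has the same priority sequence as its $\bbA^{W}$-shadow and is therefore won by $\eloi$.

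For the completeness direction $\bbA\init{\ai}\Rightarrow\bbA^{W}\init{\ai}$ I would invoke finite width: from $\bbS,r\sat\bbA\init{\ai}$ I obtain a finitely branching, downward closed subtree $U\sse S$ with $\bbS[p\rst{U}],r\sat\bbA\init{\ai}$, and a singleton $\eloi$-strategy $f$ for $\AG(\bbA,\bbS[p\rst{U}])$ assigning to each reachable node $t$ a state $a_{t}$ and a disjunct $\pi_{t}\ybullet\nb B_{t}$. The idea is to let the \emph{initial} part of $\bbA^{W}$ shadow $f$ on the active, finitely branching region, discharging each existential requirement $\dia b$ of $\nb B_{t}$ through the initial copy ($B_{1}$), while the \emph{final} part $\bbA^{\bot}$ takes over the universal requirement $\Box\bv B_{t}$ through the decorated copy ($B_{2}^{\bot}$). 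Here the initial part is automatically finitely branching because each $B_{t}\sse A$ is finite, and $\bbA^{\bot}$ really does accept on the complement of $U$: since $U$ is downward closed, every node outside $U$ has a wholly $p$-free subtree, so $f$ restricted there witnesses $\bbS[p\mapsto\nada]$-acceptance and hence, by Proposition~\ref{p:fs2} and \eqref{eq:AWbot}, acceptance of $\bbA^{W}\init{a^{\bot}}$.

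The main obstacle is exactly the universal conjunct $\Box\bv B_{2}^{\bot}$ of $\nb B_{2}^{\bot}$: it forces \emph{every} successor of an initial node---including those that the initial part keeps tracking with the real value of $p$---to be simultaneously accepted by the $p$-free final automaton $\bbA^{\bot}$. Meeting this safety obligation without any help from $p$ is the technical heart of the proof, and it is precisely what the finite width hypothesis buys: an automaton whose box behaviour genuinely needed $p$ (the paradigm being $\Box p$, i.e.\ $\phi_{4}$) would require $p$ on an infinitely branching set and so could not have finite width. Concretely, I expect to have to show that at each node the state set $B_{t}$ always contains a witness accepting the given successor with $p$ set to $\bot$; verifying this, and checking that the two kinds of infinite matches---those staying in the initial part, which shadow winning $f$-matches, and those descending into the final part, which shadow winning $\bbA^{\bot}$-matches---are all won by $\eloi$ since $\Om^{W}$ copies $\Om$, is where the real work lies.
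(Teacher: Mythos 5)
Your overall architecture matches the paper's: reduce to $\om$-unravelled trees, prove the two implications by game simulation, and use the finite width hypothesis only in the forward direction. Your soundness direction ($\bbA^{W}\init{\ai}\Rightarrow\bbA\init{\ai}$) is essentially correct, and is a mild variant of the paper's: you shadow the $\bbA^{W}$-strategy straight through the final part via the undecoration map, using the one-step entailment $\Th(a)[\bot/p]\models\Th(a)$ and the fact that $\Om^{W}$ copies $\Om$, whereas the paper switches, upon entering $A^{\bot}$, to a fresh winning strategy for $\AG(\bbA,\bbS)$ after deriving $\bbS,t\sat\bbA\init{b}$ from Proposition~\ref{p:fs2} and monotonicity. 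Both work; your recombination of $\bw\dia B_{1}$ and $\nb B_{2}^{\bot}$ into a witness for $\pi\ybullet\nb(B_{1}\cup B_{2})$ is exactly the paper's computation.

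The forward direction, however, is where the proposition actually lives, and you have correctly named its central difficulty --- discharging $\Box\bv B_{2}^{\bot}$ at successors that remain inside $U$ --- without resolving it. The missing idea is a sibling argument. Because $\bbS$ is $\om$-expanded, every node has infinitely many bisimilar siblings, while $U$ is finitely branching; hence every successor $u\in\si_{R}(s_{k})\cap U$ has a bisimilar sibling $\ul{u}\in\si_{R}(s_{k})\setminus U$. The cover condition $\si(s_{k}),m\satone\nb B$ forces $m(\ul{u})\neq\nada$, and by definition $m(\ul{u})\sse B_{2}$; so one sets $m^{W}(u)\isdef m(u)\cup m(\ul{u})^{\bot}$ and $m^{W}(t)\isdef m(t)^{\bot}$ for $t\notin U$, which makes $\Box\bv B_{2}^{\bot}$ true at \emph{every} successor (and also shows $B_{2}\neq\nada$ whenever $B\neq\nada$, so the chosen disjunct of $(\pi\ybullet\nb B)^{W}$ is nontrivial). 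The transfer is sound because the subtree above $\ul{u}$ is disjoint from the downward-closed $U$, so $\bbS[p\mapsto\nada],\ul{u}\sat\bbA\init{a}$ for each $a\in m(\ul{u})$, and the bisimilarity $\bbS[p\mapsto\nada],u\bis\bbS[p\mapsto\nada],\ul{u}$ carries this over to $u$; Proposition~\ref{p:fs2} and \eqref{eq:AWbot} then yield $\bbS[p\rst{U}],u\sat\bbA^{W}\init{a^{\bot}}$, which is what the invariant at the new position requires. Your heuristic that ``a box that genuinely needs $p$ forces infinite branching'' points in the right direction but is not an argument; without the sibling construction the marking you need for the initial part cannot be exhibited, so as it stands the proof of the main implication is incomplete.
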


\begin{proof}
Let $\bbA = (A,\Th,\Om)$ be a disjunctive modal automaton, and assume that,
for some state $\ai \in A$, the initialized automaton $\bbA\init{\ai}$ has the 
finite width property.
In order to prove the equivalence of $\bbA\init{\ai}$ and $\bbA^{W}\init{\ai}$,
it suffices to take an arbitrary $\om$-unravelled Kripke tree $(\bbS,r)$ and
prove that 
\begin{equation}
\label{eq:w1}
\bbS,r \sat \bbA\init{\ai} \ouriff \bbS,r \sat \bbA^{W}\init{\ai}.
\end{equation}

We first consider the direction from left to right of \eqref{eq:w1}.
Assume that $\bbS,r \sat \bbA\init{\ai}$, then
it follows from the finite width property of $\bbA\init{a}$ that there is a 
finitely branching subtree $U \sse S$ such that $\bbS[p\rst{U}],r \sat 
\bbA\init{\ai}$.
Without loss of generality we may assume that $U \neq \nada$, or equivalently,
that $r \in U$.
By monotonicity of $\bbA^{W}$ it suffices to show that $\bbS[p\rst{U}],r \sat 
\bbA^{W}\init{\ai}$; that is, we need to supply $\eloi$ with a winning strategy
$h$ in the game $\AG^{W} \isdef \AG(\bbA^{W}, \bbS[p\rst{U}])@(\ai,r)$.
In order to define this strategy, we will make use of two auxiliary strategies:
let $f$ and $g$ be positional winning strategies for $\eloi$ in the acceptance 
games $\AG(\bbA,\bbS[p\rst{U}])$ and $\AG(\bbA^{W},\bbS[p\rst{U}])$ itself,
respectively.
\smallskip

It will be $\eloi$'s goal to maintain the following condition throughout any 
match of the acceptance game $\AG^{W}$:

\noindent($\dag_{W}$)\hspace{3mm}
\begin{minipage}{14cm}
With $\Si = (a_{n},s_{n})_{n\leq k}$ a partial match of $\AG^{W}$, one of the 
following holds:

($\dag_{W}^{1}$)
$(a_{k},s_{k}) \in A \times U$ and $\Si$ corresponds to an $f$-guided match of 
  $\AG(\bbA,\bbS[p\rst{U}])$,
\\($\dag_{W}^{2}$) $(a_{l},s_{l}) \in A^{\bot} \times S$
for some $l \leq k$ such that $\bbS[p\rst{U}],s_{l} \sat 
\bbA^{W}\init{a_{l}}$,
\\\hspace*{12mm} and $(a_{i},s_{i})_{l\leq i \leq k}$ is a $g$-guided 
$\AG(\bbA^{W}, \bbS[p\rst{U}])$-match.
\end{minipage}
\smallskip

In words, $\eloi$ will make sure that the match either stays in $A \times U$
and corresponds to an $f$-guided match of $\AG(\bbA,\bbS[p\rst{U}]$, or it 
moves to $A^{\bot} \times S$ at a moment when it is safe for her to follow the 
strategy $g$.
Let us first see that $\eloi$ can keep this condition during one single round 
of the game.

\begin{claimfirstyv}
\label{cl:w1}
Let $\Si$ be a partial match of $\AG^{W}$ satisfying ($\dag_{W}$).
Then $\eloi$ has a legitimate move guaranteeing that, after any response move 
by $\abel$, ($\dag_{W}$) holds again.
\end{claimfirstyv}

\begin{pfclaim}
Let $\Si = (a_{n},s_{n})_{n\leq k}$ be as in the claim, and distinguish cases.

The easiest case is where $\Si$ satisfies ($\dag_{W}^{2}$): here $\eloi$ can 
simply continue to use her winning strategy $g$.

If $\Si$ satisfies ($\dag_{W}^{1}$), then obviously its final position 
$(a_{k},s_{k})$ is a winning position for $\eloi$ in $\AG(\bbA,\bbS[p\rst{U}])$.
Note that since $s_{k} \in U$, for its coalgebraic unfolding we have 
$\si_{\bbS[p\rst{U}]}(s_{k}) = \si_{\bbS}(s_{k})$, so we may simply denote this
object as $\si(s_{k})$ without causing confusion.
Let $m: \si_{R}(s_{k}) \to \funP A$ be the marking given by her positional 
winning strategy $f$.
By the legitimacy of this move we have that 
$\si(s_{k}), m \satone \pi\ybullet\nb B$ for some 
disjunct $\pi\ybullet \nb B$ of $\Th(a_{k})$.

If $B = \nada$ it immediately follows that $\si_{R}(s_{k}) = \nada$.
It is easy to see that in this case, taking $\eloi$'s move to be the 
\emph{empty} map $m^{W}$ and with $B_{1} = B_{2} = \nada$, we find 
$\si(s_{k}),m^{W} \satone \pi \land \bwsmall \dia B_{1} \land \nb B_{2}^{\bot}$.
This means that $m^{W}$ is a legitimate move for her, while $\abel$ has no 
legitimate response to it.
Thus $\eloi$ immediately wins (and the condition of the claim is satisfied).

Assuming in the sequel that $B \neq \nada$, we arrive at the heart of the proof.
Define
\[\begin{array}{lll}
   B_{1} & \isdef & \{ a \in A \mid a \in m(u) \text{ for some } u \in 
      \si_{R}(s_{k}) \cap U \},
\\ B_{2} & \isdef & \{ a \in A \mid a \in m(t) \text{ for some } t \in 
      \si_{R}(s_{k}) \setminus U \}.
\end{array}\]
Then clearly it follows from $\si(s_{k}), m \satone \nb B$ that $B_{1} \cup 
B_{2} = B$.
A crucial observation is that since $\bbS$ is $\om$-unravelled, and $U$ is 
finitely branching, any $u \in \si_{R}(s_{k}) \cap U$ has a sibling $\ul{u}
\in \si_{R}(s_{k}) \setminus U$ such that $\bbS,u \bis \bbS,\ul{u}$.
Note that this bisimilarity does not hold for the structure $\bbS[p\rst{U}]$, 
but for the structure $\bbS[p\mapsto\nada]$ it does follow that 
\begin{equation}
\label{eq:w2}
\bbS[p\mapsto\nada],u \bis \bbS[p\mapsto\nada],\ul{u}.
\end{equation}

We can now define the desired move for $\eloi$ as the $A^{W}$-marking $m^{W}: 
\si_{R}(s_{k}) \to \funP A^{W}$ given by
\[
m^{W}(t) \isdef 
\left\{ \begin{array}{ll}
     m(t) \cup m(\ul{t})^{\bot} & \text{if } t \in U
  \\ m(t)^{\bot}           & \text{if } t \not\in U,
\end{array}\right.
\]
where $m(t)^{\bot}$ denotes the set $\{ a^{\bot} \mid a \in m(t) \}$.
We claim that $m^{W}$ is a legitimate move for $\eloi$ at position $(a_{k},
s_{k})$ in $\AG(\bbA, \bbS[p\rst{U}])$, and to prove this we need to show that 
\begin{equation}
\label{eq:w3}
\si(s_{k}), m^{W} \satone \Th^{W}(a_{k}).
\end{equation}
To do so, it clearly suffices to prove that 
\begin{equation}
\label{eq:w4}
\si(s_{k}), m^{W} \satone 
\bwsmall \dia B_{1} \land \nb B_{2}^{\bot}.
\end{equation}
For this purpose, first consider an arbitrary state $b \in B_{1}$.
It is immediate by the definitions of $B_{1}$ and $m^{W}$ that $b \in m^{W}(u)$ 
for some $u \in \si_{R}(s_{k}) \cap U$; hence we find that $\si(s_{k}), m^{W} 
\satone \bwsmall \dia B_{1}$, as required.

Now we consider the other conjunct, viz., the formula $\nb B_{2}^{\bot}$.
First take an arbitrary element $b \in B_{2}$; it is immediate from the 
definitions of $B_{2}$ and $m^{W}$ that $b \in m^{W}(u)$ for some $u \in 
\si_{R}(s_{k}) \setminus U$, so that $\si(s_{k}), m^{W} \satone \dia b$, as
required.
Conversely, take an arbitrary successor $t$ of $s_{k}$.
If $ t \not\in U$, it follows from $\si(s_{k}),m \satone \nb B$, the
non-emptiness of $B$ and the definition of $B_{2}$, that $m(t) \cap B_{2}^{\top}
\neq \nada$.
If, on the other hand, $t$ belongs to $U$, then by the same reasoning, but now
applied to its \emph{sibling} $\ul{t} \not\in U$, we find that $m(\ul{t}) \cap 
B_{2}^{\top} \neq \nada$.
In both cases it is immediate by the definition of $m^{W}$ that 
$m^{W}(t) \cap B_{2}^{\top} \neq \nada$, so that $\si(s_{k}),m^{W} \satone 
\Box\bv B_{2}^{\bot}$.
Thus we obtain $\si(s_{k}),m^{W} \satone \nb B_{2}^{\bot}$, which means that we
have established both \eqref{eq:w4} and \eqref{eq:w3}, showing that $m^{W}$ is
a legitimate move indeed.
\smallskip

It remains to show that, playing $m^{W}$, $\eloi$ ensures that ($\dag_{W}$)
continues to hold after any response by $\abel$.
So suppose that $\abel$ picks a basic position $(b,t)$ such that $b \in 
m^{W}(t)$.
There are three cases to distinguish.
First, if $b \in m(t)$ and $t \in U$, then by our assumption on $m$ the 
continuation $\Si\cdot (b,t)$ of $\Si$ clearly satisfies condition 
($\dag_{W}^{1})$.

Second, suppose that $t \in \si_{R}(s_{k}) \setminus U$.
Then by definition of $m^{W}$, $b$ belongs to the $A^{\bot}$-part of $\bbA^{W}$, 
say $b = a^{\bot}$ for some $a \in A$.
From the definition of $m^{W}$ it follows that $a \in m(t)$, and since $m$ is 
part of $\eloi$'s winning strategy $f$, this means that $\bbS[p\rst{U}], t \sat 
\bbA\init{a}$.
Now observe that since $t \not\in U$ and $U$ is downward closed, the entire 
subtree generated by $t$ is disjoint from $U$, so that we find 
$\bbS[p\mapsto\nada], t \sat \bbA\init{a}$.
We may now use Proposition~\ref{p:fs2} and obtain 
$\bbS[p\rst{U}], t \sat \bbA^{\bot}\init{a^{\bot}}$.
By \eqref{eq:AWbot} it is immediate from this that 
$\bbS[p\rst{U}], t \sat \bbA^{W}\init{a^{\bot}}$.
In other words, in this case the continuation match $\Si \cdot (a^{\bot},t)$ 
satisfies condition ($\dag_{W}^{2}$).

Third, we consider the case where $t \in \si_{R}(s_{k}) \cap U$, and $b$, 
belonging to the $A^{\bot}$-part of $\bbA^{W}$, is of the form $b = a^{\bot}$
for some $a \in A$ such that $a \in m(\ul{t})$.
Reasoning as in the previous case, but now for the sibling $\ul{t}$ of $t$,
we find that $\bbS[p\mapsto\nada], \ul{t} \sat \bbA\init{a}$.
But then it follows from \eqref{eq:w2} that $\bbS[p\mapsto\nada], t \sat 
\bbA\init{a}$ as well.
And, again reasoning as before, we find that in this case the match $\Si \cdot
(a^{\bot},t)$ satisfies condition ($\dag_{W}^{2}$) as well.
\end{pfclaim}

Based on Claim~\ref{cl:w1} we may provide $\eloi$ with the following strategy
$h$.
Given a partial match $\Si$, $h$ picks any move for $\eloi$ as given by 
Claim~\ref{cl:w1} in case $\Si$ satisfies ($\dag_{W})$, while $h$ picks a 
random move if $\Si$ does not meet mentioned condition.
We will now prove that this is in fact a winning strategy.

\begin{claimyv}
\label{cl:w2}
Any $h$-guided full match $\Si$ of $\AG^{W}$ is won by $\eloi$.
\end{claimyv}

\begin{pfclaim}
Let $\Si$ be an $h$-guided full match of $\AG^{W}$.
Note that $(\ai,r)$ is the first position of $\Si$, and that by our assumption 
that $r \in U$, the partial match $\Si_{0} \isdef (\ai,r)$ satisfies condition 
($\dag_{W}^{1})$.
It then easily follows from Claim~\ref{cl:w1} that, playing $h$, $\eloi$ will
never get stuck.

To show that $h$ is a winning strategy, we may thus focus on the case where $\Si
= (a_{n},s_{n})_{n<\om}$ is infinite;
again by Claim~\ref{cl:w1} it follows that every initial part $\Si_{l} \isdef
(a_{n},s_{n})_{n\leq l}$ of $\Si$ satisfies ($\dag_{W})$.
From this it is obvious that we may distinguish the following two cases.

If every $\Si_{l}$ satisfies ($\dag_{W}^{1})$, then $\Si$ itself corresponds to  
an $f$-guided full match of $\AG(\bbA^{W},\bbS[p\rst{U}])$.
Since $f$ was assumed to be winning for $\eloi$ in $\AG(\bbA,\bbS[p\rst{U}])$
from position $(a_{0},s_{0})$, this means that the $A$-stream $(a_{n})_{n\in\om}$ 
satisfies the acceptance condition of $\bbA$.
But then this stream satisfies the acceptance condition of $\bbA^{W}$ as well, 
which means that $\Si$ is won by $\eloi$ indeed.

Alternatively, there is a first $l \in \om$ such that $\Si_{l}$ satisfies 
($\dag_{W}^{2})$.
It easily follows from Claim~\ref{cl:w1} that in this case, the match 
$(a_{n},s_{n})_{l\leq n <\om}$ is a $g$-guided full match of 
$\AG(\bbA^{W},\bbS[p\rst{U}])$, which starts at a position, viz., $(a_{l},s_{l})$,
which is winning for $\eloi$.
Clearly then the $A$-stream $(a_{n})_{l \leq n\in\om}$ satisfies the acceptance 
condition of $\bbA^{W}$; from this it is immediate that the full $A$-stream
$(a_{n})_{n\in\om}$ induced by $\Si$ does so as well.
\end{pfclaim}

Clearly, it follows from Claim~\ref{cl:w2} that $h$ is a winning strategy for
$\eloi$ in the game $\AG^{W} = \AG(\bbA^{W}, \bbS[p\rst{U}])@(\ai,r)$.
This proves the direction from left to right of \eqref{eq:w1}.
\medskip

In order to prove the opposite, right-to-left, direction of \eqref{eq:w1}, our
line of reasoning is similar (but simpler).
Assume that $\bbS,r \sat \bbA^{W}\init{\ai}$, with $\ai \in A$.
We will supply $\eloi$ with a winning strategy $h$ in the game $\AG(\bbA, \bbS)
@(\ai,r)$.
For this purpose we will make use of arbitrary but fixed positional winning 
strategies $f$ and $g$ for $\eloi$ in the acceptance games $\AG(\bbA^{W},\bbS)$ 
and $\AG(\bbA,\bbS)$, respectively.

$\eloi$'s strategy in $\AG(\bbA, \bbS)@(\ai,r)$ will be based on maintaining
the following condition:
\smallskip

\noindent($\ddag_{W}$)\hspace{3mm}
\begin{minipage}{14cm}
With $\Si = (a_{n},s_{n})_{n\leq k}$ a partial match of 
$\AG(\bbA, \bbS)@(\ai,r)$, one of the following holds:

($\ddag_{W}^{1}$)
$\Si$ corresponds to an $f$-guided match of $\AG(\bbA^{W},\bbS)$,
\\($\ddag_{W}^{2}$) 
$\bbS,s_{l} \sat \bbA\init{a_{l}}$ for some $l \leq k$, and
$(a_{i},s_{i})_{l\leq i \leq k}$ is a $g$-guided $\AG(\bbA,\bbS)$-match.
\end{minipage}
\smallskip

As before, our main claim is that $\eloi$ can keep condition ($\ddag_{W}$)
during one single round of the game.

\begin{claimyv}
\label{cl:w3}
Let $\Si$ be a partial match of $\AG(\bbA,\bbS)@(\ai,r)$ satisfying 
($\ddag_{W}$).
Then $\eloi$ has a legitimate move guaranteeing that, after any response move 
by $\abel$, ($\ddag_{W}$) holds again.
\end{claimyv}

\begin{pfclaim}
Let $\Si = (a_{n},s_{n})_{n\leq k}$ be as in the claim, and distinguish cases.
If $\Si$ satisfies ($\ddag_{W}^{2}$), $\eloi$ can simply continue to use the
strategy $g$.

If $\Si$ satisfies ($\ddag_{W}^{1}$), then obviously its final position 
$(a_{k},s_{k})$ is a winning position for $\eloi$ in $\AG(\bbA^{W},\bbS)$.
Let $m: \si_{R}(s_{k}) \to \funP A^{W}$ be the marking given by her positional 
winning strategy $f$.
By the legitimacy of this move we have that $\si(s_{k}), m \satone 
\Th^{W}(a_{k})$, which means that
\begin{equation}
\label{eq:w11}
\si(s_{k}), m \satone \pi \land \bwsmall \dia B_{1} \land \nb B_{2}^{\bot},
\end{equation}
where $\pi\ybullet\nb (B_{1} \cup B_{2})$ is some disjunct of $\Th(a_{k})$.

We claim that the $A$-marking $m_{W}: \si_{R}(s_{k}) \to \funP A$, defined by
\[
m_{W}(t) \isdef \{ a \in A \mid a \in m(t) \text{ or } a^{\bot} \in m(t) \}
\]
is the right move for $\eloi$ in the partial match $\Si$.
In order to prove the legitimacy of  $m_{W}$, we show that:
\begin{equation}
\label{eq:w12}
\si(s_{k}),m_{W} \satone \pi\ybullet \nb (B_{1} \cup B_{2}).
\end{equation}
which clearly implies that $\si(s_{k}),m_{W} \satone \Th(a_{k})$.
It easily follows from \eqref{eq:w11} that $\si(s_{k}),m_{W} \satone \pi$,
and so we may focus on the formula $\nb (B_{1} \cup B_{2})$.
First take an arbitrary successor $t$ of $s_{k}$; it follows from \eqref{eq:w11}
that $m(t)$ contains some variable $b^{\bot} \in B_{2}^{\bot}$.
But this immediately gives that $b \in m_{W}(t)$, showing that $\si(s_{k}),m_{W}
\satone \Box\bv (B_{1} \cup B_{2})$.
Conversely, take an arbitrary state $b \in B_{1} \cup B_{2}$; we need to show 
that $\si(s_{k}),m_{W} \satone \dia b$.
Given the definition of $m_{W}$, this easily follows 
from $\si(s_{k}),m \satone \bwsmall \dia B_{1}$ if $b \in B_{1}$, and 
from $\si(s_{k}),m \satone \nb B_{2}^{\bot}$ if $b \in B_{2}$.
Thus we have proved \eqref{eq:w11}.

It is left to show that each response $(b,t)$ of $\abel$ to $\eloi$'s move 
$m_{W}$ constitutes a continuation $\Si \cdot (b,t)$ of $\Si$ that satisfies 
($\ddag_{W}$).
Let $(b,t)$ be an arbitrary such move, that is, an arbitrary pair such that 
$b \in m_{W}(t)$.
Again we distinguish cases: if $b \in m(t)$ then by our assumption of $m$ being
provided by $\eloi$'s strategy $f$, the partial match $\Si \cdot (b,t)$
satisfies condition ($\ddag_{W}^{1}$).
Alternatively, if $b^{\bot} \in m(t)$, then by the assumption that $f$ is a
winning strategy for $\eloi$ in $\AG(\bbA^{W},\bbS)@(\ai,r)$, we find that 
$\bbS,t \sat \bbA^{W}\init{b^{\bot}}$.
By \eqref{eq:AWbot} this obviously implies $\bbS,t \sat
\bbA^{\bot}\init{b^{\bot}}$, whence by Proposition~\ref{p:fs2} and monotonicity 
of $\bbA$ we obtain, subsequently, $\bbS[p\mapsto\nada],t \sat \bbA\init{b}$
and $\bbS,t \sat \bbA\init{b}$.
But then the continuation match $\Si\cdot (b,t)$ satisfies condition
($\ddag_{W}^{2}$).
\end{pfclaim}

We now define a strategy for $\eloi$ in the game $\AG(\bbA,\bbS)@(\ai,r)$:
Given a partial match $\Si$, $h$ picks any move for $\eloi$ as given by 
Claim~\ref{cl:w3} in case $\Si$ satisfies ($\dag_{W})$, while $h$ picks a 
random move if $\Si$ does not meet mentioned condition.
We claim that $h$ is in fact a winning strategy at position $(\ai,r)$, and 
since one may prove this claim in the same manner as above, we leave the 
details for the reader.
\end{proof}

\begin{proof}[{\rm\bf Proof of Theorem~\ref{t:fw}}]
To define the required map $(\cdot)^{W}: \muML(\Prop) \to \muML^{W}_{p}(\Prop)$,
fix a $\muML$-formula $\xi$.

By Fact~\ref{f:fma-aut}(2) there is an initialized disjunctive modal automaton 
$\bbD_{\xi}\init{d_{\xi}}$ that is equivalent to $\xi$; using the constructions
of Definition~\ref{d:mon-aut2} and Definition~\ref{d:w-aut2}, we transform 
$\bbD_{\xi}$ into a finite-width automaton $\bbD_{\xi}^{MW}$. 
Finally, we apply the translation map of Definition~\ref{d:ma-tr}, instantiated 
at the automaton $\bbD_{\xi}^{MW}$, to its state $d_{\xi}$, and we let $\xi^{W}$
be the resulting formula.
Summarizing, we define
\[
\xi^{W} \isdef \tr_{\bbD^{MW}_{\xi}}(d_{\xi}).
\]

It then follows by the Propositions~\ref{p:mon1}, \ref{p:w-aut2}
and~\ref{p:w-aut1} that the
formula $\xi^{W}$ belongs to the fragment $\muML^{W}_{p}$, and by the 
Propositions~\ref{p:mon-aut3} and~\ref{p:w-aut3}, together with the equivalences $\xi 
\equiv \bbD_{\xi}\init{d_{\xi}}$ and $\bbD_{\xi}^{MW}\init{d_{\xi}} \equiv
\tr_{\bbD^{MW}_{\xi}}(d_{\xi})$, that $\xi$ has the finite width property for $p$
iff $\xi \equiv \xi^{W}$.

Finally, all constructions that are involved in the definition of the map
$(\cdot)^{W}: \muML(\Prop) \to \muML^{W}_{p}(\Prop)$ are effective, and can
be performed in elementary time. 
In addition, the problem whether two formulas of the modal $\mu$-calculus are
equivalent or not can be solved in exponential time.
From this it is immediate that the problem whether a given $\mu$-formula has
the finite width property, is decidable, and its time complexity is bounded by
an elementary function on the size of the formula.
\end{proof}

\section{Finite depth property}
\label{sec:fd}

The property that we consider in this section can be informally classified as 
`vertical' in the sense that its definition involves subtrees containing no 
infinite paths.

\begin{defi}
\label{d:d-prop}
A formula $\xi\in \muML(\Prop)$ has the \emph{finite depth property} for 
$p \in \Prop$ if $\xi$ is monotone in $p$, and, for every tree model $(\bbS,s)$,
\[
\bbS,s \sat \xi \ouriff
\bbS[p\rst{U}],s \sat \xi, \mbox{ for some noetherian subtree } U \sse S,
\]
where we call $U$ a \emph{noetherian subtree} of $\bbS$ if it is downward 
closed and contains no infinite paths.
\end{defi}

We will associate the following syntactic fragment of the modal $\mu$-calculus
with this property.

\begin{defi}
\label{d:d-frag}
Given a set $P \sse \Prop$, we define the fragment $\muML^{D}_{P}$ by the 
following grammar:
\[
\phi \isbnf p \divbnf \psi 
   \divbnf \phi\lor\phi \divbnf \phi\land\phi 
   \divbnf \dia \phi \divbnf \Box \phi
   \divbnf \mu x. \phi',
\]
where $\psi$ is a $p$-free formula and $\phi' \in \muML^{D}_{P \cup \{ x \}}$.
In case $P$ is a singleton, say, $P = \{ p \}$, we will write 
$\muML^{D}_{p}$ rather than $\muML^{D}_{\{p\}}$.
\end{defi}

In words, the fragment $\muML^{D}_{p}$ consists of those formulas of which no
$p$-active subformula is in the scope of a greatest fixpoint operator $\nu$.
All formulas from Example~\ref{ex:fmas} belong to $\muML^{D}_{p}$, except 
$\phi_{5}$.

The following theorem states that modulo equivalence, $\muML^{D}_{p}$ is the
syntactic fragment of the modal $\mu$-calculus that captures the finite depth
property, and that it is decidable whether a given $\mu$-formula has this 
property.

\begin{thm}
\label{t:depth}
There is an effective translation which maps a given $\muML$-formula $\xi$ to
a formula $\xi^{D} \in \muML^{D}_{p}$ such that 
\begin{equation}
\label{eq:t-d}
\xi \mbox{ has the finite depth property for } p \ouriff 
\xi \equiv \xi^{D},
\end{equation}
and it is decidable in elementary time whether a given formula $\xi$ has
the finite depth property for $p$.
\end{thm}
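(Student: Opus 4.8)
The plan is to replay the automata-theoretic argument used for the finite width property (Theorem~\ref{t:fw}), trading its \emph{horizontal} restriction (finite branching) for the \emph{vertical} one (absence of infinite paths). The easy inclusion, that every $\xi\in\muML^D_p$ has the finite depth property, I would prove exactly as Proposition~\ref{p:w1}: run the evaluation game and take $U$ to be the set of nodes reached at $p$-active positions under a positional winning strategy $f$ for $\eloi$. Then $U$ is downward closed for the same reason as there, and it is \emph{noetherian} because an infinite path in $U$ would force some $p$-active variable to be unfolded infinitely often, which by the shape of $\muML^D_p$ is a $\mu$-variable, contradicting that $f$ is winning. For the hard direction it suffices to work with automata: via Fact~\ref{f:fma-aut}(ii) replace $\xi$ by an equivalent disjunctive automaton, make it positive in $p$ with $(\cdot)^M$ (Definition~\ref{d:mon-aut2}), apply a new transformation $(\cdot)^D$, and translate back (Definition~\ref{d:ma-tr}), setting $\xi^D\isdef\tr_{\bbD^{MD}_\xi}(d_\xi)$.

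First I would introduce the class $\Aut^D_p$ of \emph{finite-depth automata}: bipartite automata $(A,B,\Th,\Om)$ whose final part $B$ is $p$-free and whose initial part $A$ is positive in $p$ and consists \emph{entirely of $\mu$-states}, with states of $A$ now permitted to occur under $\Box$ as well as $\dia$. That every initial state is a $\mu$-state is precisely the automata-theoretic counterpart of ``the $p$-region carries no infinite path'': any infinite run staying in the $p$-active part is then lost by $\eloi$. To see that such automata translate into the fragment, I would apply Proposition~\ref{p:tr} with $P=\{p\}$ to the family $\muML^D\isdef\{\muML^D_P\mid P\sse_\om\Propvar\}$, checking (EP), (SP1) and (SP2); the point is that the hypothesis (CA$_{\eta_a}$) now only has to be verified for $\eta_a=\mu$, so that the absence of $\nu$-closure in $\muML^D_p$ is no obstacle.

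Next I would define $\bbA^D$, for a disjunctive $\bbA=(A,\Th,\Om)$ positive in $p$, by placing a copy of $\bbA$ in front of a copy of $\bbA^\bot$ (Definition~\ref{d:Abot}), so that $\bbA^D\init{a^\bot}\equiv\bbA^\bot\init{a^\bot}$. Two features distinguish it from $(\cdot)^W$: the one-step transformation again lets each obligation $b$ in a cover either stay in the initial part or drop to $b^\bot$, but it retains the full $\nb$-shape so that boxes are preserved; and the priorities of the initial part are reset to a single odd value, turning every initial state into a $\mu$-state while $\Om^D(a^\bot)\isdef\Om(a)$ keeps the final part's acceptance intact. The analogue of Proposition~\ref{p:w-aut2} then gives $\bbA^D\init{a}\in\IAut^D_p$, and the crux is the analogue of Proposition~\ref{p:w-aut3}: if $\bbA\init{\ai}$ has the finite depth property, then $\bbA\init{\ai}\equiv\bbA^D\init{\ai}$ on every $\om$-unravelled tree.

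I would prove this equivalence on the acceptance games. The direction from $\bbA^D$ to $\bbA$ holds unconditionally and is handled as in the width proof, using $\bbA^D\init{a^\bot}\equiv\bbA^\bot\init{a^\bot}$, Proposition~\ref{p:fs2} and monotonicity to re-activate $p$ after a drop. For the converse I would fix, using the finite depth property, a \emph{noetherian} downward-closed subtree $U$ with $\bbS[p\rst U],r\sat\bbA\init{\ai}$ and a positional winning strategy $f$ for $\eloi$ in $\AG(\bbA,\bbS[p\rst U])$, let $\eloi$ follow $f$ while the match remains in $A\times U$, and drop each obligation that leaves $U$ into the $A^\bot$-part (where, $U$ being downward closed, $p$ is off throughout the generated subtree). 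The main obstacle is the parity bookkeeping, and this is exactly where the vertical hypothesis pays off: since $U$ has no infinite path, \emph{no} $f$-guided branch can stay in $A\times U$ forever, so every infinite run eventually enters, and then remains in, the $A^\bot$-part; there $\Om^D$ coincides with the original $\Om$ on $\bbA^\bot$, so the winner is read off faithfully, while the finitely many initial $\mu$-positions preceding each drop cannot affect the maximal priority seen infinitely often. Establishing cleanly that every branch must exit $U$, and assembling the drops into a single legitimate $\eloi$-strategy, is the one delicate step; the remainder assembles from Fact~\ref{f:fma-aut}, Propositions~\ref{p:tr} and~\ref{p:fs2}, and the translation of Definition~\ref{d:ma-tr} precisely as in Theorem~\ref{t:fw}, and the elementary-time decidability then follows, as there, from the effectiveness of all constructions together with Fact~\ref{f:dec}.
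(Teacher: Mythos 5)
Your proposal is correct and follows the paper's strategy almost step for step: the same fragment, the same class of finite-depth automata (bipartite, $p$-free final part, all initial states of odd priority, boxes now permitted over initial states), the same use of Proposition~\ref{p:tr} with only (CA$_{\mu}$) required, and the same game-theoretic argument in which noetherianity of $U$ guarantees that every infinite match must eventually leave the all-odd initial part, so that resetting the initial priorities to $1$ is harmless. The one genuine divergence is the definition of $(\cdot)^{D}$. You propose a per-obligation switch inside the cover modality (each $b$ in $\nb B$ may stay as $b$ or drop to $b^{\bot}$), which works but ties the construction to disjunctive automata. The paper instead sets $\Th^{D}(a) \isdef \Th(a) \lor \Th(a)[b^{\bot}/b \mid b \in A]$ --- an all-or-nothing switch --- which is simpler, applies to \emph{arbitrary} modal automata (so that, unlike in the finite-width case, disjunctivity plays no role in this step), and is compensated for in the proof of the main lemma by first replacing the noetherian subtree $U$ with a \emph{balanced} one, in which every node has either all or none of its successors in $U$; either variant yields a legitimate move and preserves the invariant. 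One point where your sketch is too quick: in the easy direction, an infinite path $P$ in $U$ does not immediately yield an infinite $p$-active $f$-guided match, since the finite matches witnessing membership of the various nodes of $P$ in $U$ need not cohere into a single play; the paper assembles them by applying K\"onig's lemma to the infinite, finitely branching tree of finite $f$-guided $p$-active matches whose positions live on $P$. With that repair your argument goes through, and the assembly of $\xi^{D}$ and the decidability claim are exactly as in the paper.
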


First we prove the easy part of the theorem, stating that formulas in the 
fragment $\muML^{D}_{P}$ indeed have the required semantic property.

\begin{prop} 
\label{p:d1}
Every formula $\xi \in \muML^{D}_{p}$ has the finite depth property with respect 
to $p$.
\end{prop}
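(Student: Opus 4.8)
The plan is to mirror the proof of Proposition~\ref{p:w1} for the finite width property: extract the subtree $U$ from a positional winning strategy of $\eloi$ in the evaluation game, and locate the genuinely new work in showing that $U$ is \emph{noetherian} rather than finitely branching. First, every $\xi \in \muML^{D}_{p}$ is positive in $p$ and hence monotone, which both verifies the monotonicity clause of the finite depth property and gives the right-to-left implication (restricting $p$ to $V(p)\cap U$ only shrinks its extension). So I would fix a tree model $(\bbS,r)$ with $\bbS,r\sat\xi$, take a positional winning strategy $f$ for $\eloi$ in $\EG_{0}\isdef\EG(\xi,\bbS)@(\xi,r)$ (Fact~\ref{f:pdpg}), and set
\[
U \isdef \{ u \in S \mid (\phi,u) \text{ is $f$-reachable in } \EG_{0} \text{ for some $\phi$ with } p \in \Act(\phi) \}.
\]
The key bookkeeping observation, exactly as in the width case, is that $\Act$ is non-increasing along any match: every move replaces $(\phi,\cdot)$ by some $(\phi',\cdot)$ with $\Act(\phi')\sse\Act(\phi)$ (using $\Act(x)=\Act(\de_{x})$ for the unfolding moves). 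From this it follows that $U$ is downward closed, and that $f$ remains winning in $\EG(\xi,\bbS[p\rst{U}])@(\xi,r)$, since the two games differ only at positions $(p,t)$, and any $f$-reachable such position is $p$-active, hence has $t\in U$, whence $t\in V(p)\cap U$. This yields $\bbS[p\rst{U}],r\sat\xi$.

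The crux is that $U$ contains no infinite path, and here I would exploit the defining feature of the fragment: since every $\nu$-subformula of $\xi$ lies inside a $p$-free $\psi$, no $\nu$-bound variable is $p$-active, i.e.\ $p\notin\Act(x)$ for every $\nu$-variable $x$. Consequently there can be no infinite $f$-guided match all of whose positions are $p$-active: in such a match the highest variable unfolded infinitely often would be $p$-active, hence a least-fixpoint ($\mu$) variable, so the match could not be won by the verifier $\eloi$ --- contradicting that $f$ is winning. Thus the relation $\to_{f}$ of ``$f$-guided one-step move between $p$-active $f$-reachable positions'' admits no infinite forward chain, so it is well-founded and carries an ordinal rank $\rho$. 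I would then transfer $\rho$ to nodes by setting $r(u)\isdef\max\{\rho(\phi,u)\mid (\phi,u)\text{ $p$-active and $f$-reachable}\}$ for $u\in U$, a maximum over the finitely many formulas in $\Sfor(\xi)$. Because $(\bbS,r)$ is a tree, every $p$-active $f$-reachable position at a non-root node $u'$ is entered from the \emph{unique} parent $u$ by a modal move whose source is again a $p$-active $f$-reachable position; hence each such position at $u'$ is a strict $\to_{f}$-successor of some $p$-active position at $u$, giving $\rho(\phi',u')<r(u)$ for all of them, and therefore $r(u')<r(u)$. So $r$ strictly decreases along every path of $U$, and an infinite path would produce an infinite descending sequence of ordinals, which is impossible.

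The main obstacle is precisely this noetherian step. One must resist the temptation to build, along an assumed infinite path of $U$, a single infinite $p$-active match and then apply König's lemma, since the tree models here may branch infinitely (under a $\Box$-move a $p$-active position can have infinitely many $p$-active successors). The ordinal-rank argument sidesteps this by working with the well-founded relation $\to_{f}$ directly and using that in a tree each node has a unique parent, so the rank is forced to drop at every downward step regardless of branching. A secondary point to get right is the winning condition of the evaluation game: everything hinges on an infinite $p$-active match being lost by $\eloi$, i.e.\ on $p$-activity confining the dominating variable to the least-fixpoint side, which is exactly what the absence of $p$-active $\nu$-subformulas in $\muML^{D}_{p}$ guarantees.
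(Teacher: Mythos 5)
Your proof is correct, and while its overall skeleton (deriving $U$ from a positional winning strategy $f$, checking downward closure, and re-using $f$ in $\EG(\xi,\bbS[p\rst{U}])$) coincides with the paper's, the crucial noetherian step is argued by a genuinely different route. The paper assumes for contradiction that $U$ contains an infinite path $P$, forms the tree $\mathcal{A}$ of finite $f$-guided matches all of whose positions are $p$-active and lie on $P$, observes that this tree is \emph{finitely branching} (a node of $P$ has at most one $P$-successor, and formulas range over the finite set $\Sfor(\xi)$) and infinite, and then applies K\"onig's lemma to extract an infinite $p$-active $f$-guided match, which is lost by $\eloi$ since only $\mu$-variables are $p$-active --- a contradiction. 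So your worry that a K\"onig-style argument founders on infinitely branching models is not borne out by the paper's version: the lemma is applied to the match tree over a single branch, not to the model. Your alternative instead turns the same observation --- that the relation of $f$-guided one-step moves between $p$-active reachable positions is terminating --- into a positive statement by equipping it with an ordinal rank and pushing the rank down to nodes of $U$ via the uniqueness of parents in a tree; this yields well-foundedness of $U$ directly, with no contradiction argument and no appeal to finite branching of any auxiliary structure. Both arguments hinge on the identical fragment-specific fact that no $\nu$-variable is $p$-active, and all the supporting steps you use (monotonicity of $\Act$ along matches, positionality of $f$ allowing the splicing of a reachable prefix onto an infinite $\to_{f}$-chain, the strict rank drop $r(u')<r(u)$ because every $p$-active position at $u'$ is a proper $\to_{f}$-descendant of a $p$-active modal position at the unique parent $u$) check out. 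The rank argument is arguably more robust --- it would survive in settings where the auxiliary match tree fails to be finitely branching --- at the cost of invoking ordinal ranks where the paper gets by with K\"onig's lemma.
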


\begin{proof}
Let $\xi$ be a formula in $\muML_W(p)$, then $\xi$ is obviously positive,
and hence, monotone in $p$.
Fix a tree model $\bbS$ with root $r \in S$.
We have to prove 
\begin{equation} 
\label{eq:deptheasy}
\bbS,r \sat \xi \ouriff
\bbS[p\rst{U}],r \sat \xi, \mbox{ for some noetherian subtree }  U \sse S.
\end{equation}

The direction from right to left follows from the fact that $\xi$ is monotone
in $p$.
For the opposite direction, suppose that $\bbS,r \sat \xi$.
We need to find a noetherian subtree $U$ of $S$ such that $\bbS [p \rst{U}],r 
\sat \xi$.

Let $f$ be a positional winning strategy of $\eloi$ in the game
$\EG_0:=\EG(\xi,\bbS)@(\xi,r)$.
We define $U \subseteq S$ such that
\[
u \in U \ouriff \text{ there is } \phi \text{ such that } (\phi,u) 
\text{ is $f$-reachable in $\EG_0$} \text{ and } p \text{ is active in $\phi$}.
\]

As before it is easy to see that the set $U$ is downward closed, so we omit the
details.
Suppose for contradiction that $U$ contains an infinite path $P$.
We let $\mathcal{A}$ be the set of all finite $f$-guided $\EG_0$-matches $\Si$
such that for all positions $(\phi,u)$ occurring in $\Si$, $u$ belongs to $P$
and $p$ is active in $\phi$. 
Recall that $\sqsubseteq$ denotes the initial-segment relation on paths 
(including matches).

Clearly, the structure $(\mathcal{A},\sqsubseteq)$ is a tree.
Moreover, it is finitely branching as $P$ is a single path, and all the
formulas occurring in matches in $\mathcal{A}$ belong to the finite set
$\Sfor(\xi)$. Next we show that the set $\mathcal{A}$ is infinite.
It suffices to define an injective map $h$ from $P$ to $\mathcal{A}$.
Fix $t$ in $P$. In particular, $t$ belongs to $U$ and by definition of $U$,
there is a formula $\phi$ such that $(\phi,t)$ is $f$-reachable in $\EG_0$ and
$p$ is active in $\psi$.
We let $h(t)$ be a finite $f$-guided $\EG_0$-match with last position $(\phi,t)$.
It is easy to check that any such map $h$ is an injection from $P$ to
$\mathcal{A}$.

By K\"onig's lemma, since $(\mathcal{A},\sqsubseteq)$ is infinite and finitely
branching, it must contain an infinite path.
This infinite path corresponds to an infinite $f$-guided $\EG_0$-match $\Si$ 
such that for all positions $(\phi,t)$ occurring in $\Si$, $t$ belongs to $P$
and $p$ is active in $\phi$.
Since $\xi$ belongs to the fragment $\muML^{D}_{p}$, this can only happen if all
the variables unfolded in $\Si$ are $\mu$-variables.
This implies that $\Si$ is lost by $\eloi$ and thus contradicts the fact that
$f$ is a winning strategy for $\eloi$ in $\EG_0$.
As a consequence, $U$ contains no infinite path.

It remains to show that  $\bbS [p \rst{U}],r \sat \xi$.
Here we omit the details since the proof is similar to that of the corresponding
statement in the proof of Proposition~\ref{p:w1}.
\end{proof}

For the hard part of the theorem, we involve so-called finite-depth automata.

\begin{defi}
\label{d:d-aut1} 
A bipartite modal automaton $\bbA = (A,B,\Th,\Om)$ belongs to the class 
$\Aut^{D}_{p}$ of \emph{finite-depth automata} if the one-step language 
associated with $B$ is the language $\MLone(\Prop\setminus\{p\},B)$, and
the one-step language associated with $A$ is given by the following grammar:
\begin{equation} 
\label{eq:1st-D}
\al \isbnf
   p \divbnf \al_{0} \divbnf \be 
   \divbnf \al \land \al \divbnf \top \divbnf \al \lor \al \divbnf \bot
\end{equation}
where $\al_{0} \in \MLone(\Prop\setminus\{p\},A)$ and $\be \in \MLone(\Prop\setminus\{p\},B)$.
Most importantly, we require that $\Om(a)$ is \emph{odd}, for every $a \in A$.
\end{defi}

In words, an initialized bipartite modal automaton $\bbA\init{\ai}$, with 
$\bbA = (A,B,\Th,\Om)$, belongs to the class $\IAut^{D}_{p}$ if
(1) $p$ occurs only positively in $\Th(a)$, for $a \in A$, 
(2) $p$ does not occur in any $\Th(b)$, $b \in B$,
(3) $\Th(a)$ is a (negation-free) propositional formulas obtained $p$ and 
   $p$-free one-step formulas over $A$ and $B$, respectively,
(4) all states in $A$ have an odd priority.
Note that it follows from (4) that in order to win a match of the satisfiability
game, unless $\abel$ gets stuck, $\eloi$ has to make sure that the automaton 
leaves its initial part at some moment.

\begin{prop}
\label{p:d-aut1}
Let $\bbA = (A_{0},A_{1}, \Th, \Om)$ be a bipartite modal automaton in 
$\Aut^{D}_{p}$.
Then there is a translation $\tr_{\bbA}: A \to \muML$ such that
$\tr_{\bbA}(a) \in \muML^{D}_{p}$ for every state $a \in A_{0}$.
\end{prop}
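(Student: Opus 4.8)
The plan is to obtain this proposition as a direct instance of the abstract transfer result Proposition~\ref{p:tr}, exactly along the lines of the finite-width case treated in Proposition~\ref{p:w-aut1}. I would apply Proposition~\ref{p:tr} to the family $\muML^{D} \isdef \{ \muML^{D}_{Q} \mid Q \sse_{\om} \Propvar \}$ and to the bipartite automaton $\bbA$, reading its initial part $A_0$ as the part called $A$ in Proposition~\ref{p:tr} and its final part $A_1$ as the part called $B$ there, and taking $P = \{p\}$. It then suffices to verify the three hypotheses (i)--(iii) of that proposition; its conclusion supplies a truth-preserving translation on $A_0$ sending each $a \in A_0$ to a formula $\tr_{\bbA}(a) \equiv \bbA\init{a}$ lying in $\muML^{D}_{\{p\}} = \muML^{D}_{p}$, which is precisely what is claimed.

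For hypothesis (i) I would check that $\muML^{D}$ enjoys the extension property (EP), the two substitution properties (SP1) and (SP2), and the fixpoint-closure property (CA$_{\eta_a}$) for every $a \in A_0$. The first three follow by routine formula inductions entirely parallel to those sketched for $\muML^{W}$, the only structural change being that the single fixpoint clause to treat is $\mu y.\phi'$, there being no greatest-fixpoint production in Definition~\ref{d:d-frag}. This is exactly the point requiring care, and it is the crux of the argument: the fragment $\muML^{D}$ is closed under $\mu$ but \emph{not} in general under $\nu$, so (CA$_\nu$) genuinely fails. What rescues the application is the clause in Definition~\ref{d:d-aut1} demanding $\Om(a)$ to be odd for every $a \in A_0$; each such state is therefore a $\mu$-state, $\eta_a = \mu$, so the only instances of (CA$_{\eta_a}$) ever requested are instances of the available (CA$_\mu$). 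Checking that this oddness condition lines up with the absence of $\nu$ from the grammar is the step I would watch most closely.

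For hypothesis (ii) I would show $\Th(a) \in \muML^{D}_{\{p\} \cup A_0}$ for all $a \in A_0$ by a straightforward induction on the grammar~\eqref{eq:1st-D}: the atom $p$ is generated directly; the component $\al_0 \in \MLone(\Prop \setminus \{p\}, A_0)$ is fixpoint-free and hence raises no $\nu$-scope issue, its $A_0$-states lying in $P$, its other proposition letters being $P$-free, and its modalities and Booleans all being admitted; the summand $\be \in \MLone(\Prop \setminus \{p\}, A_1)$ is free of $p$ and of every state of $A_0$, hence $(\{p\} \cup A_0)$-free, and enters through the $\psi$-clause; and the remaining cases are covered by closure under $\land$, $\lor$, $\top$ and $\bot$. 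For hypothesis (iii), the final part carries the one-step language $\MLone(\Prop \setminus \{p\}, A_1)$ and is thus $p$-free, so the standard translation $\tr \colon A_1 \to \muML$ of the automaton $(A_1, \Th_{\rst{A_1}}, \Om_{\rst{A_1}})$ (Definition~\ref{d:ma-tr}, via a linearization by Proposition~\ref{p:ma-lin}) is truth-preserving by Fact~\ref{f:ma-lin-tr} and introduces no occurrence of $p$, so each $\tr(b)$ is $p$-free and lands in $\muML^{D}_{\{p\}}$ through the $\psi$-clause. With (i)--(iii) established, Proposition~\ref{p:tr} closes the argument.
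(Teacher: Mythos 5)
Your proposal is correct and follows exactly the paper's own route: instantiate Proposition~\ref{p:tr} with $P=\{p\}$ and the family $\{\muML^{D}_{Q}\}$, verify (i)--(iii) by routine inductions, and observe that since every $a\in A_0$ has odd priority only (CA$_{\mu}$) is ever needed, which is precisely the point the paper singles out. The extra detail you supply for (ii) and (iii) is consistent with what the paper leaves implicit.
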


\begin{proof}
It suffices to check that the fragment $\muML^{D} \isdef \{ \muML^{D}_{P}\mid 
P \sse_{\om} \Propvar \}$ and the automaton $\bbA$ satisfy the conditions
(i) -- (iii) of Proposition~\ref{p:tr}, with $P = \{ p \}$.
In all cases, the proof is routine.
Note that we only need to show that $\muML^{D}$ satisfies the condition 
(AC$_{\mu}$), not (AC$_{\nu}$).
\end{proof}

It follows from Proposition~\ref{p:d-aut1} and Proposition~\ref{p:d1} that any
initialized automaton in $\IAut^{D}_{p}$ has the finite depth property for $p$.

The main technical result of this section will be based on the following
transformation of automata.
Recall that $\bbA^{\bot}$ is the automaton given in Definition~\ref{d:Abot}.

\begin{defi}
\label{d:d-aut2}
Let $\bbA = (A, \Th, \Om)$ be a modal automaton which is positive in $p$.
We define the automaton $\bbA^{D}$ as the structure $(A^{D},\Th^{D},\Om^{D})$
where $A^{D} \isdef A \uplus A^{\bot}$, and the maps $\Th^{D}$ and $\Om^{D}$
are given by putting
\[\begin{array}{lll}
   \Th^{D}(a)        & \isdef & \Th(a)  \lor \Th(a)[b^{\bot}/b \mid b \in A]
\\ \Th^{D}(a^{\bot})                 & \isdef & \Th^{\bot}(a)
\end{array}
\hspace*{10mm}\text{ and } \hspace*{10mm}
\begin{array}{lll}
   \Om^{D}(a)        & \isdef & 1
\\ \Om^{D}(a^{\bot}) & \isdef & \Om(a),
\end{array}\]
for an arbitrary state $a \in A$.
\end{defi}

To obtain the automaton $\bbA^{D}$ from $\bbA$ we put a modified copy of $\bbA$
`in front of' of a copy of $\bbA^{\bot}$.
The modifications consist of changing the priority map $\Om$ on $A$ by assigning
each state $a \in A$ priority 1, and of allowing the same transitions from $A$ 
to the final part $A^{\bot}$ of $\bbA^{D}$ as to the initial part $A$ itself.
As before, this makes the final part of the structure $\bbA^{D}$ isomorphic to
the automaton $\bbA^{\bot}$, so that again we have
\begin{equation}
\label{eq:ADbot}
\bbA^{D}\init{a^{\bot}} \equiv \bbA^{\bot}\init{a^{\bot}}
\end{equation}
for every state $a \in A$.
Finally, observe that we define the transformation $(\cdot)^{D}$ for 
\emph{arbitrary} (that is, not necessarily disjunctive) modal automata.

The following proposition is easy to verify.

\begin{prop}
\label{p:d-aut2}
Let $\bbA\init{a}$ be an initialized modal automaton which is positive in $p$.
Then its transformation $\bbA^{D}\init{a}$ belongs to the class $\IAut^{D}_{p}$.
\end{prop}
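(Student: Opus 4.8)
The plan is to unpack the definition of the class $\IAut^{D}_{p}$ (Definition~\ref{d:d-aut1}) into the conditions it imposes and then verify each of them directly for the structure $\bbA^{D} = (A^{D},\Th^{D},\Om^{D})$ of Definition~\ref{d:d-aut2}, using the partition $A^{D} = A \uplus A^{\bot}$ with $A$ as the initial part and $A^{\bot}$ as the final part. Concretely, four things have to be checked: (0) that $\bbA^{D}$ is bipartite with respect to this partition, i.e. that $\act_{\bbA^{D}} \cap (A \times A^{\bot}) = \nada$; (1) that every transition formula $\Th^{D}(a^{\bot})$ of the final part is $p$-free and lives over $A^{\bot}$, so that the one-step language associated with the final part is $\MLone(\Prop\setminus\{p\},A^{\bot})$; (2) that every transition formula $\Th^{D}(a)$ of the initial part is generated by the grammar~\eqref{eq:1st-D}; and (3) that $\Om^{D}(a)$ is odd for every $a$ in the initial part. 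Conditions (0), (1) and (3) are immediate once one reads the relabelling implicit in $\bbA^{\bot}$ (Definition~\ref{d:Abot}) correctly, so the only point requiring an argument is (2).

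For conditions (0) and (1), I would first note that in $\bbA^{\bot}$ every state of $A$ occurring in $\Th(a)$ is understood to be replaced by its $\bot$-copy, so that $\Th^{D}(a^{\bot}) = \Th^{\bot}(a) = \Th(a)[\bot/p]$ is a one-step formula over $A^{\bot}$ alone, with all occurrences of $p$ removed by the substitution $[\bot/p]$. This gives (1) at once, and it also gives (0): since no state of $A$ occurs in any $\Th^{D}(b^{\bot})$, every $E_{\bbA^{D}}$-edge into the final part originates in the final part, whence no chain witnessing $\act_{\bbA^{D}}$ and ending in $A^{\bot}$ can start in $A$. Condition (3) is trivial, as $\Om^{D}(a) \isdef 1$ for every $a \in A$.

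The crux is condition (2), for which I would rely on the following structural observation about one-step formulas (Definition~\ref{d:MLone}): the proposition letter $p$ can occur in a formula of $\MLone(\Prop,A)$ only as a top-level literal $p$ or $\neg p$, never inside a modality, since the arguments of $\dia$ and $\Box$ are lattice terms over $A$ and contain no proposition letters at all. As $\bbA$ is positive in $p$, the literal $\neg p$ does not occur in $\Th(a)$; hence $\Th(a)$ is a $\{\land,\lor,\top,\bot\}$-combination of the atom $p$ and of subformulas not containing $p$, each of which is a $p$-free one-step formula over $A$, i.e. an $\al_{0} \in \MLone(\Prop\setminus\{p\},A)$. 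Thus $\Th(a)$ is generated by~\eqref{eq:1st-D}. Applying the state-relabelling $[b^{\bot}/b \mid b \in A]$ leaves the $p$-occurrences untouched and turns every $p$-free one-step subformula over $A$ into a $p$-free one-step subformula over $A^{\bot}$, i.e. a $\be \in \MLone(\Prop\setminus\{p\},A^{\bot})$, so $\Th(a)[b^{\bot}/b \mid b \in A]$ is generated by~\eqref{eq:1st-D} as well; and then so is the disjunction $\Th^{D}(a) = \Th(a) \lor \Th(a)[b^{\bot}/b \mid b \in A]$, via the production $\al \lor \al$. The main obstacle, such as it is, is purely \emph{bookkeeping}: keeping the implicit $A \to A^{\bot}$ relabelling straight so that the final part genuinely lives over $A^{\bot}$ (securing bipartiteness), and being careful that $p$ is preserved while states are relabelled under the substitution defining $\Th^{D}(a)$.
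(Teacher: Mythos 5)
Your verification is correct and is precisely the routine check that the paper leaves implicit (it states Proposition~\ref{p:d-aut2} with the remark that it ``is easy to verify'' and gives no proof). In particular, your key observation---that in $\MLone(\Prop,A)$ the letter $p$ can only occur as an unguarded literal, so positivity in $p$ forces $\Th(a)$ to be a positive Boolean combination of $p$ and $p$-free one-step formulas, which the grammar~\eqref{eq:1st-D} then accommodates before and after the relabelling $[b^{\bot}/b \mid b \in A]$---is exactly the point on which the claim rests.
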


We are now ready for the main technical lemma of this section.

\begin{prop}
\label{p:d-aut3}
Let $\bbA\init{a}$ be an initialized modal automaton which is positive in $p$.
If $\bbA\init{\ai}$ has the finite depth property with respect to $p$, then 
$\bbA\init{\ai} \equiv \bbA^{D}\init{\ai}$.
\end{prop}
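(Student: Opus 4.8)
The plan is to argue, as in the finite-width case, through the acceptance games, reducing by bisimulation invariance (Fact~\ref{f:bisinv}) to proving, for an arbitrary $\om$-unravelled tree $(\bbS,r)$, that $\bbS,r \sat \bbA\init{\ai}$ iff $\bbS,r \sat \bbA^{D}\init{\ai}$. Two features of $\bbA^{D}$ drive the argument. First, its final part is an isomorphic copy of $\bbA^{\bot}$, so that $\bbA^{D}\init{a^{\bot}} \equiv \bbA^{\bot}\init{a^{\bot}}$ by~\eqref{eq:ADbot}; hence, by Proposition~\ref{p:fs2}, moving the automaton into a final state $a^{\bot}$ at a node $t$ is safe precisely when $\bbS[p\mapsto\nada],t \sat \bbA\init{a}$. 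Second, every initial state of $\bbA^{D}$ carries the odd priority $1$, so $\eloi$ loses any match that stays in the initial part forever; to win she must move the automaton into its final part along every infinite branch. Since $\bbA^{D}$ is positive, and hence monotone, in $p$, I may invoke monotonicity freely.

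For the implication from left to right I would use the finite depth property to obtain a downward closed, noetherian subtree $U \sse S$ with $\bbS[p\rst{U}],r \sat \bbA\init{\ai}$, together with a positional winning strategy $f$ for $\eloi$ in $\AG(\bbA,\bbS[p\rst{U}])$; by monotonicity of $\bbA^{D}$ it then suffices to win $\AG(\bbA^{D},\bbS[p\rst{U}])@(\ai,r)$. The intended strategy follows $f$ in the initial part while the match stays inside $U$, and passes to the final part once it leaves $U$, where $p$ is false and the copy of $\bbA^{\bot}$ takes over. The difficulty is that the transition $\Th^{D}(a) = \Th(a) \lor \Th(a)[b^{\bot}/b \mid b\in A]$ forces an all-or-nothing choice: in one step $\eloi$ must send either all successors into the initial part (first disjunct) or all into the final part (second disjunct), so she cannot selectively drop those successors of a node $s \in U$ that lie outside $U$ while keeping the others initial. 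I would resolve this by delaying the drop by one level: at such a node $s$ she plays the first disjunct, keeping all successors initial, and then at each successor $t \notin U$ — necessarily a node all of whose own successors lie outside $U$, since $U$ is downward closed — she plays the second disjunct, sending the whole of $t$'s successors into the final part. The initial region is then $U$ together with this one-node-thick frontier, which is again noetherian; and each successor $t'$ into which the automaton drops lies outside $U$, so $\bbS[p\rst{U}],t' \sat \bbA^{\bot}\init{a^{\bot}}$ by legitimacy of $f$ and Proposition~\ref{p:fs2}, from where $\eloi$ follows a positional winning strategy $g$ of $\AG(\bbA^{\bot},\bbS[p\rst{U}])$. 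As in the finite-width proof I would phrase this through a disjunctive invariant: either ($\dag^{1}_{D}$) the match is an $f$-guided match lying in the initial part, all of whose non-$U$ nodes are immediately preceded by a $U$-node; or ($\dag^{2}_{D}$) the automaton has already moved to the final part at a safe position and is played by $g$. Since the initial region is noetherian no infinite branch stays in it, so every infinite match ends in a final, $g$-winning tail; the finitely many priority-$1$ positions at its head are irrelevant to the parity condition, and finite matches are won exactly as in $\AG(\bbA,\bbS[p\rst{U}])$.

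For the converse I would start from a winning strategy for $\eloi$ in $\AG(\bbA^{D},\bbS)@(\ai,r)$ and mimic it in $\AG(\bbA,\bbS)@(\ai,r)$. While the match is in the initial part, $\eloi$'s marking satisfies the first disjunct $\Th(a)$, so its $A$-part is a legitimate move for the corresponding state of $\bbA$; I let her play this $A$-part, noting that $\abel$ thereby gains no options he did not already have in $\bbA^{D}$. As soon as the $\bbA^{D}$-play commits to the final part at a node $t$ with state $a^{\bot}$ — either because $\eloi$ chose the second disjunct, or because $\abel$ moved to an $A^{\bot}$-state in a mixed marking — we have $\bbS,t \sat \bbA^{\bot}\init{a^{\bot}}$, hence $\bbS[p\mapsto\nada],t \sat \bbA\init{a}$ and so, by monotonicity, $\bbS,t \sat \bbA\init{a}$; at that point $\eloi$ switches to a positional winning strategy $g$ of $\AG(\bbA,\bbS)$ from the winning position $(a,t)$. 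Because the initial states of $\bbA^{D}$ have odd priority, a winning $\bbA^{D}$-strategy never keeps a branch in the initial part forever, so each mimicked $\bbA$-branch consists of a finite initial segment followed by a $g$-guided tail; its parity is therefore decided by $g$ and the branch is won.

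The step I expect to be the main obstacle is the all-or-nothing behaviour of $\Th^{D}$ in the left-to-right direction, and in particular the verification that the one-level frontier construction keeps the initial region noetherian and lands every dropped successor at a position where $\bbA^{\bot}$ is winning. The second point needing care is the parity bookkeeping, which relies on the observation that the priority-$1$ initial segments are finite and hence do not affect the winner of an infinite match.
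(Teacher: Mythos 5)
Your proof is correct and follows essentially the same route as the paper's: the same invariant-based strategy transfer in both directions, the same use of $\bbA^{\bot}$ via Proposition~\ref{p:fs2} and \eqref{eq:ADbot} to justify dropping into the final part, and the same parity argument from the odd priorities on the initial states. The only difference is cosmetic: where you resolve the all-or-nothing form of $\Th^{D}$ by delaying the switch to the final part by one level, the paper achieves exactly the same effect by first replacing $U$ with the \emph{balanced} set $U' \isdef U \cup \{r\} \cup \bigcup \{ \si_{R}(u) \mid u \in U,\ \si_{R}(u) \cap U \neq \nada \}$, so that every node of the (still noetherian) initial region has either all or none of its successors inside it.
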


\begin{proof}
Let $\bbA = (A,\Th,\Om)$ be a modal automaton, and assume that 
$\bbA\init{\ai}$ has the finite depth property for some state $\ai \in A$.
In order to prove the equivalence of $\bbA\init{\ai}$ and $\bbA^{D}\init{\ai}$,
it suffices to take an arbitrary Kripke tree $(\bbS,r)$ and
prove that 
\begin{equation}
\label{eq:d1}
\bbS,r \sat \bbA\init{\ai} \ouriff \bbS,r \sat \bbA^{D}\init{\ai}.
\end{equation}

We first consider the direction from left to right of \eqref{eq:d1}.
Assume that $\bbS,r \sat \bbA\init{\ai}$, then
it follows from the finite depth property of $\bbA\init{a}$ that there is a 
noetherian subtree $U \sse S$ such that $\bbS[p\rst{U}],r \sat \bbA\init{\ai}$.
We may assume that $r \in U$, and that $U$ is \emph{balanced} in the sense that 
for any $u \in U$, either $\si_{R}(u) \cap U = \nada$ or $\si_{R}(u) \sse U$.
(This is without loss of generality: should $U$ itself not meet these 
conditions, then we may proceed with the set $U' \isdef U \cup \{ r \} \cup
\bigcup \{ \si_{R}(u) \mid u \in U, \si_{R}(u) \cap U \neq \nada \}$.)
By monotonicity of $\bbA^{D}$ it suffices to show that $\bbS[p\rst{U}],r \sat 
\bbA^{D}\init{\ai}$; that is, we need to supply $\eloi$ with a winning strategy
$h$ in the game $\AG^{D} \isdef \AG(\bbA^{D}, \bbS[p\rst{U}])@(\ai,r)$.
In order to define this strategy, we will make use of two auxiliary strategies:
let $f$ and $g$ be positional winning strategies for $\eloi$ in the acceptance 
games $\AG(\bbA,\bbS[p\rst{U}])$ and $\AG(\bbA^{D},\bbS[p\rst{U}])$,
respectively.

$\eloi$'s strategy in $\AG^{D}$ will be based on maintaining the following 
condition:
\smallskip

\noindent($\dag_{D}$)\hspace{3mm}
\begin{minipage}{14cm}
With $\Si = (a_{n},s_{n})_{n\leq k}$ a partial match of 
$\AG^{D}$, one of the following holds:

($\dag_{D}^{1}$)
$(a_{k},s_{k}) \in A \times U$ and $\Si$ corresponds to an $f$-guided match of 
  $\AG(\bbA,\bbS[p\rst{U}])$,
\\($\dag_{D}^{2}$) $(a_{l},s_{l}) \in A^{\bot} \times S$
for some $l \leq k$ such that $\bbS[p\rst{U}],s_{l} \sat 
\bbA^{D}\init{a_{l}}$,
\\\hspace*{12mm} and $(a_{i},s_{i})_{l\leq i \leq k}$ is a $g$-guided 
$\AG(\bbA^{D}, \bbS[p\rst{U}])$-match.
\end{minipage}
\smallskip

In words, $\eloi$ will make sure that the match either stays in $A \times U$
while she can play $f$, or it moves to $A^{\bot} \times S$ at a moment when it
is safe for her to follow the strategy $g$.

Let us first see that $\eloi$ can maintain this condition during one single 
round of the game.

\begin{claimfirstyv}
\label{cl:d1}
Let $\Si$ be a partial match of $\AG^{D}$ satisfying 
($\dag_{D}$).
Then $\eloi$ has a legitimate move guaranteeing that, after any response move 
by $\abel$, ($\dag_{D}$) holds again.
\end{claimfirstyv}

\begin{pfclaim}
Let $\Si = (a_{n},s_{n})_{n\leq k}$ be as in the claim, and distinguish cases.
Leaving the easy case, where $\Si$ satisfies ($\dag_{D}^{2}$), for the reader,
we assume that $\Si$ satisfies ($\dag_{D}^{1}$).
Then obviously its final position  $(a_{k},s_{k})$ is a winning position for 
$\eloi$ in $\AG(\bbA,\bbS[p\rst{U}])$.
Since $s_{k} \in U$ we have $\si_{\bbS[p\rst{U}]}(s_{k}) = \si_{\bbS}(s_{k})$,
so we may denote this object as $\si(s_{k})$ without causing confusion.
Let $m: \si_{R}(s_{k}) \to \funP A$ be the marking given by her positional 
winning strategy $f$ at position $(a_{k},s_{k})$.

By the legitimacy of this move we have that 
$\si(s_{k}), m \satone \Th(a_{k})$.
In order to define $\eloi$'s move $m^{D}$ in $\AG(\bbA^{D},\bbS[p\rst{U}])$, we 
distinguish cases.

If $\si_{R}(s_{k}) \sse U$ we simply put $m^{D} \isdef m$, and we leave it for
the reader to verify that this move of $\eloi$ satisfies the requirements 
formulated in the claim.

Alternatively, by our assumption on $U$ we have that $\si_{R}(s_{k}) \cap U 
= \nada$; in this case we put
\[
m^{D}(t) \isdef m(t)^{\bot},
\]
for each $t \in \si_{R}(u)$.
It then easily follows from $\si(s_{k}),m \satone \Th(a)$ that $\si(s_{k}),m^{D} 
\satone \Th^{\bot}(a)$, and so $\si(s_{k}),m^{D} \satone \Th^{D}(a)$ by 
definition of $\Th^{D}$.
In other words, $m^{D}$ is a legitimate move.

To see that by picking this move $\eloi$ maintains the condition
($\dag_{D}$), consider an arbitrary response $(b^{\bot},t)$ of $\abel$ such 
that $b^{\bot} \in m^{D}(t)$.
By definition of $m^{D}$ it follows that $b \in m(t)$, and so by the assumption 
that $m$ is part of $\eloi$'s winning strategy $f$, we obtain $\bbS[p\rst{U}],t
\sat\bbA\init{b}$.
But now we may reason as before (cf.~Claim~\ref{cl:w1} in the proof of 
Proposition~\ref{p:w-aut3}): 
Since $t \not\in U$ and $U$ is downward closed, the entire subtree generated 
by $t$ is disjoint from $U$, so that we find $\bbS[p\mapsto\nada], t \sat 
\bbA\init{b}$.
We may now use Proposition~\ref{p:fs2} and \eqref{eq:ADbot} to obtain 
$\bbS[p\rst{U}], t \sat \bbA^{D}\init{b^{\bot}}$.
In other words, in this case the continuation match $\Si \cdot (b^{\bot},t)$ 
satisfies condition ($\dag_{D}^{2}$).
\end{pfclaim}

Based on Claim~\ref{cl:d1} we may provide $\eloi$ with a winning strategy $h$
in $\AG^{D}$, exactly as in the proof of Proposition~\ref{p:w-aut3}.
This proves the direction from left to right of \eqref{eq:d1}.
\medskip

We now turn to the right-to-left direction of \eqref{eq:d1}, for which we
assume that $\bbS,r \sat \bbA^{D}\init{\ai}$, with $\ai \in A$.
In order to supply $\eloi$ with a winning strategy $h$ in the game $\AG(\bbA, 
\bbS)@(\ai,r)$, we will make use of arbitrary but fixed positional winning 
strategies $f$ and $g$ for $\eloi$ in the acceptance games $\AG(\bbA^{D},\bbS)$ 
and $\AG(\bbA,\bbS)$, respectively.

$\eloi$'s strategy in $\AG(\bbA, \bbS)@(\ai,r)$ will be based on maintaining
the following condition:
\smallskip

\noindent($\ddag_{D}$)\hspace{3mm}
\begin{minipage}{14cm}
With $\Si = (a_{n},s_{n})_{n\leq k}$ a partial match of 
$\AG(\bbA, \bbS)@(\ai,r)$, one of the following holds:

($\ddag_{D}^{1}$)
$\Si$ corresponds to an $f$-guided match of $\AG(\bbA^{D},\bbS)$,
\\($\ddag_{D}^{2}$) 
$\bbS,s_{l} \sat \bbA\init{a_{l}}$ for some $l \leq k$, and
$(a_{i},s_{i})_{l\leq i \leq k}$ is a $g$-guided $\AG(\bbA,\bbS)$-match.
\end{minipage}
\smallskip

Once more our main claim is that $\eloi$ can maintain condition ($\ddag_{D}$)
during one single round of the game.

\begin{claimyv}
\label{cl:d3}
Let $\Si$ be a partial match of $\AG(\bbA,\bbS)@(\ai,r)$ satisfying 
($\ddag_{D}$).
Then $\eloi$ has a legitimate move guaranteeing that, after any response move 
by $\abel$, ($\ddag_{D}$) holds again.
\end{claimyv}

\begin{pfclaim}
Let $\Si = (a_{n},s_{n})_{n\leq k}$ be as in the claim, and distinguish cases.
If $\Si$ satisfies ($\ddag_{D}^{2}$), $\eloi$ can simply continue to use the
strategy $g$.

If $\Si$ satisfies ($\ddag_{D}^{1}$), then obviously its final position 
$(a_{k},s_{k}) \in A \times S$ is a winning position for $\eloi$ in 
$\AG(\bbA^{D},\bbS)$.
Let $m: \si_{R}(s_{k}) \to \funP A^{D}$ be the marking given by her positional 
winning strategy $f$, then we have $\si(s_{k}),m \satone \Th^{D}(a_{k})$ by
legitimacy of $m$.
Now distinguish cases, as to which disjunct of $\Th^{D}(a_{k}) = \Th(a_{k}) 
\lor \Th(a_{k})[b^{\bot}/b\mid b \in A]$ holds at the one-step model 
$(\si(s_{k}),m)$.

In case $\si(s_{k}),m \satone \Th(a_{k})$, we may assume without loss of 
generality that $m(t) \in \funP(A)$, for all $t \in \si_{R}(a_{k})$.
We now simply define $\eloi$'s move $m_{D}$ in $\AG(\bbA, \bbS)@(\ai,r)$ by 
setting $m_{D} \isdef m$.
In this case it is straightforward to verify that $m_{D}$ is legitimate and that 
for any response move $(b,t)$ of $\abel$, the resulting partial match 
$\Si \cdot (b,t)$ satisfies condition ($\ddag_{D}^{1}$).

In case $\si(s_{k}),m \satone \Th(a_{k})[b^{\bot}/b\mid b \in A]$, we may assume
without loss of generality that $m(t) \in \funP(A^{\bot})$, for all $t \in 
\si_{R}(a_{k})$.
We define $\eloi$'s move $m_{D}$ by putting
\[
m_{D}(t) \isdef \{ b \in A \mid b^{\bot} \in m(t) \},
\]
for each $t \in \si_{R}(s_{k})$.
It is straightforward to verify that $\si(s_{k}),m_{D} \satone \Th(a_{k})$, which 
shows that $m_{D}$ is a legitimate move for $\eloi$ in $\AG(\bbA,\bbS)$ at position
$(a_{k},s_{k})$.

Now consider an arbitrary (legitimate) response $(b,t)$ of $\abel$; we claim that 
the resulting continuation $\Si \cdot (b,t)$ of $\Si$ satisfies 
$(\ddag_{D}^{2})$.
To see this, observe that by definition of $m_{D}$ it follows from $b \in 
m_{D}(t)$ that $b^{\bot} \in m(t)$, which means that $(b^{\bot},t)$ is a 
legitimate move for $\abel$ in $\AG(\bbA^{D},\bbS)$ at position $m$.
But since $m$ is part of $\eloi$'s winning strategy $f$, this means that 
$\bbS,t \sat \bbA^{D}\init{b^{\bot}}$.
It then follows from \eqref{eq:ADbot} that 
$\bbS,t \sat \bbA^{\bot}\init{b^{\bot}}$.
Hence by Proposition~\ref{p:fs2} we obtain that 
$\bbS[p\mapsto\nada],t \sat \bbA\init{b}$, and then by monotonicity of $\bbA$
that $\bbS,t \sat \bbA\init{b}$.
This means that $\Si \cdot (b,t)$ satisfies $(\ddag_{D}^{2})$ indeed.
\end{pfclaim}

Based on this claim, we define the usual strategy $h$ for $\eloi$ in 
$\AG(\bbS,\bbS)$.
To see why this strategy is winning for her, we consider an arbitrary 
$h$-guided match $\Si$ of $\AG(\bbS,\bbS)@(\ai,r)$, and focus on the case where
$\Si = (a_{n},s_{n})_{n\in\om}$ is infinite.
Observe that because $\Om^{D}$ assigns an \emph{odd} priority to states in the 
initial part $A$ of $\bbA^{D}$, it cannot be the case that $\Si$ corresponds to 
a \emph{full} $f$-guided $\AG^{D}(\bbA^{D},\bbS)$-match.
Hence, there must be a stage $k \in \om$ when the match moves out of the initial
part of $\bbA^{D}$, which means that $\bbS,s_{k} \sat \bbA\init{a_{k}}$ and
$(a_{n},s_{n})_{k\leq n< \om}$ is a $g$-guided match of $\AG(\bbA,\bbS)$.
From this it is straightforward to derive that $\Si$ is won by 
$\eloi$.
\end{proof}

\begin{proof}[{\rm\bf Proof of Theorem~\ref{t:depth}}]
To define the required map $(\cdot)^{D}: \muML(\Prop) \to \muML^{D}_{p}(\Prop)$,
fix a $\muML$-formula $\xi$.
We define
\[
\xi^{D} \isdef \tr_{\bbA_{\xi}^{\mathit{MD}}}(a_{\xi}),
\]
where $\bbA_{\xi}\init{a_{\xi}}$ is an initialized modal automaton that is 
equivalent to $\xi$, and $\tr_{\bbA_{\xi}^{\mathit{MD}}}$ is the translation 
associated with the automaton $\bbA_{\xi}^{\mathit{MD}}$.
We leave it for the reader to verify that this map satisfies the requirements 
stated by the theorem --- the proof follows the same lines as that of
Theorem~\ref{t:fw}.
\end{proof}

\section{The single branch property}
\label{sec:sb}

As a variation of the finite-width property, we consider the single-branch 
property.

\begin{defi}
\label{d:b-prop}
A formula $\xi \in \muML(\Prop)$ has the \emph{single branch property} for 
$p \in \Prop$ if $\xi$ is monotone in $p$, and, for every tree model $(\bbS,r)$,
\[
\bbS,r \sat \xi \ouriff
\bbS[p\rst{U}],r \sat \xi, \mbox{ for some branch }
  U \sse S,
\]
where a subset $U\sse S$ is a \emph{branch} if $U = \{ s_{n} \mid n < \kappa 
\}$ for some (finite or infinite) path $(s_{n})_{n<\kappa}$ starting at $s_{0}
= r$.
\end{defi}

The syntactic fragment corresponding to this property is the following.

\begin{defi}
\label{d:b-frag}
Given a set $Q \sse \Prop$ such that $p \in Q$, we define the fragment 
$\muML^{B}_{p,Q}$ by the following grammar:
\[
\phi \isbnf p \divbnf q
   \divbnf \psi 
   \divbnf \phi\lor\phi \divbnf p\land\phi \divbnf \psi\land\phi 
   \divbnf \dia \phi 
   \divbnf \mu x. \phi' \divbnf \nu x.\phi',
\]
where $q \in Q$, $\psi\in \muML(\Prop\setminus Q)$ is a $Q$-free formula 
and $\phi' \in \muML^{B}_{p,Q \cup \{ x \}}$.
In case $Q$ is the singleton $\{ p \}$, we will write $\muML^{B}_{p}$ rather 
than $\muML^{B}_{\{p\},\{p\}}$.
\end{defi}

Formulas $\xi$ in this fragment share with those in $\muML^{W}_{p}$ the property 
that no $p$-active subformula of $\xi$ may be in the scope of a box modality.
The difference with $\muML^{W}_{p}$ lies in the role of conjunctions: 
if $\phi_{0} \land \phi_{1}$ is a subformula of $\xi \in \muML^{B}_{p}$, then 
$p$ cannot be active in both conjuncts $\phi_{i}$, unless one of 
these conjuncts is actually identical to $p$.
All formulas from Example~\ref{ex:fmas}, except $\phi_{4}$ and $\phi_{6}$, 
belong to $\muML^{S}_{p}$.

A further difference with the earlier (and later) languages is that 
$\muML^{B}_{p}$ is defined using fragments of the form $\muML^{B}_{p,Q}$,
where the propositional variables in $Q$ have a slightly different role than
$p$.
This difference can be best explained by means of an example: where the formula
$\nu x. \dia (x \land \dia p)$ does not belong to the fragment $\muML^{W}_{p}$
(and does not have the single-branch property), the formula $\nu x. \dia (p 
\land \dia x) \in \muML^{W}_{p}$ does have the property.

Our characterization theorem for the single-branch property reads as follows.

\begin{thm}
\label{t:sb}
There is an effective translation which maps a given $\muML$-formula $\xi$ to
a formula $\xi^{B} \in \muML^{B}_{p}$ such that 
\begin{equation}
\label{eq:t-b}
\xi \mbox{ has the single branch property for } p \ouriff 
\xi \equiv \xi^{B},
\end{equation}
and it is decidable in elementary time whether a given formula $\xi$ has
the single branch property for $p$.
\end{thm}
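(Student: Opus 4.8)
The plan is to reproduce, for the single branch property, the template used for the finite width and finite depth properties (Theorems~\ref{t:fw} and~\ref{t:depth}), adapting each ingredient. First I would prove the easy direction, that every $\xi\in\muML^B_p$ has the single branch property. Following Proposition~\ref{p:w1}, I fix a tree model $(\bbS,r)$ with $\bbS,r\sat\xi$, take a positional winning strategy $f$ for $\eloi$ in $\EG(\xi,\bbS)@(\xi,r)$, and let $U$ be the set of nodes $u$ carrying some $f$-reachable position $(\phi,u)$ in which $p$ is active. Downward closure of $U$ and the fact that $f$ remains winning on $\bbS[p\rst{U}]$ are checked exactly as before. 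The genuinely new point is that $U$ is now a single path: the conjunction clauses $p\land\phi$ and $\psi\land\phi$ of the grammar for $\muML^B_{p,Q}$ ensure that $p$-activity never splits across the two conjuncts of a $\land$ (the only alternative being the literal $p$, which stays at the current node), while the absence of a box modality forces $p$-activity to descend through diamonds only; consequently the $f$-reachable $p$-active positions trace a single thread through $\Sfor(\xi)$, and each $u\in U$ has at most one successor in $U$.

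For the hard direction I would introduce a class $\Aut^B_p$ of \emph{single-branch automata}: bipartite automata $\bbA=(A,B,\Th,\Om)$ whose final part $B$ uses the $p$-free one-step language $\MLone(\Prop\setminus\{p\},B)$, and whose initial part $A$ uses one-step formulas in which $p$ occurs only positively, no $A$-state occurs under a box, and --- reflecting the single-branch restriction --- each disjunct admits at most one occurrence of an $A$-state under a diamond. Translating such automata into $\muML^B_p$ reduces, via Proposition~\ref{p:tr} with $P=\{p\}$, to checking the closure properties (EP), (SP1), (SP2) and (CA$_\eta$) for the family $\muML^B\isdef\{\muML^B_{p,Q}\mid Q\sse_{\om}\Propvar\}$; these are routine formula inductions, the only care being the correct propagation of the auxiliary set $Q$ through the conjunction clause in (SP1).

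The core is a transformation $(\cdot)^B$ on disjunctive automata together with the equivalence lemma. Given a disjunctive $\bbA=(A,\Th,\Om)$ positive in $p$, I would define $\bbA^B$ on carrier $A\uplus A^\bot$ by placing a modified copy of $\bbA$ in front of $\bbA^\bot$, where the one-step translation sends each disjunct $\pi\ybullet\nb B$ to $\bigvee_{b_0\in B}\big(\pi\land\dia b_0\land\nb B^\bot\big)$, with the obvious clause when $B=\nada$; thus exactly one successor may continue in the initial part, while the entire cover obligation is discharged, via $\bbA^\bot$, by the $p$-false siblings. As in \eqref{eq:AWbot} the final part is isomorphic to $\bbA^\bot$, and one checks that $\bbA^B\init{a}\in\IAut^B_p$. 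The equivalence $\bbA\init{\ai}\equiv\bbA^B\init{\ai}$ under the single-branch hypothesis is then proved, over $\om$-unravelled trees, by a strategy-transfer argument of the exact shape of Proposition~\ref{p:w-aut3}, maintaining the invariant that a partial match lies either in $A\times U$ along an $f$-guided play on $\bbS[p\rst{U}]$, or in $A^\bot\times S$ following a $g$-guided play.

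The main obstacle will be the left-to-right direction of this equivalence lemma. Here $U$ is a single branch, so by Fact~\ref{f:daut} the singleton marking $m$ prescribed by $f$ sends the unique on-branch successor $u_0\in\si_{R}(s_k)\cap U$ to a single state $b_0$, which $\eloi$ feeds into $\dia b_0$; the difficulty is discharging the remaining obligation $\nb B^\bot$ at every successor, including $u_0$ itself. The resolution is the sibling device of Proposition~\ref{p:w-aut3}: since $\bbS$ is $\om$-unravelled, $u_0$ has a sibling $\ul{u_0}\notin U$ with $\bbS[p\mapsto\nada],u_0\bis\bbS[p\mapsto\nada],\ul{u_0}$, so the $\bot$-copies carried by off-branch siblings can be mirrored onto $u_0$, and Proposition~\ref{p:fs2} transfers acceptance of $\bbA^\bot$ to these nodes. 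Verifying that this single branching choice still witnesses the full cover $\nb B$ inherited from $f$ is the one new computation; the right-to-left direction, the assembly of the effective map $\xi\mapsto\xi^B\isdef\tr_{\bbD^{MB}_\xi}(d_\xi)$ (precomposing the monotonicity normalisation $(\cdot)^M$ of Definition~\ref{d:mon-aut2} with $(\cdot)^B$), and the elementary decidability bound then all follow the corresponding steps in the proof of Theorem~\ref{t:fw} essentially verbatim.
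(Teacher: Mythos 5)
Your overall architecture matches the paper's proof almost exactly: the same easy direction via the evaluation game (the paper's Proposition~\ref{p:b1} and its Claim on the uniqueness of $p$-active matches), the same class $\Aut^{B}_{p}$ of bipartite single-branch automata, the same appeal to Proposition~\ref{p:tr}, the same bipartite construction $\bbA \uplus \bbA^{\bot}$ with the sibling device over $\om$-unravelled trees, and the same assembly $\xi^{B} \isdef \tr_{\bbD^{\mathit{MB}}_{\xi}}(d_{\xi})$. However, there is one genuine error, and it sits exactly at the point you flag as ``the one new computation'': your one-step translation $(\pi\ybullet\nb B)^{B} \isdef \bigvee_{b_{0}\in B}\bigl(\pi\land\dia b_{0}\land\nb B^{\bot}\bigr)$ keeps the \emph{full} set $B^{\bot}$ in the cover, and in particular demands $\dia b_{0}^{\bot}$ for the very state $b_{0}$ that is being sent along the branch. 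This is too strong. The state $b_{0}$ is precisely the one carrying the residual $p$-obligation down the branch, and in general $\bbA^{\bot}\init{b_{0}^{\bot}}$ is false at every successor (indeed it may be unsatisfiable), so no winning strategy for $\eloi$ can ever discharge $\dia b_{0}^{\bot}$. The sibling device does not rescue this: mirroring $m(\ul{u_{0}})^{\bot}$ onto $u_{0}$ only supplies the $\bot$-copy of whatever state $f$ assigns to the \emph{off-branch} sibling $\ul{u_{0}}$, which need not be $b_{0}$ (and cannot be, when $b_{0}^{\bot}$ is unwinnable).

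Concretely, take $\xi = \mu x.\, p \lor \dia x$, which is in $\muML^{B}_{p}$ and has the single branch property. A disjunctive automaton for $\xi$ has, at a non-$p$ node, a disjunct $\top\ybullet\nb\{a,a_{\top}\}$ where $a$ is the ``eventually $p$'' state; but $\bbA^{\bot}\init{a^{\bot}}$ is equivalent to $\mu x.\,\bot\lor\dia x \equiv \bot$, so every disjunct of your $\Th^{B}(a)$ contains the unsatisfiable conjunct $\dia a^{\bot}$ and your $\bbA^{B}\init{a}$ rejects a model consisting of a path $r\to u\to v$ with $p$ only at $v$, while $\bbA\init{a}$ accepts it. Hence $\xi\not\equiv\xi^{B}$ for a formula with the property, and the biconditional \eqref{eq:t-b} fails. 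The fix is the one the paper adopts in Definition~\ref{d:b-aut2}: translate $\pi\ybullet\nb B$ to $\bigl(\pi\land\nb B^{\bot}\bigr)\lor\bigvee\bigl\{\pi\land\dia b\land\nb B_{2}^{\bot}\bigm| \{b\}\cup B_{2}=B\bigr\}$, so that $B_{2}$ may be taken to be $B\setminus\{b\}$ (in the proof one sets $B_{2}\isdef\bigcup_{t\neq u}m(t)$, the states actually occurring at off-branch successors) and the on-branch state need not reappear as a $\bot$-copy. Two smaller remarks: your omission of the ``no continuation'' disjunct $\pi\land\nb B^{\bot}$ for $B\neq\nada$ is harmless only because one may normalise the branch to be of maximal length, so that the branch ends only at leaves where $B=\nada$ --- this assumption should be made explicit; and the application of Proposition~\ref{p:tr} needs the minor adjustment that the fragments $\muML^{B}_{p,Q}$ are indexed by sets $Q$ containing $p$.
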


As before we start with the easy part of the proof.

\begin{prop} 
\label{p:b1}
Every formula $\xi \in \muML^{B}_{p}$ has the single branch property with 
respect to $p$.
\end{prop}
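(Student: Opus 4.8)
The plan is to follow the proof of Proposition~\ref{p:w1} step by step, replacing its finite-branching analysis by an argument that the relevant subtree is a single path. Every formula in $\muML^{B}_{p}$ is positive in $p$, hence monotone in $p$, so the right-to-left direction of the required equivalence is immediate from monotonicity. For the converse, fix a tree model $(\bbS,r)$ with $\bbS,r\sat\xi$, let $f$ be a positional winning strategy for $\eloi$ in $\EG_{0}\isdef\EG(\xi,\bbS)@(\xi,r)$, and define, exactly as in Proposition~\ref{p:w1},
\[
u\in U \ouriff \text{there is a }\phi\text{ such that }(\phi,u)\text{ is }f\text{-reachable in }\EG_{0}\text{ and }p\text{ is active in }\phi .
\]
As in the width proof, $U$ is downward closed: since $\Act(\psi)\sse\Act(\phi)$ whenever $\psi$ is obtained from $\phi$ by a single game move, once a match passes a position whose formula is not $p$-active it can never again reach a $p$-active formula, so every ancestor of a state in $U$ lies in $U$.

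The only genuinely new step, compared with Proposition~\ref{p:w1}, is to show that $U$ is a \emph{single} path, i.e.\ that every $u\in U$ has at most one successor in $U$. First I would note, as in the width proof, that if $t\in R(u)\cap U$ then the step into $t$ must come from a position $(\dia\chi,u)$ with $\chi$ (hence $\dia\chi$) $p$-active: the entry formula at $t$ is $p$-active by the downward-closure remark, and the grammar of Definition~\ref{d:b-frag} forbids any $p$-active subformula in the scope of a $\Box$, so the modality is a diamond and the step is $\eloi$'s own positional choice. Thus every successor of $u$ in $U$ arises from $\eloi$ selecting a witness at some $p$-active diamond reached at $u$, and it suffices to show that at most one such diamond is ever reached at $u$.

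The heart of the argument, and the step I expect to be the main obstacle, is this analysis of the game while it sits at the fixed state $u$: the descent from the formula with which the match enters $u$, through propositional connectives, fixpoint operators and variable unfoldings (all state-preserving), up to the first modal move. The only state-preserving move that can branch into two distinct subformulas is a conjunction, since at disjunctions and diamonds $\eloi$'s positional strategy fixes one choice and fixpoint and variable steps are deterministic. Suppose some conjunction $\alpha\land\beta$ reached at $u$ had \emph{both} conjuncts capable of leading, by further state-preserving descent, to a $p$-active diamond; then both $\alpha$ and $\beta$ would be $p$-active (again because $\Act$ only shrinks under descent), so by the conjunction clause of Definition~\ref{d:b-frag} one of them, say $\alpha$, is literally the formula $p$. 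But $(p,u)$ is a terminal position, from which no further descent and in particular no diamond is possible, a contradiction. Hence at every conjunction at most the one non-$p$ conjunct continues the search for a $p$-active diamond, so across all $f$-guided matches exactly one $p$-active diamond $\dia\chi^{*}$ can be reached at $u$, and $\eloi$'s positional move there determines a single successor. Therefore $R(u)\cap U$ is at most a singleton; together with downward closure and $r\in U$ (in the trivial case that $\xi$ is $p$-free one simply takes $U\isdef\{r\}$) this shows that $U$ is a branch.

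It then remains to verify that $\bbS[p\rst{U}],r\sat\xi$, and here I would reuse the final paragraph of the proof of Proposition~\ref{p:w1}: the games $\EG_{0}$ and $\EG(\xi,\bbS[p\rst{U}])@(\xi,r)$ have identical rules and winning conditions except at positions of the form $(p,t)$, and whenever an $f$-guided match reaches such a position we have $t\in V(p)$ (because $f$ wins $\EG_{0}$) and $t\in U$ (because $p$ is $p$-active, so $(p,t)$ itself witnesses $t\in U$). Thus $f$ remains winning, giving $\bbS[p\rst{U}],r\sat\xi$ as required. The whole difficulty is concentrated in the single-path step, and within it in the conjunction case: it is precisely the restriction ``$p$ can be active in both conjuncts only when one of them is literally $p$'' that rules out the state-level branching which the finite-width fragment $\muML^{W}_{p}$ permits.
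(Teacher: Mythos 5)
Your proposal is correct and follows essentially the same route as the paper: the same definition of $U$ from a positional winning strategy $f$, the same key observation that a conjunction in $\muML^{B}_{p}$ can have two $p$-active conjuncts only when one of them is literally $p$ (hence terminal), and the same final verification that $f$ remains winning over $\bbS[p\rst{U}]$. The only difference is presentational: the paper packages the uniqueness argument as an induction on the length of $p$-active matches (``at most one $p$-active $f$-guided match of each length not ending in $p$''), whereas you organize it state-by-state along the path, which implicitly amounts to the same induction.
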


\begin{proof}
Let $\xi$ be a formula in $\muML^{B}_{p}$, then clearly $\xi$ is monotone in $p$.
Fix a tree model $\bbS$ with root $r$. We have to show
\begin{equation}
\label{eq:path1}
\bbS,r \sat \xi \ouriff
\bbS[p\rst{U}],r \sat \xi, \mbox{ for some branch } U \sse S.
\end{equation}
The direction from right to left follows from the monotonicity of $\xi$ in
$p$. 
For the direction from left to right, suppose that $\bbS,r \sat \xi$.
We need to find a branch $U$ such that $\bbS [p \rst{U}], r \sat \xi$.

Since $\xi$ is true at $r$ in $\bbS$, $\exists$ has a positional winning
strategy $f$ in the game $\EG_0=\EG(\xi,\bbS)@(\xi,s)$.

\begin{claimfirstyv}
\label{cl:b0}
For every $k<\om$ there is at most one $f$-guided match $\Si = (\phi_{n},
s_{n})_{n\leq k}$ of length $k$, such that $p$ is active in every $\phi_{n}$
but $\phi_{k} \neq p$.
\end{claimfirstyv}

\begin{pfclaim}
We prove the claim by induction on $k$.
Call a match $\Si = (\phi_{n},s_{n})_{n\leq k}$ \emph{$p$-active} if $p$ is 
active in every $\phi_{n}$, $n \leq k$.

For the base case there is nothing to prove since there is only one match of
length $0$ altogether, viz., the match consisting of the starting position 
$(\xi,r)$.

For the inductive case, it suffices to show that if $\Si = 
(\phi_{n},s_{n})_{n\leq k}$ is a $p$-active match, then it has at most one
$p$-active continuation $\Si\cdot (\phi_{k+1},s_{k+1})$.
To show this, we distinguish cases as to the shape of $\phi_{k}$.

If $\phi_{k}$ is a proposition letter, then $\Si$ is a full match, and thus
has no continuation at all.
(Observe that by assumption we must have $\phi_{k} = p$, but this is not
relevant for the argument.)
If $\phi_{k} = x$ for some bound variable $x$, then $(\phi_{k+1},s_{k+1}) = 
(\de_{x},s_{k})$, and $p$ is active in $\de_{k}$.
If $\phi_{k}$ is either a disjunction or a formula of the form $\dia\phi'$, 
then, since we fixed $\eloi$'s strategy, there is exactly one continuation
$\Si\cdot (\phi_{k+1},s_{k+1})$, and therefore at most one $p$-active such
continuation.
Finally, if $\phi = \phi' \land \phi''$ is a conjunction, then by the 
constraints of well-formed formulas of $\muML^{B}_{p}$, at most one of the 
conjuncts can be active in $p$, unless $\phi'$ or $\phi''$ is identical to $p$.
From this it is immediate that $\Si$ has at most one $p$-active continuation
$\Si\cdot (\phi_{k+1},s_{k+1})$ where $\phi_{k+1} \neq p$.

This finishes the proof of the claim.
\end{pfclaim}

We define $U \subseteq S$ such that
\[
u \in U \ouriff \text{ there is a } \phi \text{ such that } (\phi,u) 
\text{ is $f$-reachable in $\EG_0$} \text{ and } p \text{ is active in $\phi$}.
\]
It easily follows from the claim that $U$ is a branch of $\bbS$.
It is now a routine exercise to verify that $(\xi,r)$ is a winning position
for $\eloi$ in $\EG(\xi,\bbS[p\rst{U}])$ --- she may in fact use the very same 
strategy $f$.
\end{proof}

For the hard part of the proof we involve the following class of automata.

\begin{defi}
\label{d:b-aut1} 
A bipartite modal automaton $\bbA = (A,B,\Th,\Om)$ belongs to the class 
$\Aut^{B}_{p}$ of \emph{single-branch automata} if the one-step language 
associated with $B$ is the language $\MLone(\Prop\setminus\{p\},B)$, and
the one-step language associated with $A$ is given by the following grammar:
\begin{equation} 
\label{eq:1st-S}
\al \isbnf
   p \divbnf \dia a \divbnf \be 
   \divbnf \al \land \be \divbnf \top \divbnf \al \lor \al \divbnf \bot
\end{equation}
where $a \in A$ and $\be \in \MLone(\Prop\setminus\{p\},B)$.
\end{defi}

In words, the two key conditions on a bipartite automaton $\bbA = (A,B,\Th,\Om)$ 
are that (i) the initial part $A$ of $\bbA$ can only be accessed from itself 
through a formula of the form $\dia a$ (with $a \in A$), and that (ii), as 
before, the propositional variable $p$ may ony occur positively in the initial 
part, and not at all in the final part of $\bbA$.

\begin{prop}
\label{p:b-aut1}
Let $\bbA = (A_{0}, A_{1}, \Th, \Om)$ be a bipartite modal automaton in 
$\Aut^{B}_{p}$.
Then there is a translation $\tr_{\bbA}: A \to \muML$ such that
$\tr_{\bbA}(a) \in \muML^{B}_{p}$ for every state $a \in A_{0}$.
\end{prop}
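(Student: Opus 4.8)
The plan is to reduce the proposition to the general transfer result, Proposition~\ref{p:tr}, in exactly the way that Propositions~\ref{p:w-aut1} and~\ref{p:d-aut1} were handled. Writing $\bbA = (A,B,\Th,\Om)$ with $A$ the initial and $B$ the final part, I would keep the distinguished letter $p$ fixed throughout and instantiate Proposition~\ref{p:tr} with $P=\{p\}$ and the family $\muML^{B} \isdef \{\muML^{B}_{p,Q} \mid Q \sse_{\om}\Propvar,\ p\in Q\}$, reading the states of $A$ as additional track variables. Since the proof of Proposition~\ref{p:tr} only ever evaluates the fragment at sets of the form $P \cup \{a_i\}$, each of which contains $p$, there is no harm in the family being indexed only by sets $Q$ with $p \in Q$. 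It then remains to verify hypotheses (i)--(iii).

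For (i) I would establish (EP), (SP1), (SP2) and both closure properties (CA$_{\mu}$) and (CA$_{\nu}$); the last two are read straight off the grammar of Definition~\ref{d:b-frag} (both fixpoints are admitted), and (EP) is routine. The two substitution properties are the delicate point, and both hinge on the asymmetric conjunction clause ``$p\land\phi \divbnf \psi\land\phi$''. For (SP1) the crucial observation is that a track variable $x\in Q\cup\{x\}$ can never occur as the distinguished conjunct of a conjunction---that conjunct must be either $p$ or $(Q\cup\{x\})$-free, and $x$ is neither---so every occurrence of $x$ sits in a context admitting an arbitrary formula of the fragment, and substituting $\psi\in\muML^{B}_{p,Q}$ for $x$ stays inside $\muML^{B}_{p,Q}$. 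For (SP2), where one substitutes a $\psi$ with $\FV{\psi}\cap Q=\nada$ for a letter $x\notin Q$, the point is that $x$ occurs only inside the $Q$-free subformulas of $\phi$; here I would use crucially that $p\in Q$, so that the hypothesis $\FV{\psi}\cap Q=\nada$ secretly forces $\psi$ to be $p$-free, whence substituting a $Q$-free formula into a $Q$-free one again yields a $Q$-free formula.

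For (ii) I would prove, by induction on the grammar~\eqref{eq:1st-S} that generates the initial one-step formulas, that $\Th(a)\in\muML^{B}_{p,\{p\}\cup A}$ for each $a\in A$. The atom $p$ and each $\dia a$ (with $a\in A$, now a track variable in $\{p\}\cup A$) are directly admissible; each $\be\in\MLone(\Prop\setminus\{p\},B)$ mentions neither $p$ nor any state of $A$, so it is a $(\{p\}\cup A)$-free formula entering through the $\psi$-clause; and a conjunction $\al\land\be$ is admissible because, up to commutativity of $\land$, its $(\{p\}\cup A)$-free conjunct $\be$ serves as the distinguished conjunct. For (iii) I would take the truth-preserving translation $\tr_{\bbB}$ of the final-part automaton $\bbB=(B,\Th_{\rst{B}},\Om_{\rst{B}})$ supplied by Proposition~\ref{p:ma-lin} and Fact~\ref{f:ma-lin-tr}; since $\bbB$ is $p$-free, each $\tr_{\bbB}(b)$ is a $p$-free $\mu$-formula and so lies in $\muML^{B}_{p}$ via its $\psi$-clause. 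With (i)--(iii) in hand, Proposition~\ref{p:tr} yields a translation $\tr_{\bbA}$ with $\tr_{\bbA}(a)\equiv\bbA\init{a}$ and $\tr_{\bbA}(a)\in\muML^{B}_{p}$ for every $a\in A$, as required.

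The hard part, and where I would concentrate, is the interface between the \emph{two}-parameter fragment $\muML^{B}_{p,Q}$ and the single-parameter machinery of Proposition~\ref{p:tr}: one must keep $p$ fixed and permanently inside the track set $Q$, precisely so that the hypothesis of (SP2) delivers the $p$-freeness needed for the argument. The remaining care is pure bookkeeping around the conjunction clause---verifying in (SP1), (SP2) and (ii) that the distinguished conjunct is available on the required side---which is dispatched by using commutativity of $\land$ to align the one-step grammar~\eqref{eq:1st-S} (restricted \emph{right} conjunct) with the fragment grammar of Definition~\ref{d:b-frag} (restricted \emph{left} conjunct).
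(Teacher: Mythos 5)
Your proposal is correct and follows essentially the same route as the paper: both reduce the statement to Proposition~\ref{p:tr} applied to the family $\{\muML^{B}_{p,Q} \mid p \in Q\}$ (noting that restricting to index sets containing $p$ does not disturb the proof of that proposition) and then verify conditions (i)--(iii). The paper dismisses the verification of (EP), (SP1), (SP2) and (CA$_{\eta}$) as routine, whereas you supply the genuinely relevant details --- in particular why the asymmetric conjunction clause is compatible with both substitution properties and how commutativity of $\land$ reconciles the one-step grammar~\eqref{eq:1st-S} with Definition~\ref{d:b-frag} --- which is a welcome elaboration rather than a different argument.
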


\begin{proof}
We prove this proposition using a (minor) variation of Proposition~\ref{p:tr},
where we define the fragment $\Frag(Q) \isdef \muML^{B}_{p,Q}$ only for subsets
$Q \sse \Propvar$ such that $p \in Q$.
(It is easy to check that this modification does not have an effect on the proof
of the proposition.)
The verification that (i) $\muML^{B} \isdef \{ \muML^{W}_{p,Q} \mid p \in Q 
\sse \Propvar \}$ satisfies the properties (EP), (SP1), (SP2) and (AC$_{\eta}$)
for $\eta \in \{ \mu, \nu \}$, can be established by routine proofs.
We may prove that (ii) for all $a \in A$, the formula $\Th(a)$ belongs to the 
set $\muML^{B}_{\{p\},A}$ by a straightforward formula induction on the 
formulas generated by the grammar \eqref{eq:1st-S}.
And, finally, as in the earlier cases it is easy to show the existence of a 
translation $\tr: B \to \muML(\Prop\setminus \{p\})$; from this it is immediate
that (iii) $\tr(b) \in \muML^{W}_{\{p\}}$, for all $b \in B$.
\end{proof}

Note that it follows from Proposition~\ref{p:b-aut1} and Proposition~\ref{p:b1} 
that initialized automata in $\IAut^{B}_{p}$ have the single branch property.

We now turn to the key definition of this section, viz., the transformation 
$(\cdot)^{B}$ of a disjunctive automaton into an $\Aut^{B}_{p}$-structure.
This construction is very similar to the one we used in the section on the
finite-width property; the difference lies in the one-step translation, where 
now, in order to translate a disjunct $\pi\ybullet\nb B$, we do not consider 
all ways to write $B$ as a union $B = B_{1} \cup B_{2}$, but only the ones 
where $B_{1}$ is either empty or a singleton.

\begin{defi}
\label{d:b-aut2}
Let $(\cdot)^{B}: \DMLone(\Prop,A) \to \MLone(\Prop,A\uplus A^{\bot})$ be the 
one-step translation given by the following inductive definition:
\[\begin{array}{lll}
(\pi\ybullet\nb B)^{B} & \isdef & 
   \big(\pi \land \nb B^{\bot}\big) \lor
      {\displaystyle \bigvee}
        \big\{ \pi \land \dia b \land \nb B_{2}^{\bot} 
        \bigm| \{ b \} \cup B_{2} = B \big\}
\\ \bot^{B} & \isdef & \bot
\\ (\al \lor \be)^{B} & \isdef & \al^{B} \lor \be^{B}.
\end{array}\]

Let $\bbA = (A, \Th, \Om)$ be a disjunctive modal automaton which is positive in
$p$.
We define the automaton $\bbA^{B}$ as the bipartite structure 
$(A^{B},\Th^{B},\Om^{B})$ where $A^{B} \isdef A \uplus A^{\bot}$, and the maps 
$\Th^{B}$ and $\Om^{B}$ are given by putting
\[\begin{array}{lll}
   \Th^{B}(a)        & \isdef & \Th(a)^{B} 
\\ \Th^{B}(a^{\bot}) & \isdef & \Th^{\bot}(a)
\end{array}
\hspace*{10mm}\text{ and } \hspace*{10mm}
\begin{array}{lll}
   \Om^{B}(a)        & \isdef & \Om(a)
\\ \Om^{B}(a^{\bot}) & \isdef & \Om(a),
\end{array}\]
for an arbitrary state $a \in A$.
\end{defi}

We leave it for the reader to verify the following proposition.

\begin{prop}
\label{p:b-aut2}
Let $\bbA\init{a}$ be an initialized disjunctive modal automaton which is 
positive in $p$.
Then its transformation $\bbA^{B}\init{a}$ belongs to the class $\IAut^{B}_{p}$.
\end{prop}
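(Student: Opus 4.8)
The plan is to verify directly that $\bbA^{B}$ satisfies the three conditions imposed on members of $\Aut^{B}_{p}$ by Definition~\ref{d:b-aut1}, taking the copy $A$ as the initial part and the copy $A^{\bot}$ as the final part of $\bbA^{B}$. The three things to check are: that $\bbA^{B}$ is bipartite with this partition; that the final-part transition formulas lie in $\MLone(\Prop\setminus\{p\},A^{\bot})$; and that each initial-part transition formula is generated by grammar~\eqref{eq:1st-S}.

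First I would establish bipartiteness in the sense of Definition~\ref{d:bipaut}, i.e. that $\act_{\bbA^{B}}\cap(A\times A^{\bot})=\nada$. By Definition~\ref{d:Abot} the transition formula $\Th^{B}(a^{\bot})=\Th^{\bot}(a)$ of a final state is a copy of $\Th(a)$ with $p$ replaced by $\bot$ and all states taken in $A^{\bot}$; in particular it mentions no state of the initial part $A$. Since the final part is thus closed under the occurrence relation, no state of $A$ is active in any state of $A^{\bot}$, giving bipartiteness. The final-part condition is equally quick: as $\bbA$ is assumed positive in $p$, no literal $\neg p$ occurs in any $\Th(a)$, so substituting $\bot$ for $p$ eliminates every occurrence of $p$, and hence $\Th^{B}(a^{\bot})\in\MLone(\Prop\setminus\{p\},A^{\bot})$, exactly the one-step language Definition~\ref{d:b-aut1} requires of the final part. (This also records that the final part of $\bbA^{B}$ is an isomorphic copy of $\bbA^{\bot}$.)

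The only mildly laborious task is the initial-part condition: for each $a\in A$ the formula $\Th^{B}(a)=\Th(a)^{B}$ must lie in the one-step language generated by grammar~\eqref{eq:1st-S}. I would argue by induction on the structure of $\Th(a)\in\DMLone(\Prop,A)$. The clauses $\bot^{B}=\bot$ and $(\al\lor\be)^{B}=\al^{B}\lor\be^{B}$ are immediate from the $\bot$- and $\lor$-productions. The essential case is a single disjunct $\pi\ybullet\nb B$, whose translation is $(\pi\land\nb B^{\bot})\lor\bigvee\{\pi\land\dia b\land\nb B_{2}^{\bot}\mid\{b\}\cup B_{2}=B\}$. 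Here I would note that each $\nb B_{i}^{\bot}$ is a $p$-free one-step formula over the final part $A^{\bot}$, hence a legitimate $\be$; that the $p$-free literals in $\pi$ are likewise $\be$'s; that $p$ itself is an available atom; and, crucially, that every disjunct carries at most one diamond $\dia b$ into the active part $A$, precisely because the transformation only ever splits off a singleton (or empty) subset from $B$. It is this singleton restriction that distinguishes $(\cdot)^{B}$ from the finite-width translation and keeps each disjunct within the single-branch grammar, which permits a conjunction to mention the current node ($p$) and at most one active successor ($\dia b$), with every further conjunct $p$-free over $A^{\bot}$.

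The main obstacle, such as it is, lies entirely in this last conjunction bookkeeping: one must check that the atom $p$, the single diamond $\dia b$, and the $p$-free conjuncts over $A^{\bot}$ can be assembled in accordance with the conjunction clause of grammar~\eqref{eq:1st-S}, using at the one-step level the same allowance for a $p$-conjunct that the fragment $\muML^{B}_{p,Q}$ grants through its $p\land\phi$ production. Everything else is routine inspection, which is why the verification is safely left to the reader.
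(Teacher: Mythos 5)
Your overall strategy is the only reasonable one: the paper itself offers no argument for this proposition (it is explicitly left to the reader), and a direct check of the three conditions of Definition~\ref{d:b-aut1} --- bipartiteness, the $p$-free one-step language on the final part $A^{\bot}$, and grammar~\eqref{eq:1st-S} on the initial part $A$ --- is exactly what is called for. Your bipartiteness argument (the final part is closed under the occurrence relation, so no $a\in A$ can be active in any $a^{\bot}\in A^{\bot}$) and your treatment of the final part are fine.

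There is, however, one step that does not go through against the definitions as literally printed, and it is precisely the step you identify as ``the main obstacle'' and then wave away. Grammar~\eqref{eq:1st-S} admits conjunction only in the form $\al\land\be$ with $\be\in\MLone(\Prop\setminus\{p\},B)$; it has no production $p\land\al$. Now if $p$ is a conjunct of $\pi$ (which happens whenever $a$ is a state whose disjunct tests $p$ positively --- the construction does not case-split on $p\in\pi$ the way Definition~\ref{d:a-aut2} does), the translated disjunct $\pi\land\dia b\land\nb B_{2}^{\bot}$ contains \emph{two} conjuncts that are neither $p$-free nor over $A^{\bot}$, namely $p$ and $\dia b$; no bracketing via $\al\land\be$ produces $p\land\dia b$, since $p$ is not in $\MLone(\Prop\setminus\{p\},A^{\bot})$ and neither is $\dia b$. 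So your assertion that the assembly can be done ``in accordance with the conjunction clause of grammar~\eqref{eq:1st-S}'' is false for such disjuncts, and the claimed ``same allowance for a $p$-conjunct'' simply is not present in~\eqref{eq:1st-S}, only in the fragment $\muML^{B}_{p,Q}$ via its $p\land\phi$ production. The mismatch is arguably an imprecision in Definition~\ref{d:b-aut1} rather than in the construction: the intended grammar should carry a clause $p\land\al$ mirroring the fragment's $p\land\phi$. To close the gap you should either (a) state and use that amended grammar, re-checking that Proposition~\ref{p:b-aut1}(ii) still holds (it does: $p\land\phi$ is a production of $\muML^{B}_{p,\{p\}\cup A}$), or (b) bypass~\eqref{eq:1st-S} altogether and verify condition (ii) of Proposition~\ref{p:tr} directly, i.e.\ show $\Th^{B}(a)\in\muML^{B}_{p,\{p\}\cup A}$ by writing each disjunct as $p\land\bigl(\pi'\land(\dia b\land\nb B_{2}^{\bot})\bigr)$ with $\pi'$ and $\nb B_{2}^{\bot}$ both $(\{p\}\cup A)$-free. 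Either repair is short, but as it stands the induction you sketch breaks on exactly the case you flagged.
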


We are now ready for the main technical lemma of this section.

\begin{prop}
\label{p:b-aut3}
Let $\bbA\init{a}$ be an initialized disjunctive modal automaton which is 
positive in $p$.
If $\bbA\init{\ai}$ has the single-branch property, then $\bbA\init{\ai} \equiv 
\bbA^{B}\init{\ai}$.
\end{prop}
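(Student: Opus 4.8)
The plan is to follow the proof of Proposition~\ref{p:w-aut3} almost verbatim, since the single-branch transformation $(\cdot)^{B}$ of Definition~\ref{d:b-aut2} differs from $(\cdot)^{W}$ only in that the ``retained'' set $B_{1}$ is constrained to be empty or a singleton. As there, it suffices to fix an $\om$-unravelled tree $(\bbS,r)$ and prove both directions of $\bbS,r \sat \bbA\init{\ai} \ouriff \bbS,r \sat \bbA^{B}\init{\ai}$.

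For the direction from left to right, I would use the single-branch property of $\bbA\init{\ai}$ to obtain a branch $U \sse S$ with $\bbS[p\rst{U}],r \sat \bbA\init{\ai}$, and then, by monotonicity of $\bbA^{B}$, supply $\eloi$ with a winning strategy in $\AG(\bbA^{B},\bbS[p\rst{U}])@(\ai,r)$ maintaining a condition ($\dag_{B}$) that is the obvious analogue of ($\dag_{W}$). The heart of the argument is the round-maintenance claim corresponding to Claim~\ref{cl:w1}: at a position $(a_{k},s_{k})$ with $s_{k}\in U$, $\eloi$'s strategy in $\AG(\bbA,\bbS[p\rst{U}])$ yields a legitimate marking $m$ witnessing some disjunct $\pi\ybullet\nb B$ of $\Th(a_{k})$, and I would set $B_{1} \isdef \{a \mid a \in m(u),\ u \in \si_{R}(s_{k}) \cap U\}$ and $B_{2} \isdef \{a \mid a \in m(t),\ t \in \si_{R}(s_{k})\setminus U\}$. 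The crucial new observation, which is exactly what the restricted shape of $(\cdot)^{B}$ is designed to exploit, is that since $U$ is a single path the node $s_{k}$ has at most one successor lying in $U$; invoking Fact~\ref{f:daut} to assume $\sz{m(t)}=1$ everywhere, $B_{1}$ is therefore empty or a singleton, so that the witnessed disjunct of $\Th^{B}(a_{k})$ is either $\pi\land\nb B^{\bot}$ or $\pi\land\dia b\land\nb B_{2}^{\bot}$. From this point the definition of the modified marking (adding $m(\ul{t})^{\bot}$ for each $t\in U$, using the bisimilar sibling $\ul{t}\notin U$ guaranteed by $\om$-unravelling, exactly as in the width proof) and the verification that it is legitimate and keeps ($\dag_{B}$) alive proceed exactly as for $m^{W}$, appealing to Proposition~\ref{p:fs2} and the sibling bisimilarity to move responses landing outside $U$ into the $A^{\bot}$ part. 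Since $\Om^{B}$ preserves all priorities, the winning-condition bookkeeping for infinite matches is identical to the width case.

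The direction from right to left is essentially identical to its counterpart in Proposition~\ref{p:w-aut3} and requires no new ideas: assuming $\bbS,r\sat\bbA^{B}\init{\ai}$ I would collapse the two copies of the state set by the marking $m_{B}(t) \isdef \{a \in A \mid a \in m(t) \text{ or } a^{\bot}\in m(t)\}$, check its legitimacy against the original disjunct $\pi\ybullet\nb(B_{1}\cup B_{2})$, and route responses landing in the $A^{\bot}$ part through Proposition~\ref{p:fs2} and monotonicity of $\bbA$ back into $\bbA$. I expect the only point demanding genuine care to be the bookkeeping in the left-to-right claim, namely the verification that a single successor inside the branch forces precisely the $\dia b$ summand of $(\cdot)^{B}$ rather than a $\bwsmall\dia B_{1}$ conjunct with $\sz{B_{1}}>1$; once this singleton fact is pinned down, the remainder is a routine transcription of the finite-width argument.
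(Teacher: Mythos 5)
Your proposal is correct and follows essentially the same route as the paper's proof: an $\om$-unravelled tree, a maintained invariant $(\dag_{B})$ mirroring $(\dag_{W})$, the normalization of $f$ to singleton markings via Fact~\ref{f:daut}, and the observation that the branch contributes at most one successor, so that the retained part of $\nb B$ collapses to a single $\dia b$ (or vanishes at the end of the branch), with siblings and Proposition~\ref{p:fs2} handling the $A^{\bot}$ part exactly as in the finite-width case. The paper phrases the case split as ``$s_{k}$ is/is not the last point of the branch'' rather than ``$B_{1}$ is empty or a singleton,'' but these coincide, so the two arguments are the same.
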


\begin{proof}
Let $\bbA = (A,\Th,\Om)$ be a disjunctive modal automaton, and assume that 
$\bbA\init{\ai}$ has the single branch property, where $\ai \in A$.
In order to prove the equivalence of $\bbA\init{\ai}$ and $\bbA^{B}\init{\ai}$,
it suffices to take an arbitrary $\om$-unravelled Kripke tree $(\bbS,r)$ and
prove that 
\begin{equation}
\label{eq:b1}
\bbS,r \sat \bbA\init{\ai} \ouriff \bbS,r \sat \bbA^{B}\init{\ai}.
\end{equation}

Our proof of \eqref{eq:b1} is very similar to that of the corresponding 
statement \eqref{eq:w1} in the proof of Proposition~\ref{p:w-aut3}.
For this reason we will focus on the details where the two proofs differ.

We first consider the direction from left to right of \eqref{eq:b1}.
Assume that $\bbS,r \sat \bbA\init{\ai}$, then it follows from the single branch 
property of $\bbA\init{a}$ that there is a path $(u_{n})_{n<\kappa}$ 
starting at $r$ and such that, with $U \isdef \{ u_{n} \mid n<\kappa\}$, we have 
$\bbS[p\rst{U}],r \sat \bbA\init{\ai}$.
Without loss of generality we may assume that $(u_{n})_{n<\kappa}$ is of 
\emph{maximal} length, that is, either $\kappa = \om$, or $\kappa < \om$ and 
$\si_{R}(u_{\kappa -1}) = \nada$; it follows that in particular, $r \in U$.
By monotonicity of $\bbA^{B}$ it suffices to show that $\bbS[p\rst{U}],r \sat 
\bbA^{B}\init{\ai}$; that is, we need to supply $\eloi$ with a winning strategy
$h$ in the game $\AG^{B} \isdef \AG(\bbA^{B}, \bbS[p\rst{U}])@(\ai,r)$.
In order to define this strategy, we will make use of two auxiliary strategies:
let $f$ and $g$ be positional winning strategies for $\eloi$ in the acceptance 
games $\AG(\bbA,\bbS[p\rst{U}])$ and $\AG(\bbA^{B},\bbS[p\rst{U}])$ itself,
respectively.
By Fact~\ref{f:daut} and the disjunctivity of $\bbA$ we may assume without loss
of generality that at any position $(a,s)$ that is winning for $\eloi$ in 
$\AG(\bbA,\bbS[p\rst{U}])$, the marking picked by $f$ assigns a \emph{singleton} 
to each $t \in \si_{R}(s)$.

The condition that $\eloi$ will maintain when playing $\AG^{B}$ is the 
following.
\smallskip

\noindent($\dag_{B}$)\hspace{3mm}
\begin{minipage}{14cm}
With $\Si = (a_{n},s_{n})_{n\leq k}$ a partial match of 
$\AG^{B}$, one of the following holds:

($\dag_{B}^{1}$)
$(a_{k},s_{k}) \in A \times U$ and $\Si$ is an $f$-guided match of 
  $\AG(\bbA,\bbS[p\rst{U}])$,
\\($\dag_{B}^{2}$) $(a_{l},s_{l}) \in A^{\bot} \times S$
for some $l \leq k$ such that $\bbS[p\rst{U}],s_{l} \sat 
\bbA^{B}\init{a_{l}}$,
\\\hspace*{12mm} and $(a_{i},s_{i})_{l\leq i \leq k}$ is a $g$-guided 
$\AG(\bbA^{B}, \bbS[p\rst{U}])$-match.
\end{minipage}
\smallskip

Observe that this corresponds seamlessly to the condition ($\dag_{B}$)
featuring in the finite-width case.

\begin{claimfirstyv}
\label{cl:b1}
Let $\Si$ be a partial match of $\AG^{B}$ satisfying 
($\dag_{B}$).
Then $\eloi$ has a legitimate move guaranteeing that, after any response move 
by $\abel$, ($\dag_{B}$) holds again.
\end{claimfirstyv}

\begin{pfclaim}
Let $\Si = (a_{n},s_{n})_{n\leq k}$ be as in the claim.
Leaving the easy case, where $\Si$ satisfies ($\dag_{B}^{2}$), as a exercise for 
the reader, we focus on the case where ($\dag_{B}^{1}$) holds.
Here the final position $(a_{k},s_{k})$ of $\Si$ is a winning position for 
$\eloi$ in $\AG(\bbA,\bbS[p\rst{U}])$, and with $m: \si_{R}(s_{k}) \to \funP A$ 
being the marking given by $\eloi$'s positional winning strategy $f$, we have 
$\si(s_{k}), m \satone \pi\ybullet\nb B$ for some 
disjunct $\pi\ybullet \nb B$ of $\Th(a_{k})$.

If $s_{k}$ is the last point of the branch $U$, we define $m^{B}: \si_{R}(s_{k})
\to \funP A^{B}$ by putting $m^{B}(t) \isdef m(t)^{\bot}$, and we leave it for 
the reader to verify that $m^{B}$, as a move for $\eloi$ in 
$\AG(\bbA^{B},\bbS[p\rst{U}]$, satisfies the conditions of the claim.

Alternatively, let $u \in \si_{R}(s_{k}) \cap U$ be the (unique) successor of 
$s_{k}$ on the branch $U$.
By our assumption on $m$, there is a unique state $b\in A$ such that 
$m(u) = \{ b \}$.
As in the finite-width case, $u$ has a sibling $\ul{u} \in \si_{R}(s_{k}) 
\setminus U$ such that $\bbS,u \bis \bbS,\ul{u}$, and hence,
$\bbS[p \mapsto\nada],u \bis \bbS[p\mapsto\nada],\ul{u}$.
Now define $m^{B}: \si_{R}(s_{k}) \to \funP A^{B}$ by putting 
\[
m^{B}(t) \isdef
   \left\{\begin{array}{ll}
      m(t) \cup m(\ul{t})^{\bot} & \text{if } t = u
   \\ m(t)^{\bot}           & \text{if } t \neq u.
   \end{array}\right.
\]
In order to show that $m^{B}$ is a legitimate move for $\eloi$ in $\AG(\bbA^{B},
\bbS[p\rst{U}])$ at position $(a_{k},s_{k})$, we define $B_{2} \isdef 
\bigcup_{t\neq u} m(t)$.
Then clearly $B = \{ b \} \cup B_{2}$ and so it suffices to prove that 
\begin{equation}
\label{eq:b4}
\si(s_{k}), m^{B} \satone \dia b \land \nb B_{2}^{\bot}.
\end{equation}
The proof of \eqref{eq:b4} proceeds exactly as that of \eqref{eq:w4} in the 
finite-width case.
We omit the details, as we do for the proof that, for any response move $(t,c)$
by $\abel$ to $\eloi$'s move $m^{B}$, the resulting continuation $\Si\cdot(t,c)$
satisfies condition ($\dag_{B}$).
\end{pfclaim}

Based on this claim it is straightforward to define a winning strategy $h$ for 
$\eloi$ in $\AG^{B}$.
This proves the direction from left to right of \eqref{eq:b1}.
\medskip

In order to prove the opposite, right-to-left, direction of \eqref{eq:b1}, our
line of reasoning is the same as in the finite-width case.
Again we leave the details for the reader.
\end{proof}

\begin{proof}[{\rm\bf Proof of Theorem~\ref{t:sb}}]
Given a formula $\xi \in \muML(\Prop)$, we define
\[
\xi^{B} \isdef \tr_{\bbD^{\mathit{MB}}_{\xi}}(d_{\xi}).
\]
It is not difficult to verify that this map $(\cdot)^{B}: \muML(\Prop) \to
\muML^{B}_{p}(\Prop)$ satisfies the requirements stated by the theorem ---
the proof follows exactly the same lines as that of Theorem~\ref{t:fw}.
\end{proof}

\section{Continuity}
\label{sec:cont}

In this section we prove one of our main results, namely, we give a syntactic 
characterization of the continuous fragment of the modal $\mu$-calculus.
We recall the definition of continuity.

\begin{defi}
A $\mu$-formula $\xi \in \muML(\Prop)$ is {\em continuous} in $p \in \Prop$ if 
\[
\bbS,s \sat \xi \ouriff  
\bbS[p\rst U],s \sat \xi, \text{ for some finite subset }
  U \sse S
\] 
for every pointed model $(\bbS,s)$.
\end{defi}

We leave it for the reader to verify that continuity implies monotonicity:
Any formula that is continuous in $p$ is also monotone in $p$.

The property of continuity is of interest for at least two reasons: its link
to the well-known topological notion of \emph{Scott continuity}~\cite{gier:comp80}
(which also explains the name `continuity'), and its connection with the notion
of \emph{constructivity}.
\medskip

\noindent\textbf{Scott topology}
Recall that a {\em complete lattice} $\bbP$ is a partially ordered set $(P,
\leq)$ in which each subset has a greatest lower bound (called the {\em meet})
and a least upper bound (called the {\em join}). 
If $U$ is a subset of the lattice, we denote by $\bigwedge U$ the meet of $U$ 
and by $\bigvee U$ the join of $U$. 
For example, for all sets $S$, the power set $\funP(S)$, ordered by set
inclusion, is a complete lattice. 

Given a complete lattice $(P,\leq)$, a subset $D \sse P$ is
\emph{directed} if for every pair $d_{1},d_{2} \in D$ there is a $d\in D$ such 
that $d_{1 }\leq d$ and $d_{2}\leq d$.
A subset $U \sse P$ is called \emph{Scott open} if it upward closed (that is,
if $u \in U$ and $u\leq v$ then $v \in U$), and satisfies, for any directed
$D\sse P$, the property that $U \cap D \neq \nada$ whenever $\bv D \in U$.
It is not hard to prove that the Scott open sets indeed provide a topology,
the so-called \emph{Scott} topology.
For the associated topological notion of continuity, let $(P,\leq)$ and 
$(P',\leq')$ be two complete lattices. 
A map $f : P \to P'$ is {\em Scott continuous} if for all Scott opens
$U' \subseteq P'$, the set $f^{-1} [U']$ is Scott open.
It is a standard result that a map $f: P \to P'$ is Scott continuous iff $f$
preserves \emph{directed joins} (that is, if $D \sse P$ is directed, then 
$f(\bv D) = \bv' f[D]$). 

To connect this to our notion of continuity, recall from the introduction that,
given a formula $\xi$, a proposition letter $p$ and a model $\bbS = (S,R,V)$,
the map $\xi^{\bbS}_{p} :\funP (S) \to \funP(S)$ is defined by
\begin{equation}
\label{eq:sem1}
\xi^{\bbS}_{p} (U) = \{ s \in S \mid \bbS[p \mapsto U], s \sat \xi \}.
\end{equation}
Now the link is given by the following Proposition.
The (routine) proof of this Proposition is left as an exercise for the
reader.
\begin{prop} 
\label{scottappendix}
A $\mu$-formula $\xi$ is continuous in $p$ iff for all models $\bbS=(S,R,V)$,
the map $\xi^{\bbS}_{p} : \funP (S) \to \funP(S)$ is Scott continuous.
\end{prop}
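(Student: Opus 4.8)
The plan is to factor the equivalence through the pointwise reformulation of continuity already recorded in the introduction, and then to invoke the standard characterisation of Scott continuity as preservation of directed joins.

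First I would observe that the definition of continuity, though phrased for a single fixed model, is really a statement about the operations $\xi^{\bbS}_{p}$. Fix $\bbS = (S,R,V)$. Since $\xi^{\bbS}_{p}(X)$ overwrites the interpretation of $p$ by $X$, it does not depend on $V(p)$, and so as $\bbS$ ranges over all models we may let $X \sse S$ range over \emph{all} subsets. Using that $\bbS[p\rst{U}] = \bbS[p\mapsto V(p)\cap U]$, and that $V(p)\cap U$ ranges exactly over the finite subsets of $V(p)$ as $U$ ranges over the finite subsets of $S$, the biconditional in the definition unwinds to the assertion that for every $X \sse S$
\begin{equation}
\xi^{\bbS}_{p}(X) = \bigcup \big\{ \xi^{\bbS}_{p}(F) \mid F \sse_{\om} X \big\}. \tag{$\star$}
\end{equation}
Thus $\xi$ is continuous in $p$ if and only if ($\star$) holds for all $\bbS$ and all $X \sse S$, which reduces the proposition to showing that, for each fixed $\bbS$, the map $\xi^{\bbS}_{p}$ satisfies ($\star$) if and only if it is Scott continuous.

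Next, recalling the standard fact quoted in the excerpt — that a map between complete lattices is Scott continuous iff it preserves directed joins — it suffices to prove that $\xi^{\bbS}_{p}$ satisfies ($\star$) iff it preserves directed joins, where the lattice is $\funP(S)$ ordered by inclusion and joins are unions. For the direction from ($\star$) to preservation of directed joins, I would first note that ($\star$) forces monotonicity (every finite subset of $X$ is a finite subset of any $Y \supseteq X$), so the inclusion $\bigcup_{X \in D} \xi^{\bbS}_{p}(X) \sse \xi^{\bbS}_{p}(\bigcup D)$ is automatic for any directed $D$; for the reverse inclusion, given $s \in \xi^{\bbS}_{p}(\bigcup D)$, apply ($\star$) to obtain a finite $F \sse \bigcup D$ with $s \in \xi^{\bbS}_{p}(F)$, use directedness of $D$ together with finiteness of $F$ to find a single $X \in D$ with $F \sse X$, and conclude $s \in \xi^{\bbS}_{p}(X)$ by monotonicity. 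Conversely, preservation of directed joins yields ($\star$) at once by taking $D$ to be the collection of all finite subsets of $X$, which is directed and nonempty (it contains $\nada$) and has union $X$.

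The proof is entirely routine; the only point demanding a little care is the standard directedness argument — extracting from a finite $F \sse \bigcup D$ a single member of $D$ above it — which relies on the usual convention that directed sets are taken to be nonempty (this also guarantees that the family of finite subsets of $X$ is a legitimate directed set and that the empty finite set causes no difficulty). Everything else is bookkeeping around the reduction to ($\star$).
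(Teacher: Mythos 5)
Your proof is correct; the paper itself gives no argument for this proposition, explicitly leaving it as a routine exercise for the reader. Your route --- unwinding the definition of continuity to the pointwise identity $\xi^{\bbS}_{p}(X)=\bigcup\{\xi^{\bbS}_{p}(F)\mid F\sse_{\om}X\}$ (using that $\xi^{\bbS}_{p}$ ignores $V(p)$, so quantifying over all models lets $X$ range over all subsets) and then matching that identity against preservation of directed joins, with the one genuine care point being the nonemptiness convention for directed sets --- is exactly the intended standard argument.
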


\noindent\textbf{Constructivity}
Basically, a formula is constructive if the ordinal approximation of its
least fixpoint is always reached in at most $\omega$ steps. 

Given a formula $\xi$ and a model $\bbS = (S,R,V)$, we define, by induction
on $i < \om$, a map $\xi_{p}^i : \funP(S) \to \funP(S)$. 
We let $\xi_{p}^0$ be the identity map, and for $i < \om$ define 
$\xi_{p}^{i+1} := \xi^{\bbS}_{p} \circ \xi_{p}^{i}$, where 
$\xi^{\bbS}_{p}$ is as in \eqref{eq:sem1}.
A monotone formula $\xi$ is \emph{bounded} in $p$ if for some natural number
$n$, the least fixpoint of $\xi$ is always reached in $n$ steps (that is,
$\xi_{p}^n(\nada) = \xi_p^{n+1}(\nada)$ for all models $\bbS$), and 
\emph{constructive} in $p$ if the least fixpoint is always reached in 
$\om$ many steps (that is, for all models $\bbS=(S,R,V)$, the least fixpoint 
of the map $\xi_p$ is equal to $\bigcup \{ \xi_p^i (\nada) \mid i < \om 
\}$. 

In~\cite{otto:elim99}, Otto proved that it is decidable in exponential time
whether a basic modal formula is bounded (and whether a given formula of the 
modal $\mu$-calculus is equivalent to a basic modal formula).
But to the best of our knowledge, decidability of constructivity (that is, the 
problem whether a given $\mu$-formula is constructive in $p$) is an open 
problem.
In passing we mention that Czarnecki~\cite{czar:fast10} found, for each 
ordinal $\beta < \omega^2$, a formula $\xi_\beta$ for which $\beta$ is the 
least ordinal such that the least fixpoint of $\xi_{\beta}$ is always reached 
in $\beta$ steps.

The connection between the notions of continuity and constructivity is an
intriguing one.
It is a routine exercise to prove that continuity implies constructivity:
if $\xi$ is continuous in $p$, then it is also constructive in $p$. 
The opposite inclusion does not hold, as the examples 
$\xi_{1}(p) = \Box p \wedge \Box  \Box \bot$ and
$\xi_{2}(p) = \nu x. p \wedge \dia x$ testify.
However, in the previous examples, we have 
$\mu p. \xi_{1} \equiv \mu p. \Box \Box \bot$ and 
$\mu p. \xi_{2} \equiv \mu p. \bot$. 
That is, in each case there is a continuous formula $\psi_{i}$ that is 
equivalent to $\xi_{i}$ `modulo an application of the least fixpoint operation'.
This suggests the following question concerning the link between continuity
and constructivity.
Can we find, for any formula $\xi \in \muML$ which is constructive in $p$, a
continuous formula $\psi$ such that $\mu p. \xi \equiv \mu p.\psi$?
We leave this as an open problem, as we do with the broader question whether 
there is a `nice' syntactic fragment of the modal $\mu$-calculus that captures
constructivity in the sense that a formula $\xi$ is constructive in $p$ iff
it is equivalent to a formula $\xi'$ which belongs to the given fragment (and
which would preferably be effectively obtainable from $\xi$).
\medskip

We now turn to the main result of this section, namely, our characterization
result for continuity.
Our approach here is based on the observation that continuity can be seen as 
the combination of monotonicity, the finite depth property and the finite width
property.

\begin{prop}
\label{p:cont0}
A $\mu$-formula $\xi$ is continuous in $p$ iff it is monotone in $p$ and has
both the finite depth and the finite width property with respect to $p$.
\end{prop}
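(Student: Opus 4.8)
The plan is to prove the two implications of the biconditional separately, in each case unfolding the definitions of the finite width and finite depth properties (Definitions~\ref{d:w-prop} and~\ref{d:d-prop}) and using bisimulation invariance (Fact~\ref{f:bisinv}) together with the tree unravellings supplied by Fact~\ref{f:bisimomega} to move between arbitrary models and trees. Throughout, the easy right-to-left directions of all three properties will be handled uniformly by monotonicity, so the work concentrates on the left-to-right directions.

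For the forward direction, continuity already implies monotonicity (as noted in the text), so it remains to establish the two subtree properties, both of which are stated for tree models only. Fixing a tree model $(\bbS,s)$ with $\bbS,s \sat \xi$, continuity supplies a finite set $U_0 \sse S$ with $\bbS[p\rst{U_0}],s \sat \xi$. I would then take $U$ to be the downward closure of $U_0$, observing that $U$ is still finite — in a tree each node sits at finite depth, so the downward closure of a finite set is a finite union of finite paths from the root — and that a finite downward-closed set is simultaneously a finitely branching subtree and a noetherian subtree. Since $U_0 \sse U$, monotonicity yields $\bbS[p\rst{U}],s \sat \xi$, so the single set $U$ witnesses both properties at once.

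The converse is the substantial direction. Assuming $\xi$ is monotone and has both properties, I fix an arbitrary pointed model $(\bbS,s)$; the right-to-left direction of continuity is again monotonicity, so I assume $\bbS,s \sat \xi$ and seek a finite restriction. First I would pass to the $\om$-unravelling $(\bbT,r)$ of $(\bbS,s)$, with canonical bounded morphism $f : \bbT \to \bbS$ sending $r$ to $s$; by Fact~\ref{f:bisinv} this gives $\bbT,r \sat \xi$. Applying the finite width property yields a finitely branching subtree $U_W$ with $\bbT[p\rst{U_W}],r \sat \xi$, and applying the finite depth property to this (still tree-shaped) model yields a noetherian subtree $U_D$ with $\bbT[p\rst{U_W}][p\rst{U_D}],r \sat \xi$. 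Since restricting twice amounts to restricting to the intersection, and $U^\ast \isdef U_W \cap U_D$ is downward closed, finitely branching, and without infinite paths, K\"onig's lemma forces $U^\ast$ to be \emph{finite}, with $\bbT[p\rst{U^\ast}],r \sat \xi$.

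The step I expect to be the main obstacle is transporting this finite witness back to $\bbS$. The naive candidate $f[U^\ast]$ does not make $f$ a bisimulation between $\bbT[p\rst{U^\ast}]$ and $\bbS[p\rst{f[U^\ast]}]$, because a node $t \notin U^\ast$ may lie in the fibre of some $t' \in U^\ast$, so $p$ can hold at $f(t)$ without holding at $t$. I would resolve this using monotonicity: set $U \isdef f[U^\ast]$ (finite) and consider on $\bbT$ the valuation $W \isdef V_{\bbT}(p) \cap f^{-1}[U]$. By construction $f$ is a genuine bounded morphism from $\bbT[p\mapsto W]$ to $\bbS[p\rst{U}]$, since $t \in W$ holds exactly when $f(t) \in V_{\bbS}(p) \cap U$. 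Moreover $V_{\bbT}(p) \cap U^\ast \sse W$, as $U^\ast \sse f^{-1}[f[U^\ast]] = f^{-1}[U]$, so monotonicity upgrades $\bbT[p\rst{U^\ast}],r \sat \xi$ to $\bbT[p\mapsto W],r \sat \xi$. Bisimulation invariance then transfers this along $f$ to $\bbS[p\rst{U}],s \sat \xi$, exhibiting the desired finite witness and completing the proof.
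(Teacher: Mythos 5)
Your proposal is correct and follows essentially the same route as the paper's own proof: the forward direction takes the downward closure of the finite witness (finite, hence both finitely branching and noetherian) and invokes monotonicity, while the converse unravels to a tree, composes the two restrictions, applies K\"onig's lemma to the intersection, and transports the finite witness back along the bounded morphism via the preimage-of-the-image set together with monotonicity and bisimulation invariance. The only (immaterial) differences are cosmetic, e.g.\ your use of the $\om$-expansion where the paper uses the plain unravelling.
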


\begin{proof}
It is fairly easy to see that if a $\mu$-calculus formula $\xi$ is continuous 
in $p$, then it has  both the finite depth and the finite width property for $p$.
For some detail, let $(\bbS,r)$ be a tree model such that $\bbS,r \sat \xi$.
Assuming that $\xi$ is continuous in $p$, we can find a finite set $U \sse S$
such that $\bbS[p\rst{U}] \sat \xi$.
Now consider the set ${\uparrow}U \isdef \{ s \in S \mid U \sse R^{*}[s] \}$.
Since $U$ is finite, the set ${\uparrow}U$ is a subtree of $\bbS$ that is both 
noetherian and finitely branching; and since $U \sse {\uparrow}U$, it follows 
by monotonicity that $\bbS[p\rst{({\uparrow}U)}] \sat \xi$.
This suffices to show that $\phi$ has both the finite depth property and the 
finite width property for $p$.

For the opposite implication, assume that $\xi\in\muML$ has all three 
properties mentioned.
Fix a Kripke structure $\bbS = (S,R,V)$ and a point $s \in S$. 
We have to show
\[
\bbS,s \sat \xi \ouriff 
\bbS[p\rst U],s \sat \xi, \text{ for some finite subset }
  U \sse S.
\]
The direction from right to left follows from the fact that $\xi$ is monotone
in $p$.
For the opposite direction, suppose that $\bbS,s \sat \xi$. 
Let $\bbT$, with root $r$, be a tree unravelling of $(\bbS,s)$, with
$f: \bbT \to \bbS$ denoting the canonical bounded morphism.

Since $\xi$ has the finite width property with respect to $p$, there is a
downward closed subset $U_1 \subseteq T$ which is finitely branching and such
that $\bbT [p \rst{U_1}],r \sat \xi$.
But $\xi$ also has the finite depth property with respect to $p$. 
Hence there is a subset $U_2$ of $T$ such that $U_2$ is downward closed, does 
not contain any infinite path and satisfies $\bbT [p \rst{U_1 \cap U_2}],r \sat
\xi$.
By K\"onig's Lemma, the set $U := U_1 \cap U_2$ is finite.

We claim that 
\[
\bbS[p\rst{f[U]}] \sat \xi.
\]
To see this, define $U' \isdef f^{-1}[f[U]]$, then clearly $U \sse U'$.
By monotonicity it follows from $\bbT [p \rst{U}],r \sat\xi$ that 
$\bbT [p \rst{U'}],r \sat\xi$.
It is straightforward to verify that $\bbT[p\rst{U'}],r \bis \bbS[p \rst{f[U]}],
s$, and so we find that $\bbS[p\rst{f[U]}],s \sat \xi$ by invariance under 
bisimilarity (Fact~\ref{f:bisinv}).
This suffices, since clearly $f[U]$ is finite.
\end{proof}

The syntactic fragment corresponding to continuity can be defined as follows.

\begin{defi} 
\label{def:cont}
Given a set $P \sse \Prop$, we define the fragment $\muML^{C}_{P}$ by the 
following grammar:
\[
\phi \isbnf p \divbnf \psi 
   \divbnf \phi\lor\phi \divbnf \phi\land\phi 
   \divbnf \dia \phi 
   \divbnf \mu x. \phi',
\]
where $\psi\in \muML(\Prop\setminus P)$ is a $P$-free formula
and $\phi' \in \muML^{C}_{P \cup \{ x \}}$.
In case $P$ is a singleton, say, $P = \{ p \}$, we will write 
$\muML^{C}_{p}$ rather than $\muML^{C}_{\{p\}}$.
\end{defi}

Observe that this fragment is the intersection of the fragments $\muML^{W}_{p}$ 
and $\muML^{D}_{p}$ (defined in, respectively, Definition~\ref{d:w-frag} 
and~\ref{d:d-frag}). 
That is, $\muML^{C}_{p}$ consists of those formulas $\xi \in \muML^{M}_{p}$
such that no $p$-active subformula $\psi$ of $\xi$ occurs in the scope of either
a box modality or a greatest fixpoint operator.
For instance, all formulas from Example~\ref{ex:fmas} belong to $\muML^{C}_{p}$,
except $\phi_{4}$ and $\phi_{5}$.

From these observations, the earlier results on $\muML^{M}_{p}$, $\muML^{W}_{p}$
and $\muML^{D}_{p}$, and Proposition~\ref{p:cont0} the following is immediate.

\begin{prop}
\label{p:cont1}
Every formula $\xi \in \muML^{C}_{p}$ is continuous in $p$.
\end{prop}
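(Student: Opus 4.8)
The plan is to prove Proposition~\ref{p:cont1} by pure reduction to results already in hand, so that no fresh game-theoretic or automata-theoretic work is required. The key lever is Proposition~\ref{p:cont0}, which tells us that continuity in $p$ is equivalent to the conjunction of monotonicity in $p$, the finite width property, and the finite depth property. Thus it suffices to show that every $\xi \in \muML^{C}_{p}$ enjoys all three, and I would obtain this from the observation, announced just before the proposition, that the continuous fragment is the intersection of the finite-width and finite-depth fragments, i.e.\ $\muML^{C}_{p} = \muML^{W}_{p} \cap \muML^{D}_{p}$.

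First I would make precise and verify the inclusion $\muML^{C}_{p} \sse \muML^{W}_{p} \cap \muML^{D}_{p}$, which is all that is needed here. This is a routine structural induction carried out uniformly over all finite index sets $P$, comparing the grammar of Definition~\ref{def:cont} clause by clause with those of Definitions~\ref{d:w-frag} and~\ref{d:d-frag}. Every production permitted in $\muML^{C}_{P}$ --- namely $p$, a $P$-free formula $\psi$, disjunction, conjunction, $\dia$, and $\mu x.\phi'$ with $\phi' \in \muML^{C}_{P\cup\{x\}}$ --- is also a production of both $\muML^{W}_{P}$ and $\muML^{D}_{P}$; the two larger fragments merely admit in addition $\nu x.\phi'$ (in the case of $W$) and $\Box\phi$ (in the case of $D$) on $p$-active subformulas, which $C$ simply omits. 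The fixpoint clause is discharged by the induction hypothesis at the larger index $P\cup\{x\}$.

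With membership in both fragments established, I would invoke Proposition~\ref{p:w1} to conclude that $\xi$ has the finite width property with respect to $p$, and Proposition~\ref{p:d1} to conclude that it has the finite depth property. Monotonicity in $p$ then comes for free, since it is already built into the definition of each of these two properties (Definitions~\ref{d:w-prop} and~\ref{d:d-prop}); alternatively one may note directly that formulas of $\muML^{C}_{p}$ are positive, hence monotone, in $p$. A single application of Proposition~\ref{p:cont0} then yields that $\xi$ is continuous in $p$, completing the argument.

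The proposition carries essentially no difficulty --- this is exactly why the surrounding text calls it \emph{immediate}. The only point demanding any care, and so the closest thing to an obstacle, is the bookkeeping in the grammar inclusion: one must track the side conditions faithfully through the mutual recursion, in particular noting that a $P$-free formula $\psi$ (as required by $\muML^{C}_{P}$) is in particular $p$-free whenever $p \in P$, so that it also meets the weaker side condition imposed in the $\muML^{D}_{P}$ grammar, and that the index set $P$ is enlarged consistently in the $\mu$-clause. Once this is checked, the result drops out of the three cited propositions.
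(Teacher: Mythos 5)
Your argument is correct and is precisely the route the paper takes: the text preceding the proposition observes that $\muML^{C}_{p}$ is the intersection of $\muML^{W}_{p}$ and $\muML^{D}_{p}$ and declares the result immediate from Propositions~\ref{p:w1}, \ref{p:d1} and~\ref{p:cont0}, which is exactly the reduction you carry out. You merely spell out the grammar-inclusion bookkeeping (including the $P$-free versus $p$-free side conditions) that the paper leaves implicit, and that bookkeeping is accurate.
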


Our characterization then is as follows.
As mentioned in the introduction, an earlier presentation of this result was 
given by the first author in~\cite{font:cont08}.
Note that van Benthem~\cite{bent:logi92} gave a characterization of the 
\emph{first-order} formulas that are continuous (or `finitely distributive', 
in his terminology) in a given predicate letter $P$.

\begin{thm}
\label{t:cont}
There is an effective translation which, given a $\muML$-formula $\xi$,
computes a formula $\xi^{C} \in \muML_{C}(p)$ such that 
\begin{equation}
\label{eq:t-c}
\xi \mbox{ is continuous in } p \ouriff \xi \equiv \xi^{C},
\end{equation}
and it is decidable in elementary time whether a given formula $\xi$ is 
continuous in $p$.
\end{thm}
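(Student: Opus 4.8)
The plan is to treat continuity as the conjunction of the finite width and finite depth properties, exactly as licensed by Proposition~\ref{p:cont0}, and to realise the translation $(\cdot)^{C}$ at the level of automata by fusing the two constructions from the width and depth sections into a single transformation. The easy half of \eqref{eq:t-c}, that every formula in $\muML^{C}_{p}$ is continuous, is already recorded as Proposition~\ref{p:cont1}, so I would concentrate on producing, for an arbitrary $\xi$, a formula $\xi^{C}\in\muML^{C}_{p}$ that is equivalent to $\xi$ precisely when $\xi$ is continuous in $p$. Throughout I would exploit the identity $\muML^{C}_{p}=\muML^{W}_{p}\cap\muML^{D}_{p}$ noted after Definition~\ref{def:cont}.

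First I would start, as in the proofs of Theorems~\ref{t:fw} and~\ref{t:depth}, from an equivalent disjunctive automaton $\bbD_{\xi}\init{d_{\xi}}$ (Fact~\ref{f:fma-aut}) and make it positive in $p$ via the map $(\cdot)^{M}$ of Definition~\ref{d:mon-aut2}. On the resulting disjunctive $p$-positive automaton $\bbA\isdef\bbD_{\xi}^{M}$ I would define a combined transformation $\bbA^{C}$ on carrier $A\uplus A^{\bot}$ that applies the width one-step translation $(\cdot)^{W}$ of Definition~\ref{d:w-aut2} to the initial part, keeps $\bbA^{\bot}$ (Definition~\ref{d:Abot}) as the final part, but---mimicking Definition~\ref{d:d-aut2}---resets every initial-part priority to the odd value $1$. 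The point of this last move is that it simultaneously enforces the diamond-only shape required by $\Aut^{W}_{p}$ and the all-odd-priority requirement of $\Aut^{D}_{p}$; I would package these into a class $\Aut^{C}_{p}$ of \emph{continuity automata} and show, by a single application of Proposition~\ref{p:tr} with $\Frag=\muML^{C}$, that every state of the initial part translates into $\muML^{C}_{p}$. The closure hypotheses (EP), (SP1), (SP2) hold for $\muML^{C}$ because they hold for both $\muML^{W}$ and $\muML^{D}$, while only (CA$_{\mu}$) is needed---which is exactly what the odd priorities guarantee, matching the $\mu$-only grammar of Definition~\ref{def:cont}; membership $\Th^{C}(a)\in\muML^{C}_{\{p\}\cup A}$ then follows from the corresponding one-step checks already carried out for width and depth.

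The heart of the argument is the analogue of Propositions~\ref{p:w-aut3} and~\ref{p:d-aut3}: if $\bbA\init{\ai}$ is continuous in $p$, then $\bbA\init{\ai}\equiv\bbA^{C}\init{\ai}$. I would verify this on an $\om$-unravelled tree $(\bbS,r)$. For the forward direction I would use Proposition~\ref{p:cont0} to extract from continuity a \emph{finite} downward-closed subtree $U\sse S$ with $\bbS[p\rst{U}],r\sat\bbA\init{\ai}$: the finite width property yields a finitely branching witness and the finite depth property a noetherian one, and by K\"onig's Lemma their intersection (after downward closure) is finite. Then $\eloi$'s winning strategy in $\AG(\bbA^{C},\bbS[p\rst{U}])$ is obtained by fusing Claim~\ref{cl:w1}---which splits the marking given by her $\bbA$-strategy into an $A$-part for successors inside $U$ and an $A^{\bot}$-part for successors outside $U$, using the bisimilar siblings supplied by the $\om$-unravelling together with Proposition~\ref{p:fs2} and the \eqref{eq:AWbot}-style identity $\bbA^{C}\init{a^{\bot}}\equiv\bbA^{\bot}\init{a^{\bot}}$ to enter the final part safely---with the depth mechanism forcing the match out of the initial part. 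The backward direction mirrors the right-to-left arguments of the width and depth lemmas, collapsing the $A$- and $A^{\bot}$-markings and invoking Proposition~\ref{p:fs2} and monotonicity whenever $\abel$ moves into the $\bbA^{\bot}$ part.

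I expect the main obstacle to be precisely this forward direction, namely checking that one and the same finite subtree $U$ makes both components of the strategy work at once. Concretely, the subtlety is that the initial part of $\bbA^{C}$ now carries odd priorities, so an infinite match trapped in $A\times U$ would be lost by $\eloi$; this is ruled out because $U$ is noetherian and finite, whence every path through $U$ is finite and every infinite match must drop into the $A^{\bot}$-region, where the priorities of $\bbA^{\bot}$ decide the match---so the finitely many odd initial-part priorities are harmless. Once the equivalence lemma is in place, I would set $\xi^{C}\isdef\tr_{\bbD^{MC}_{\xi}}(d_{\xi})$, where $\bbD^{MC}_{\xi}\isdef(\bbD_{\xi}^{M})^{C}$; all the constructions involved ($(\cdot)^{M}$, $(\cdot)^{C}$ and $\tr$) are effective with at most exponential blow-up, so their composition is elementary, and decidability of continuity then follows from \eqref{eq:t-c} together with the decidability of $\muML$-equivalence (Fact~\ref{f:dec}).
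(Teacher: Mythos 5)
Your proposal is correct, but the route you take through the central lemma is genuinely different from the paper's. The paper introduces no fused transformation: it sets $\xi^{C}\isdef\tr_{\bbD^{\mathit{MWD}}_{\xi}}(d_{\xi})$, literally composing the width construction with the depth construction, so that the key lemma (Proposition~\ref{p:c-aut3}) reduces to a short chain of implications --- continuity gives the finite width property, hence $\bbA\init{\ai}\equiv\bbA^{W}\init{\ai}$ by Proposition~\ref{p:w-aut3}; the automaton $\bbA^{W}$ then inherits the finite depth property, hence $\bbA^{W}\init{\ai}\equiv\bbA^{\mathit{WD}}\init{\ai}$ by Proposition~\ref{p:d-aut3} (which, crucially, is stated for arbitrary rather than disjunctive automata, since $\bbA^{W}$ is no longer disjunctive). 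No new acceptance-game argument is run and no single finite witness set is ever extracted. Your fused two-copy automaton $\bbA^{C}$ --- width one-step translation on the initial part, all initial priorities reset to the odd value $1$, final part $\bbA^{\bot}$ --- is a legitimate member of $\Aut^{C}_{p}$, and your direct game-theoretic proof of $\bbA\init{\ai}\equiv\bbA^{C}\init{\ai}$ does go through: you correctly isolate the one genuinely new point, namely that in the left-to-right direction the odd initial priorities are harmless because the finite (hence noetherian) witness $U$ forces every infinite match out of $A\times U$ into the $A^{\bot}$-part, while in the right-to-left direction they force the $f$-simulation to leave the initial part, exactly as at the end of the proof of Proposition~\ref{p:d-aut3}. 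What your route buys is a smaller target automaton ($2\sz{A}$ rather than $4\sz{A}$ states) and a self-contained construction; what the paper's route buys is that it never has to redo a game argument at all, reusing the two existing equivalence lemmas as black boxes. The easy half (Proposition~\ref{p:cont1}), the final assembly, and the decidability argument via Fact~\ref{f:dec} coincide in both treatments.
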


The automata characterizing continuity are defined as follows.

\begin{defi}
\label{d:c-aut1} 
A bipartite modal automaton $\bbA = (A_{0},A_{1},\Th,\Om)$ belongs to the class
$\Aut^{C}_{p}$ of \emph{$p$-continuous automata} if, relative to this partition,
$\bbA$ is both a finite-width and a finite-depth automaton, with respect to $p$.
\end{defi}

\begin{prop}
\label{p:c-aut1}
Let $\bbA = (A_{0},A_{1},\Th,\Om)$ be a bipartite modal automaton in 
$\Aut^{C}_{p}$.
Then there is a translation $\tr_{\bbA}: A \to \muML$ such that
$\tr_{\bbA}(a) \in \muML^{C}_{p}$ for every state $a \in A_{0}$.
\end{prop}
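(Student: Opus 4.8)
The plan is to derive the proposition from the general transfer result Proposition~\ref{p:tr}, instantiated at the family $\muML^{C} \isdef \{ \muML^{C}_{P} \mid P \sse_{\om} \Propvar \}$ with $P = \{ p \}$, letting the initial part $A_{0}$ play the role of the set $A$ in that proposition and the final part $A_{1}$ play the role of $B$. It then suffices to verify the three hypotheses (i)--(iii) of Proposition~\ref{p:tr} for $\muML^{C}$ and the automaton $\bbA$. The observation that organises the whole argument is that $\muML^{C}_{Q} = \muML^{W}_{Q} \cap \muML^{D}_{Q}$ for every $Q$, so that almost all of the work can be inherited from the verifications already performed for $\muML^{W}$ (Proposition~\ref{p:w-aut1}) and for $\muML^{D}$ (Proposition~\ref{p:d-aut1}).

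For hypothesis (i) I would first note that each of the closure properties (EP), (SP1) and (SP2) is preserved under taking intersections of fragments: a formula that witnesses the premise of such a property for $\muML^{C}$ witnesses it simultaneously for $\muML^{W}$ and for $\muML^{D}$, and the two resulting memberships combine into the required membership in the intersection $\muML^{C}$. The only delicate point is the fixpoint-closure property (CA$_{\eta_{a}}$). Here I would invoke the defining clause of finite-depth automata in Definition~\ref{d:d-aut1}, which stipulates that $\Om(a)$ is odd---equivalently $\eta_{a} = \mu$---for every state $a$ in the initial part $A_{0}$. Hence Proposition~\ref{p:tr} only requires (CA$_{\mu}$) of $\muML^{C}$, and this holds since the grammar of Definition~\ref{def:cont} explicitly closes $\muML^{C}_{P}$ under least fixpoints $\mu x.\phi'$; we never need (CA$_{\nu}$), which would in fact fail.

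For hypothesis (ii) I would argue as follows. Since $\bbA \in \Aut^{C}_{p}$ is, relative to the partition $(A_{0},A_{1})$, simultaneously a finite-width and a finite-depth automaton, every one-step formula $\Th(a)$ with $a \in A_{0}$ is generated by both grammars~\eqref{eq:1st-W} and~\eqref{eq:1st-D}. The two routine one-step inductions used in the proofs of Proposition~\ref{p:w-aut1} and Proposition~\ref{p:d-aut1} then give $\Th(a) \in \muML^{W}_{\{p\}\cup A_{0}}$ and $\Th(a) \in \muML^{D}_{\{p\}\cup A_{0}}$ respectively, whence $\Th(a) \in \muML^{C}_{\{p\}\cup A_{0}}$. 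Hypothesis (iii) is immediate: the one-step language of the final part $A_{1}$ is $\MLone(\Prop\setminus\{p\},A_{1})$, so, exactly as in the earlier cases, there is a translation $\tr : A_{1} \to \muML(\Prop\setminus\{p\})$, and every such $p$-free formula lies in $\muML^{C}_{\{p\}}$ via the $\psi$-clause of the grammar. An application of Proposition~\ref{p:tr} then yields the desired translation $\tr_{\bbA}$ with $\tr_{\bbA}(a) \in \muML^{C}_{p}$ for all $a \in A_{0}$.

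The one genuinely load-bearing step, which I would flag as the crux, is the fixpoint-closure bookkeeping in hypothesis (i): the whole argument hinges on the odd-priority requirement built into the notion of a finite-depth automaton, since this is exactly what guarantees that every fixpoint introduced by the translation along the initial part is a least fixpoint, keeping us inside the $\nu$-free fragment $\muML^{C}$. Everything else is a routine transfer from the finite-width and finite-depth cases through the intersection identity $\muML^{C} = \muML^{W} \cap \muML^{D}$.
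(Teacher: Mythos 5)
Your proposal is correct and follows essentially the same route as the paper, which disposes of this proposition in one line by appeal to the verifications already done for the finite-width and finite-depth cases via Proposition~\ref{p:tr}. Your elaboration --- the intersection identity $\muML^{C}_{Q} = \muML^{W}_{Q} \cap \muML^{D}_{Q}$, the preservation of (EP), (SP1), (SP2) under intersection, and the observation that the odd-priority requirement on the initial part of a finite-depth automaton means only (CA$_{\mu}$) is ever needed --- is exactly the intended bookkeeping.
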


The proof of Proposition~\ref{p:c-aut1} is straightforward, on the basis of the 
corresponding proofs for the finite-depth and the finite-width automata.

We are now ready for the main technical lemma of this section.

\begin{prop}
\label{p:c-aut3}
Let $\bbA\init{\ai}$ be an initialized disjunctive modal automaton which is 
positive in $p$.
If $\bbA\init{\ai}$ is continuous in $p$, then $\bbA\init{\ai} \equiv 
\bbA^{\mathit{WD}}\init{\ai}$.
\end{prop}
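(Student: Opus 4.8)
The plan is to follow the template of the two preceding technical lemmas: reduce to an arbitrary $\om$-unravelled tree $(\bbS,r)$ and establish
\[
\bbS,r \sat \bbA\init{\ai} \ouriff \bbS,r \sat \bbA^{\mathit{WD}}\init{\ai}
\]
by a game argument that fuses the strategies built in the proofs of Proposition~\ref{p:w-aut3} and Proposition~\ref{p:d-aut3}. I will use that $\bbA^{\mathit{WD}}$ is a $p$-continuous automaton whose initial part combines the finite-width transition structure with the \emph{odd} priorities of the finite-depth construction, and whose final part is a copy of $\bbA^{\bot}$; consequently any infinite match that never leaves the initial part is lost by $\eloi$, while on the final part $p$ is evaluated as false and the analogue of \eqref{eq:AWbot} holds.

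For the direction from left to right, suppose $\bbS,r \sat \bbA\init{\ai}$. Since $\bbA\init{\ai}$ is continuous there is a finite $U_{0} \sse S$ with $\bbS[p\rst{U_{0}}],r \sat \bbA\init{\ai}$; replacing $U_{0}$ by its downward closure $U$ --- which is finite, hence simultaneously finitely branching and noetherian --- monotonicity yields $\bbS[p\rst{U}],r \sat \bbA\init{\ai}$ with $r \in U$. Fixing a positional winning strategy $f$ for $\eloi$ in $\AG(\bbA,\bbS[p\rst{U}])$ with singleton markings (Fact~\ref{f:daut}), I would replay the move-construction of Claim~\ref{cl:w1}: at a basic position in $A \times U$, split the $\nb B$-marking dictated by $f$ along $U$, routing the successors inside $U$ into the initial part through $\bwsmall\dia B_{1}$ and the successors outside $U$ --- together with the $U$-siblings supplied by $\om$-unravelling, cf.\ \eqref{eq:w2} --- into the final part through $\nb B_{2}^{\bot}$, where Proposition~\ref{p:fs2} applies. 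The only genuinely new point is the parity bookkeeping: because $U$ is noetherian, no $\eloi$-guided match can dwell in $A \times U$ forever, so the overwritten odd priorities never penalise her and the winner of every infinite match is decided in the final part, where $f$'s victory transfers.

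For the converse, suppose $\bbS,r \sat \bbA^{\mathit{WD}}\init{\ai}$ and fix a positional winning strategy for $\eloi$ there. I would merge each $A^{\mathit{WD}}$-marking into an $A$-marking exactly as in the finite-width proof, identifying $a$ with $a^{\bot}$, to obtain legitimate $\bbA$-moves. The finite-depth half now supplies the key observation, as in Claim~\ref{cl:d3}: since every initial state of $\bbA^{\mathit{WD}}$ carries an odd priority, $\eloi$ cannot keep a winning match inside the initial part indefinitely, so every infinite guided match must cross into the final part $A^{\bot}$; once there, Proposition~\ref{p:fs2} and monotonicity of $\bbA$ lift the tail of the match to a winning $\bbA$-match, just as in the finite-depth argument.

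The hard part is reconciling the two parity regimes that now coexist in one automaton. In Proposition~\ref{p:w-aut3} the initial priorities were those of $\bbA$, whereas in $\bbA^{\mathit{WD}}$ they have been forced to be odd; the proof must therefore show that this overwriting is semantically invisible, and the lever for this is precisely that continuity confines the active region to a finite subtree. Concretely, the delicate verifications are that in the left-to-right direction every infinite play truly escapes $A \times U$ (so the forced-odd priorities cost $\eloi$ nothing) and, symmetrically, that in the right-to-left direction those same odd priorities really do compel the escape on which the finite-depth argument rests. Marrying these two complementary uses of the single priority change --- rather than any isolated computation --- is where the real effort goes.
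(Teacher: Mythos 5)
Your proposal sets out to re-run the two game arguments in a single fused construction; the paper does something much simpler and entirely modular. Its proof of Proposition~\ref{p:c-aut3} contains no new game argument at all: by Proposition~\ref{p:cont0}, continuity of $\bbA\init{\ai}$ is equivalent to the conjunction of the finite width and the finite depth property; Proposition~\ref{p:w-aut3} then gives $\bbA\init{\ai} \equiv \bbA^{W}\init{\ai}$; since $\bbA^{W}\init{\ai}$ is \emph{equivalent} to $\bbA\init{\ai}$, it inherits the finite depth property; and Proposition~\ref{p:d-aut3} --- which was deliberately proved for \emph{arbitrary} modal automata, precisely because $\bbA^{W}$ is in general no longer disjunctive --- yields $\bbA^{W}\init{\ai} \equiv \bbA^{\mathit{WD}}\init{\ai}$. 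Chaining these equivalences finishes the proof in five lines. The idea you are missing is that the semantic property transfers across the equivalence $\bbA\init{\ai} \equiv \bbA^{W}\init{\ai}$, so the second transformation can be applied to $\bbA^{W}$ as a black box; no interaction between the two constructions ever has to be analysed.

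Beyond being unnecessarily laborious, your fused argument as written rests on an inaccurate picture of the target automaton. Since $\bbA^{\mathit{WD}} = (\bbA^{W})^{D}$ is a composition, its carrier is $(A \uplus A^{\bot}) \uplus (A \uplus A^{\bot})^{\bot}$: the $D$-initial part consists of \emph{all} of $A \uplus A^{\bot}$, every state of which receives priority $1$, and the $D$-final part is a copy of $(\bbA^{W})^{\bot}$, not of $\bbA^{\bot}$. In particular the block $A^{\bot}$ produced by the width construction sits inside the odd-priority region, so in your left-to-right direction it is not enough that the match escapes $A \times U$: after entering $A^{\bot} \times S$, player $\eloi$ must still arrange a second jump into $(A \uplus A^{\bot})^{\bot}$ before the all-odd parities defeat her. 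This can be repaired (each $\Th^{W}(a^{\bot})$ is $p$-free, and the $D$-construction offers the $\bot$-substituted disjunct at every state, so she may jump immediately), and the analogous bookkeeping is needed in the converse direction; but as it stands your "parity bookkeeping" paragraph only accounts for escaping $A \times U$, not for leaving the odd-priority copy of $A^{\bot}$. None of this is fatal, yet it is exactly the kind of interaction the paper's modular proof avoids having to consider.
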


\begin{proof}
Let $\bbA\init{\ai}$ be as in the formulation of the proposition, then 
$\bbA\init{\ai}$ obviously is monotone in $p$.
The proposition then follows from the following chain of equivalences and 
implications:
\begin{align*}
   & \bbA\init{\ai} \text{ is continuous in } p 
\\ \iff 
   & \bbA\init{\ai} \text{ has both the finite width and the finite depth 
        property for } p
  & \text{(Proposition~\ref{p:cont0})}
\\ \Longrightarrow\; 
   & \bbA\init{\ai} \equiv \bbA^{W}\init{\ai} \text{ and } 
        \bbA\init{\ai} \text{ has the finite depth property for } p
  & \text{(Proposition~\ref{p:w-aut3})}
\\ \iff 
   & \bbA\init{\ai} \equiv \bbA^{W}\init{\ai} \text{ and } 
        \bbA^{W}\init{\ai} \text{ has the finite depth property for } p
  & \text{(obvious)}
\\ \Longrightarrow\;  
   & \bbA\init{\ai} \equiv \bbA^{W}\init{\ai} \text{ and } 
        \bbA^{W}\init{\ai} \equiv \bbA^{\mathit{WD}}\init{\ai}
  & \text{(Proposition~\ref{p:d-aut3})}
\\ \Longrightarrow\; 
   & \bbA\init{\ai} \equiv \bbA^{\mathit{WD}}\init{\ai}
  & \text{(obvious)}
\end{align*}
\end{proof}

\begin{proof}[{\rm\bf Proof of Theorem~\ref{t:cont}}]
Given a formula $\xi \in \muML(\Prop)$, define
\[
\xi^{C} \isdef \tr_{\bbD^{\mathit{MWD}}_{\xi}}(d_{\xi}).
\]
It is then straightforward to show that $\xi \equiv \xi^{C}$ iff $\xi$ is 
continuous in $p$, and this characterization can also be used to prove the
decidability of continuity.
(Alternatively and more efficiently, we can use Proposition~\ref{p:cont0},
together with the decidability of each of the three properties mentioned 
there that are, jointly taken, equivalent to continuity.)
\end{proof}

\section{Full and complete additivity}
\label{sec:ca}

The last two properties of formulas that we look at both concern the way in
which the semantics of the formula depends on the proposition letter $p$ 
being true at some single point.

\begin{defi}
\label{d:add}
A formula $\xi \in \muML$ is \emph{fully additive} in $p$ if we have
\[
\bbS,s \sat \xi \ouriff
\bbS[p\rst{u}],s \sat \xi, \mbox{ for some } u \in V(p),
\]
\emph{completely additive} in $p$ if 
\[
\bbS,s \sat \xi \ouriff
\bbS[p\rst{u}],s \sat \xi, \mbox{ for some } u \in S,
\]
and \emph{normal} if $\bbS[p \mapsto \nada], s \not\sat \xi$, for every pointed
model $(\bbS,s)$.
Here and in the sequel we write $\bbS[p \rst{u}]$ instead of 
$\bbS[p\rst{\{u\}}]$.
\end{defi}

The properties defined in Definition~\ref{d:add} go back to J\'onsson \&
Tarski~\cite{jons:bool52I,jons:bool52II}, as do the terms `completely additive'
and `normal'; we have introduced the term `fully additive' here.
It is easy to see (especially using the definitions in terms of the semantic map 
$\phi^{\bbS}_{p}$ that we gave in the introduction), that full additivity is 
equivalent to the combination of normality and complete additivity.

In the context of modal logic, the property of full additivity is of interest
for at least two reasons: its role in the \emph{duality} theory of modal 
logic~\cite{vene:alge06}, and its link with the notion of \emph{safety} for 
bisimulations.

\paragraph{Discrete duality}
In the algebraic approach to modal logic, two dualities feature prominently:
a topological duality linking modal algebras to certain topological Kripke
frames consisting of a relational structure which is expanded with a nicely
fitting Stone topology, and a \emph{discrete} duality connecting ordinary 
Kripke frames to so-called \emph{perfect} modal algebras.
The latter structures consist of a complete and atomic Boolean algebra, which
is expanded with an additional operation that is fully additive (that is,
preserves all joins of the algebra).
This discrete duality, formulated by Thomasson~\cite{thom:cate75}, is based on
a 1--1 correspondence, going back to J\'onsson \&
Tarski~\cite{jons:bool52I,jons:bool52II}, between the fully additive maps on the
power set of $S$ and the binary relations on $S$.
Here, the relation associated with a fully additive map $f$ on $\funP S$
is given as 
\begin{equation}
\label{eq:dual1}
Q_{f} \isdef \{ (s,s') \mid s \in f(\{s'\}) \},
\end{equation}
while conversely, every binary relation $R$ on $S$ gives rise to the fully
additive map $\mop{R}$ defined by
\begin{equation}
\label{eq:dual2}
\mop{R}(U) \isdef \{ s \in S \mid R[s] \cap U \neq \nada \}.
\end{equation}
In other words, the discrete duality concerns the semantics of the modal 
\emph{diamond}.

\paragraph{Safety for bisimulation}
A second and more specific reason for studying full additivity concerns its
key role in the characterization of formulas that are safe for bisimulation.
To define this notion, consider a formula $\al(x,y)$ 
in some appropriate language for describing Kripke models.
This formula induces, on every Kripke model $\bbS$, a binary relation 
$R_{\al}^{\bbS} \isdef \{ (s,t) \mid \bbS \models \al(s,t) \}$.
Given two models $\bbS$ and $\bbS'$, we call a relation $Z \sse S \times S'$
an \emph{$\al$-bisimulation} if it is a bisimulation for the relations
$R_{\al}^{\bbS}$ and $R_{\al}^{\bbS'}$ (in the sense of 
Definition~\ref{d:bis}, with $R_{\al}^{\bbS}$ and $R_{\al}^{\bbS'}$ replacing
the relations $R$ and $R'$, respectively), and we say that $\al$ is \emph{safe 
for bisimulation} if every ordinary bisimulation is also an $\al$-bisimulation.
This notion was introduced by van Benthem~\cite{bent:expl96}, who also gave a
characterization of the safe fragment of first-order logic, that is, the set 
of first-order formulas $\al(x,y)$ that are safe for bisimulation.

The link with the notion of full additivity is provided by the discrete
duality just described: the idea is that we can encode the transformations
\eqref{eq:dual1} and \eqref{eq:dual2} in the syntax of the ambient logic,
which in our context is monadic second-order logic (\MSO).
More specifically, we already saw how a formula $\al(x,y)$ induces a binary 
relation $R_{\al}^{\bbS}$ on every Kripke model.
Now consider a formula $\be(x;p)$, where $x$ is an individual variable and $p$
is a monadic predicate, which, in the style of modal correspondence theory, we 
will also think of as a proposition letter.
Such an $\MSO$-formula $\be$ induces a map $f^{\bbS}_{\be}$ on any Kripke model
$\bbS$, given by $f^{\bbS}_{\be}(U) \isdef \{ s \in S \mid \bbS[p\mapsto U]
\models \be(s;p) \}$.

Now, given an \MSO-formula $\al(x,y)$, define the formula 
\begin{equation}
\label{eq:dual1a}
\al^{*}(x;p) \isdef \exists y (\al(x,y) \land p(y)),
\end{equation}
where $p$ is a fresh monadic predicate, and conversely, given an \MSO-formula 
$\be(x;p)$ which is fully additive in $p$, define the formula 
\begin{equation}
\label{eq:dual2a}
\be_{*}(x,y) \isdef \be(x)[\lambda z. z{=}y/p(z)],
\end{equation}
where $y$ is a fresh individual variable, and $[\lambda z. z{=}y/p(z)]$ is the 
substitution replacing all atomic formulas of the form $p(z)$ by $z = y$.
It is not hard to show that $\al^{*}(x;p)$ is always fully additive in
$p$, and we leave it for the reader to verify that \eqref{eq:dual1a} and 
\eqref{eq:dual2a} encode, respectively, \eqref{eq:dual2} and \eqref{eq:dual1},
in the sense that 
\[
\mop{R^{\bbS}_{\al}} = f^{\bbS}_{\al^{*}}
\text{ and } 
Q_{f^{\bbS}_{\be}} = R^{\bbS}_{\be_{*}}.
\]
As a manifestation of the discrete duality, we find that $\al \equiv 
(\al^{*}(x;p))_{*}(y)$ and $\be \equiv (\be_{*}(x,y))^{*}(x;p)$.

The key observation, which can be proved by a routine argument, is now that 
\begin{equation}
\label{eq:safe1}
\al(x,y) \text{ is safe for bisimulation iff }
\al^{*}(x;p) \text{ is  bisimulation invariant}.
\end{equation}
At this point we invoke the Janin-Walukiewicz Theorem~\cite{jani:expr96}, which
states that an \MSO-formula $\ga(x)$ is bisimulation invariant iff is equivalent 
to (the standard translation of) a $\muML$-formula $\ga^{\dia}$, which may be 
effectively obtained from $\ga$.
Combining this with \eqref{eq:safe1}, we obtain that
\begin{equation}
\label{eq:safe2}
\al(x,y) \text{ is safe for bisimulation iff }
\al^{*}(x;p) \equiv (\al^{*}(x;p))^{\dia}.
\end{equation}
In particular, an \MSO-formula $\al(x,y)$ is safe for bisimulation iff 
the \MSO-formula $\al^{*}(x;p)$ is equivalent to a $\muML$-formula that is 
fully additive in $p$.

Thus, a syntactic characterization of the fully additive modal $\mu$-formulas
also yields a syntactic characterization of the safe fragment of monadic 
second-order logic:
Suppose that $F_{p} \sse \muML$ characterizes (modulo equivalence) the 
fragment of the modal $\mu$-calculus that characterizes full additivity 
in $p$, then the set $\{ (\mathit{ST}_{x}(\phi))_{*}(x,y) \mid \phi \in
F_{p} \}$ characterizes (modulo equivalence) the bisimulation-safe fragment 
of \MSO; here $\mathit{ST}_{x}: \muML \to \mathtt{MSO}$ denotes some standard
truth-preserving translation mapping $\muML$-formulas to \MSO-formulas with one
free individual variable $x$.
Such a characterization was first obtained 
by Hollenberg~\cite{holl:logi98}; we will come back to his results in
Remark~\ref{r:holl}.
\medskip

We now turn to our syntactic characterizations of the fully and completely 
additive modal fixpoint formulas.

\begin{defi}
\label{d:a-frag}
Given a set $P \sse \Prop$, we define the fragment $\muML^{F}_{P}$ by the 
following grammar:
\[
\phi \isbnf p \divbnf \bot
   \divbnf \phi\lor\phi \divbnf \phi\land\psi \divbnf
   \dia \phi 
   \divbnf \mu x. \phi',
\]
where $p \in P$, $\psi\in \muML(\Prop\setminus P)$ is a $P$-free formula
and $\phi' \in \muML^{F}_{P \cup \{ x \}}$.
Similarly, we define the fragment $\muML^{A}_{p}$ by induction in the following
way:
\[
\phi \isbnf p \divbnf \bot
\divbnf \psi 
\divbnf \phi\lor\phi \divbnf \phi\land\psi 
\divbnf \dia\phi
\divbnf \mu x. \phi,
\]
where $p \in P$, $\psi\in \muML(\Prop\setminus P)$ is a $P$-free formula
and $\phi' \in \muML^{A}_{P \cup \{ x \}}$.
In case $P$ is a singleton, say, $P = \{ p \}$, we will write 
$\muML^{F}_{p}$ and $\muML^{A}_{p}$ rather than, respectively,
$\muML^{F}_{\{p\}}$ and $\muML^{A}_{\{p\}}$.
\end{defi}

We let the grammars of Definition~\ref{d:a-frag} speak for itself.
Of the formulas from Example~\ref{ex:fmas}, $\phi_{0}$, $\phi_{2}$ and 
$\phi_{3}$ belong to $\muML^{F}_{p}$; these formulas also belong to 
$\muML^{A}_{p}$, as does $\phi_{1}$.
The difference between the fragments $\muML^{A}_{p}$ and $\muML^{F}_{p}$ is that 
$p$-free formulas belong to $\muML^{A}_{p}$ but not to 
$\muML^{F}_{p}$ (except the formula $\bot$).

\begin{thm}
\label{t:ca}
(i) There is an effective translation which, given a $\muML$-formula $\xi$,
computes a formula $\xi^{F} \in \muML^{F}_{p}$ such that 
\begin{equation}
\label{eq:t-a}
\xi \mbox{ is fully additive in } p \ouriff \xi \equiv \xi^{F},
\end{equation}
and it is decidable in elementary time whether a given formula $\xi$ is 
fully additive in $p$.

(ii) Similarly, there is an effective translation which, given a $\muML$-formula
$\xi$, computes a formula $\xi^{A} \in \muML^{A}_{p}$ such that 
\begin{equation}
\label{eq:t-p}
\xi \mbox{ is completely additive in } p \ouriff \xi \equiv \xi^{A},
\end{equation}
and it is decidable in elementary time whether a given formula $\xi$ is 
completely additive in $p$.
\end{thm}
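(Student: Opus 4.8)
The plan is to follow the uniform automata-theoretic method used for the previous characterisations (Theorems~\ref{t:fw}, \ref{t:depth}, \ref{t:sb} and~\ref{t:cont}), instantiated for the two additivity fragments, with complete additivity associated to $\muML^{A}_{p}$ and full additivity to $\muML^{F}_{p}$. Since both properties entail monotonicity, I would first replace $\xi$ by an equivalent initialised disjunctive automaton $\bbD_{\xi}\init{d_{\xi}}$ (Fact~\ref{f:fma-aut}) and pass to its monotone companion $\bbD^{M}_{\xi}$, which by Proposition~\ref{p:mon1} is positive in $p$ and by Proposition~\ref{p:mon-aut3} is equivalent to $\bbD_{\xi}$ whenever the latter is monotone in $p$. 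All remaining work then concerns \emph{positive}, disjunctive automata. The targets are bipartite \emph{additivity automata} $\Aut^{A}_{p}$ and $\Aut^{F}_{p}$: as for the single-branch automata of Definition~\ref{d:b-aut1}, the initial part $A$ may reach itself only through a single diamond $\dia a$ and each conjunction has at most one $A$-active conjunct; as for the finite-depth automata of Definition~\ref{d:d-aut1}, every state of $A$ carries an \emph{odd} priority; and, crucially, the literal $p$ may be used on $A$ only at the moment of passing to the final, $p$-free part $A^{\bot}$, so that $p$ is consulted at most once along any play. For $\Aut^{F}_{p}$ I would moreover forbid the ``$p$-free escape'' on the main thread, forcing every accepting play through an actual $p$-literal; this is the automata-theoretic counterpart of the absence of a stand-alone $p$-free clause $\psi$ in the grammar of $\muML^{F}_{p}$.

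With these classes fixed, the translation lemma is obtained from Proposition~\ref{p:tr}. Writing $\muML^{A} = \{ \muML^{A}_{Q} \mid Q \sse_{\om} \Propvar \}$ and likewise for $F$, I would check by routine inductions that both families enjoy (EP), (SP1), (SP2) and (CA$_{\mu}$) (as in the finite-depth case, (CA$_{\nu}$) is neither available nor needed), that $\Th(a) \in \Frag(\{p\}\cup A)$ for every $a \in A$, and that the $p$-free final part admits a translation into $\Frag(\{p\})$; Proposition~\ref{p:tr} then yields $\tr_{\bbA}(a)\in\muML^{A}_{p}$ (resp.\ $\muML^{F}_{p}$) for every initial state. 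The easy halves of (i) and (ii) I would prove directly on the evaluation game, in the style of Propositions~\ref{p:w1} and~\ref{p:b1}: a positional winning strategy $f$ for $\eloi$ determines a single $p$-active thread (the conjunction and diamond clauses leave her no room to spread $p$-activity), which by the $\mu$-only shape of the fragment is finite and reaches the atom $p$ at a \emph{single} node $u$; restricting $p$ to $\{u\}$ (and, for full additivity, noting $u\in V(p)$) leaves $f$ winning, giving complete (resp.\ full) additivity.

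The heart of the argument is the transformation $(\cdot)^{A}$ (resp.\ $(\cdot)^{F}$) and the matching technical lemma, in analogy with Proposition~\ref{p:b-aut3}. The construction places a modified copy of $\bbA$ in front of $\bbA^{\bot}$: on a disjunct $\pi\ybullet\nb B$ with $p\notin\pi$ I would translate, as in Definition~\ref{d:b-aut2}, to $(\pi\land\nb B^{\bot})\lor\bigvee\{\pi\land\dia b\land\nb B_{2}^{\bot} \mid \{b\}\cup B_{2}=B\}$, retaining a single $A$-successor on the branch; but on a disjunct with $p\in\pi$ I would \emph{drop} the branching option and keep only $\pi\land\nb B^{\bot}$, so that after $p$ is used every successor is handed to the $p$-free copy. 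Assigning all initial states priority $1$ makes the result a member of $\Aut^{A}_{p}$; for $\Aut^{F}_{p}$ one additionally deletes the escape $\pi\land\nb B^{\bot}$ in the case $p\notin\pi$, so that a play can terminate only by passing through some $\pi$ containing $p$. The technical lemma states that for disjunctive $\bbA$ positive in $p$ and completely (resp.\ fully) additive, $\bbA\init{\ai}\equiv\bbA^{A}\init{\ai}$ (resp.\ $\bbA^{F}$). Both directions I would prove by transferring $\eloi$'s positional winning strategies between the two acceptance games on an $\om$-unravelled tree, maintaining a disjunction analogous to ($\dag_{B}$): either the play is still on the $p$-active branch inside $A\times U$ and copies an $f$-guided $\bbA$-match, or it has entered $A^{\bot}\times S$ at a point from which $\bbA^{\bot}$ (equivalently, by Proposition~\ref{p:fs2}, $\bbA$ with $p$ emptied) is accepted and $g$ takes over. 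The sibling trick of the single-branch proof --- every on-branch successor $u$ has a bisimilar off-branch sibling $\ul{u}$ with $\bbS[p\mapsto\nada],u\bis\bbS[p\mapsto\nada],\ul{u}$ --- again supplies the $A^{\bot}$-targets.

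Finally I would set $\xi^{F}\isdef\tr_{\bbD^{MF}_{\xi}}(d_{\xi})$ and $\xi^{A}\isdef\tr_{\bbD^{MA}_{\xi}}(d_{\xi})$; Proposition~\ref{p:mon-aut3} together with the additivity technical lemma give $\xi\equiv\xi^{F}$ (resp.\ $\xi^{A}$) iff $\xi$ is fully (resp.\ completely) additive in $p$, while the translation lemma places $\xi^{F},\xi^{A}$ in the respective fragments. Decidability in elementary time then follows, as before, from the elementary size bounds on all the constructions together with the decidability of equivalence (Fact~\ref{f:dec}). The main obstacle I anticipate is the full-additivity case: enforcing \emph{exactly one} $p$-use while also securing normality requires the transformed automaton to distinguish ``$p$ not yet used'' from ``$p$ used'', so that accepting plays are precisely those using $p$ once; checking that this bookkeeping is compatible with the single-branch discipline and with the hand-off to $\bbA^{\bot}$ --- and that the result still lies in $\Aut^{F}_{p}$ --- is the delicate point, whereas the complete-additivity case follows the single-branch template of Proposition~\ref{p:b-aut3} almost verbatim.
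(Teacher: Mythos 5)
Your proposal follows essentially the same route as the paper: reduction to a positive disjunctive automaton via the monotone companion, a bipartite class of additivity automata translated into the fragments via Proposition~\ref{p:tr}, the easy direction on the evaluation game via a unique finite $p$-active thread, and a one-step transformation $(\cdot)^{F}$ whose case split ($p\in\pi$ gives $\pi\land\nb B^{\bot}$; $p\notin\pi$ gives the single-diamond disjunction, degenerating to $\bot$ when $B=\nada$) coincides with Definition~\ref{d:a-aut2}, with the technical lemma proved by the same strategy transfer and sibling trick as in the single-branch case. The ``bookkeeping'' you flag as the delicate point is exactly what the bipartite split together with the odd priorities on the initial part already provides, so there is no remaining gap.
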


In the sequel we will only prove the first part of Theorem~\ref{t:ca}, the proof
for complete additivity is a variant of this.
We first consider the easy direction of Theorem~\ref{t:ca}(i).

\begin{prop} 
\label{p:a1}
Every formula $\xi \in \muML^{F}_{p}$ is fully additive in $p$.
\end{prop}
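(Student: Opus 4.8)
The plan is to reduce the statement to its nontrivial, left-to-right direction and to settle that direction by a game-theoretic extraction argument. Since every formula of $\muML^F_p$ is positive in $p$, it is monotone in $p$, and hence the implication ``$\bbS[p\rst u],s \sat \xi$ for some $u \in V(p)$ implies $\bbS,s \sat \xi$'' follows immediately from $V(p) \cap \{u\} \sse V(p)$. For the converse I would assume $\bbS,s \sat \xi$, fix (by Fact~\ref{f:0a} and Fact~\ref{f:pdpg}) a positional winning strategy $f$ for $\eloi$ in $\EG_0 \isdef \EG(\xi,\bbS)@(\xi,s)$, and aim to produce a single state $u \in V(p)$ at which it suffices to keep $p$ true.

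The heart of the argument is a structural analysis of the $p$-\emph{active} positions along $f$-guided matches, where I call a position $(\phi,t)$ \emph{$p$-active} if $p \in \Act(\phi)$. First I would record that $\Act$ is non-increasing along every move of the evaluation game --- this is immediate from the defining clauses of $\Act$, using $\Act(\de_x)=\Act(x)$ --- so that $p$-activity, once lost in a match, is never regained. I would then invoke the three defining restrictions of $\muML^F_p$: the absence of $\Box$ guarantees that a $p$-active position is never a box, so the only modal steps preserving $p$-activity are diamonds, which are $\eloi$'s choices; the conjunction clause $\phi \land \psi$ with $\psi$ taken $P$-free guarantees that at a $p$-active conjunction exactly one conjunct is again $p$-active; and the absence of greatest fixpoints over $p$-active subformulas guarantees that every bound variable unfolded inside a $p$-active position is a $\mu$-variable. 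Together with $\eloi$'s positional choices at disjunctions and diamonds, these facts show that the $p$-active positions reachable under $f$ form a single, deterministically determined path $z_0 = (\xi,s), z_1, z_2, \ldots$ rather than a branching tree.

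Next I would argue this path is finite and locate its endpoint. An infinite $p$-active $f$-guided match would unfold some highest variable infinitely often, and by the previous paragraph that variable is a $\mu$-variable, so the match would be lost by $\eloi$, contradicting that $f$ is winning. Hence the path terminates, necessarily at a terminal $p$-active position; since $\eloi$ never gets stuck under $f$ and the only $p$-active atom is $p$ itself, the endpoint is some $(p,u)$, and winningness forces $u \in V(p)$. This $u$ is the witness. Finally I would check that $f$ remains winning in $\EG(\xi,\bbS[p\rst u])@(\xi,s)$: the two games have identical boards and moves except at positions $(p,t)$, and any $f$-guided match reaching such a position is $p$-active throughout, hence coincides up to that point with the unique $p$-active path and so reaches $(p,u)$; since $u \in V(p)$, $\eloi$ wins there, while on all plays not ending at a $p$-atom the two games behave identically.

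The main obstacle I anticipate is making the single-thread claim fully rigorous, in particular verifying that in a conjunctive subformula $\phi \land \psi$ produced by the grammar the nominally ``$P$-free'' conjunct $\psi$ is genuinely $p$-inactive. This requires tracing the definition of $\Act$ through the bound variables accumulated in the parameter set $P$ as one descends through fixpoint binders, so as to see that no free variable of $\psi$ can reactivate $p$ via the clause $\Act(x) = \Act(\de_x)$. The remaining ingredients --- downward closure of $p$-activity, finiteness of the thread, and the transfer of $f$ to the restricted model --- are then routine, following the template already used in the proofs of Proposition~\ref{p:w1} and Proposition~\ref{p:b1}.
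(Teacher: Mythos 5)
Your proposal is correct and follows essentially the same route as the paper's own proof: establishing that the $f$-reachable $p$-active positions form a unique thread (the paper's Claim~\ref{cl:a0}), arguing that this thread is finite because only $\mu$-variables are $p$-active and must therefore end at a position $(p,u)$ with $u \in V(p)$ (the paper's Claim~\ref{cl:a00}), and then reusing $f$ unchanged in $\EG(\xi,\bbS[p\rst{u}])$. The only point deserving the same care as in the paper is ruling out that the thread dies at a disjunction whose $f$-chosen disjunct is $p$-inactive (rather than at a $p$-atom), but your treatment is at the same level of detail as the paper's.
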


\begin{proof}
Let $\xi$ be a formula in $\muML^{F}_{p}$, then clearly $\xi$ is monotone in $p$.
Fix a tree model $\bbS$ with root $r$. We have to show
\begin{equation}
\label{eq:a-path1}
\bbS,r \sat \xi \ouriff
\bbS[p\rst{u}],r \sat \xi, \mbox{ for some point } u \in V(p).
\end{equation}
The direction from right to left follows from the monotonicity of $\xi$ in
$p$. 
For the direction from left to right, suppose that $\bbS,r \sat \xi$.
We need to find a point $u \in V(p)$ such that $\bbS [p \rst{u}], r \sat \xi$.

Since $\xi$ is true at $r$ in $\bbS$, $\exists$ has a positional winning
strategy $f$ in the game $\EG_0=\EG(\xi,\bbS)@(\xi,s)$.
Similar to the proof of Proposition~\ref{p:b1}, we can prove the following 
claim.

\begin{claimfirstyv}
\label{cl:a0}
For every $k<\om$ there is at most one $f$-guided match $\Si = (\phi_{n},
s_{n})_{n\leq k}$ of length $k$, such that $p$ is active in every $\phi_{n}$.
\end{claimfirstyv}

In proving this claim, the difference with the single-branch case is that now,
the only $p$-active conjunctions are of the form $p \land \psi$, where $\psi$
is not $p$-active.
Hence, a partial match ending in a position with such a conjunction, will
have exactly one $p$-active continuation.

It follows from Claim~\ref{cl:a0} that there is a unique \emph{maximal} 
$p$-active match $\Si = (\phi_{n},s_{n})_{n < \kappa}$.
\item
\begin{claimyv}
\label{cl:a00}
$\Si$ is finite and its last position is the unique position in $\Si$ of the
form $(p,u)$.
\end{claimyv}

\begin{pfclaim}
It is easy to see that $\Si$ must be finite, since otherwise, being $f$-guided,
it should be won by $\eloi$, while the only $p$-active bound variables of $\xi$ 
are \emph{least} fixpoint variables (cf. the proof of Proposition~\ref{p:d1}
in the finite-depth case).
It should also be clear that $\Si$ can have at most one position of the form
$(p,s)$ (since at such a position the match will be over).

We may thus consider the final position $(\phi,r)$ of $\Si$.
It follows by maximality of $\Si$ that $\phi$ cannot be of the form $x$ (with
$x$ a bound variable of $\xi$), $\dia\phi'$, $\phi_{0} \land \phi_{1}$, or 
$\phi_{0} \lor \phi_{1}$ --- in the latter case, both disjuncts $\phi_{i}$ would
be $p$-active.
Hence the only possibility left is that $\phi = p$ indeed.
\end{pfclaim}

To finish the proof, let $u \in S$ be the state such that the pair $(p,u)$
is the final position of $\Si$.
It is easy to check that $(\xi,r)$ is a winning position for $\eloi$
in $\EG(\xi,\bbS[p\rst{u}])$ --- she may use the very same strategy $f$ as in 
$\EG(\xi,\bbS)$.
\end{proof}

For the proof of the hard direction of Theorem~\ref{t:ca}, we introduce the 
following class of automata.

\begin{defi}
\label{d:a-aut1} 
A bipartite modal automaton $\bbA = (A,B,\Th,\Om)$ belongs to the class 
$\Aut^{F}_{p}$ of \emph{finite-width automata} if the one-step language 
associated with $B$ is the language $\MLone(\Prop\setminus\{p\},B)$, and
the one-step language associated with $A$ is given by the following grammar:
\begin{equation} 
\label{eq:1st-A}
\al \isbnf
   p \divbnf \dia a \divbnf \bot
   \divbnf \be \land \al \divbnf \al \lor \al 
\end{equation}
where $a \in A$ and $\be \in \MLone(\Prop\setminus\{p\},B)$.
\end{defi}

\begin{prop}
\label{p:a-aut1}
Let $\bbA = (A, B, \Th, \Om)$ be a bipartite modal automaton in 
$\Aut^{F}_{p}$.
Then there is a translation $\tr_{\bbA}: A \to \muML$ such that
$\tr_{\bbA}(a) \in \muML^{F}_{p}$ for every state $a \in A$.
\end{prop}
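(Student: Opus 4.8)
The plan is to derive this from Proposition~\ref{p:tr}, instantiated with the family $\muML^{F} \isdef \{ \muML^{F}_{Q} \mid Q \sse_{\om} \Propvar \}$ and the singleton $P = \{ p \}$, in the same spirit as the proofs of Propositions~\ref{p:w-aut1}, \ref{p:d-aut1} and~\ref{p:c-aut1}. Concretely, I would verify that $\muML^{F}$ and the automaton $\bbA$ satisfy (suitable forms of) the hypotheses (i)--(iii). Checking (i) is a bundle of routine formula inductions: (EP) is immediate, and (SP1), (SP2) follow by inductions on the substituted formula $\phi$ exactly as for $\muML^{W}$; the only point worth noting is the conjunction clause $\phi \land \psi$, whose $P$-free conjunct $\psi$ is preserved under both substitutions, since substituting a $P$-free formula into a $P$-free formula again yields a $P$-free formula. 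The closure property (CA$_{\mu}$) holds by the very shape of the grammar. Crucially, $\muML^{F}$ contains no greatest-fixpoint operator, so I would only ever invoke (CA$_{\mu}$), never (CA$_{\nu}$); here I rely on the fact that every state of the initial part $A$ of a $\Aut^{F}_{p}$-automaton is a $\mu$-state (its priority is odd, as for the finite-depth automata of Definition~\ref{d:d-aut1}), so that $\eta_{a} = \mu$. Without this the image of the translation could contain $\nu$-operators and fall outside $\muML^{F}_{p}$.

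For hypothesis (ii), that $\Th(a) \in \muML^{F}_{\{p\} \cup A}$ for every $a \in A$, I would run an induction on the grammar~\eqref{eq:1st-A}. The base cases $p$ and $\bot$ are immediate, $\dia a$ is obtained from the base case $a \in \{p\} \cup A$ followed by a diamond, and disjunctions are direct. The conjunction case $\be \land \al$ is the informative one: since $\be \in \MLone(\Prop \setminus \{p\}, B)$ mentions only letters from $\Prop \setminus \{p\}$ together with states from $B$, it contains no occurrence of any letter in $\{p\} \cup A$, so $\be$ is a legitimate $(\{p\} \cup A)$-free conjunct $\psi$, while $\al \in \muML^{F}_{\{p\} \cup A}$ by the induction hypothesis (up to the commutativity of $\land$, this matches the clause $\phi \land \psi$). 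Note that the $A$-states and the $B$-states occur in syntactically disjoint positions of $\Th(a)$---the former only inside subformulas $\dia a$, the latter only inside the $p$-free conjuncts $\be$---a separation that is essential below.

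The main obstacle is hypothesis (iii), which cannot be met in the form stated. It would demand a translation $\tr \colon B \to \muML(\Prop)$ with $\tr(b) \in \muML^{F}_{\{p\}}$; but the final part $\bbB$ is $p$-free, so its translates $\tr_{\bbB}(b)$ are $p$-free, and---as noted right after Definition~\ref{d:a-frag}---a $p$-free formula other than $\bot$ simply does not belong to $\muML^{F}_{\{p\}}$. I would therefore prove the proposition through a variant of Proposition~\ref{p:tr} in which hypothesis (iii) is weakened to the requirement that $\tr(b) \in \muML(\Prop \setminus \{p\})$ is merely $p$-free (still with $\bbB\init{b} \equiv \tr(b)$, supplied by Fact~\ref{f:ma-lin-tr}), and in which the conclusion and the inductive invariant~\eqref{eq:tr1} are asserted only for the initial states $c \in A$. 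The point is that the proof of Proposition~\ref{p:tr} never needs $\tr_{\bbB}(b)$ to lie in the fragment on its own: the states $b \in B$ are eliminated only through the substitution $[\tr_{k-1}(b)/b \mid b \in B]$ in~\eqref{eq:tr2a}, and by the syntactic separation noted above this substitution acts solely inside the $p$-free conjuncts $\be$, turning $\be$ into the $p$-free formula $\be[\tr_{k-1}(b)/b]$. Since each $\tr_{k-1}(b)$ is $p$-free and free of all automaton states, it is in particular free of $\{p\} \cup \{ a_{i} \mid k < i \leq n \}$, so the second substitution property (SP2) applies verbatim and keeps us inside the fragment; the induction establishing $\tr_{k}(a) \in \muML^{F}_{\{p\} \cup \{ a_{i} \mid k < i \leq n \}}$ then goes through unchanged for the $A$-states. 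With this variant in hand, conditions (i) and (ii) yield $\tr_{\bbA}(a) \in \muML^{F}_{p}$ for every $a \in A$, as required.
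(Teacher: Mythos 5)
Your proposal follows the same overall strategy as the paper --- reduce to Proposition~\ref{p:tr} after checking that $\muML^{F}$ has the closure properties (the paper, like you, singles out (SP1) and the conjunction clause $\chi\land\phi'$ as the only case needing care) --- but you have in fact done something the paper does not: you noticed that hypothesis (iii) of Proposition~\ref{p:tr} genuinely fails here, and you repaired it. The paper dismisses (ii) and (iii) as ``fairly obvious'', yet (iii) demands $\tr(b)\in\muML^{F}_{\{p\}}$ for the $p$-free states $b\in B$, and, as you say, a $p$-free formula not equivalent to $\bot$ cannot even be \emph{equivalent} to a member of $\muML^{F}_{\{p\}}$, since every formula in that fragment is normal in $p$. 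Your fix --- weakening (iii) to ``$\tr(b)$ is $p$-free'' and restricting the invariant \eqref{eq:tr1} to the $A$-states --- is exactly right: in the proof of Proposition~\ref{p:tr} the $B$-translates enter the $A$-side of the induction only through the substitution handled by (SP2), and (SP2) requires only a free-variable condition on the substituted formula, not membership in the fragment. You are also right to insist that the states in $A$ carry odd priority so that only (CA$_{\mu}$) is ever invoked; Definition~\ref{d:a-aut1} omits this clause (in contrast with Definition~\ref{d:d-aut1}), but it is clearly intended, as the construction of $\bbA^{F}$ in Definition~\ref{d:a-aut2} assigns priority $1$ to all of $A$, and without it the translation would emit $\nu$-formulas lying outside $\muML^{F}_{p}$. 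In short: same route, but your version closes two small gaps that the paper's own proof leaves open.
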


\begin{proof}
Once more we will apply Proposition~\ref{p:tr}.
Since it is fairly obvious that $\bbA$ satisfies the conditions (ii) and (iii)
of mentioned proposition, and that $\muML^{F} \isdef \{ \muML^{F}_{P} \mid 
P \sse_{\om} \Propvar \}$ satisfies the properties (EP), (SP2) and (CA$_{\mu}$),
we only show that $\muML^{F}$ satisfies the first substitution property.

For this purpose we will prove, by induction on $\phi$, that $\phi[\psi/x]$ 
belongs to the fragment $\muML^{F}_{P}$ whenever $\phi \in \muML^{F}_{P \cup 
\{ x \}}$ and $\psi \in \muML^{F}_{P}$.
We leave the easy cases as exercises to the reader, and note that the case where
$\phi = \mu y.\phi'$ is dealt with exactly as in the proof of 
Proposition~\ref{p:w-aut1}.
We focus on the case where $\phi$ is of the form $\chi \land \phi'$ because
$\phi' \in \muML^{F}_{P \cup \{ x \}}$ and $\chi$ is $P \cup \{ x\}$-free.
It then follows that $\chi[\psi/x] = \chi$ is $P$-free, and by induction that 
$\phi'[\psi/x] \in \muML^{F}_{P}$.
But from this it is immediate that $\phi[\psi/x] = \chi[\psi/x]\land 
\phi'[\psi/x]$ belongs to the fragment $\muML^{F}_{P}$ indeed.
\end{proof}

As before, our main result is based on a transformation of an arbitrary
disjunctive automaton into an automaton in the class $\Aut^{F}_{p}$.

\begin{defi}
\label{d:a-aut2}
Let $(\cdot)^{F}: \DMLone(\Prop,A) \to \MLone(\Prop,A \uplus A^{\bot})$ be the 
one-step translation given by the following inductive definition:
\[\begin{array}{lll}
(\pi\ybullet\nb B)^{F} & \isdef & 
  \left\{\begin{array}{ll}
      \pi \land \nb B^{\bot} 
        & \text{if } p \in \pi
   \\ \bot
        & \text{if } p \not\in \pi \text{ and } B = \nada
   \\ {\displaystyle \bigvee}
        \big\{ \pi \land \dia b \land \nb B_{2}^{\bot} 
        \bigm| \{ b \} \cup B_{2} =  B \big\}
        & \text{if } p \not\in \pi \text{ and } B \neq \nada
   \end{array}\right.
\\ \bot^{F} & \isdef & \bot
\\ (\al \lor \be)^{F} & \isdef & \al^{F} \lor \be^{F}.
\end{array}\]

Let $\bbA = (A, \Th, \Om)$ be a disjunctive modal automaton which is positive in
$p$.
Without loss of generality we may assume that for every $a \in A$ and every 
disjunct $\pi\ybullet\nb B$ of $\Th(a)$, either $p$ or $\neg p$ (but not both) 
is a conjunct of $\pi$.
(If not, we may replace $\pi\ybullet\nb B$  with the formula 
$(\pi\land p)\ybullet\nb B \lor (\pi\land \neg p)\ybullet\nb B$.)
We define the automaton $\bbA^{F}$ as the structure $(A^{F},\Th^{F},\Om^{F})$,
where $A^{F} \isdef A \uplus A^{\bot}$, and the maps $\Th^{F}$ and $\Om^{F}$
are given by putting
\[\begin{array}{lll}
   \Th^{F}(a)        & \isdef & \Th(a)^{F} 
\\ \Th^{F}(a^{\bot}) & \isdef & \Th^{\bot}(a)
\end{array}
\hspace*{10mm}\text{ and } \hspace*{10mm}
\begin{array}{lll}
   \Om^{F}(a)        & \isdef & 1
\\ \Om^{F}(a^{\bot}) & \isdef & \Om(a),
\end{array}\]
for an arbitrary state $a \in A$.
\end{defi}

\begin{prop}
\label{p:a-aut2}
Let $\bbA\init{a}$ be an initialized disjunctive modal automaton which is 
positive in $p$.
Then its transformation $\bbA^{F}\init{\ai}$ belongs to the class 
$\IAut^{F}_{p}$.
\end{prop}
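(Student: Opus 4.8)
The plan is to verify directly that $\bbA^{F}$, equipped with the partition $A^{F} = A \uplus A^{\bot}$ into initial part $A$ and final part $A^{\bot}$, satisfies the two one-step clauses of Definition~\ref{d:a-aut1} together with the bipartiteness condition of Definition~\ref{d:bipaut}; membership of $\ai$ in the initial part is then immediate since $\ai \in A$. Throughout I rely on the convention implicit in Definition~\ref{d:Abot} that in $\Th^{\bot}(a) = \Th(a)[\bot/p]$ every state $b$ occurring in $\Th(a)$ is renamed to its copy $b^{\bot} \in A^{\bot}$.

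First I would establish bipartiteness. By definition $\Th^{F}(a^{\bot}) = \Th^{\bot}(a) = \Th(a)[\bot/p]$, in which, after the renaming just mentioned, only states from $A^{\bot}$ occur. Hence in the occurrence graph of $\bbA^{F}$ every $E_{\bbA^{F}}$-predecessor of a final state is again a final state, so by an easy induction along occurrence chains no initial state can be active in any final state, i.e.\ $\act_{\bbA^{F}} \cap (A \times A^{\bot}) = \nada$, as required. The same computation also settles the clause on the final part: substituting $\bot$ for $p$ turns each positive occurrence of $p$ into $\bot$ and each $\neg p$ into $\top$, leaving a $p$-free formula over $A^{\bot}$, so $\Th^{F}(a^{\bot}) \in \MLone(\Prop\setminus\{p\}, A^{\bot})$.

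For the initial part I would argue by induction on the structure of the disjunctive one-step formula $\Th(a) \in \DMLone(\Prop,A)$ that $\Th(a)^{F}$ is generated by the grammar~\eqref{eq:1st-A}, with $\be$ ranging over $\MLone(\Prop\setminus\{p\}, A^{\bot})$. The cases $\bot^{F} = \bot$ and $(\al\lor\be)^{F} = \al^{F}\lor\be^{F}$ are immediate from the clauses for $\bot$ and disjunction. For a disjunct $\pi\ybullet\nb B$, I use the normalising assumption that exactly one of $p,\neg p$ is a conjunct of $\pi$, and write $\pi^{-}$ for the conjunction of the remaining literals, all over $\Prop\setminus\{p\}$. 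If $p\in\pi$, then $\pi^{-}\land\nb B^{\bot}$ is a $p$-free modal formula over $A^{\bot}$, hence a legitimate $\be$, and $\pi\land\nb B^{\bot}$ equals, up to reordering of conjuncts, $\be\land p$, produced via $\be\land\al$ with $\al = p$. If $p\notin\pi$ and $B = \nada$, then $(\pi\ybullet\nb B)^{F} = \bot$ lies directly in the grammar. If $p\notin\pi$ and $B\neq\nada$, each disjunct $\pi^{-}\land\dia b\land\nb B_{2}^{\bot}$ consists of a $\be$-part $\pi^{-}\land\nb B_{2}^{\bot}\in\MLone(\Prop\setminus\{p\}, A^{\bot})$ conjoined with $\dia b$ for $b\in A$, matching $\be\land\al$ with $\al = \dia b$, while the outer disjunction is handled by iterating $\al\lor\al$.

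The one genuinely delicate point, which I would flag explicitly, is the treatment of $\neg p$ in the third clause of $(\cdot)^{F}$: since the target grammar~\eqref{eq:1st-A} admits $p$ only positively, the normalising conjunct $\neg p$ must be discarded (equivalently rewritten to $\top$) when the single $p$-witness is deferred to a successor, which is precisely the semantic intention of placing the unique $p$-point strictly below the current state. Once this is observed, the remaining verifications are routine, and the reshuffling of conjuncts needed to fit the asymmetric production $\be\land\al$ is justified by reading one-step formulas modulo associativity and commutativity of $\land$ and $\lor$. I note finally that the priorities $\Om^{F}(a)=1$ play no role here, since Definition~\ref{d:a-aut1} imposes no condition on $\Om$ for membership in $\Aut^{F}_{p}$.
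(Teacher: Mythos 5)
Your verification is correct; the paper states Proposition~\ref{p:a-aut2} without proof, and your direct check of the two one-step grammar clauses, of bipartiteness via the occurrence graph, and of the initialization condition is exactly the routine argument it leaves implicit. The point you flag about $\neg p$ is well taken and worth making explicit: read literally, the third clause of the translation in Definition~\ref{d:a-aut2} retains the conjunct $\neg p$ of $\pi$ introduced by the normalisation step, and the grammar~\eqref{eq:1st-A} does not generate $\neg p$, so membership in $\Aut^{F}_{p}$ does require reading that clause as discarding $\neg p$ (equivalently, rewriting it to $\top$), as you do --- a reading that is harmless semantically and is the one needed for the later results on $\bbA^{F}$ to go through.
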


\begin{prop}
\label{p:a-aut3}
Let $\bbA\init{a}$ be an initialized disjunctive modal automaton which is 
positive in $p$.
If $\bbA\init{\ai}$ is fully additive in $p$, then $\bbA\init{\ai} \equiv 
\bbA^{F}\init{\ai}$.
\end{prop}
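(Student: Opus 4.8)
The plan is to mirror, and combine, the proofs of Proposition~\ref{p:b-aut3} (single branch) and Proposition~\ref{p:d-aut3} (finite depth), since the automaton $\bbA^{F}$ is obtained from $\bbA$ by a single-branch style one-step translation together with a finite-depth style \emph{odd} priority on the initial part. As in those proofs it suffices to fix an $\om$-unravelled Kripke tree $(\bbS,r)$ and prove $\bbS,r \sat \bbA\init{\ai} \ouriff \bbS,r \sat \bbA^{F}\init{\ai}$. Throughout I will use that the final part of $\bbA^{F}$ is isomorphic to $\bbA^{\bot}$, so that $\bbA^{F}\init{a^{\bot}} \equiv \bbA^{\bot}\init{a^{\bot}}$ for every $a \in A$ exactly as in \eqref{eq:ADbot}, and that by Proposition~\ref{p:fs2} this is in turn governed by $\bbA\init{a}$ on the model $\bbS[p\mapsto\nada]$.

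For the direction from left to right I would first invoke full additivity: from $\bbS,r \sat \bbA\init{\ai}$ I obtain a single point $u \in V(p)$ with $\bbS[p\rst{u}],r \sat \bbA\init{\ai}$, and fix a positional winning strategy $f$ for $\eloi$ in $\AG(\bbA,\bbS[p\rst{u}])$ which, by disjunctivity and Fact~\ref{f:daut}, assigns a singleton to each successor. I then let $\eloi$ trace, in $\AG(\bbA^{F},\bbS[p\rst{u}])$, the unique path of the tree leading to $u$, maintaining an invariant $(\dag_{F})$ entirely analogous to $(\dag_{B})$: either the current basic position lies in $A\times U$, where $U$ is the set of points on the path from $r$ to $u$, and the match corresponds to an $f$-guided match of $\AG(\bbA,\bbS[p\rst{u}])$, or the match has already passed to $A^{\bot}\times S$ at a point safe for her, where she switches to an auxiliary winning strategy $g$ for $\bbA^{\bot}$. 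The decisive observation, which I would isolate in a one-round claim in the spirit of Claim~\ref{cl:b1}, is that the normal form assumed in Definition~\ref{d:a-aut2} forces the literal in $p$ carried by $f$'s chosen disjunct $\pi\ybullet\nb B$ to be determined by whether the current point equals $u$: at the $p$-false points strictly before $u$ on the path only disjuncts with $\neg p$ in $\pi$ can be satisfied, so $\eloi$ must use a \emph{branching} translation $\pi\land\dia b\land\nb B_{2}^{\bot}$, and I let her keep in the initial part exactly the successor on the path to $u$, routing all siblings --- together with the bisimilar sibling $\ul{u'}$ of that successor --- into $A^{\bot}$ via the marking $m^{F}$ and the sibling trick, just as for $m^{B}$; whereas at $u$ the conjunct $\neg p$ is false, so $f$ is forced onto a disjunct with $p\in\pi$, whose translation $\pi\land\nb B^{\bot}$ sends \emph{all} successors into $A^{\bot}$.

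Because $u$ sits at finite depth and the only $A$-successor offered at each round is the one towards $u$, every $h$-guided match leaves the initial part after finitely many rounds, so the odd priority on $A$ can never be visited infinitely often; the tail of the match is then a $g$-guided match in the final part, which is winning since each entry point $(a^{\bot},t)$ satisfies $\bbS[p\rst{u}],t\sat\bbA^{F}\init{a^{\bot}}$. This last fact I would check exactly as in Claim~\ref{cl:w1} and Claim~\ref{cl:d1}: whenever the match diverts to $A^{\bot}$ at a successor $t$ (or at the bisimilar sibling $\ul{t}$), the subtree of that node is disjoint from $\{u\}$, so $p$ is false throughout it, and $a\in m(t)$ (resp.\ $a\in m(\ul{t})$) together with \eqref{eq:ADbot} and Proposition~\ref{p:fs2} yields the claim. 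The opposite, right-to-left, direction I would treat exactly as in the finite-depth and single-branch cases: from a winning $\bbA^{F}$-strategy I reconstruct an $\bbA$-strategy by merging each marking via $m_{F}(t) := \{b \mid b\in m(t)\text{ or }b^{\bot}\in m(t)\}$, switching to an auxiliary $\bbA$-strategy $g$ at the moment $\abel$ steps into the $A^{\bot}$-part --- legitimate because $\bbS,t\sat\bbA^{F}\init{b^{\bot}}$ gives $\bbS,t\sat\bbA\init{b}$ by \eqref{eq:ADbot}, Proposition~\ref{p:fs2} and monotonicity --- and using the odd priority again to guarantee that this switch happens after finitely many rounds.

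The main obstacle, and the only genuinely new point beyond the earlier sections, is the guarantee that the single traced path actually reaches a $p$-point and escapes to the final part. This is where the interaction of three ingredients must be verified with care: the normal form ensuring that each disjunct carries $p$ or $\neg p$; the consequent fact that at $u$ (and only there) $\eloi$ is \emph{forced} onto an escaping $p$-positive disjunct while at the intervening $p$-false points she is forced to branch; and the odd priority on $A$, which rules out an infinite sojourn in the initial part in both directions (recall that full additivity entails normality, so $\bbA$ genuinely depends on $p$ and the escape cannot be avoided). Everything else, including the sibling argument needed to satisfy the box-part of $\nb B_{2}^{\bot}$ and the observation that the dead-ending clause ``$p\notin\pi$, $B=\nada$'' never arises at an initial-part node (such nodes lie strictly before $u$ and hence have a successor), is a routine adaptation of the bookkeeping carried out in Propositions~\ref{p:w-aut3}, \ref{p:d-aut3} and~\ref{p:b-aut3}.
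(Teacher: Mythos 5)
Your proposal is correct and follows essentially the same route as the paper's proof: reduction to an $\om$-unravelled tree, extraction of the point $u\in V(p)$ and the path $U$ from $r$ to $u$, the invariant $(\dag_{F})$ with singleton markings from Fact~\ref{f:daut}, the case split at each round driven by the $p$/$\neg p$ normal form on disjuncts (branching translation with the sibling trick strictly before $u$, the all-to-$A^{\bot}$ translation at $u$), the justification of entries into $A^{\bot}$ via Proposition~\ref{p:fs2} and the analogue of \eqref{eq:ADbot}, and the same merged-marking argument for the converse direction. The only quibble is the closing parenthetical attributing the forced escape from the initial part to normality --- in the left-to-right direction the escape is guaranteed by the finiteness of the path and the forced $p$-positive disjunct at $u$, as you yourself correctly argue earlier --- but this does not affect the proof.
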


\begin{proof}
Let $\bbA = (A,\Th,\Om)$ be a disjunctive modal automaton, and assume that,
for some state $\ai \in A$, $\bbA\init{\ai}$ is fully additive in $p$.
In order to prove the equivalence of $\bbA\init{\ai}$ and $\bbA^{F}\init{\ai}$,
it suffices to take an arbitrary $\om$-unravelled Kripke tree $(\bbS,r)$ and
prove that 
\begin{equation}
\label{eq:a1}
\bbS,r \sat \bbA\init{\ai} \ouriff \bbS,r \sat \bbA^{F}\init{\ai}.
\end{equation}

Our proof of \eqref{eq:a1} is very similar to that of the corresponding 
statements in the proofs of Proposition~\ref{p:b-aut3} and 
Proposition~\ref{p:d-aut3}; for this reason we will be brief, often referring
for details to these earlier proofs.
\medskip

We first consider the direction from left to right of \eqref{eq:a1}.
Assume that $\bbS,r \sat \bbA\init{\ai}$, then it follows by the full additivity
of $\bbA\init{a}$ in $p$ that there is a point $u \in V(p)$ such that 
$\bbS[p\rst{u}],r \sat \bbA\init{\ai}$.
Let $\rho = (u_{n})_{n\leq m}$ be the unique path from $r = u_{0}$ to 
$u = u_{m}$, and define $U \isdef \{ t_{n} \mid n\leq m \}$ to be the set of 
points on this path.
By monotonicity of $\bbA^{F}$ it suffices to show that $\bbS[p\rst{u}],r \sat 
\bbA^{F}\init{\ai}$; that is, we need to supply $\eloi$ with a winning strategy
$h$ in the game $\AG^{F} \isdef \AG(\bbA^{F}, \bbS[p\rst{u}])@(\ai,r)$.
We will write $\si'_{V} \isdef \si_{\bbS[p\rst{u}]}$ and $\si' \isdef 
(\si'_{V},\si_{R})$ for the coalgebraic unfolding map of the Kripke structure
$\bbS[p\rst{u}]$.

In order to define the strategy $h$, we will make use of positional winning 
strategies $f$ and $g$ for $\eloi$ in the acceptance games 
$\AG(\bbA,\bbS[p\rst{u}])$ and $\AG(\bbA^{F},\bbS[p\rst{u}])$ itself,
respectively.
By Fact~\ref{f:daut} we may assume without loss generality that at any position 
$(a,s)$ that is winning for $\eloi$ in $\AG(\bbA,\bbS[p\rst{u}])$, the marking
picked by $f$ assigns a \emph{singleton} to each $t \in \si_{R}(s)$.

The condition that $\eloi$ will maintain when playing $\AG^{F}$ is the 
following.
\smallskip

\noindent($\dag_{F}$)\hspace{3mm}
\begin{minipage}{14cm}
With $\Si = (a_{n},s_{n})_{n\leq k}$ a partial match of 
$\AG^{F}$, one of the following holds:

($\dag_{F}^{1}$)
$(a_{k},s_{k}) \in A \times U$ and $\Si$ is an $f$-guided match of 
  $\AG(\bbA,\bbS[p\rst{u}])$,
\\($\dag_{F}^{2}$) $(a_{l},s_{l}) \in A^{\bot} \times S$
for some $l \leq k$ such that $\bbS[p\rst{u}],s_{l} \sat 
\bbA^{F}\init{a_{l}}$,
\\\hspace*{12mm} and $(a_{i},s_{i})_{l\leq i \leq k}$ is a $g$-guided 
$\AG(\bbA^{F}, \bbS[p\rst{u}])$
\end{minipage}

\begin{claimfirstyv}
\label{cl:a1}
Let $\Si$ be a partial match of $\AG^{F}$ satisfying 
($\dag_{F}$).
Then $\eloi$ has a legitimate move guaranteeing that, after any response move 
by $\abel$, ($\dag_{F}$) holds again.
\end{claimfirstyv}

\begin{pfclaim}
The proof of this claim is a subtle variation on that of the corresponding
claim in the single-branch case.
Let $\Si = (a_{n},s_{n})_{n\leq k}$ be as in the claim.
Leaving the easy case, where $\Si$ satisfies ($\dag_{F}^{2}$), as a exercise for 
the reader, we focus on the case where ($\dag_{F}^{1}$) holds.

Here the path $(s_{n})_{n\leq k} = (u_{n})_{n\leq k}$ is an initial segment of 
the branch $U$, and the final position $(a_{k},s_{k})$ of $\Si$ is a winning 
position for $\eloi$ in $\AG(\bbA,\bbS[p\rst{u}])$.
With $m: \si_{R}(s_{k}) \to \funP A$ being the marking given by $\eloi$'s
positional winning strategy $f$, we have $\si'(s_{k}), m \satone \Th(a_{k})$.
We now distinguish cases.

If $s_{k} = u$ is the last point of the branch $U$, then $\bbS[p\rst{u}],s_{k} 
\sat p$, and so we have $\si'(s_{k}), m \satone \pi\ybullet\nb B$ for some 
disjunct $\pi\ybullet\nb B$ of $\Th(a_{k})$ such that $p$ is a conjunct of $\pi$.
(Here we use the fact that for every disjunct $\pi\ybullet\nb B$ of $\Th(a)$, 
either $p$ or $\neg p$ is a conjunct of $\pi$).
We define the $\bbA^{F}$-marking $m^{F}$ on $\si_{R}(s_{k})$ by putting 
$m^{F}(t) \isdef m(t)^{\bot}$, and we leave it for the reader to verify that
$m^{F}$, as a move for $\eloi$ in $\AG(\bbA^{F},\bbS[p\rst{u}])$, satisfies the 
conditions of the claim.

If $s_{k} \neq u$, then $p$ is false at $s_{k}$.
Let $u_{k+1} \in \si_{R}(s_{k})$ be the (unique) successor of $s_{k} = u_{k}$
on the branch $U$.
By our assumption on $m$, there is a unique state $b\in A$ such that 
$m(u_{k+1}) = \{ b \}$.
Note that in this situation, we have that $\si(s_{k}), m \satone \pi\ybullet
\nb B$ for some disjunct $\pi\ybullet\nb B$ of $\Th(a_{k})$ such that 
$B \neq \nada$ and $\neg p$ is a conjunct of $\pi$.
From this we may infer that the translation $(\pi\ybullet\nb B)^{F}$ is not equal
to $\bot$, but of the form $\bv \big\{ \pi \land \dia b \land \nb B_{2}^{\bot} 
\bigm| \{ b \} \cup B_{2} =  B \big\}$.
In order to define a suitable marking $m^{F}$, we continue as in the 
single-branch case.
Let $u$ be the unique successor of $s_{k}$ in $U$, let $b$ be the unique state
in $A$ such that $m(u) = \{ b \}$, and let $\ul{u} \in \si_{R}\setminus U$ be
some sibling of $u$ such that $\bbS,u \bis \bbS, \ul{u}$.
Define
\[
m^{F}(t) \isdef
   \left\{\begin{array}{ll}
      m(t) \cup m(\ul{t})^{\bot} & \text{if } t = u
   \\ m(t)^{\bot}           & \text{if } t \neq u.
   \end{array}\right.
\]
The verification, that with this definition the marking $m^{F}$ meets all the 
specifications of the claim, is exactly as in the proof of 
Proposition~\ref{p:b-aut3}, and so we omit the details.
\end{pfclaim}

On the basis of Claim~\ref{cl:a1} we may define, in the by now familiar way,
a strategy $h$ for $\eloi$ which is winning for her in $\AG(\bbA^{F},
\bbS[p\rst{u}])$.
We omit the details.
\medskip

The opposite direction of \eqref{eq:a1} can be proved by a similar argument as
in the proof of Proposition~\ref{p:d-aut3}, so again we omit the details.
\end{proof}

\begin{proof}[{\rm\bf Proof of Theorem~\ref{t:ca}}]
As mentioned earlier on, we only cover part (i) of the theorem explicitly, part 
(ii) can be proved by a fairly obvious variation of this.

To define the required map $(\cdot)^{F}: \muML(\Prop) \to \muML^{F}_{p}(\Prop)$,
fix a $\muML$-formula $\xi$.
We define
\[
\xi^{F} \isdef \tr_{\bbD_{\xi}^{\mathit{MF}}}(d_{\xi}),
\]
where $\bbD_{\xi}\init{d_{\xi}}$ is an initialized disjunctive modal automaton 
that is equivalent to $\xi$, and $\tr_{\bbD_{\xi}^{\mathit{MF}}}$ is the 
translation associated with the automaton $\bbD_{\xi}^{\mathit{MF}}$.
We leave it for the reader to verify that this map satisfies the requirements 
stated by the theorem --- the proof follows the same lines as that of
Theorem~\ref{t:fw}.
\end{proof}

We finish this section with a series of remarks that provide some context to our
results.

\begin{rem}
\label{r:pdl}
There are interesting connections between the fragments $\muML^{F}_{P}$ and
$\muML^{A}_{P}$, and the language PDL of \emph{propositional dynamic 
logic}~\cite{hare:dyna00}.
Since PDL is by nature a \emph{poly-modal} language, to make our point we
momentarily switch to the poly-modal $\mu$-calculus.
Carreiro \& Venema~\cite{carr:pdl14} showed that PDL has the same expressive 
power as the fragment of $\muML$ in which the formula construction $\mu x. \phi$
is allowed only if $\phi$ is completely additive with respect to $x$.
More precisely, define the set $\mu_{A}\ML$ of formulas by the following 
grammar:
\[
\phi \isbnf q
\divbnf \neg \phi
\divbnf \phi\lor\phi 
\divbnf \mop{a}\phi
\divbnf \mu q. \phi',
\]
where $q$ is an arbitrary proposition letter, and $\phi'$ belongs to the 
fragment $\muML^{A}_{q} \cap \mu_{A}\ML$.
Then there are inductive, truth-preserving translations from PDL to 
$\mu_{A}\ML$ and vice versa~\cite{carr:pdl14}.
\end{rem}

\begin{rem}
\label{r:holl}
Hollenberg's characterization~\cite{holl:logi98} of the fully additive
fragment\footnote{%
   Note that Hollenberg's terminology clashes with ours: what he calls
   `completely additive' is what we call `fully additive'.
   }
of the modal $\mu$-calculus has a strong connection with propositional
dynamic logic as well.
He defines the sets of extended $\mu$-formulas $\phi$ and so-called 
$\mu$-programs $\pi$ by the following simultaneous induction (again we take a 
poly-modal perspective):
\begin{eqnarray*}
\phi &\isbnf& q 
   \divbnf \neg\phi \divbnf \phi\lor\phi 
   \divbnf \mop{\pi}\phi 
   \divbnf \mu q.\phi
\\ 
\pi  &\isbnf&  a 
   \divbnf \phi? \divbnf \pi + \pi 
   \divbnf \pi;\pi \divbnf \pi^{*},
\end{eqnarray*}
where $q$ is an arbitrary propositional variable, and $a$ is an atomic program; 
in $\mu q. \phi$, $q$ may only occur positively in $\phi$.
Hollenberg proves that a formula $\xi$ is fully additive in $p$ iff $\xi$ 
is equivalent to a formula of the form $\mop{\pi}p$, where $\pi$ is a $p$-free
$\mu$-program.

Comparing Hollenberg's result to ours, while his characterization is clearly 
well-suited to find the safe fragment of monadic second-order logic, our result 
has the advantage of directly providing a characterizing fragment inside the 
modal $\mu$-calculus.
But in any case, there are direct translations between our fragment and 
Hollenberg's.

From Hollenberg's fragment to ours, by a simultaneous induction on formulas 
and programs one may define a translation $(\cdot)^{\tau}$ mapping a formula 
$\phi$ in Hollenberg's language to a formula $\phi^{\tau}\in\muML$, and, for
each $\mu$-program $\pi$, a function $f_{\pi}: \muML \to \muML$,
in such a way that $f_{\pi}$ restricts to the fragment $\muML^{F}_{P}$ if $\pi$
is $P$-free.
Some key clauses in this definition are $(\mop{\pi}\phi)^{\tau} \isdef 
f_{\pi}(\phi^{\tau})$, $f_{\psi?}(\phi) \isdef \psi \land \phi$, and 
$f_{\pi^{*}}(\phi) \isdef \mu x. \phi \lor f_{\pi}(x)$, where $x$ is 
a fresh variable.

Conversely, by a straightforward formula induction one may provide, for each
formula $\xi \in \muML^{F}_{P}$, a collection $\{ \pi_{p} \mid p \in P\}$ of
$P$-free $\mu$-programs such that 
\[
\xi \equiv \bv \{ \mop{\pi_{p}}p \mid p \in P \}.
\]
The key induction step here is for a formula of the form $\mu x. \xi$, where
we may infer from the above equivalence that $\mu x. \xi \equiv \bv 
\{ \mop{\pi_{x}^{*};\pi_{p}} p \mid p \in P \setminus \{ x \} \}$.

We refrain from giving more details here, referring the interested reader to
section 3 of Carreiro \& Venema~\cite{carr:pdl14}, where very similar
translations between PDL and $\mu_{F}\ML$ are defined (cf.~Remark~\ref{r:pdl}),
or to Carreiro~\cite{carr:frag15}.
\end{rem}

\begin{rem}
\label{r:finadd}
As another variation of the properties of \emph{full} and \emph{complete}
additivity, a formula 
$\xi \in \muML$ is {\em finitely additive} in $p \in \Prop$ if, for every 
Kripke model $\bbS$,
\[
\xi_{p}^{\bbS}\Big(\bigcup \mathcal{X} \Big) = 
\bigcup \Big\{ \xi_{p}^{\bbS}(X) \mid X \in \mathcal{X} \Big\},
\]
for any \emph{finite} collection $\mathcal{X}$ of subsets of $S$.
This condition can be equivalently expressed by requiring that the formula $\xi$ 
is both \emph{additive} ($\xi(p \lor p') \equiv \xi(p) \lor \xi(p')$) and
\emph{normal} ($\xi(\bot) \equiv \bot$) in $p$.
In the case of basic modal logic, the two properties can be shown to be 
equivalent (through a straightforward argument based on finite trees of depth 
not exceeding the modal depth of $\xi$), but this is not so in the case of 
the modal $\mu$-calculus.
For instance, consider the formula $\nu y. \mu x. (p \land \dia y) \lor \dia x$, 
expressing the existence of an infinite path, starting at the current state, 
where $p$ holds infinitely often.
It is easy to see that this formula is finitely but neither fully nor completely
additive in $p$.

We leave it as an open problem to characterize the finitely $p$-additive 
fragment of $\muML$.
\end{rem}

\section{Conclusions}
\label{sec:conc}

We finish the paper with drawing some conclusions, listing some issues for 
discussion, and suggesting some questions for further research.
\medskip

This paper contributes to the theory of the modal $\mu$-calculus by proving
some model-theoretic results.
For a number of semantic properties pertaining to formulas of the modal 
$\mu$-calculus, we provided a corresponding syntactic fragment, showing that
a $\mu$-formula $\xi$ has the given property iff it is equivalent to a formula 
$\xi'$ in the corresponding fragment.
Since this formula $\xi'$ will always be effectively obtainable from $\xi$, as
a corollary, for each of the properties under discussion, we prove that it is
decidable in elementary time whether a given $\mu$-calculus formula
has the property or not.

The properties that we study have in common that they all concern the dependence 
of the truth of the formula at stake, on a single proposition letter $p$.
In each case the semantic condition on $\xi$ will be that $\xi$, if true 
at a certain state in a certain model, will remain true if we restrict the set 
of states where $p$ holds, to a special subset of the state space.
Important examples include the properties of full additivity and continuity, 
where the special subsets are the singletons and the finite sets, respectively.

Our proofs for these characterization results are fairly uniform in nature,
employing the well-known correspondence between formulas of the modal 
$\mu$-calculus, and \emph{modal automata}.
In fact, the effectively defined maps on formulas are induced by rather simple 
transformations on modal automata, based on composing a bipartite automaton 
$\bbA'$ from an arbitrary (disjunctive) automaton $\bbA$, where the final part
of $\bbA$ consists of the automaton $\bbA^{\bot}$ and its initial part of
another modification of $\bbA$.
This modification is always obtained by applying a straightforward one-step 
translation to the transition map of $\bbA$, by redefining its priority map,
or by a combination of these operations.

\subsection*{Discussion}

\begin{enumerate}
\item
As mentioned in the introduction, pure logic-based proofs for our results are 
possible in almost all cases --- the exception being the single-branch property
where we only have automata-theoretic proofs.
In fact, logic-based proofs were given in the dissertation of the first
author~\cite{font:moda10}.
The main advantage of the automata-theoretic approach is that it allows for 
transparant and uniform proofs based on simple transformations of automata.

In any case, the difference between the two approaches should not be 
exaggerated.
Recall that the particular shape of our automata is logic-based: the transition 
map of our structures uses so-called \emph{one-step formulas}, and many of our 
proofs are based on semantic properties of and syntactic manipulations on these
very simple modal formulas.
In some sense then, our paper is also a contribution to the model theory of 
modal automata.

\item
As mentioned in the introduction we have not undertaken an in-depth study of 
the computational \emph{complexity} of the various problems of which we  
established the decidability.
It should be clear that the algorithms that we have presented here are not
optimal.
In particular, in order to find out whether a formula $\xi$ has, say, the 
finite width property, it is not needed to compute its translation $\xi^{W}$:
it suffices to check whether the initialized automata $\bbD\init{d_{\xi}}$ and
$\bbD_{\xi}^{W}\init{d_{\xi}}$ are equivalent.
To obtain a good upper bound here, one should know the exact size and weight 
of the automaton $\bbD_{\xi}$ in terms of the size $\sz{\xi}$ of $\xi$. 
Fact~\ref{f:fma-aut}(ii) gives a doubly exponential weight for $\bbD_{\xi}$,
but we conjecture that a tighter bound is possible. 
We leave this, and other complexity-theoretic matters, as questions for 
further research.

\item
There are some variations of our results that are not hard to prove.
To start with, all characterization results (and their proofs) can be easily
restricted to the setting of basic (i.e., fixpoint-free) modal logic.
For instance, if we define the fragment $\ML^{C}_{p}$ by the following 
grammar:
\[
\phi \isbnf p \divbnf \psi 
   \divbnf \phi\lor\phi \divbnf \phi\land\phi 
   \divbnf \dia \phi 
\]
where $\psi\in \muML(\Prop\setminus \{ p \})$ is a $p$-free formula, then we 
can show that our map $(\cdot)^{C}$ maps formulas in $\ML$ to $\ML^{C}_{p}$.
As a result we find that a basic modal logic formula $\xi$ is continuous in
$p$ iff $\xi \equiv \xi^{C}$, so that $\ML^{C}_{p}$ characterizes 
continuity-in-$p$ for basic modal logic.

\item
Recall that in the presentation of the language $\muML$ (as in 
Definition~\ref{d:syn}), the standard restriction on the occurrence of the 
least fixpoint operator $\mu x$ is that it can be applied only to formulas
that are positive in $x$, i.e., belong to the language $\muML^{M}_{x}$.
We get interesting logics by restricting the application of $\mu$-operators
even further.
This applies in particular to the fragments $\muML^{D}$, $\muML^{C}$ and 
$\muML^{A}$ discussed in this paper.
For $Q \in \{ D, C, A \}$, let $\mu_{Q}\ML$ be the version of the modal 
$\mu$-calculus of which the formulas are given by the following grammar:
\begin{equation}
\label{eq:mu-syn1a}
\phi \isbnf p 
   \divbnf \neg\phi \divbnf \phi\lor\phi 
   \divbnf \dia\phi 
   \divbnf \mu x.\phi',
\end{equation}
where $p$ is a propositional variable, and the formation of the formula 
$\mu x.\phi$ is subject to the constraint that the formula $\phi'$ belongs to 
the fragment $\muML^{Q}_{x}$.
We already saw in Remark~\ref{r:pdl} that the language $\mu_{A}\ML$ is 
effectively equivalent to PDL.
It is not hard to prove that the logic $\mu_{D}\ML$ is effectively equivalent
to the \emph{alternation-free} fragment of the modal $\mu$-calculus, whereas
it seems that the logic $\mu_{C}\ML$ has not been used or studied much (although
it was mentioned under the name `$\om$-$\mu$-calculus' by van 
Benthem~\cite{bent:moda06}, and it is related, and perhaps equivalent in 
expressive power, to the logic $\mathtt{CPDL}$ of \emph{concurrent propositional 
dynamic logic}, cf.~Carreiro~\cite[section 3.2]{carr:frag15} for more 
information).

These logics become particularly interesting in the light of the 
Janin-Walukiewicz Theorem~\cite{jani:expr96}.
Recall that this result states that the modal $\mu$-calculus is the 
bisimulation-invariant fragment of monadic second-order logic (\MSO), in brief: 
$\muML \equiv \MSO/{\bis}$.
For each of the logics $\mu_{Q}\ML$, with $Q \in \{ D, C, A \}$ we can prove 
the following version of this result:
\[
\mu_{Q}\ML \equiv \MSO_{Q}/{\bis},
\]
where $\MSO_{Q}$ is a variant of $\MSO$ where we quantify over a restricted
collection $\funP_{Q}(S)$ of subsets of the model $\bbS$.
More specifically, 
$\funP_{D}(S)$ consists of the so-called \emph{noetherian} sets of a Kripke 
model~\cite{facc:char13},
$\funP_{C}(S)$ is the collection of \emph{finite} subsets of 
$S$~\cite{carr:weak14} (so that $\MSO_{D}$ is \emph{weak} monadic second-order
logic), and 
$\funP_{A}(S)$ is the set of so-called \emph{generalized finite chains} 
in $\bbS$~\cite{carr:pdl15}.

\item
In section~\ref{sec:prel} we proved a strengthened version of the Lyndon 
Theorem for the modal $\mu$-calculus proved by D'Agostino and 
Hollenberg~\cite{dago:logi00}.
In the same vein as the other results in this paper, we can also strengthen
their \emph{{\L}os-Tarski Theorem}.

We say that a formula $\xi \in \muML$ is \emph{preserved under substructures} 
if $\bbS, s \sat \xi$ implies $\bbS',s \sat \xi$, whenever $\bbS'$ is a
substructure of $\bbS$ (in the standard model-theoretic sense).
D'Agostino and Hollenberg proved that a $\mu$-formula $\xi$ is preserved under
substructures iff it is equivalent to a \emph{universal} formula, that is, a
formula in the $\dia$-free fragment $\muML^{U}$ given by the following grammar:
\[
\phi \isbnf 
   q \divbnf \neg q 
   \divbnf \phi\lor\phi \divbnf \phi\land\phi 
   \divbnf \Box\phi 
   \divbnf \mu x.\phi \divbnf \nu x.\phi.
\]
We can reprove this result by our means; given a disjunctive
automaton $\bbD = (D,\Th,\Om)$, define the automaton $\bbD^{U} = (D,\Th^{U},
\Om)$ where $\Th^{U}$ is given by the one-step translation based on the  
clause $(\pi\bullet\nb B)^{U} \isdef \pi \land \Box\bv B$.
One may then show that an initialized disjunctive automaton $\bbD\init{d}$
is preserved under taking substructures iff $\bbD\init{d} \equiv 
\bbD^{U}\init{d}$.
From this the result of D'Agostino and Hollenberg easily follows, and as a 
bonus we find that it is decidable in elementary time whether a given formula
$\xi \in \muML$ has this property.
\end{enumerate}

\subsection*{Questions}

Finally, we mention some open problems for further research.

\begin{enumerate}
\item
It would be interesting to find out the \emph{exact} complexity of 
the problems discussed in this paper.
This would include establishing suitable \emph{lower bounds}.

\item
As mentioned in the section on continuity, it would be good to know whether the
$\mu$-calculus formulas that are \emph{constructive} in $p$ admit a good
syntactic characterization.
In particular, we would like to clarify the connection between the notions of 
continuity and constructivity.
Can we find, for any formula $\xi \in \muML$ which is constructive in $p$, a
continuous formula $\psi$ such that $\mu p. \xi \equiv \mu p.\psi$?

\item
Similarly, we would be curious to see a syntactic characterization of the 
finitely $p$-additive fragment of the modal $\mu$-calculus
(cf.~Remark~\ref{r:finadd}).

\item
While, as already mentioned, some variations of our results are easy to obtain,
there are some \emph{interesting variations} of the problems considered here 
as well.
For instance, it is not so clear how to adapt our characterisation results to
other fixpoint logics like PDL or CTL. 
A second direction to take here would be to look for \emph{coalgebraic
generalisations} of our results. 
In recent years it has been shown that many results on the modal $\mu$-calculus,
including the link with automata theory, can be generalized to the far
wider setting of coalgebraic modal logic~\cite{kupk:coal08,cirs:expt09,%
font:auto10,enqv:comp16}.

\item
As a variation of Theorem~\ref{t:cont}, Gouveia and Santocanale~\cite{gouv:alep17} 
recently gave a characterization of the set of $\muML$-formulas that are 
$\aleph_{1}$-continuous in a fixed proposition letter $p$.
It would be interesting to relate their approach to ours.

\item
Not directly related to the results in this paper, but in our opinion one of 
the most interesting open model-theoretic problems concerning the modal 
$\mu$-calculus is whether $\muML$ admits a natural abstract characterization in 
the form of a \emph{Lindstr\"om theorem}.
Going back to de Rijke~\cite{rijk:lind95}, there are various Lindstr\"{o}m-type
characterizations of basic modal logic (see for 
instance~\cite{bent:lind07,otto:lind08,kurz:coal10,enqv:gene13})
but to the best of our knowledge no abstract characterizations of fixpoint
logics have been established yet.
\end{enumerate}

\section*{Acknowledgment}
\noindent 
The authors are deeply indebted to the two referees for many helpful comments on
an earlier version of this paper.

\bibliographystyle{plain}

\end{document}